\newif
\ifdraftmode
\draftmodefalse

\ifdraftmode
\else
\fi	
\RequirePackage{fix-cm}
\documentclass[reqno]{amsart}
\usepackage[margin=1.5in,bottom=1.25in]{geometry}	

\usepackage{amssymb}	
\usepackage{amsfonts}	
\usepackage{amsthm}	
\usepackage[foot]{amsaddr}	

\usepackage{mathtools}	
\mathtoolsset{%
centercolon=true,
}

\usepackage[
cal=cm,
]
{mathalfa}

\usepackage{dsfont}	

\usepackage[proportional,tabular,lining,sf,mono=false]{libertine}

\usepackage[scaled=.95]{AlegreyaSans}

\usepackage[T1]{fontenc}	

\usepackage{acronym}	
\newcommand{\acli}[1]{\emph{\acl{#1}}}	
\newcommand{\acdef}[1]{\define{\acl{#1}} \textup{(\acs{#1})}\acused{#1}}	

\usepackage[labelfont={bf,small},labelsep=colon,font=small]{caption}	

\usepackage{subcaption}	
\usepackage[svgnames]{xcolor}	

\definecolor{Bonfire}{HTML}{9E162E}
\definecolor{CardinalRed}{HTML}{C41E3A}
\definecolor{TuckOrange}{HTML}{D94415}

\definecolor{CadmiumGreen}{HTML}{097969}
\definecolor{Dartmouth}{HTML}{00693E}
\definecolor{ForestGreen}{HTML}{12312B}
\definecolor{RichForestGreen}{HTML}{0D1E1C}
\definecolor{SeaGreen}{HTML}{2E8B57}
\definecolor{SpringGreen}{HTML}{EAFAF1}
\definecolor{Jade}{HTML}{009900}

\definecolor{CobaltBlue}{HTML}{0047AB}
\definecolor{NavyBlue}{HTML}{000080}
\definecolor{RiverBlue}{HTML}{267ABA}
\definecolor{RiverNavy}{HTML}{003C73}
\definecolor{KleinBlue}{HTML}{002FA7}
\definecolor{OxfordBlue}{HTML}{002147}
\definecolor{SapphireBlue}{HTML}{0F52BA}
\definecolor{Zaffre}{HTML}{0819A8}

\colorlet{MyRed}{CardinalRed}
\colorlet{MyGreen}{Dartmouth}
\colorlet{MyBlue}{DodgerBlue}
\colorlet{MyViolet}{DarkMagenta}

\colorlet{MyDarkRed}{red!70!black}

\colorlet{MyLightRed}{MyRed!25}
\colorlet{MyLightGreen}{MyGreen!25}
\colorlet{MyLightBlue}{MyBlue!25}

\colorlet{PrimalColor}{MyBlue}
\colorlet{PrimalFill}{PrimalColor!25}
\colorlet{DualColor}{MyRed}

\colorlet{AlertColor}{MyRed}	
\colorlet{BadColor}{MyRed}	
\colorlet{GoodColor}{MyGreen}	
\colorlet{LinkColor}{MediumBlue}	
\colorlet{RevColor}{MediumBlue}	

\ifdraftmode
	\colorlet{DraftColor}{MyRed}	
\else
	\colorlet{DraftColor}{black}	
\fi

\newcommand{\afterhead}{.}	
\newcommand{\para}[1]{\smallskip\paragraph{\textbf{#1\afterhead}}}	

\usepackage{latexsym}	
\usepackage{fontawesome}	
\usepackage{pifont}	

\newcommand{\asterism}{\ding{70}}

\usepackage{tikz}	
\usetikzlibrary{calc,patterns,arrows.meta,positioning}	

\usepackage{array}	
\usepackage{booktabs}	
\usepackage[inline,shortlabels]{enumitem}	
\setlist[1]{topsep=\smallskipamount,itemsep=\smallskipamount,left=\parindent}
\setlist[2]{left=0pt}

\usepackage[kerning=true]{microtype}	

\usepackage{csquotes}
\usepackage{float}

\usepackage{balance}	
\usepackage{tabto}	
\usepackage{xspace}	

\usepackage[sort&compress,numbers]{natbib}	

\bibpunct[, ]{[}{]}{,}{}{,}{,}
\setcitestyle{numbers,square,comma}

\usepackage{hyperref}
\hypersetup{
final,
colorlinks=true,
linktocpage=true,
pdfstartview=FitH,
breaklinks=true,
pdfpagemode=UseNone,
pageanchor=true,
pdfpagemode=UseOutlines,
plainpages=false,
bookmarksnumbered,
bookmarksopen=false,
bookmarksopenlevel=1,
hypertexnames=false,
pdfhighlight=/O,
urlcolor=LinkColor,linkcolor=LinkColor,citecolor=LinkColor,	
pdftitle={},
pdfauthor={},
pdfsubject={},
pdfkeywords={},
pdfcreator={pdfLaTeX},
pdfproducer={LaTeX with hyperref}
}

\usepackage[sort&compress,capitalize,nameinlink,noabbrev]{cleveref}	

\crefname{algo}{Algorithm}{Algorithms}
\crefname{assumption}{Assumption}{Assumptions}

\makeatletter
\AtBeginDocument
{
	\def\ltx@label#1{\cref@label{#1}}	
	\def\label@in@display@noarg#1{\cref@old@label@in@display{#1}}	
}%
\makeatother

\usepackage{algorithm}	
\usepackage{algpseudocode}	

\usepackage[most]{tcolorbox}

\theoremstyle{plain}
\newtheorem{theorem}{Theorem}	
\newtheorem{corollary}{Corollary}	
\newtheorem{lemma}{Lemma}	
\newtheorem{proposition}{Proposition}	

\newtheorem*{theorem*}{Theorem}	
\newtheorem*{corollary*}{Corollary}	

\theoremstyle{definition}
\newtheorem{definition}{Definition}	
\newtheorem{assumption}{Assumption}	
\newtheorem{example}{Example}	

\newtheorem*{definition*}{Definition}	
\newtheorem*{assumption*}{Assumptions}	
\newtheorem*{example*}{Example}	

\theoremstyle{remark}
\newtheorem{remark}{Remark}	

\newtheorem*{remark*}{Remark}	
\newtheorem*{notation*}{Notation}	

\def\endenv{\hfill\asterism}	

\newcounter{proofstep}

\numberwithin{example}{section}	

\ifdraftmode
	\usepackage[showdeletions]{color-edits}	
\else
	\usepackage[suppress]{color-edits}	
\fi

\ifdraftmode
	\newcommand{\draft}[1]{{\color{DraftColor}#1}}	
\else
	\newcommand{\draft}[1]{#1}	
\fi

\newcommand{\define}[1]{\emph{\draft{#1}}}	

\newcommand{\newmacro}[2]{\newcommand{#1}{\draft{#2}}}	
\newcommand{\newop}[2]{\DeclareMathOperator{#1}{\draft{#2}}}	
\newcommand{\newoplims}[2]{\DeclareMathOperator*{#1}{\draft{#2}}}	

\DeclarePairedDelimiter{\braces}{\{}{\}}	
\DeclarePairedDelimiter{\bracks}{[}{]}	

\DeclarePairedDelimiterX{\setdef}[2]{\{}{\}}{#1:#2}	
\DeclarePairedDelimiterXPP{\exclude}[1]{\mathopen{}\setminus}{\{}{\}}{}{#1}	

\DeclarePairedDelimiterX{\braket}[2]{\langle}{\rangle}{#1,#2}	
\DeclarePairedDelimiterX{\inner}[2]{\langle}{\rangle}{#1,#2}	

\DeclarePairedDelimiter{\norm}{\lVert}{\rVert}	
\DeclarePairedDelimiterXPP{\dnorm}[1]{}{\lVert}{\rVert}{_{\draft{\ast}}}{#1}	
\DeclarePairedDelimiterXPP{\onenorm}[1]{}{\lVert}{\rVert}{_{\draft{1}}}{#1}	
\DeclarePairedDelimiterXPP{\twonorm}[1]{}{\lVert}{\rVert}{_{\draft{2}}}{#1}	
\DeclarePairedDelimiterXPP{\pnorm}[1]{}{\lVert}{\rVert}{_{\draft{p}}}{#1}	
\DeclarePairedDelimiterXPP{\qnorm}[1]{}{\lVert}{\rVert}{_{\draft{q}}}{#1}	
\DeclarePairedDelimiterXPP{\supnorm}[1]{}{\lVert}{\rVert}{_{\draft{\infty}}}{#1}	
\DeclarePairedDelimiterXPP{\matnorm}[1]{}{\lVert}{\rVert}{_{\draft{F}}}{#1}	

\newop{\defeq}{\coloneqq}	
\newop{\eqdef}{\eqqcolon}	

\newmacro{\from}{\colon}	
\newop{\too}{\rightrightarrows}	
\newop{\injects}{\hookrightarrow}	
\newop{\surjects}{\twoheadrightarrow}	

\newmacro{\Z}{\mathbb{Z}}	
\newmacro{\Q}{\mathbb{Q}}	
\newmacro{\C}{\mathbb{C}}	

\newoplims{\argmax}{arg\,max}	
\newoplims{\argmin}{arg\,min}	
\newoplims{\intersect}{\bigcap}	
\newoplims{\union}{\bigcup}	

\newop{\aff}{aff}	
\newop{\bd}{bd}	
\newop{\bigoh}{\mathcal{O}}	
\newop{\tildeoh}{\mathcal{\tilde O}}	
\newop{\baroh}{\mathcal{\bar O}}	
\newmacro{\littleoh}{o}	
\newop{\card}{\#}	
\newop{\cl}{cl}	
\newop{\conv}{conv}	
\newop{\crit}{crit}	
\newop{\curl}{curl}	
\newop{\diag}{diag}	
\newop{\diam}{diam}	
\newop{\dist}{dist}	
\newop{\diver}{div}	
\newop{\dom}{dom}	
\newop{\eig}{eig}	
\newop{\ess}{ess}	
\newop{\grad}{grad}	
\newop{\ind}{ind}	
\newop{\im}{im}	
\newop{\intr}{int}	
\newop{\Jac}{Jac}	
\newop{\one}{\mathds{1}}	
\newop{\proj}{proj}	
\newop{\prox}{prox}	
\newop{\rank}{rank}	
\newop{\relint}{ri}	
\newop{\sign}{sgn}	
\newop{\supp}{supp}	
\newop{\Sym}{Sym}	
\newop{\tr}{tr}	
\newop{\unif}{unif}	
\newop{\vol}{vol}	

\newcommand{\cf}{cf.\xspace}	
\newcommand{\eg}{e.g.,\xspace}	
\newcommand{\ie}{i.e.,\xspace}	
\newcommand{\viz}{viz.\xspace}	

\newcommand{\textpar}[1]{\textup(#1\textup)}	

\newmacro{\commentsymbol}{\triangleright}	
\newcommand{\eqstop}{\,.}	

\newcommand{\alt}[1]{#1'}	

\newmacro{\argdot}{\boldsymbol{\cdot}}	
\newmacro{\dd}{\kern0pt\:d}	
\newmacro{\ddt}{\frac{d}{dt}}	
\newmacro{\del}{\partial}	

\newcommand{\insum}{\sum\nolimits}	

\newmacro{\const}{c}	
\newmacro{\Const}{C}	

\newmacro{\param}{\theta}	
\newmacro{\params}{\Theta}	

\newmacro{\coef}{\lambda}	

\newmacro{\fn}{f} 

\newmacro{\pexp}{p}	
\newmacro{\qexp}{q}	
\newmacro{\rexp}{r}	

\newmacro{\iCount}{i}	
\newmacro{\jCount}{j}	
\newmacro{\kCount}{k}	
\newmacro{\nCounts}{n}	
\newmacro{\counts}{\mathcal{I}}	

\newmacro{\point}{x}	
\newmacro{\pointalt}{\alt\point}	
\newmacro{\pointaux}{u}	
\newmacro{\points}{\mathcal{X}}	
\newmacro{\intpoints}{\relint\points}	

\newmacro{\base}{q}	
\newmacro{\basealt}{q}	
\newmacro{\auxpoint}{u}	

\newmacro{\elem}{a}	
\newmacro{\elemalt}{b}	
\newmacro{\iElem}{a}	
\newmacro{\jElem}{b}	
\newmacro{\kElem}{c}	
\newmacro{\set}{S}	

\newmacro{\borel}{\mathcal{B}}	
\newmacro{\closed}{\mathcal{C}}	
\newmacro{\cpt}{\mathcal{K}}	
\newmacro{\nhd}{\mathcal{U}}	
\newmacro{\nhdalt}{\mathcal{V}}	
\newmacro{\open}{\mathcal{U}}	

\newmacro{\domain}{\mathcal{D}}	
\newmacro{\region}{\mathcal{R}}	

\newmacro{\interval}{\mathcal{I}}	
\newmacro{\rectangle}{\mathcal{R}}	
\newmacro{\cone}{\mathcal{K}}	

\newmacro{\tstart}{0}	
\renewcommand{\time}{\draft{t}}	
\newmacro{\timealt}{s}	
\newmacro{\timealtalt}{\tau}	
\newmacro{\horizon}{T}	

\newmacro{\curve}{\gamma}	
\DeclarePairedDelimiterXPP{\curveof}[1]{\curve}{(}{)}{}{#1}	
\DeclarePairedDelimiterXPP{\curveofX}[2]{\curve_{#1}}{(}{)}{}{#2}	
\DeclarePairedDelimiterXPP{\velof}[1]{\dot\curve}{(}{)}{}{#1}	
\DeclarePairedDelimiterXPP{\velofX}[2]{\dot\curve_{#1}}{(}{)}{}{#2}	

\newmacro{\flowmap}{\Phi}	
\DeclarePairedDelimiterXPP{\flowof}[2]{\flowmap_{#1}}{(}{)}{}{#2}	

\newmacro{\traj}{x}	
\newmacro{\dtraj}{\dot\traj}	

\DeclarePairedDelimiterXPP{\trajof}[1]{\traj}{(}{)}{}{#1}	
\DeclarePairedDelimiterXPP{\trajofX}[2]{\traj_{#1}}{(}{)}{}{#2}	
\DeclarePairedDelimiterXPP{\dtrajof}[1]{\dtraj}{(}{)}{}{#1}	
\DeclarePairedDelimiterXPP{\dtrajofX}[2]{\dtraj_{#1}}{(}{)}{}{#2}	

\newmacro{\vdim}{n}	
\newmacro{\realspace}{\R^{\vdim}}	

\newmacro{\iCoord}{i}	
\newmacro{\jCoord}{j}	
\newmacro{\kCoord}{k}	
\newmacro{\nCoords}{\vdim}	
\newmacro{\mCoords}{m}	

\newmacro{\unitvec}{u}	
\newmacro{\bvec}{e}	
\newmacro{\bvecs}{\mathcal{E}}	

\newmacro{\vecspace}{\mathcal{V}}	
\newmacro{\subspace}{\mathcal{W}}	

\newcommand{\dual}[1][\vecspace]{#1^{\ast}}	
\newmacro{\dspace}{\dual[\vecspace]}	

\newmacro{\hilbert}{\mathcal{H}}	
\newmacro{\banach}{\mathcal{X}}	

\newmacro{\mat}{M}	
\newmacro{\hmat}{H}	

\newmacro{\ones}{\mathbf{1}}	
\newmacro{\eye}{I}	
\newmacro{\zer}{\mathbf{0}}	

\newmacro{\eigval}{\lambda}	
\newmacro{\eigvec}{u}	

\newmacro{\ball}{\mathbb{B}}	
\newmacro{\sphere}{\mathbb{S}}	
\newmacro{\radius}{r}
\newmacro{\Radius}{R}

\newmacro{\mfld}{\mathcal{M}}	
\newmacro{\tanvec}{z}	
\newmacro{\form}{\omega}	

\newmacro{\gmat}{g}	
\newmacro{\gdist}{\dist_{\gmat}}	

\newmacro{\vertex}{v}	
\newmacro{\vertexalt}{w}	
\newmacro{\iVertex}{\vertex_{\iCount}}	
\newmacro{\jVertex}{\vertex_{\jCount}}	
\newmacro{\kVertex}{\vertex_{\kCount}}	
\newmacro{\nVertices}{V}	
\newmacro{\vertices}{\mathcal{V}}	

\newmacro{\edge}{e}	
\newmacro{\edgealt}{\alt\edge}	
\newmacro{\iEdge}{\edge_{\iCount}}	
\newmacro{\jEdge}{\edge_{\jCount}}	
\newmacro{\kEdge}{\edge_{\kCount}}	
\newmacro{\nEdges}{E}	
\newmacro{\edges}{\mathcal{\nEdges}}	

\newmacro{\graph}{\mathcal{G}}	
\newmacro{\graphfull}{\graph(\vertices,\edges)}	

\newop{\minimize}{minimize}	
\newop{\opt}{Opt}	
\newop{\gap}{Gap}	

\newmacro{\cvx}{\mathcal{C}}	
\newmacro{\obj}{f}	
\newmacro{\sobj}{F}	
\newmacro{\oper}{A}	
\newmacro{\vecfield}{v}	

\newmacro{\subd}{\partial}	
\newmacro{\subsel}{\nabla}	
\newmacro{\gvec}{g}	

\newmacro{\gbound}{G}	
\newmacro{\vbound}{V}	
\newmacro{\lips}{L}	
\newmacro{\strong}{\mu}	
\newmacro{\smooth}{\beta}	

\newop{\tcone}{TC}	
\newop{\dcone}{\tcone^{\ast}}	
\newop{\ncone}{NC}	
\newop{\pcone}{PC}	
\newop{\hull}{\Delta}	

\newop{\ex}{\mathbb{E}}	
\newop{\prob}{\mathbb{P}}	
\newop{\Var}{\mathbb{V}}	
\newop{\cov}{cov}	
\newop{\simplex}{\Delta}	

\DeclarePairedDelimiterXPP{\exof}[1]{\ex}{[}{]}{}{
 #1}

\DeclarePairedDelimiterXPP{\exwrt}[2]{\ex_{#1}}{[}{]}{}{
 #2}

\DeclarePairedDelimiterXPP{\probof}[1]{\prob}{(}{)}{}{
 #1}

\DeclarePairedDelimiterXPP{\probwrt}[2]{\prob_{#1}}{(}{)}{}{
 #2}

\DeclarePairedDelimiterXPP{\oneof}[1]{\one}{\{}{\}}{}{#1}	

\DeclarePairedDelimiterXPP{\varof}[1]{\var}{[}{]}{}{
 #1}

\DeclarePairedDelimiterXPP{\covof}[1]{\cov}{(}{)}{}{
 #1}

\newmacro{\event}{\mathcal{E}}       
\newmacro{\eventalt}{H}       

\newmacro{\sample}{\omega}	
\newmacro{\samples}{\Omega}	
\newmacro{\filter}{\mathcal{F}}	
\newmacro{\probspace}{(\samples,\filter,\prob)}	

\newmacro{\pdist}{P}	
\newmacro{\history}{\mathcal{H}}	

\newmacro{\mean}{\mu}	
\newmacro{\sdev}{\sigma}	
\newmacro{\variance}{\sdev^{2}}	
\newmacro{\covmat}{\Sigma}	

\newmacro{\seq}{a}	
\newmacro{\seqalt}{b}	

\newmacro{\beforestart}{0}	
\newmacro{\start}{1}	
\newmacro{\afterstart}{2}	
\newmacro{\running}{\start,\afterstart,\dotsc}	

\newmacro{\run}{t}	
\newmacro{\runalt}{s}	
\newmacro{\runaltalt}{\tau}	
\newmacro{\nRuns}{T}	
\newmacro{\runs}{\mathcal{\nRuns}}	

\newop{\Nash}{Nash}	
\newop{\CE}{CE}	
\newop{\CCE}{CCE}	
\newop{\NI}{NI}	

\newmacro{\stratdiam}{\norm{\strats}}
\newmacro{\distort}{\chi(\nPures)}

\newop{\brep}{BR}	
\newop{\val}{val}	

\newmacro{\play}{i}	
\newmacro{\playalt}{j}	
\newmacro{\iPlay}{i}	
\newmacro{\jPlay}{j}	
\newmacro{\kPlay}{k}	
\newmacro{\nPlayers}{N}	
\newmacro{\players}{\mathcal{\nPlayers}}	

\newmacro{\pure}{\alpha}	
\newmacro{\purealt}{\beta}	
\newmacro{\nPures}{A}	
\newmacro{\pures}{\mathcal{A}}	

\newmacro{\strat}{x}	
\newmacro{\stratalt}{\alt\strat}	
\newmacro{\strataux}{q}	
\newmacro{\strats}{\mathcal{X}}	
\newmacro{\intstrats}{\strats^{\circle}}	

\newmacro{\corr}{z}	
\newmacro{\corralt}{\alt\corr}	
\newmacro{\corrs}{\mathcal{Z}}	

\newmacro{\pay}{u}	
\newmacro{\loss}{\ell}	
\newmacro{\cost}{c}	
\newmacro{\pot}{f}	

\newmacro{\payvec}{w}	
\newmacro{\payv}{v}	
\newmacro{\payfield}{\payv}	
\newmacro{\paybound}{M}	
\newmacro{\payspace}{\mathcal{Y}}	
\newmacro{\payspacei}{\payspace_{\play}}	

\newmacro{\game}{\mathcal{G}}	
\newmacro{\gamefull}{\game(\players,\points,\pay)}	

\newmacro{\fingame}{\Gamma}	
\newmacro{\fingamefull}{\Gamma(\players,\pures,\pay)}	
\newmacro{\mixgame}{\Delta(\fingame)}	

\newmacro{\minmax}{L}	

\newmacro{\minvar}{\point_{1}}	
\newmacro{\minvaralt}{\alt\minvar}	
\newmacro{\minvars}{\points_{1}}	

\newmacro{\maxvar}{\point_{2}}	
\newmacro{\maxvaralt}{\alt\maxvar}	
\newmacro{\maxvars}{\points_{2}}	

\newmacro{\hreg}{h}	
\newmacro{\proxdom}{\points_{\hreg}}	

\newmacro{\breg}{D}	
\newmacro{\mprox}{P}	

\newmacro{\hconj}{h^{\ast}}	
\newmacro{\mirror}{Q}	
\newmacro{\fench}{F}	
\newmacro{\hker}{\theta}	

\newmacro{\hstr}{K}	
\newmacro{\hrange}{H}	

\newmacro{\learn}{\eta}	
\newmacro{\weight}{\lambda}	

\DeclarePairedDelimiterXPP{\bregof}[2]{\breg}{(}{)}{}{#1,#2}	
\DeclarePairedDelimiterXPP{\bregofX}[3]{\breg_{#1}}{(}{)}{}{#2,#3}	
\DeclarePairedDelimiterXPP{\fenchof}[2]{\fench}{(}{)}{}{#1,#2}	
\DeclarePairedDelimiterXPP{\fenchofX}[3]{\fench_{#1}}{(}{)}{}{#2,#3}	
\DeclarePairedDelimiterXPP{\proxof}[2]{\mprox_{#1}}{(}{)}{}{#2}	

\newmacro{\choice}{\mirror}	

\newmacro{\zone}{\mathbb{D}}	

\newop{\Eucl}{\Pi}	
\newop{\logit}{\Lambda}	
\newop{\dkl}{KL}	

\newmacro{\dvec}{w}	
\newmacro{\dpoint}{y}	
\newmacro{\dpointalt}{\alt\dpoint}	
\newmacro{\dpoints}{\mathcal{Y}}	

\newmacro{\score}{y}	
\newmacro{\scorealt}{\alt\score}	
\newmacro{\scoreaux}{z}	
\newmacro{\scores}{\payspace}	

\newmacro{\momexp}{p}	

\newmacro{\state}{X}	
\newmacro{\dstate}{Y}	

\newmacro{\drift}{b}	
\newmacro{\diffmat}{\Sigma}	

\newmacro{\brown}{W}	
\newmacro{\ito}{M}	
\newmacro{\levy}{L}	

\newmacro{\model}{\hat\vecfield}	
\newmacro{\sgrad}{\gvec}	
\newmacro{\step}{\gamma}	
\newmacro{\runtime}{\tau}	

\newmacro{\stepexp}{\ell_{\step}}	
\newmacro{\learnexp}{\ell_{\learn}}	

\newmacro{\apt}{X}	
\DeclarePairedDelimiterXPP{\aptof}[1]{\apt}{(}{)}{}{#1}	
\DeclarePairedDelimiterXPP{\aptofX}[2]{\apt_{#1}}{(}{)}{}{#2}	

\newop{\orcl}{\mathsf{G}}	
\newop{\err}{\mathsf{\noise}}	
\newmacro{\seed}{\omega}	
\newmacro{\seeds}{\Omega}	

\newmacro{\noise}{U}	
\newmacro{\bias}{b}	

\newmacro{\bbound}{B}	
\newmacro{\totbound}{\paybound}	
\newmacro{\mombound}{V}	

\newmacro{\snoise}{\xi}	
\newmacro{\sbias}{\chi}	

\newmacro{\mix}{\delta}	
\newmacro{\perturb}{z}	
\newmacro{\pivot}{\point}	

\newop{\reg}{Reg}	
\newop{\preg}{\overline{Reg}}	

\newmacro{\bench}{p}	
\newmacro{\test}{p}	

\addauthor[Pan]{PM}{MediumBlue}

\newop{\skel}{skel}	

\newcommand{\thisfigscale}{1}	
\newcommand{\thisfigheight}{1ex}	

\newcommand{\A}{\mathcal{A}}

\newcommand{\B}{\mathcal{B}}
\newcommand{\F}{\mathcal{F}}
\newcommand{\G}{\mathcal{G}}
\newcommand{\X}{\mathcal{X}}
\newcommand{\N}{\mathcal{N}}
\newcommand{\HH}{\mathcal{H}}
\newcommand{\cS}{\mathcal{S}}
\newcommand{\YY}{\mathcal{Y}}
\newcommand{\R}{\mathbb{R}}

\newcommand{\cont}{\mathrm{span}}

\newcommand{\ImQ}{\operatorname{Im}Q}

\newcommand{\FTRLtag}{\B}

\title
[The Role of Preferences in Multi-Agent Online Learning]
{What Preferences Can\textemdash and Cannot\textemdash Predict\\
in Multi-Agent Online Learning}

\author
[O.~Abbadi]
{Omar Abbadi$^{c,\ast, \sharp}$}
\address{$^{c}$\,%
Corresponding author.}
\address{$^{\ast}$\,%
Univ. Mohammed VI Polytechnic, CMSIS, Moroccan Center for Game Theory, 11103 Rabat, Morocco.}
\email{omar.abbadi@um6p.ma}
\author
[R.~Laraki]
{Rida Laraki$^{\ast}$}
\email{rida.laraki@um6p.ma}
\author
[P.~Mertikopoulos]
{Panayotis Mertikopoulos$^{\sharp}$}
\address{$^{\sharp}$\,%
Univ. Grenoble Alpes, CNRS, Inria, Grenoble INP, LIG, 38000 Grenoble, France.}
\email{panayotis.mertikopoulos@imag.fr}

\subjclass[2020]{%
Primary 91A26, 37N40;
secondary 91A10, 91A22, 68Q32.}
\keywords{%
Learning in games;
FTRL dynamics;
preference graph;
resilience to aggregate deviations.}

\thanks{
We are deeply grateful to Josef Hofbauer for many fruitful and enlightening discussions, leading in particular to the counterexample of \cref{sec:prefs-not-enough}.
We are likewise obliged to Oliver Biggar and Christos Papadimitriou for identifying and correcting a claim in an earlier version of this paper.}

\newacro{LHS}{left-hand side}
\newacro{RHS}{right-hand side}
\newacro{iid}[i.i.d.]{independent and identically distributed}
\newacro{lsc}[l.s.c.]{lower semi-continuous}
\newacro{usc}[u.s.c.]{upper semi-continuous}
\newacro{rv}[r.v.]{random variable}
\newacro{whp}{with high probability}
\newacro{wp1}[w.p.$1$]{with probability $1$}

\newacro{curb}{closed under rational behavior}
\newacro{club}{closed under better replies}
\newacro{clubness}{closedness under better replies}

\newacro{sclub}[s-club]{closed under strict better replies}
\newacro{sclubness}[s-clubness]{closedness under strict better replies}

\newacro{rad}{resilient to aggregate deviations}
\newacro{radness}{resilience to aggregate deviations}
\newacro{srad}[s-rad]{strictly resilient to aggregate deviations}
\newacro{sradness}[s-radness]{strict resilience to aggregate deviations}

\newacro{NE}{Nash equilibrium}
\newacroplural{NE}[NE]{Nash equilibria}

\newacro{FTRL}{follow-the-regularized-leader}

\begin{document}

\begin{abstract}
%
%
We examine the interplay between ordinal, preference-based solution concepts in games and the long-run behavior of game dynamics, asking in particular to what extent the combinatorial data of a game\textemdash its \emph{preference graph}\textemdash determine the outcomes of no-regret learning dynamics\textemdash such as \acdef{FTRL}.
In one direction, we show that the skeleton of every \emph{dynamically stable} set (\ie the set of pure profiles it contains) must also be \emph{preferentially stable}, that is, it must be closed under profitable deviations.
We then ask the converse question:
\emph{when do preferences determine the long-run behavior of the players' learning dynamics?}
We begin by showing that preferences characterize asymptotic stability in the case of \emph{subgames}\textemdash \ie subsets of pure profiles obtained by restricting players' action sets.
Beyond this case however, the equivalence between dynamic and preferential stability collapses:
concretely, we construct a three-player game with a preferentially stable set whose span is dynamically \emph{unstable}, showing in this way that preferences \emph{do not suffice} as a criterion of dynamic stability.
We then bridge this gap via the notion of \acli{radness}, an easy-to-check payoff-based condition that guarantees asymptotic stability of arbitrary spans of pure strategies.

\end{abstract}

\allowdisplaybreaks	
\acresetall	
\maketitle
\acused{iid}
\acused{LHS}
\acused{RHS}

\section{Introduction}
\label{sec:introduction}

\begin{figure*}[t]
\renewcommand{\thisfigscale}{2.5}
\renewcommand{\thisfigheight}{28ex}
\centering
\begin{subfigure}[c]{.49\linewidth}
\centering
\makebox[\linewidth][c]{%
\resizebox{.42\linewidth}{!}{
%
%
\begin{tikzpicture}[
	scale=\thisfigscale,
	line join=round,
	line cap=round,
	edge/.style={semithick,draw=black},
	sinkspan/.style={draw=red!70!black,very thick},
	midorient/.style={-{Stealth},thick,draw=black},
	sinkorient/.style={{Stealth}-,thick,draw=MyDarkRed},
	sinkorientbi/.style={{Stealth}-{Stealth},thick,draw=MyDarkRed}
]

\coordinate (BLB) at (0,0);
\coordinate (BRB) at (1,0);
\coordinate (BRT) at (1,1);
\coordinate (BLT) at (0,1);

\coordinate (FLB) at ( 0.40,0.30);
\coordinate (FRB) at ( 1.40,0.30);
\coordinate (FRT) at ( 1.40,1.30);
\coordinate (FLT) at ( 0.40,1.30);

\fill[MyLightRed,opacity=0.25] (FLB)--(FRB)--(FRT)--(FLT)--cycle;
\fill[MyLightRed,opacity=0.25] (BRB)--(FRB)--(FRT)--(BRT)--cycle;
\fill[MyLightRed,opacity=0.25] (BLT)--(BRT)--(FRT)--(FLT)--cycle;

\draw[edge] (FLB)--(FRB)--(FRT)--(FLT)--cycle;
\draw[edge] (BLB)--(BRB)--(BRT)--(BLT)--cycle;
\draw[edge] (FLB)--(BLB);
\draw[edge] (FRB)--(BRB);
\draw[edge] (FRT)--(BRT);
\draw[edge] (FLT)--(BLT);

\draw[sinkspan] (FLB)--(FRB)--(FRT)--(FLT)--cycle;
\draw[sinkspan] (FRB)--(BRB)--(BRT)--(FRT)--cycle;
\draw[sinkspan] (FLT)--(FRT)--(BRT)--(BLT)--cycle;

\draw[midorient] ($(BLB)!0.48!(FLB)$) -- ($(BLB)!0.52!(FLB)$);
\draw[sinkorient] ($(BLT)!0.48!(FLT)$) -- ($(BLT)!0.52!(FLT)$);
\draw[sinkorient] ($(FRB)!0.48!(BRB)$) -- ($(FRB)!0.52!(BRB)$);
\draw[sinkorient] ($(FRT)!0.48!(BRT)$) -- ($(FRT)!0.52!(BRT)$);

\draw[midorient] ($(BLB)!0.48!(BRB)$) -- ($(BLB)!0.52!(BRB)$);
\draw[sinkorient] ($(FLB)!0.48!(FRB)$) -- ($(FLB)!0.52!(FRB)$);
\draw[sinkorient] ($(BRT)!0.48!(BLT)$) -- ($(BRT)!0.52!(BLT)$);
\draw[sinkorient] ($(FLT)!0.48!(FRT)$) -- ($(FLT)!0.52!(FRT)$);

\draw[midorient]($(BLB)!0.48!(BLT)$) -- ($(BLB)!0.52!(BLT)$);
\draw[sinkorient] ($(BRB)!0.48!(BRT)$) -- ($(BRB)!0.52!(BRT)$);
\draw[sinkorient] ($(FLT)!0.48!(FLB)$) -- ($(FLT)!0.52!(FLB)$);

\foreach \V in {FLB,FRB,FRT,FLT,BRB,BRT,BLT}{
\fill[red!70!black] (\V) circle (0.9pt);
}
\fill[black] (BLB) circle (0.9pt);

\end{tikzpicture}}%
\hfill
\includegraphics[width=.56\linewidth,height=\thisfigheight,keepaspectratio]{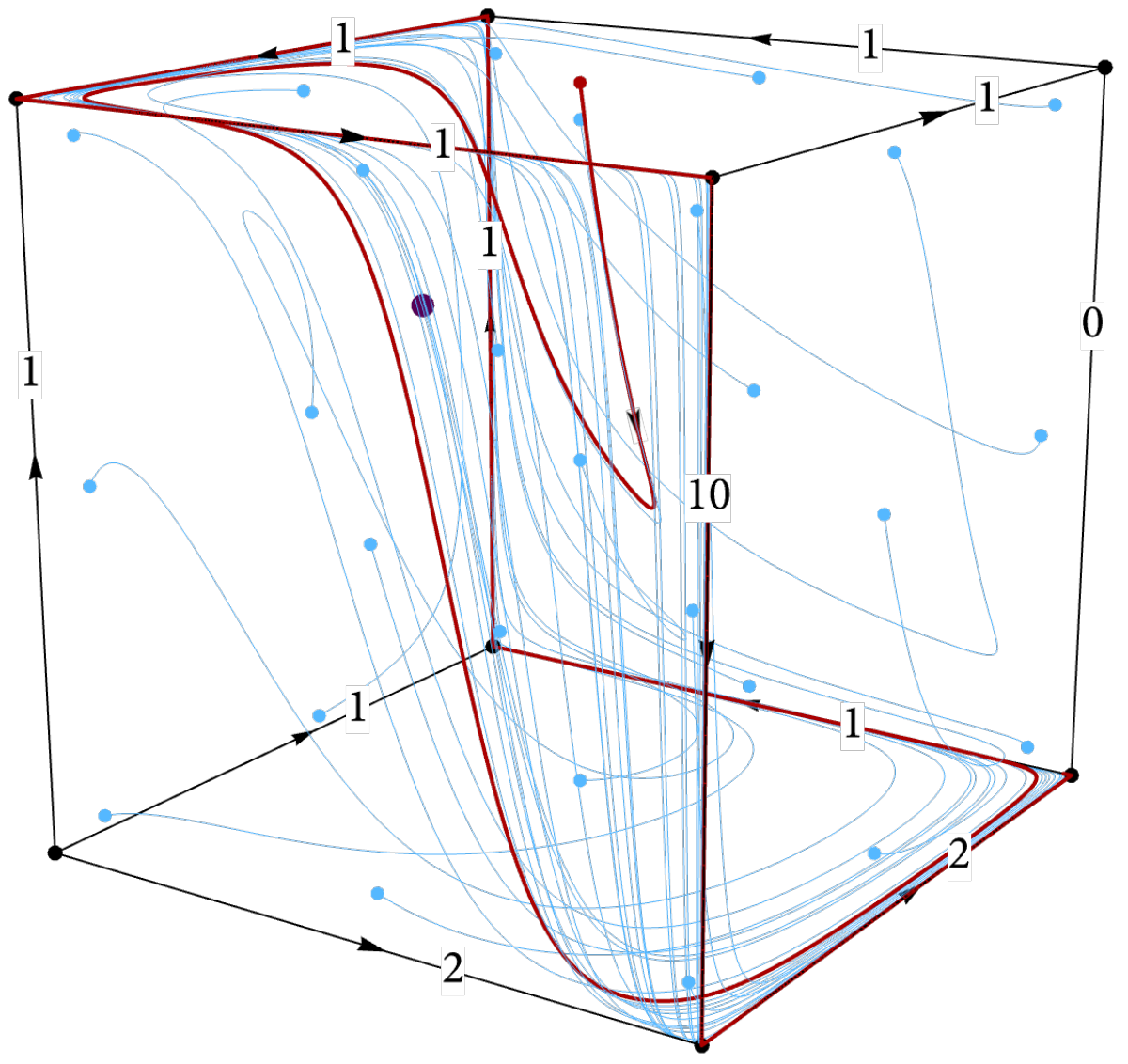}}%
\caption{When preferences and behavior \emph{do not} align.}
\label{fig:misalign}
\end{subfigure}%
\hfill
\begin{subfigure}[c]{.49\linewidth}
\centering
\makebox[\linewidth][c]{%
\resizebox{.42\linewidth}{!}{
%
%
\begin{tikzpicture}[
scale=\thisfigscale,
line join=round,
line cap=round,
edge/.style={semithick,draw=black},
rededge/.style={very thick,draw=MyDarkRed},
midorient/.style={-{Stealth},thick,draw=black},
sinkorient/.style={-Stealth,thick,draw=MyDarkRed}
]
\coordinate (FLB) at (0,0);
\coordinate (FRB) at (1,0);
\coordinate (FRT) at (1,1);
\coordinate (FLT) at (0,1);
\coordinate (BLB) at (0.40,0.30);
\coordinate (BRB) at (1.40,0.30);
\coordinate (BRT) at (1.40,1.30);
\coordinate (BLT) at (0.40,1.30);

\draw[edge]    (BLB)--(BRB)--(BRT);
\draw[rededge] (BRT)--(BLT)--(BLB);

\draw[rededge] (FLB)--(BLB);
\draw[edge]    (FRB)--(BRB);
\draw[rededge] (FRT)--(BRT);
\draw[edge]    (FLT)--(BLT);

\draw[rededge] (FLB)--(FRB)--(FRT);
\draw[edge]    (FRT)--(FLT)--(FLB);

\draw[midorient]
  ($(FLT)!0.48!(FRT)$) -- ($(FLT)!0.52!(FRT)$);
\draw[sinkorient]
  ($(BLT)!0.48!(BRT)$) -- ($(BLT)!0.52!(BRT)$);

\draw[midorient]
  ($(BRB)!0.48!(BLB)$) -- ($(BRB)!0.52!(BLB)$);
\draw[sinkorient]
  ($(FRB)!0.48!(FLB)$) -- ($(FRB)!0.52!(FLB)$);

\draw[midorient]
  ($(FLT)!0.48!(FLB)$) -- ($(FLT)!0.52!(FLB)$);
\draw[sinkorient]
  ($(FRT)!0.48!(FRB)$) -- ($(FRT)!0.52!(FRB)$);

\draw[sinkorient]
  ($(BLB)!0.48!(BLT)$) -- ($(BLB)!0.52!(BLT)$);
\draw[midorient]
  ($(BRB)!0.48!(BRT)$) -- ($(BRB)!0.52!(BRT)$);

\draw[midorient]
  ($(FLT)!0.48!(BLT)$) -- ($(FLT)!0.52!(BLT)$);
\draw[sinkorient]
  ($(BRT)!0.48!(FRT)$) -- ($(BRT)!0.52!(FRT)$);

\draw[sinkorient]
  ($(FLB)!0.48!(BLB)$) -- ($(FLB)!0.52!(BLB)$);
\draw[midorient]
  ($(BRB)!0.48!(FRB)$) -- ($(BRB)!0.52!(FRB)$);

\foreach \V in {FLB,FRB,FRT,BLB,BRT,BLT}{
  \fill[MyDarkRed] (\V) circle (0.9pt);
}
\foreach \V in {FLT,BRB}{
  \fill[black] (\V) circle (0.9pt);
}
\end{tikzpicture}}%
\hfill
\includegraphics[width=.56\linewidth,height=\thisfigheight,keepaspectratio]{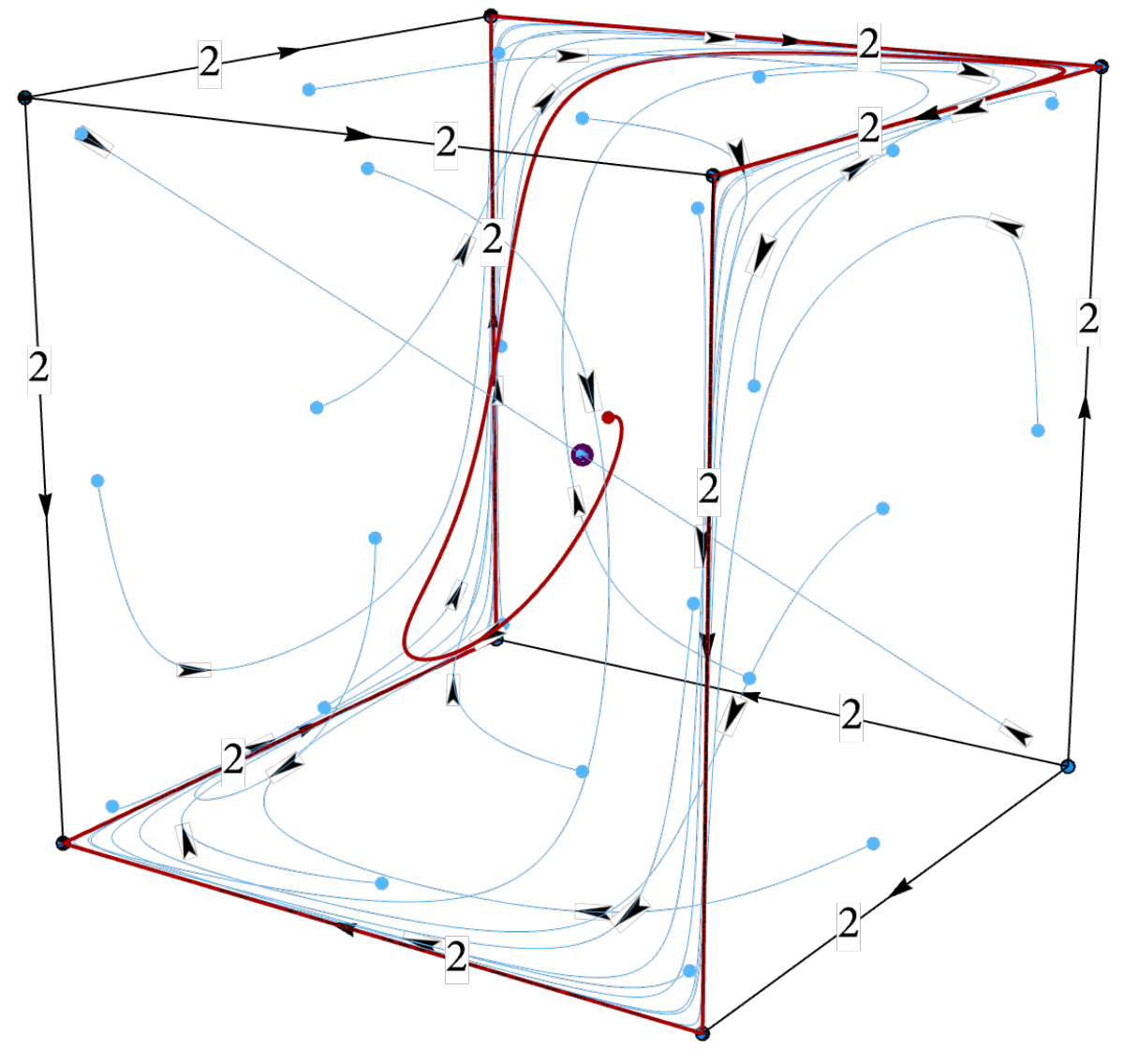}}%
\caption{When preferences and behavior \emph{do} align.}
\label{fig:align}
\end{subfigure}%
\caption{In each of these $2\times2\times2$ games, vertices are pure profiles and arrows indicate the direction of profitable unilateral deviations.
The part of the graph that is highlighted in red is \acf{club}, so players have no \emph{preferential} incentive to move away from these sets.
Translating this intuition to the players' dynamic behavior fails on the left:
trajectories starting arbitrarily close to the top face drift away toward the bottom face, contradicting the preferential viewpoint.
On the right the preferential and dynamical viewpoints are aligned: trajectories starting anywhere near the preferentially closed 6-edge cycle (highlighted in red) converge to it.}
\label{fig:examples}
\end{figure*}

\begin{quote}
\itshape
\centering
Can players learn to behave rationally\\
by adapting to feedback from their environment?
\end{quote}
\medskip

This question has been at the forefront of non-cooperative game theory and multi-agent learning \cite{FL98,CBL06,NRTV07}, where interacting agents are commonly modeled through their governing dynamics \citep{HS98,San10}.
In particular, in many modern applications of game theory\textemdash from recommender platforms and auctions to large-scale multi-agent reinforcement learning \citep{MHSY+13,BLM20,GJV22,GSP19,OPPT+19}\textemdash the classical premises of full rationality, knowledge of the game, equilibrium computation, and consistently optimal play are unrealistic.
As such, it is natural to take the above question as a starting point and ask what it implies for the players' long-run behavior in a game-theoretic context.

In view of this, we focus on the dynamics of \acdef{FTRL} \citep{Haz16,SS11,SSS06}, a class of no-regret learning procedures which has become the gold standard for learning in games \citep{CBL06}.
Under \ac{FTRL}, players react to \emph{cumulative} scores rather than instantaneous payoffs \citep{MS16,CGM15}, so correlations emerge endogenously as the dynamics unfold, prompting various forms of conflict and cooperation \citep{Han57,LM13,FPW24}.
At the same time, this richness makes it difficult to characterize the dynamics' asymptotic behavior in \emph{general} finite games \citep{AFP21}:
indeed, while strict \aclp{NE} are precisely the locally stable and attracting points of regularized learning \citep{CGM15,FVGL+20,GVM21,GVM21b,LGMB24,LMBB25b}, they need not govern global behavior, and in their absence learning may lead to recurrence or chaos \citep{MPP18,CP19,NBP20,LMPP+24,LMP24,PS14,HMC21,MHC24}.
Even worse, several impossibility results rule out the existence of uncoupled dynamics that converge to \acl{NE} in all games \citep{HMC03,BHS12,MPPS23}.

These considerations reveal that the relevant asymptotic objects are often \emph{set-valued}, not necessarily pointwise.
Accordingly, in lieu of asking when learning converges to a particular equilibrium, we ask which \emph{regions of the strategy space} can arise as stable long-run outcomes, and how the structural features of the game affect them.

Along these lines, a common modeling stance in game theory is that the primitive object of interest is the players' \emph{preferences}, whereas payoffs are merely a numerical representation thereof \citep{Deb54,BS25,MSZ20,Osb04}.
Motivated by this viewpoint, \citet{PP19} conjectured that the attractors of the replicator dynamics\textemdash the continuous-time limit of exponential weights \citep{TJ78,HSV09,Sor09,MM10}\textemdash are all determined by a directed graph encoding profitable unilateral deviations.
This conjecture was made precise in subsequent work by \citet{BS23}, who, exploiting invariance properties of the replicator flow, showed that every  attractor of the dynamics must contain certain \enquote{terminal components} of said graph.
Nonetheless, even though this inclusion shows that preferences and learning are intimately linked, it does not yield sufficient predictive power;
more precisely, it remains unclear which sets actually \emph{are} stable outcomes of learning, and to what extent they can be characterized by the game's preference structure alone.

\para{Our contributions in the context of related work}

Our paper echoes a recent shift in perspective by \citet{BP25} to the effect that the \enquote{preferences only} intuition may \emph{fail} in general:
concretely, we show that, in a wide class of regularized learning dynamics, ordinal information need not determine stability, and the game's \emph{actual} payoff values can be essential for predicting the long-run behavior of the dynamics.
This is particularly relevant in the context of applications of game theory to reinforcement learning and machine learning, where a designer may need to instantiate \emph{numerical} rewards in order to implement an intended \emph{ordinal} specification\textemdash as in the case of preference learning \citep{CLBM+17,HMRA16} or reward specification \citep{AOSC+16} for AI safety\textemdash and different cardinal realizations of the same ordering can lead to drastically different outcomes in the long run.

In more detail, our paper aims to develop a theory for regularized dynamics that delineates when preferences do and do not pin down asymptotic outcomes, and how to recover stability and attraction when they do not.
We do so through the game's \emph{preference graph}~\citep{BS25}, the directed graph on pure profiles whose arcs encode the directions of (weakly) profitable unilateral deviations, and which therefore retains only \emph{ordinal} incentive information.
Our guiding premise is that, in spite of this deliberate minimality, preferences impose strong constraints on stable learning outcomes, prompting the following natural question:
\medskip
\begin{center}
\emph{To what extent do player preferences determine\\the long-run behavior of their learning dynamics?}
\end{center}
\medskip

Our results show that the preference graph places unavoidable restrictions on the stable outcomes of regularized learning while also identifying regimes in which ordinal information alone cannot determine such outcomes.
Concretely:
\begin{enumerate}
\item
For a broad class of dynamics, we show that the pure profiles contained in any asymptotically stable set\textemdash the \emph{skeleton} of the set\textemdash must be \acdef{club}, \ie there are no profitable unilateral deviations leading outside this set.
As a follow-up to this core result, we also show that every attractor must contain the \emph{full} mixed region spanned by each connected set of pure profiles it contains.
Thus, if the preference graph of the game is strongly connected, the dynamics of \ac{FTRL} do not admit a proper attractor.
\item
For \emph{subgames}, we show that preferences do suffice:
a subgame is asymptotically stable if and only if the subgame itself is \acl{club}, complementing in this way previous work by \citet{RW95} on the replicator dynamics, and more recent results by \cite{BS23,CLM25} for stochastic variants of \ac{FTRL}.
As a special case, this also yields a sharp characterization of attractors in weakly acyclic games, extending results previously known only for potential games \citep{MerSan18}.
\item
Beyond subgames, we construct an example where a set of pure profiles is \acl{club} but its span is nevertheless unstable, even under full-support initialization, thus establishing that preferential and dynamical stability need not coincide under regularized learning.
\item
Finally, we introduce the notion of sets that are \acdef{rad}, a payoff-based condition which guarantees asymptotic stability.
This is a setwise generalization of pure \acl{NE} and a cardinal refinement of \acl{clubness}, and, to the best of our knowledge, it is the first such condition that applies to arbitrary games and arbitrary sets of pure strategies.
\acused{radness}
\end{enumerate}

All in all, the above can be seen as a preference-based toolbox for predicting the long-run behavior of regularized learning in games, clarifying in particular how the preference graph \emph{constrains}\textemdash but does not always \emph{determine}\textemdash the outcome of the dynamics.
In this regard, our results should be compared and contrasted to the conjectured correspondence \citep{PP19} between minimal \ac{club} sets\textemdash also known as \emph{sink equilibria} \citep{GMV05}\textemdash and minimal attractors of the replicator dynamics.
\citet{BP25} recently disproved this conjecture by showing that a sink equilibrium containing a ``local source'' cannot be an attractor of the replicator dynamics;
they then conjectured that this condition is also necessary, \ie \emph{any} sink equilibrium which is not an attractor of the replicator dynamics must contain a local source.
Our counterexample disproves a more stringent version of this conjecture to the effect that the orbits of \ac{FTRL} may drift away from a sink equilibrium which does not contain a \emph{pure} local source;
however, our example still contains a mixed local source, leaving the original conjecture open.%
\footnote{We are grateful to Oliver Biggar for pointing out this discrepancy in an earlier version of our paper.}

\citet{BP25} also introduced a payoff-based sufficient condition for asymptotic stability of sink equilibria in two-player games under the replicator dynamics.
This condition does not directly extend to arbitrary numbers of players, in contrast to \ac{radness}, which concerns arbitrary games and arbitrary sets of pure strategies.
Finally, in concurrent work, \citet{BP26} showed that the span of a \ac{club} cycle in the preference graph
(a cycle for which each vertex has a unique outgoing edge)
is an attractor of the replicator dynamics.%
\footnote{The first version of \cite{BP26} appeared when the current paper was under review;
it is also worth noting that the extension of the results of \cite{BP26} to more general dynamics remains an open question.}
This result intersects non-trivially with our own result on the asymptotic stability of \ac{rad} sets:
neither implies the other, and they can both be used to show the asymptotic stability of specific sink equilibria like the six-cycle of best responses in Jordan's Matching Pennies game\textemdash which had only been established using fairly ad hoc techniques in the past \cite{GH95}.
This gives another instance where preferences are enough for dynamic stability, but also shows that
a complete understanding of the attractors of learning in games remains elusive.

\section{Preliminaries}
\label{sec:prelims}

\para{Game and strategies}

We consider a finite normal-form game $\mathcal G$, specified by a player set $\N=\{1,\dots,n\}$, finite action sets $\A_i$ and payoff functions $u_i:\A\to\R$ for each $i\in\N$, where $\A:=\prod_{i\in\N}\A_i$ is the set of pure profiles.
A pure action profile is denoted by $\alpha=(\alpha_1,\dots,\alpha_n)\in\A$, and for a given player $i$ we write $\alpha_{-i}=(\alpha_j)_{j\neq i}$ for the opponents' component of $\alpha$.
In addition to pure actions, players may randomize over $\A_i$.
Accordingly, player $i$'s mixed strategy space is the simplex $\X_i=\Delta(\A_i)$, the set of probability distributions over $\A_i$, and we write $\X:=\prod_{i\in\N}\X_i$ for the set of mixed profiles $x=(x_1,\dots,x_n)$, which we call the game's \emph{strategy space}.
For $\alpha_i\in\A_i$, the coordinate $x_{i\alpha}$ denotes the probability assigned to $\alpha_i$ by $x_i$,%
and we set $x_{-i}=(x_j)_{j\neq i}$.
In addition, we write $x_\alpha \defeq \prod_{i\in\N}x_{i\alpha}$ for the probability of pure profile $\alpha$ being played.
We also use the notation $\X_i^\circ$ for the relative interior of $\X_i$ (i.e., the set of mixed actions with full support) and similarly, $\X^\circ:=\prod_i \X_i^\circ$ denote the relative interior of $\X$.
By a slight abuse of notation, we identify each pure action $\alpha_i$ with the vertex $e_{\alpha_i}$ (\ie the mixed strategy assigning probability $1$ to $\alpha_i$), and likewise each pure profile $\alpha$ with the vertex $e_\alpha$.
Finally, for $x_i \in \X_i$ and $x \in \X$, we define the \emph{support}:
\[
\supp(x_i) := \{ \alpha_i \in \A_i : x_{i\alpha} \neq 0 \}
\]
and $\supp(x):=\prod_i\supp(x_i).$

\para{Payoff field and equilibria} The payoff functions extend multilinearly to mixed profiles via expectation under the product distribution induced by $x$:
\[
u_i(x)=\sum_{\alpha\in\A}x_\alpha u_i(\alpha).
\]
Denoting $\YY_i:=\R^{\A_i}$ and $\YY:=\prod_{i\in\N}\YY_i$ for the game's \emph{payoff space}, we consider the \emph{payoff field}
$v:\X\to\YY$, defined componentwise by
\[
v_{i\alpha}(x)=u_i(\alpha_i,x_{-i}),
\quad\text{so that}\quad
\langle v_i(x),x_i\rangle=u_i(x).
\]
Equivalently, since $u_i(\,\cdot\,,x_{-i})$ is linear in $x_i$, the vector $v_i(x)$ coincides with the gradient of $u_i$ with respect to $x_i$:
\[
v_i(x)=\nabla_{x_i}u_i(x).
\]
A profile $x^\ast\in\X$ is a \acdef{NE} if no player can improve their payoff by a unilateral deviation from $x^\ast$, i.e.,
$u_i(x^\ast)\ge u_i(x_i,x_{-i}^\ast)$ for all $x_i\in\X_i$ and all $i\in\N$.
A \emph{strict} \acl{NE} is a pure profile $\alpha^\ast\in\A$ such that each player strictly prefers $\alpha_i^\ast$ against $\alpha_{-i}^\ast$ to any other pure action, namely
$u_i(\alpha^\ast)>u_i(\alpha_i,\alpha_{-i}^\ast)$ for all $\alpha_i\neq \alpha_i^\ast$.

\section{Preferences, learning and stability}
\label{sec:definitions}

In this section, we define the basics of preference-based solution concepts and regularized learning in games.

\subsection{Preference graphs and ordinal stability}

We begin with some basics concerning the players' preferences and related concepts.

\para{Preference graph}

We encode ordinal incentives via the game's \emph{preference graph}.
Its vertex set is $\A$, and it contains an arc $\alpha\to\alpha'$ whenever $\alpha$ and $\alpha'$ differ only in a single player's action (say player $i$, in which case $\alpha$ and $\alpha'$ are \emph{$i$-comparable}) and the deviation is weakly profitable for $i$, i.e., $u_i(\alpha')\ge u_i(\alpha)$.
Thus, for each pure profile, the outgoing edges indicate which comparable profiles are preferred by each player, hence inducing a (partial) order on the pure profiles.
If $\alpha \to \alpha'$ and $\alpha' \to \alpha$ we say there is a \emph{tie} and we denote it $\alpha \leftrightarrow \alpha'$.
It is clear that a game having no ties is a \emph{generic} property, meaning it holds for an open dense subset of games.
Put differently, one can always find an arbitrarily small perturbation of the game which resolves ties.

\begin{figure}[t]
\centering
\begin{tabular}{c|ccc}
 & $\mathrm A$ & $\mathrm B$ & $\mathrm C$ \\ \hline
$\mathrm A$ & $(0,0)$ & $(2,1)$ & $(1,2)$ \\
$\mathrm B$ & $(1,2)$ & $(0,0)$ & $(2,1)$ \\
$\mathrm C$ & $(2,1)$ & $(1,2)$ & $(0,0)$
\end{tabular}
\\
%
%
\begin{tikzpicture}[scale=1, transform shape, >=Stealth, line cap=round,
  node/.style={circle,draw=black,minimum size=6.2mm,inner sep=0pt,font=\small},
  sink/.style={node,draw=black},
  edge/.style={->,draw=black,line width=0.9pt},
  sinkedge/.style={->,draw=black,line width=0.9pt}
]
\node[sink] (n12) at (0,1.85)      {$(\mathrm A,\mathrm B)$};
\node[sink] (n13) at (1.65,0.93)   {$(\mathrm A,\mathrm C)$};
\node[sink] (n23) at (1.65,-0.93)  {$(\mathrm B,\mathrm C)$};
\node[sink] (n21) at (0,-1.85)     {$(\mathrm B, \mathrm A)$};
\node[sink] (n31) at (-1.65,-0.93) {$(\mathrm C,\mathrm A)$};
\node[sink] (n32) at (-1.65,0.93)  {$(\mathrm C,\mathrm B)$};

\draw[sinkedge] (n12) -- (n13);
\draw[sinkedge] (n13) -- (n23);
\draw[sinkedge] (n23) -- (n21);
\draw[sinkedge] (n21) -- (n31);
\draw[sinkedge] (n31) -- (n32);
\draw[sinkedge] (n32) -- (n12);

\node[node] (d33) at (0,3.25)      {$(\mathrm C,\mathrm C)$};
\node[node] (d11) at (-3.25,-2.25) {$(\mathrm A,\mathrm A)$};
\node[node] (d22) at (3.25,-2.25)  {$(\mathrm B,\mathrm B)$};

\draw[edge,bend left=8]  (d11) to (n31);
\draw[edge,bend left=2]  (d11) to (n21);
\draw[edge,bend right=38] (d11) to (n12);
\draw[edge,bend right=20] (d11) to (n13);

\draw[edge,bend right=2] (d22) to (n21);
\draw[edge,bend right=8] (d22) to (n23);
\draw[edge,bend left=38] (d22) to (n12);
\draw[edge,bend left=20] (d22) to (n32);

\draw[edge] (d33) .. controls (-1.20,2.90) and (-2.55,2.05) .. (n32);
\draw[edge] (d33) .. controls ( 1.20,2.90) and ( 2.55,2.05) .. (n13);

\draw[edge] (d33) .. controls (-3.20,2.85) and (-3.20,-0.10) .. (n31);
\draw[edge] (d33) .. controls ( 3.20,2.85) and ( 3.20,-0.10) .. (n23);

\end{tikzpicture}
\caption{Payoffs and preferences in \emph{Shapley's game} \cite{Sha64}.
The cycle of six profiles in the center\textemdash the \enquote{hexagon}\textemdash is the game's unique proper \ac{club} set;
since it is minimal, it is a sink equilibrium.
See \cref{app:preferences} for more examples and details.}
\label{fig:shapley-main}
\end{figure}

\para{Ordinal stability and connectedness}

We now define the notion of ``stability under unilateral improvements''.
\begin{definition}
We say $\HH \subseteq \A$ is \acdef{club} if it has no outgoing arc to $\A\setminus\HH$.
Similarly, we say that $\HH$ is \acdef{sclub} if it has no outgoing arc unless it's a tie.
\end{definition}
\acused{clubness}
\acused{sclubness}

Equivalently, a \ac{club} set is stable under \emph{weakly profitable} unilateral deviation:
if $\alpha\in \HH$ and $\alpha \to \beta$ then $\beta \in \HH$ as well.
Likewise, an \ac{sclub} set is stable under \emph{strictly profitable} unilateral deviations:
if $\alpha\in \HH$ and $\alpha \to \beta$ then either $\beta \in \HH$ or $\beta \to \alpha$ as well.
For concision, we will say a \ac{club}\,/\,\ac{sclub} set is \emph{proper} if it is a nonempty proper subset of $\A$.

We next turn from ordinal stability to a complementary structural feature: \emph{connectedness}.

\begin{definition}
A set $\HH\subseteq\A$ is \emph{strongly connected} if any two vertices in $\HH$ can be joined by a directed path (a unilateral improvement path) that remains in $\HH$.
\end{definition}

A nonempty set of pure profiles which is both \ac{club} and strongly connected is called a \emph{sink equilibrium}, it is then also a \emph{minimal} \ac{club} set, that is, a nonempty \ac{club} set which contains no proper \ac{club} set.

\para{Spans, subgames and skeletons}

Given any set of pure profiles $\HH\subseteq\A$, we define its \emph{span} as the region of mixed profiles supported on $\HH$, \viz%
\footnote{This has also been called the \emph{content} of the set \citep{BS23,BS24,BP25}.}
\[
\cont(\HH)
	\defeq \setdef{\strat\in\strats}{\strat_{\pure} = 0 \text{ for all } \alpha\not\in\HH}
\]
A set $\B\subseteq\A$ is a \emph{subgame} if it has a product structure $\B=\prod_{i\in\N}\B_i$ with nonempty $\B_i\subseteq\A_i$.
Its span is then the corresponding \emph{face} $\F= \cont(\B)=\prod_{i\in\N}\Delta(\B_i)$.
More generally, for any $\HH \subseteq \A$, its span corresponds to the union of faces of the strategy space for which all vertices are in $\HH$.
Finally, given a set $\set \subseteq \X$ we define its \emph{skeleton} $\skel(\set)$ as the set of pure profiles it contains, \ie
\[
\skel(\set)
	\defeq \set \cap \A.
\]
For an illustration, \cf \cref{fig:span}.

\begin{remark}
\label{rem:skel-span}
Note that $\skel$ is a left inverse of $\cont$ in the sense that $\skel(\cont(\HH)) = \HH$ for all $\HH\subseteq\pures$.
The converse, however, is not true:
for example, $\cont(\skel(\set)) = \varnothing$ if $\set\subseteq\relint\strats$ is a set of fully mixed strategies.
\endenv
\end{remark}

\begin{remark}
\label{rmrk:singletons}
Importantly, a pure strategy profile $\pure\in\pures$ is a \emph{pure} \textpar{resp.~\emph{strict}} \acl{NE} \emph{if and only if} $\braces{\pure}$ is \ac{sclub} \textpar{resp.~\ac{club}}.
Note in particular the inversion of adjectives due to the ``closedness'' proviso in the definition of \ac{club}\,/\,\ac{sclub} sets.
\endenv
\end{remark}

\begin{figure}[tbp]
\renewcommand{\thisfigscale}{2.5}
\centering
%
%
\begin{tikzpicture}[
	scale=\thisfigscale,
	line join=round,
	line cap=round,
	edge/.style={semithick,draw=black},
	sinkspan/.style={draw=MyDarkRed,very thick},
	midorient/.style={-{stealth},thick,draw=black},
	sinkorient/.style={-{stealth},thick,draw=MyDarkRed}
]

\coordinate (FLB) at (0,0);
\coordinate (FRB) at (1,0);
\coordinate (FRT) at (1,1);
\coordinate (FLT) at (0,1);

\coordinate (BLB) at (0.40,0.30);
\coordinate (BRB) at (1.40,0.30);
\coordinate (BRT) at (1.40,1.30);
\coordinate (BLT) at (0.40,1.30);

\draw[edge] (BLT)--(BRT)--(BRB)--(BLB)--cycle;
\draw[sinkspan] (BLB)--(BLT)--(BRT);

\draw[edge] (FLT)--(BLT);
\draw[edge] (FRT)--(BRT);
\draw[edge] (FRB)--(BRB);
\draw[edge] (FLB)--(BLB);
\draw[sinkspan] (BRT)--(FRT);
\draw[sinkspan] (FLB)--(BLB);

\draw[edge] (FLT)--(FRT)--(FRB)--(FLB)--cycle;
\draw[sinkspan] (FRT)--(FRB)--(FLB);

\draw[midorient] ($(FLT)!0.48!(FRT)$) -- ($(FLT)!0.52!(FRT)$);
\draw[sinkorient] ($(BLT)!0.48!(BRT)$) -- ($(BLT)!0.52!(BRT)$);
\draw[midorient] ($(BRB)!0.48!(BLB)$) -- ($(BRB)!0.52!(BLB)$);
\draw[sinkorient] ($(FRB)!0.48!(FLB)$) -- ($(FRB)!0.52!(FLB)$);

\draw[midorient] ($(FLT)!0.48!(FLB)$) -- ($(FLT)!0.52!(FLB)$);
\draw[sinkorient] ($(FRT)!0.48!(FRB)$) -- ($(FRT)!0.52!(FRB)$);
\draw[sinkorient] ($(BLB)!0.48!(BLT)$) -- ($(BLB)!0.52!(BLT)$);
\draw[midorient] ($(BRB)!0.48!(BRT)$) -- ($(BRB)!0.52!(BRT)$);

\draw[midorient] ($(FLT)!0.48!(BLT)$) -- ($(FLT)!0.52!(BLT)$);
\draw[sinkorient] ($(BRT)!0.48!(FRT)$) -- ($(BRT)!0.52!(FRT)$);
\draw[sinkorient] ($(FLB)!0.48!(BLB)$) -- ($(FLB)!0.52!(BLB)$);
\draw[midorient] ($(BRB)!0.48!(FRB)$) -- ($(BRB)!0.52!(FRB)$);

\foreach \V in {FLB,FRB,FRT,BLB,BRT,BLT}{
\fill[MyDarkRed] (\V) circle (0.9pt);
}
\foreach \V in {FLT,BRB}{
\fill[black] (\V) circle (0.9pt);
}

\end{tikzpicture}
\qquad
%
%
\begin{tikzpicture}[
	scale=\thisfigscale,
	line join=round,
	line cap=round,
	edge/.style={semithick,draw=black},
	sinkspan/.style={draw=MyDarkRed,very thick},
	midorient/.style={-{stealth},thick,draw=black},
	sinkorient/.style={-{stealth},thick,draw=MyDarkRed}
]

\coordinate (FLB) at (0,0);
\coordinate (FRB) at (1,0);
\coordinate (FRT) at (1,1);
\coordinate (FLT) at (0,1);

\coordinate (BLB) at (0.40,0.30);
\coordinate (BRB) at (1.40,0.30);
\coordinate (BRT) at (1.40,1.30);
\coordinate (BLT) at (0.40,1.30);

\fill[MyLightRed,opacity=0.25] (FLT)--(FRT)--(FRB)--(FLB)--cycle;

\draw[edge] (FLT)--(FRT)--(FRB)--(FLB)--cycle;
\draw[edge] (BLT)--(BRT)--(BRB)--(BLB)--cycle;
\draw[edge] (FLT)--(BLT);
\draw[edge] (FRT)--(BRT);
\draw[edge] (FRB)--(BRB);
\draw[edge] (FLB)--(BLB);

\draw[sinkspan] (FLT)--(FRT)--(FRB)--(FLB)--cycle;

\draw[sinkorient] ($(FLT)!0.48!(FRT)$) -- ($(FLT)!0.52!(FRT)$);
\draw[sinkorient] ($(FRB)!0.48!(FLB)$) -- ($(FRB)!0.52!(FLB)$);
\draw[midorient] ($(BRT)!0.48!(BLT)$) -- ($(BRT)!0.52!(BLT)$);
\draw[midorient] ($(BLB)!0.48!(BRB)$) -- ($(BLB)!0.52!(BRB)$);

\draw[sinkorient] ($(FLB)!0.48!(FLT)$) -- ($(FLB)!0.52!(FLT)$);
\draw[sinkorient] ($(FRT)!0.48!(FRB)$) -- ($(FRT)!0.52!(FRB)$);
\draw[midorient] ($(BLT)!0.48!(BLB)$) -- ($(BLT)!0.52!(BLB)$);
\draw[midorient] ($(BRB)!0.48!(BRT)$) -- ($(BRB)!0.52!(BRT)$);

\draw[midorient] ($(BLT)!0.48!(FLT)$) -- ($(BLT)!0.52!(FLT)$);
\draw[midorient] ($(BRT)!0.48!(FRT)$) -- ($(BRT)!0.52!(FRT)$);
\draw[midorient] ($(BLB)!0.48!(FLB)$) -- ($(BLB)!0.52!(FLB)$);
\draw[midorient] ($(BRB)!0.48!(FRB)$) -- ($(BRB)!0.52!(FRB)$);

\foreach \V in {FLB,FRB,FRT,FLT}{
\fill[MyDarkRed] (\V) circle (0.9pt);
}
\foreach \V in {BLB,BRB,BRT,BLT}{
\fill[black] (\V) circle (0.9pt);
}

\end{tikzpicture}
\qquad
%
%
\begin{tikzpicture}[
	scale=\thisfigscale,
	line join=round,
	line cap=round,
	edge/.style={semithick,draw=black},
	sinkspan/.style={draw=MyDarkRed,very thick},
	midorient/.style={-{stealth},thick,draw=black},
	sinkorient/.style={-{stealth},thick,draw=MyDarkRed}
]

\coordinate (FLB) at (0,0);
\coordinate (FRB) at (1,0);
\coordinate (FRT) at (1,1);
\coordinate (FLT) at (0,1);

\coordinate (BLB) at (0.40,0.30);
\coordinate (BRB) at (1.40,0.30);
\coordinate (BRT) at (1.40,1.30);
\coordinate (BLT) at (0.40,1.30);

\fill[MyLightRed,opacity=0.25] (FLT)--(FRT)--(BRT)--(BLT)--cycle;
\fill[MyLightRed,opacity=0.25] (FRT)--(FRB)--(BRB)--(BRT)--cycle;

\draw[edge] (FLT)--(FRT)--(FRB)--(FLB)--cycle;
\draw[edge] (BLT)--(BRT)--(BRB)--(BLB)--cycle;
\draw[edge] (FLT)--(BLT);
\draw[edge] (FRT)--(BRT);
\draw[edge] (FRB)--(BRB);
\draw[edge] (FLB)--(BLB);

\draw[sinkspan] (FLT)--(FRT)--(BRT)--(BLT)--cycle;
\draw[sinkspan] (FRT)--(FRB)--(BRB)--(BRT)--cycle;

\draw[sinkorient] ($(FLT)!0.48!(FRT)$) -- ($(FLT)!0.52!(FRT)$);
\draw[sinkorient] ($(FRT)!0.48!(FRB)$) -- ($(FRT)!0.52!(FRB)$);
\draw[sinkorient] ($(FRB)!0.48!(BRB)$) -- ($(FRB)!0.52!(BRB)$);
\draw[sinkorient] ($(BRB)!0.48!(BRT)$) -- ($(BRB)!0.52!(BRT)$);
\draw[sinkorient] ($(BRT)!0.48!(BLT)$) -- ($(BRT)!0.52!(BLT)$);
\draw[sinkorient] ($(BLT)!0.48!(FLT)$) -- ($(BLT)!0.52!(FLT)$);
\draw[sinkorient] ($(BRT)!0.48!(FRT)$) -- ($(BRT)!0.52!(FRT)$);

\draw[midorient] ($(FLB)!0.48!(FLT)$) -- ($(FLB)!0.52!(FLT)$);
\draw[midorient] ($(FLB)!0.48!(FRB)$) -- ($(FLB)!0.52!(FRB)$);
\draw[midorient] ($(FLB)!0.48!(BLB)$) -- ($(FLB)!0.52!(BLB)$);
\draw[midorient] ($(BLB)!0.48!(BLT)$) -- ($(BLB)!0.52!(BLT)$);
\draw[midorient] ($(BLB)!0.48!(BRB)$) -- ($(BLB)!0.52!(BRB)$);

\foreach \V in {FRB,FRT,FLT,BRB,BRT,BLT}{
\fill[MyDarkRed] (\V) circle (0.9pt);
}
\foreach \V in {FLB,BLB}{
\fill[black] (\V) circle (0.9pt);
}

\end{tikzpicture}
\caption{Preference graphs of three $2 \times 2 \times 2$ games.
In each one, the red vertex set is a minimal \ac{club} set\,/\,sink equilibrium, and the red region represent its span (an edge cycle on the left, a square face in the middle, and a union of two square faces on the right).}
\label{fig:span}
\end{figure}

\subsection{Regularized learning and dynamical stability}

We continue with our basic learning model and related concepts.

\para{Regret and \ac{FTRL}}

A central benchmark for adaptive learning is the notion of \emph{regret}.
In our setting, given a trajectory of play $\strat(\time)\in\strats$, $\time\geq0$, we define the regret of player $\play\in\players$ up to time $\horizon\geq0$ as
\[
\reg_{\play}(\horizon)
	\defeq \max_{\pure_{\play}\in\pures_{\play}}
		\int_{\tstart}^{\horizon}
			\bracks{\pay_{\play}(\pure_{\play};\strat_{-\play}(\time)) - \pay_{\play}(\strat(\time))}
		\dd\time
\]
\ie as the difference between the player's cumulative payoff up to time $\horizon$ and that of their best strategy in hindsight.
We then say that the player incurs \emph{no regret} if $\reg_{\play}(\horizon) = o(\horizon)$.

A broad and widely used family of no-regret dynamics is \acdef{FTRL}.
The guiding principle of \ac{FTRL} is that each player $i$ maintains a vector of \emph{scores}, i.e., aggregate payoffs, and selects a mixed action by trading off these scores against a regularization term that smoothens the dynamics and incentivizes exploration.
Concretely, fix for each player $i$ a \emph{regularizer} $h_i:\X_i\to\R\cup\{+\infty\}$ and define the associated \emph{choice map} $Q_i : \YY_i \to \X_i$
\[
Q_i(y_i) \defeq \arg\max_{x_i\in\X_i}\{\langle y_i,x_i\rangle-h_i(x_i)\}.
\]
With this notation, the \ac{FTRL} dynamics can be written for each player $i$ as:
\begin{equation}\label{eq:FTRL}
\underbrace{\dot y_i(t) = v_i(x(t))}_{\text{aggregate payoffs}}
\qquad
\underbrace{x_i(t) = Q_i(y_i(t))}_{\text{strategy update}}.
\tag{FTRL}
\end{equation}
For the regularizer, we will assume for our purposes that it is \emph{decomposable}, meaning that
\[
h_i(x_i)=\sum_{\alpha_i\in\A_i}\hker_{\play}(x_{i\alpha}),
\quad\hker_{\play}: [0,1] \to \R,
\]
where we call $\hker_{\play}$ the \emph{kernel} of $h_i$.
To streamline our presentation, we will assume throughout that $\hker_{\play}$ is
\begin{enumerate*}
[\upshape(\itshape a\upshape)]
\item
continuous on $[0,1]$;
\item
$C^{2}$-smooth on $(0,1]$;
and
\item
\emph{strongly convex}, \ie $\hker''_{\play}(z) \geq \hstr_{\play}$ for some $\hstr_{\play}>0$ and for all $z\in(0,1]$.
\end{enumerate*}
For consision, we will also write $h \defeq \sum_{i \in \N} h_i$ for the aggregate regularizer  and $Q \defeq (Q_i)_{i \in \N} : \YY \to \X$ for the induced choice map.
This regularization viewpoint yields clean regret guarantees.
Indeed \cite{KM17} proved that in continuous time, along any solution of \eqref{eq:FTRL}, player $i$'s regret is uniformly bounded by the range of the regularizer:
\[
\forall T \ge 0, \quad\reg_i(T)\le \max h_{i}- \min h_{i},
\]
In this sense, the regularizer influences performance, but it also plays a second, equally important role: it determines the geometry of the induced learning dynamics via $\choice$.

\para{Strategy dynamics} We call $h_i$ \emph{steep} if $\hker_{\play}'(p)\to -\infty$ as $p\downarrow 0$.
In that case the regularization cost becomes prohibitive near the boundary, so the regularized maximizer is always interior, that is, its image satisfies $\ImQ_i=\X_i^\circ$.
By contrast, if $h_i$ is not steep, then boundary points may in general be selected and typically reached in finite time.
For conciseness, we will say $h$ is steep if $h_i$ is steep for every player.
Two canonical regularizers illustrate this dichotomy:

\begin{example}
For the \emph{entropic} regularizer, with steep kernel $\hker_{\play}(p)=p\log p$,
the choice map is the softmax:
\[
\Lambda_i(y_i)
 \defeq \left(\frac{\exp(y_{i\alpha})}{\sum_{\beta_i\in\A_i}\exp(y_{i\beta})}\right)_{\alpha_i\in\A_i}.
\]
which induces, in strategy space, the \emph{replicator dynamics}
\begin{equation}\label{eq:RD}
\dot x_{i\alpha}=x_{i\alpha}\big(v_{i\alpha}(x)- u_i(x)\big).
\tag{RD}
\end{equation}
\end{example}

\begin{example}
    At the other extreme, for the \emph{Euclidean} regularizer, whose  kernel $\hker_{\play}(p)=\tfrac12p^2$ is not steep, the choice map is the Euclidean projection into $\X_i$: \[\Pi_i(y_i) \defeq \arg\min_{x_i\in\X_i}\|x_i-y_i\|_2^2.\] Hence, the induced dynamics are the piecewise-smooth \emph{projection dynamics} \cite{MerSan18}, and are most transparently expressed on each fixed-support face: if $x_i(t)\in\Delta(\B_i)^\circ$ for $t$ in some interval, $x(t)$ follows the \emph{projection dynamics}
\begin{equation}\label{eq:PD}
\dot x_{i\alpha} = v_{i\alpha}(x)-\dfrac{1}{|\B_i|}\sum_{\beta_i\in \B_i} v_{i\beta}(x), \quad \alpha_i\in \B_i.
\tag{PD}
\end{equation}
\end{example}

More generally, it is useful to eliminate the score variables and write the induced evolution directly in strategy space.
On any time interval where player $i$'s support is $\B_{i}$, let
\[
s_i
	\defeq \frac{1}{\hker_{\play}''}
	\;
	\text{and}
	\;
\pi_{i\alpha}(x_i)
	\defeq \begin{dcases*}
		\frac{s_i(x_{i\alpha})}{\sum_{\gamma_i\in \B_i}s_i(x_{i\gamma})}
			&if $\alpha_i \in \B_i$,
			\\
		0
			&otherwise.
		\end{dcases*}
\]
Then $\strat(\time)$ follows the \emph{strategy dynamics}
\begin{equation}\label{eq:SD}
\dot x_{i\alpha}
=s_i(x_{i\alpha})\big(v_{i\alpha}(x)-\langle \pi_i(x_i),v_i(x)\rangle\big).
\tag{SD}
\end{equation}
Finally, fix the initial support $\B=\supp(x(0))$ and let $\F=\cont(\B)$ be the corresponding face.
If the regularizer is steep, the support never changes and trajectories remain in $\F^\circ$ for all time.
Thus, under some additional mild regularity assumptions on $s_i$, the facewise vector fields coincide on intersections of adjacent faces and therefore glue into a globally Lipschitz vector field on $\X$.
Hence the induced strategy dynamics are globally well-posed
for all $t\in\R$ and are \emph{face-invariant}, meaning the support never changes along the dynamics.
We call the resulting flow the \emph{strategy flow} and denote it by $(\Theta_t)_{t\in\R}$ (see \cref{app:dynamics} for more details).

\para{Dynamic stability and attraction}

Informally, stability means that orbits starting sufficiently close to a set remain close to it for all time, while attraction means that trajectories starting nearby converge to said set.
Formally:

\begin{definition}
Let $x(t) = Q(y(t))$ be an orbit of \eqref{eq:FTRL}, and let $\set\subseteq\X$ be a nonempty closed subset of $\strats$.
Then:
\begin{subequations}
\begin{enumerate}
\item
$\set$ is \emph{\textpar{Lyapunov} stable} if for every neighborhood $\nhd$ of $\set$ there exists a neighborhood $\nhdalt$ of $\set$ such that
\[
x(0)\in \nhdalt \cap \ImQ
	\implies
x(t)\in \nhd\;\text{for all }t\ge 0
	\eqstop
\]
\item
$\set$ is \emph{attracting} if it admits a neighborhood $\nhd$ such that
\[
x(0)\in \nhd \cap \ImQ
	\implies
\lim\nolimits_{t\to\infty}\dist(x(t),\set)= 0
	\eqstop\!
\]
\item
$\set$ is \emph{asymptotically stable} if it is stable and attracting.
\end{enumerate}
\end{subequations}
\end{definition}

As for the induced strategy flow $(\Theta_t)_{t\in \R}$ of \eqref{eq:SD}, we shall work with the notion of \emph{attractor}, which is an invariant\footnote{For the strategy flow, invariant means that $S\subseteq\X$ satisfies $\Theta_t(S)=S$ for all $t\in\R$.} asymptotically stable set (see \cref{app:attractors} for precise definitions and related discussions concerning attractors).
Importantly, for this strategy flow notion of attractor, we allow nearby initializations on \emph{all of $\X$}, including all adjacent faces, whereas for asymptotic stability under \eqref{eq:FTRL}, the only allowed initializations are on \emph{the image of the choice map} (in particular, $\X^\circ$ for steep regularizers).

\section{Implications of dynamic stability}
\label{sec:dynstable}

This section develops a systematic link between two levels of description:
(1) combinatorial properties of subsets of the preference graph (ordinal stability notions), and
(2) dynamical properties of regions of strategy space (dynamic stability notions).
The guiding theme is that ordinal incentives, encoded by the preference graph, impose constraints on which regions of~$\X$ can be long-run stable outcomes under regularized learning.
All proofs are deferred to \cref{app:dynstable}.

We start with the first connection, which is that attraction for the strategy flow forces no outgoing better replies:

\begin{proposition}
\label{prop:attract-club}
Suppose that $S \subseteq \X$ is attracting under the strategy dynamics \eqref{eq:SD}.
Then its skeleton is \ac{club}.
\end{proposition}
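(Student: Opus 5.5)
Suppose for contradiction that $\alpha\in\skel(S)$, and that $\alpha\to\beta$ is an arc with $\beta\notin\skel(S)$. Then $\alpha$ and $\beta$ are $i$-comparable for some player $i$, and $u_i(\beta)\ge u_i(\alpha)$. The plan is to look at the edge of $\X$ joining the vertices $e_\alpha$ and $e_\beta$. This is the one-dimensional face $E=\cont(\{\alpha,\beta\})$, consisting of the profiles in which every player $j\neq i$ plays $\alpha_j$ and player $i$ mixes between $\alpha_i$ and $\beta_i$. By face-invariance of the strategy flow $(\Theta_t)$, the edge $E$ is invariant. On $E$ the opponents' profile is fixed at $\alpha_{-i}$, so $v_i$ is constant there. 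Writing $p=x_{i\beta_i}$, the dynamics \eqref{eq:SD} reduce to the scalar equation
\[
\dot p = s_i(p)\bigl(u_i(\beta)-\langle \pi_i,v_i\rangle\bigr)
= \frac{s_i(p)s_i(1-p)}{s_i(p)+s_i(1-p)}\bigl(u_i(\beta)-u_i(\alpha)\bigr)\ge 0
\]
for $p\in(0,1)$. Thus $p$ is nondecreasing along every orbit in $E$. It is strictly increasing if the deviation is strict, and stationary if there is a tie.

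\emph{Strict case.} Since $S$ is attracting, there is a neighborhood $\nhd$ of $S$ whose orbits all converge to $S$. The vertex $e_\alpha$ lies in $S$, so $\nhd$ contains points of $E$ with small $p>0$. Start the flow at such a point. Then $p(t)$ increases monotonically to the unique equilibrium of the scalar ODE in $(0,1]$, which is $p=1$, i.e.\ $e_\beta$. Here $s_i>0$ on $(0,1)$, and one should check that $p$ cannot stall before reaching $1$; in the non-steep case $e_\beta$ may even be reached in finite time. The $\omega$-limit of this orbit is therefore $\{e_\beta\}$, so $\dist(e_\beta,S)=0$. The set $S$ is closed, hence $e_\beta\in S$ and $\beta\in\skel(S)$, a contradiction.

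\emph{Tie case.} If $u_i(\beta)=u_i(\alpha)$, every point of $E$ is stationary. Attraction then forces every point of $E\cap\nhd$ to lie in $S$, since each is its own limit and $S$ is closed. This alone does not reach $e_\beta$, so the plan is to propagate along $E$. Let $p^*=\sup\{p: [0,p]\subseteq S\cap E\}$. The set $S\cap E$ is closed, and the point of $E$ at $p^*$ lies in $S$. Hence $\nhd$ also contains a neighborhood of that point, and by the same stationarity argument the points slightly beyond $p^*$ lie in $S$. So $p^*=1$, giving $e_\beta\in S$ again.

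The main technical point is that the attracting neighborhood $\nhd$ must be understood in the sense of the strategy flow on all of $\X$, including boundary faces. This is exactly what lets us initialize on the edge $E$; the paper stresses this distinction for attractors. A second point is verifying the scalar reduction on $E$, and its validity in the non-steep case, where $E$ is still invariant under the glued Lipschitz field.
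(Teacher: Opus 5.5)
Your proof is correct and follows the paper's argument essentially step for step. You restrict to the invariant edge, handle the strict case by showing the edge orbit converges to $e_\beta$, and handle ties by the same supremum-propagation argument over stationary points. The only difference is in the strict case: you justify convergence through the explicit scalar equation $\dot p = \frac{s_i(p)s_i(1-p)}{s_i(p)+s_i(1-p)}\bigl(u_i(\beta)-u_i(\alpha)\bigr)$, where the paper invokes extinction of dominated strategies (\cref{thm:extinction}).

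Your asides about the non-steep case are moot, and the finite-time claim is inaccurate. The strategy flow is only defined under \cref{asm:well-posed}, which forces steepness and makes $e_\beta$ a rest point of a Lipschitz field, so it cannot be reached in finite time. Neither point affects the argument.
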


The intuition behind this is that, under steep regularization, the dynamics \eqref{eq:SD} are face-invariant, so starting on the $1$-dimensional face spanned by two comparable vertices, the orbit stays on that segment, and the flow always follows the direction of the better reply.
Outside the strategy flow regime, face-invariance is no longer available and we can no longer exploit initializations on proper subfaces of the strategy space.
Nonetheless we are still able to prove, via new techniques utilizing the dual score representation of the dynamics, a strict version of the same result for stable sets, thereby showing that dynamical stability implies preferential stability.

\begin{theorem}
\label{thm:stable-sclub}
Suppose that $S \subseteq \X$ is stable under \eqref{eq:FTRL}.
Then its skeleton is \ac{sclub}.
\end{theorem}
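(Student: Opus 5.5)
Suppose the skeleton $\HH=\skel(S)$ is not \ac{sclub}. Then there are $\alpha\in\HH$ and $\beta\notin\HH$ that are $i$-comparable for some player $i$, with the deviation strictly profitable: $u_i(\beta_i,\alpha_{-i})>u_i(\alpha)$. We will show that $S$ is not stable. Because $\beta\notin S$ and $S$ is closed, we can choose a neighborhood $\nhd$ of $S$ whose closure stays a positive distance away from the vertex $\beta$. It then suffices to build initial conditions $x(0)\in\ImQ$, arbitrarily close to $\alpha$ (and so inside any prescribed neighborhood $\nhdalt$ of $S$), whose \eqref{eq:FTRL} orbits eventually come within any given distance of $\beta$.

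The argument runs in the dual score space. Face-invariance is not available for general (possibly non-steep) $h$, but the scores $y$ evolve through a simple integral equation. Set $\delta\defeq u_i(\beta_i,\alpha_{-i})-u_i(\alpha)>0$. By continuity of $v$, there is a neighborhood $W$ of the face $\{x: x_{-i}=\alpha_{-i}\}$ (for $x_{-i}$ close to $\alpha_{-i}$) on which $v_{i\beta}(x)-v_{i\alpha}(x)\ge\delta/2$, and on which, for every player $j\neq i$, $\alpha_j$ remains nearly a best reply in the sense that $v_{j\gamma}(x)-v_{j\alpha}(x)\le c$ for a bound $c$ independent of how close we start. The first step is to choose initial scores. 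For $j\neq i$, take $y_{j\alpha}(0)-y_{j\gamma}(0)=M$ very large for all $\gamma\neq\alpha_j$. For player $i$, take $y_{i\alpha}(0)-y_{i\beta}(0)=m$ moderately large, and the remaining scores of $i$ very negative. Strong convexity of the kernels ensures that $Q_j(y_j)$ is within $O(1/(\hstr_j M))$ of $e_{\alpha_j}$ (and exactly equal to it once $M$ exceeds a threshold when $h$ is not steep), so $x(0)$ is close to $\alpha$.

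The second step is to track the dynamics. For as long as $x(t)\in W$, the score gap $y_{i\beta}-y_{i\alpha}$ grows at rate at least $\delta/2$, while the gaps $y_{j\alpha}-y_{j\gamma}$ for $j\neq i$ shrink at rate at most $c$. Starting from gap $-m$, after time $T=2(m+K)/\delta$ player $i$'s gap reaches $+K$, for any prescribed $K$. We pick $M\gg cT$ so that the other players' gaps remain large over $[0,T]$, which keeps $x_{-i}(t)$ close to $\alpha_{-i}$ and hence $x(t)\in W$ throughout. Since the scores of $i$'s other actions also start very negative and change at bounded rate, player $i$'s mass sits on $\{\alpha_i,\beta_i\}$ during this window. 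At time $T$, the choice map $Q_i$ therefore places mass close to $1$ on $\beta_i$, which puts $x(T)$ near $\beta$ and outside $\nhd$. Letting $m,M\to\infty$ moves $x(0)$ to $\alpha$, so no neighborhood $\nhdalt$ works, and $S$ fails to be stable.

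The main obstacle is the uniformity in the second step. We must ensure that $W$ is preserved for the whole window even though $T$ grows with $m$, and that the quantitative lower bound on $v_{i\beta}-v_{i\alpha}$ holds uniformly on $W$. The plan handles this by fixing $\delta$ and $W$ first, then $m$ (which sets how close $x_i(0)$ is to $\alpha_i$), and only then $M$, so that the mixing of the other players, of order $1/M$ via the Lipschitz/strong-convexity bound on $Q_j$, stays negligible over the window. A related subtlety concerns non-steep $h$: initial points must lie in $\ImQ$, but every $Q(y)$ qualifies, so this costs nothing. One also needs $x_i(0)$ close to $\alpha_i$ while $y_{i\beta}$ is not hopelessly low; this is automatic from the chosen $m$.
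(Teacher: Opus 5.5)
Your proposal is correct and is essentially the paper's own argument. It uses the same score-space initialization: the opponents' gaps are budgeted as a freezing threshold plus the payoff spread times the window length. The deviator starts with a gap $m$ that is overturned at rate $\delta/2$, and his remaining actions get very negative scores. The quantifiers come in the same order as in the paper: $\delta$, the margin neighborhood and the target gap first, then $m$ and the window, and only then the opponents' gaps.

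One minor correction: strong convexity alone does not give a rate $O(1/(\hstr_j M))$ for $\|Q_j(y_j)-\alpha_j\|$. A kernel whose curvature spikes on a middle interval keeps $Q_j(y_j)$ far from the vertex for arbitrarily large gaps. This does not affect your argument, which only needs the qualitative, kernel-dependent convergence supplied by \cref{lem:gap-freezing}.
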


Indeed, even without face invariance, stability near a pure profile $\alpha$ rules out the existence of a \emph{strict} better-reply $\alpha\to\alpha'$ because one could initialize scores so that the opponents remain close to $\alpha$ long enough for player $i$'s score difference $y_{i\alpha'}-y_{i\alpha}$ to grow linearly thanks to the strict payoff gap.
Once this gap exceeds a certain carefully chosen regularizer-dependent threshold, the choice map shifts player $i$'s mixed action noticeably toward $\alpha'_i$, pushing $x(t)$ outside any prescribed small neighborhood of $\alpha$, contradicting stability.
Thus, the whole difficulty lies in choosing the right initializations so that player $i$, the deviating player, has enough time to move outside the prescribed neighborhood while the other players remain (almost) frozen during that time, exploiting in particular the fact that the dynamics have \emph{bounded speed} in the space of scores $\scores$.

The next result shows that strong connectivity in the preference graph imposes a constraint on the shape of attractors.

\begin{theorem}
\label{thm:sc-span}
If $A\subseteq \X$ is an attractor of the strategy dynamics \eqref{eq:SD} and contains a strongly connected set $\HH \subseteq \A$, then $\cont(\HH) \subseteq A$.
Hence, if the preference graph is strongly connected, the flow admits no proper attractor.%
\footnote{Proper here means in the sense of being a strict subset, that is, the attractor is not all of $\X$.}
\end{theorem}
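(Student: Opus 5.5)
The plan is to prove the statement in two stages. First, we show that every edge $[\alpha,\beta]$ between comparable profiles of $\HH$ that carries a strict arc lies in $A$. Second, we fill in the higher-dimensional faces of $\cont(\HH)$ using face-invariance together with the structure of attractors.

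\emph{Step 1 (vertices and edges).} We have $\HH\subseteq A$ by assumption. Take an arc $\alpha\to\beta$ with both ends in $\HH$, where $\alpha,\beta$ are $i$-comparable. The segment $[\alpha,\beta]$ is a one-dimensional face, and it is invariant under the strategy flow $(\Theta_t)$. On its relative interior, \eqref{eq:SD} reduces to a scalar equation whose sign equals that of $u_i(\beta)-u_i(\alpha)$.
\begin{itemize}
\item If the inequality is strict, every interior point $x$ of the segment satisfies $\Theta_t(x)\to\alpha$ as $t\to-\infty$ and $\Theta_t(x)\to\beta$ as $t\to+\infty$. Since $A$ is an attractor, it is asymptotically stable, and $\alpha\in A$; hence points of the segment near $\alpha$ stay in any given neighborhood of $A$ in forward time. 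To conclude $x\in A$, we use that an attractor is the $\omega$-limit of a neighborhood. Concretely, $\Theta_{-t}(x)$ lies in the basin for large $t$, so $x=\Theta_t(\Theta_{-t}(x))\in\Theta_t(\mathcal U)$ for every $t$, and $A=\bigcap_{t\ge0}\overline{\Theta_t(\mathcal U)}$ for a suitable forward-invariant neighborhood $\mathcal U$.
\item If the arc is a tie, the segment consists of rest points. This case needs care: I would handle it by the same argument for the flow restricted to the face, or note that ties connect along the segment. This is a point to double-check, and a fallback is to assume genericity.
\end{itemize}

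\emph{Step 2 (general faces).} The key tool is the backward-limit principle from Step 1: if $x$ is a point whose $\alpha$-limit set (under $t\to-\infty$) meets the attractor's interior basin, or more precisely if $\Theta_{-t}(x)$ enters a neighborhood $\mathcal U\subseteq$ basin with $\bigcap_t\overline{\Theta_t\mathcal U}=A$, then $x\in A$. Equivalently, the unstable set of any subset of $A$ is contained in $A$. We apply this to a face $\F=\cont(\B)\subseteq\cont(\HH)$ (so $\B\subseteq\HH$ is a product set). The aim is to show that every $x\in\F^\circ$ is in $A$. I would argue by induction on $\dim\F$, assuming all proper subfaces of $\F$ already lie in $A$ (their vertices are in $\HH$, and edges with strict arcs are covered by Step 1). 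For $x\in\F^\circ$, the backward orbit stays in $\F^\circ$ by face-invariance. Its $\alpha$-limit set $L$ is a compact invariant subset of $\F$. If $L$ meets $\partial\F\subseteq A$, then the backward orbit comes arbitrarily close to $A$, hence enters $\mathcal U$, and so $x\in A$.

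The main obstacle is ruling out that $L\subseteq\F^\circ$, i.e.\ interior recurrence or interior rest points in backward time. Strong connectivity should enter here. I would first try to use that the flow on $\F^\circ$ is conjugate, via the scores, to a flow whose score differences $y_{i\alpha}-y_{i\beta}$ are driven by payoff differences. I would also try to exploit the volume/divergence structure: in dual coordinates the payoff field is divergence-free, so there is no interior attractor for the backward flow and hence Poincaré recurrence. Recurrent points of $\F^\circ$ return near $x$, but this alone does not put $x$ in $A$.

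The alternative I would pursue is to replace "backward limits" by a \emph{connectedness} argument. The set $A\cap\F$ is closed and invariant. Its complement's points $x$ have $\omega$-limits in $A$ whenever $x$ is near $A$. Using $\HH$ strongly connected, I would show that the repeller dual to $A$ in $\F$ cannot be nonempty. Such a repeller $R$ is a compact invariant set disjoint from $A\supseteq\partial\F$, hence inside $\F^\circ$, and it attracts in backward time. By the divergence-free/volume argument (Liouville for the score flow), a backward attractor with nonempty basin open in $\F^\circ$ cannot exist. So $R=\varnothing$ and $\F\subseteq A$. Finally, if the whole preference graph is strongly connected, take $\HH=\A$: any attractor containing a vertex contains $\X$. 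Every nonempty attractor contains a vertex, because its skeleton is \ac{club} and nonempty by \cref{prop:attract-club}, applied after noting that attractors are closed invariant sets that meet the boundary. Hence there is no proper attractor.
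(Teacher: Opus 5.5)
Your final version of Step 2 uses the same mechanism as the induction step in the paper. There, $A\cap\F$ is an attractor of the flow restricted to $\F$. Its dual repellor avoids $\partial\F\subseteq A$, so it lies in $\F^\circ$, and it is then an interior attractor of the reversed flow (the strategy flow of the negated game). That is ruled out by volume preservation of the score dynamics, which the paper simply cites as \cref{thm:no-interior}.

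The difference is in how the argument is packaged. The paper first proves that $\cont(\HH)$ is internally chain transitive: every attractor of the flow restricted to $\cont(\HH)$ contains a vertex, hence all of $\HH$ by strong connectivity and \cref{prop:attract-club} along edges, hence all faces by the induction. It then concludes via \cref{prop:ict-attractor}. You instead run the induction directly on the given $A$, with the hypothesis $\HH\subseteq A$ as the base case. Your route is shorter, and it shows that the first assertion holds for any $\HH\subseteq A\cap\A$. Strong connectivity plays no role there, so your remark that it ``should enter'' in Step 2 is unfounded: it is needed only for the second assertion. The paper's route buys the stronger chain-transitivity statement about $\cont(\HH)$.

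There are two loose ends. First, Step 1 is redundant. If you start the induction at dimension $0$, the repellor argument already covers every edge, ties included, so the genericity fallback is unnecessary. The tie case is also immediate on its own terms: stationary points in the basin must lie in $A$, so $A$ meets the segment in a nonempty clopen set, which is the whole segment.

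Second, in your last paragraph, \cref{prop:attract-club} gives only that the skeleton of $A$ is closed under better replies, not that it is nonempty. Meeting $\partial\X$ is not enough to produce a vertex. Instead, take a face of minimal dimension meeting $A$; the restriction of $A$ to it is an attractor contained in its relative interior, which the same no-interior argument excludes unless that face is a vertex. With a vertex in hand, strong connectivity forces the skeleton to be all of $\A$, and your first part gives $\X\subseteq A$.
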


This theorem is a consequence of a general principle: connectedness in the preference graph percolates to a corresponding notion of connectedness for the dynamics on the associated mixed region, which generalizes the second theorem of \cite{BS23}.
Formally, this is captured by \emph{chain transitivity}, which intuitively means that one can move from any point to any other by following the flow for arbitrarily long stretches, allowing only arbitrarily small perturbations in between.
We refer to \cref{app:dynamics} for precise definitions.
In order to establish this principle, we proceed by induction, proving that if $A$ contains all subfaces of a face, it must then contain the whole face as well, exploiting the fact that under the strategy flow, no asymptotically stable set can be fully contained in the interior \cite{FVGL+20}.

\section{Implications of preferential stability}
\label{sec:prefstable}

The results of \cref{sec:dynstable} show that asymptotic stability forces a range of closedness and stability properties on the preference graph of the game.
We now ask when the converse holds for the natural geometric candidates arising from the graph, namely \emph{spans of pure profiles}, that is, unions of faces of $\X$.
In words:
\begin{center}
\itshape
When does preferential stability imply dynamic stability\\
of the region that it spans, and when does it not?
\end{center}

\subsection{When preferences are enough}
\label{sec:prefs-enough}

We begin with the simplest class of spans:
those corresponding to a single face, \ie spans of subgames, which can be written as $\B=\prod_{i\in\N}\B_i$, with $\B_i\subseteq\A_i$.
In this case, preferential stability turns out to imply asymptotic stability:

\begin{theorem}\label{thm:subgame-stab}
    If $\B \subseteq \A$ is a \ac{club} subgame, then $\cont(\B)$ is asymptotically stable under \eqref{eq:FTRL}.
\end{theorem}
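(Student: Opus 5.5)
Since $\B=\prod_i\B_i$ is a product, $\cont(\B)=\prod_i\Delta(\B_i)$ is a face of $\X$. Distance to it is controlled by the probability mass that each player puts outside $\B_i$, namely $m_i(x)\defeq\sum_{\alpha_i\notin\B_i}x_{i\alpha}$. The plan is to show that, near the face, every action $\beta_i\notin\B_i$ is strictly worse for player $i$ than some action in $\B_i$. Then the scores of all outside actions fall linearly relative to the inside ones, and the choice map pushes their mass to zero.

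\emph{Step 1 (payoff gap).} First I would check that \ac{club}-ness gives a uniform strict gap. Fix $\alpha\in\B$ and $\beta_i\notin\B_i$. If $u_i(\beta_i,\alpha_{-i})\ge u_i(\alpha)$, there would be an arc $\alpha\to(\beta_i,\alpha_{-i})$ leaving $\B$, which is impossible. Hence $u_i(\beta_i,\alpha_{-i})<u_i(\alpha_i,\alpha_{-i})$ for every $\alpha\in\B$. Finiteness then gives some $c>0$ with $u_i(\alpha)-u_i(\beta_i,\alpha_{-i})\ge 2c$ for all such $\alpha,\beta_i$. I do not need a single best action. For $x_{-i}\in\prod_{j\ne i}\Delta(\B_j)$, take any fixed $\alpha_i\in\B_i$ (or, more robustly, compare against the average over $\B_i$): multilinearity gives $v_{i\alpha_i}(x)-v_{i\beta_i}(x)\ge 2c$ for all $\alpha_i\in\B_i$, $\beta_i\notin\B_i$. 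By continuity, there is a neighborhood $\nhd_0$ of $\cont(\B)$, say $\{m_j<\delta\ \forall j\}$, on which $v_{i\alpha_i}(x)-v_{i\beta_i}(x)\ge c$ for every $\alpha_i\in\B_i$ and $\beta_i\notin\B_i$.

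\emph{Step 2 (score dynamics and the choice map).} As long as $x(t)\in\nhd_0$, the differences $z_{\alpha\beta}(t)\defeq y_{i\alpha_i}(t)-y_{i\beta_i}(t)$ satisfy $\dot z_{\alpha\beta}\ge c$, so they increase at least linearly. Next I would use decomposability. The KKT conditions for $Q_i$ give $\hker_i'(x_{i\alpha})-\hker_i'(x_{i\beta})\ge y_{i\alpha}-y_{i\beta}$ whenever $x_{i\beta}>0$ (with the subgradient convention at $0$). Because $\hker_i'$ is increasing, the following holds: if $\min_{\alpha_i\in\B_i,\beta_i\notin\B_i} z_{\alpha\beta}\ge M$, then each $x_{i\beta}$ is at most $\phi_i(M)$, where $\phi_i(M)\to0$ as $M\to\infty$. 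Indeed, some $\alpha_i\in\B_i$ has $x_{i\alpha}\le1$, so $\hker_i'(x_{i\beta})\le\hker_i'(1)-M$. The threshold $\phi_i(M)$ reaches $0$ for finite $M$ if $h_i$ is not steep, and tends to $0$ as $M\to\infty$ if $h_i$ is steep. Define the sublevel-type sets $W_M\defeq Q(\{y: z^{(i)}_{\alpha\beta}(y)>M\ \forall i,\alpha_i\in\B_i,\beta_i\notin\B_i\})$. For $M$ large, $W_M\subseteq\nhd_0$.

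\emph{Step 3 (stability and attraction).} Given a neighborhood $\nhd$ of $\cont(\B)$, choose $M$ large enough that $\{m_j\le n\max_j|\A_j|\phi_j(M)\}\subseteq\nhd\cap\nhd_0$. Suppose the scores start with all gaps $>M$. Then $x(0)\in\nhd_0$, and the set of times at which all gaps remain $>M$ is open and closed in $[0,\infty)$, because the gaps are nondecreasing while $x\in\nhd_0$. So the gaps grow forever, at rate at least $c$. This gives both forward invariance and convergence: $m_i(x(t))\le|\A_i|\phi_i(M+ct)\to0$.

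\emph{Main obstacle.} The hard step is the converse of this: turning "$x(0)$ close to the face" into "score gaps large", since $\nhdalt$ is specified in $\X$ rather than in $\YY$. Near the face, $x_{i\beta}$ is small for every $\beta_i\notin\B_i$, but some $x_{i\alpha}$ with $\alpha_i\in\B_i$ could also be tiny, so the gap $\hker'(x_{i\alpha})-\hker'(x_{i\beta})$ is not automatically large. I would handle this in two ways. First, I would compare only against the largest inside coordinate, which is at least $1/|\B_i|-\delta$, and apply Step 2 to the gaps $\max_{\alpha_i\in\B_i}y_{i\alpha}-y_{i\beta}$. This suffices because the bound on $x_{i\beta}$ only needs one inside action with large score, and $\frac{d}{dt}$ of that maximum is still $\ge c$ by Step 1. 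Second, in the non-steep case $x_{i\beta}=0$ only gives $y_{i\beta}$ bounded above, which is exactly what is needed. The resulting neighborhood $\nhdalt=\{x\in\ImQ: x_{i\beta}<\epsilon\ \forall i,\beta_i\notin\B_i\}$ then maps into the large-gap region, because $\hker_i'$ is continuous and increasing on $(0,1]$ and strong convexity keeps $\hker_i'$ from flattening out.
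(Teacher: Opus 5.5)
Your score-gap argument is sound for \emph{steep} regularizers. There $\theta_i'(\epsilon)\to-\infty$, so the initial gap $\max_{\alpha_i\in\B_i}y_{i\alpha}-y_{i\beta}$ can be made as large as you like, and the bootstrap in Step~3 goes through. That is a genuinely more elementary route than the paper's, which packages everything into the Fenchel gap $F_\B(y)=h^*(y)-h^*_\B(y)$ and a general energy theorem.

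However, the theorem also covers non-steep kernels (e.g.\ Euclidean), and the paper stresses this. There your resolution of the main obstacle fails as soon as some $|\B_i|\ge 2$:
\begin{itemize}
\item From a max-gap $g\ge M$ you only get $\theta_i'(x_{i\beta})\le\theta_i'(1)-M$. Forcing $x_{i\beta}\le\eta$ therefore needs $M\ge\theta_i'(1)-\theta_i'(\eta)$, which approaches $\theta_i'(1)-\theta_i'(0+)$ as $\eta\to0$.
\item Points on or arbitrarily close to the face (uniform on $\B_i$, $x_{i\beta}=0$, $y_{i\beta}$ at its largest admissible value $\theta_i'(0+)+\lambda_i$) have max-gap only $\theta_i'(1/|\B_i|)-\theta_i'(0+)$. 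This is strictly smaller. Strong convexity makes it positive, but not large.
\item This is not just a crude bound. Take the Euclidean kernel with $\A_i=\{a,b,\beta\}$ and $\B_i=\{a,b\}$. The face point $(\tfrac12,\tfrac12,0)=\Pi_i(\tfrac12,\tfrac12,0)$ has max-gap $\tfrac12$. Yet $y_i=(0.9,-10,0.35)$ has the larger max-gap $0.55$ while $\Pi_i(y_i)=(0.775,0,0.225)$.
\end{itemize}
So $\nhdalt$ does not map into a gap region contained in $\nhd\cap\mathcal{U}_0$. Step~3 cannot even start, since continued gap growth requires $x\in\mathcal{U}_0$.

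The missing ingredient is a monotonicity property that uses \emph{all} inside-versus-outside gaps at once, not just the largest. While $x\in\mathcal{U}_0$, every $y_{i\alpha}-y_{i\beta}$ with $\alpha_i\in\B_i$, $\beta_i\notin\B_i$ is nondecreasing, even if it starts negative. Hence the increment $d=y_i(t)-y_i(s)$ satisfies $\min_{\B_i}d\ge\max_{\A_i\setminus\B_i}d$. After subtracting a constant, which leaves $Q_i$ unchanged, you may take $d\ge0$ on $\B_i$ and $d\le0$ off it. Now write $x_{i\gamma}=\varphi_i(y_{i\gamma}-\lambda_i)$, with $\varphi_i$ the nondecreasing truncated inverse of $\theta_i'$. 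Either the new multiplier satisfies $\lambda_i'\ge\lambda_i$, so no outside coordinate grows, or $\lambda_i'<\lambda_i$, so no inside coordinate shrinks. Either way $m_i$ does not increase.

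It follows that each $m_i(x(t))$ is nonincreasing while $x(t)\in\mathcal{U}_0=\{m_j<\delta\ \forall j\}$. So every $\{m_j<\delta'\ \forall j\}$ with $\delta'\le\delta$ is forward invariant, which gives stability for every regularizer. Attraction then follows from your own estimate: the max-gap grows like $g(0)+ct$ from a finite $g(0)$, and $x_{i\beta}(t)\le\varphi_i(\theta_i'(1)-g(0)-ct)\to0$.

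This comparison is essentially the mass-transfer step in the paper's proof, $x_{i\beta}\le x^\B_{i\beta}$ for $\beta_i\in\B_i$, obtained by comparing the full and face-restricted choice maps. There it yields $\dot F_\B\le-\sum_i\delta_i m_i$ near the face. Because $F_\B(y)$ is small exactly when $Q(y)$ is close to the face, the energy approach never has to convert closeness in $\X$ into large score gaps.
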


The proof (in \cref{app:prefstable}) relies on a recurring device in this section: constructing a suitable \emph{energy function} which converts dissipativity in score space to asymptotic stability in strategy space.
In the present case, the relevant energy is the \emph{Fenchel gap}
\[
F_\B(y):=h^*(y)-h^*_{\B}(y),
\]
where $h^*$ is the convex conjugate of $h$ and $h_\B^*(y)$
is its restriction to the face spanned by $\B$.
The Fenchel gap may be viewed as a ``dual'' measure of distance to $\cont(\B)$, since it is always nonnegative and vanishes exactly when $Q(y) \in \cont(\B)$.
If $\B$ is a \ac{club} set, then in a neighborhood of $\cont(\B)$ every action outside $\B$ is uniformly worse than some action inside $\B$, yielding a differential inequality showing that $F_\B$ decreases at a rate proportional to the total probability mass of playing outside $\B$, or, equivalently, to the $\ell_1$ distance to its span.
Therefore, the dynamics exhibit a sharp inward drift toward $\cont(\B)$, which implies attraction and stability.
Importantly, this energy function works even for regularizers that may be not steep, which makes it a substantial improvement over recent prior approaches \cite{BM23,CLM25}.

Now, combining the two theorems above with the results of \cref{sec:dynstable}, we obtain the following strong equivalences:

\begin{corollary}
\label{cor:subgame-equiv}
Let $\B \subseteq \A$ be a subgame and assume the game has no ties.
Then $\B$ is \ac{club} if and only if $\cont(\B)$ is asymptotically stable under \eqref{eq:FTRL}.
\end{corollary}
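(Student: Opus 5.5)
The plan is to combine \cref{thm:subgame-stab} with \cref{thm:stable-sclub}, using the no-ties assumption to close the gap between \ac{club} and \ac{sclub}.

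For the forward direction, if $\B$ is a \ac{club} subgame, then \cref{thm:subgame-stab} applies directly: $\cont(\B)$ is asymptotically stable under \eqref{eq:FTRL}. Here the no-ties assumption is not even needed.

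For the converse, suppose $\cont(\B)$ is asymptotically stable. In particular it is stable, and it is a nonempty closed subset of $\X$: it is a face $\prod_i\Delta(\B_i)$ with each $\B_i$ nonempty. Hence \cref{thm:stable-sclub} applies and shows that $\skel(\cont(\B))$ is \ac{sclub}. By \cref{rem:skel-span}, $\skel(\cont(\B))=\B$, so $\B$ is \ac{sclub}.

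It remains to upgrade \ac{sclub} to \ac{club}. Take $\alpha\in\B$ and an arc $\alpha\to\beta$. By \ac{sclubness}, either $\beta\in\B$ or $\beta\to\alpha$. The second alternative means $\alpha\leftrightarrow\beta$ is a tie, which the hypothesis excludes. Therefore $\beta\in\B$, so $\B$ has no outgoing arc and is \ac{club}.

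There is no real obstacle here, since the heavy lifting is done by the two theorems. The only points to check are that the hypotheses match: the stability notion in \cref{thm:stable-sclub} is the \eqref{eq:FTRL} notion with initializations in $\ImQ$, which is the same notion used in the corollary, and $\cont(\B)$ meets the closedness and nonemptiness requirements of that definition.
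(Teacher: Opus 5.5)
Your proposal is correct and is the same argument as the paper's proof, which simply combines \cref{thm:stable-sclub} with \cref{thm:subgame-stab}. You also spell out the two details the paper leaves implicit: $\skel(\cont(\B))=\B$, and that \ac{sclub} and \ac{club} coincide when there are no ties.
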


\begin{corollary}
\label{cor:subgame-attract}
Let $\B \subseteq \A$ be a subgame.
Then $\B$ is \ac{club} if and only if $\cont(\B)$ is an attractor of the strategy dynamics \eqref{eq:SD}.
\end{corollary}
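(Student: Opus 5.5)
We prove the two directions of \cref{cor:subgame-attract} separately.

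\emph{Necessity.} Suppose $\cont(\B)$ is an attractor of the strategy flow $(\Theta_t)$. Then it is in particular attracting under \eqref{eq:SD}, so \cref{prop:attract-club} shows that its skeleton is \ac{club}. By \cref{rem:skel-span}, $\skel(\cont(\B))=\B$, so $\B$ is \ac{club}.

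Unlike \cref{cor:subgame-equiv}, no no-ties assumption is needed here. The reason is that \cref{prop:attract-club} uses face-invariance, which gives full \ac{club}ness. Concretely, on the edge joining $\alpha\in\B$ to an $i$-comparable $\alpha'\notin\B$ with $u_i(\alpha')\ge u_i(\alpha)$, the point $e_\alpha$ cannot be attracting:
\begin{itemize}
\item in the strict case, the flow moves toward $\alpha'$;
\item in a tie, the edge consists of rest points, so nearby points on it never approach $\cont(\B)$.
\end{itemize}

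\emph{Sufficiency.} Assume $\B$ is \ac{club}. Three things must be shown: $\cont(\B)$ is invariant, stable and attracting for the flow on all of $\X$, including boundary initializations.

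\emph{Invariance.} $\cont(\B)$ is a face, and the flow is face-invariant. Hence $\Theta_t(\cont(\B))=\cont(\B)$ for all $t\in\R$.

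\emph{Stability and attraction.} We rerun the energy argument behind \cref{thm:subgame-stab}, now facewise. Fix any face $\F'=\cont(\B')$ that meets a neighborhood of $\cont(\B)$. Restricted to $\F'$, the strategy flow is exactly the \ac{FTRL} flow of the restricted game on $\B'$ with a steep regularizer. Its image is $(\F')^\circ$, so this restricted \ac{FTRL} realizes every orbit of $\Theta$ starting in the relative interior of $\F'$.

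The subgame $\B\cap\B'$ of the restricted game is still \ac{club} there, because outgoing arcs within $\B'$ are a subset of those in the full game. It is also nonempty whenever $\F'$ is close to $\cont(\B)$. \cref{thm:subgame-stab} then gives asymptotic stability of $\cont(\B\cap\B')=\cont(\B)\cap\F'$ relative to $\F'$.

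\emph{Main obstacle: uniformity across faces.} Stability must hold with a single neighborhood for all faces simultaneously. Since there are only finitely many faces, we take the minimum over them of the neighborhoods supplied by each facewise result, which produces the required common neighborhood. To make this rigorous, I would rather use the Fenchel-gap estimate directly. It gives $\dot F_\B\le -c\,\dist_1(x,\cont(\B))$, with $c$ coming from the uniform payoff gap of the \ac{club} property near $\cont(\B)$. The constant $c$ depends only on payoffs, so it is uniform across faces. Combining this inequality with the comparison between $F_\B$ and the distance to $\cont(\B)$ yields stability and attraction with a uniform basin.

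Finally, an invariant, asymptotically stable compact set is an attractor in the sense of \cref{app:attractors}, which completes the sufficiency direction.
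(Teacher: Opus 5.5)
Your proof follows the paper's route: necessity from \cref{prop:attract-club} together with $\skel(\cont(\B))=\B$, and sufficiency by applying \cref{thm:subgame-stab} to the \ac{club} subgame $\B\cap\B'$ of the restricted game on each face $\F'$ meeting $\cont(\B)$, identifying the strategy flow on $(\F')^\circ$ with the facewise \ac{FTRL} orbits, and then gluing over the finitely many faces. The paper does the gluing with an explicit ball cover of $\cont(\B)$; your ``minimum over faces'' works just as well once you note that, by compactness, each relative neighborhood of $\cont(\B)\cap\F'$ in $\F'$ contains $\{z\in\F':\dist(z,\cont(\B))<r\}$ for some $r>0$, so the Fenchel-gap detour is not needed.
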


Note that these results are setwise generalizations of the pointwise version, which has been sometimes called \enquote{the folk theorem of evolutionary game theory} in the literature \cite{MS16,San10,Mer26}, and says that a point is asymptotically stable if and only if it is a strict \acl{NE} \citep{FVGL+20}.
Indeed, any strict equilibrium is a vertex of the preference graph with no outgoing edge, and therefore a singleton \ac{club} subgame.

We next illustrate how these equivalences specialize in games where better-reply paths always lead to equilibrium.
This is the reachability notion of \emph{weak acyclicity}, introduced by \citet{You93}.

\begin{definition}\label{def:weak-acyclic}
A game is \emph{weakly acyclic} if for every $\alpha\in\A$ there exists a finite path of better replies
\begin{equation*}
\alpha=\alpha^0\to\alpha^1\to\cdots\to\alpha^k
\end{equation*}
ending at a pure \acl{NE} $\alpha^k$.
\end{definition}

\begin{remark}\label{rmrk:potential}
Note that weakly acyclic games are a generalization of the widely studied class of \emph{potential games}, and more generally \emph{ordinal potential games}, introduced by \citet{MS96}, which are equivalent to \emph{acyclic} games, games whose preference graph admits no directed cycles.
\endenv
\end{remark}

Applying our results to the class of weakly acyclic games, we obtain a sharp characterization of \emph{minimal attractors}.
\footnote{Here, minimal is in the sense of set inclusion, that is, attractors which contain no proper attractor.}

\begin{corollary}\label{thm:wa-attract}
If the game is weakly acyclic and has no ties, then $M \subseteq \X$ is a minimal attractor of the strategy dynamics \eqref{eq:SD} if and only if $M$ is a strict \ac{NE}.
\end{corollary}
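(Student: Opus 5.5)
We prove the two implications separately. Both rest on \cref{cor:subgame-attract}, \cref{prop:attract-club}, and the no-ties assumption. Throughout, ``$M$ is a strict \ac{NE}'' means $M=\{\alpha^\ast\}$ for a strict \acl{NE} $\alpha^\ast$.

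\emph{Strict \ac{NE} implies minimal attractor.} Let $\alpha^\ast$ be a strict \ac{NE}. By \cref{rmrk:singletons}, $\{\alpha^\ast\}$ is a \ac{club} set, and it is trivially a subgame. Hence \cref{cor:subgame-attract} shows that $\cont(\{\alpha^\ast\})=\{\alpha^\ast\}$ is an attractor of \eqref{eq:SD}. It is minimal because its only nonempty subset is itself, and attractors are nonempty.

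\emph{Minimal attractor implies strict \ac{NE}.} Let $M$ be a minimal attractor. The plan has three steps.

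\begin{enumerate}
\item[(i)] \emph{Show $M$ contains a pure profile.} Among all points of $M$, choose $x$ whose support $\B=\supp(x)$ has minimal dimension, and let $\F=\cont(\B)$.
\begin{enumerate}
\item[(a)] By minimality, $M\cap\F$ contains no point of $\bd\F$, since such a point would have strictly smaller support. So $M\cap\F$ is a nonempty compact subset of $\relint\F$.
\item[(b)] By face-invariance, \eqref{eq:SD} restricted to $\F$ is exactly the strategy dynamics of the subgame $\B$.
\item[(c)] Intersecting with $\F$ the neighborhoods appearing in the stability and attraction definitions for $M$, we get that $M\cap\F$ is invariant, stable and attracting for the restricted flow. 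Thus it is an asymptotically stable set lying entirely in the interior of the face.
\item[(d)] The cited fact \cite{FVGL+20}, applied to the subgame $\B$, forbids such a set unless $\F$ is $0$-dimensional. Hence $\F$ is a vertex $\{\alpha\}$ and $\alpha\in M$.
\end{enumerate}
I expect this to be the main obstacle. The delicate points are checking that stability and attraction really restrict to the face, and that the interior-exclusion result applies verbatim to the subgame's dynamics.

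\item[(ii)] \emph{Reach an equilibrium inside $M$.} By \cref{prop:attract-club}, $\skel(M)$ is \ac{club}, and it contains $\alpha$. Weak acyclicity gives a better-reply path $\alpha\to\cdots\to\alpha^\ast$ ending at a pure \ac{NE}. Closedness of $\skel(M)$ under arcs keeps every profile on the path in $M$, so $\alpha^\ast\in M$.

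\item[(iii)] \emph{Conclude by minimality.} Since there are no ties, every $\beta$ that is $i$-comparable to $\alpha^\ast$ satisfies $u_i(\beta)\neq u_i(\alpha^\ast)$. Because $\alpha^\ast$ is a \ac{NE}, this forces $u_i(\beta)<u_i(\alpha^\ast)$, so $\alpha^\ast$ is strict. By the first direction, $\{\alpha^\ast\}$ is an attractor contained in $M$, and minimality of $M$ gives $M=\{\alpha^\ast\}$.
\end{enumerate}
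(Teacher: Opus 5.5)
Your proof is correct and follows the paper's argument. The forward direction goes through \cref{rmrk:singletons} and \cref{cor:subgame-attract}. The converse goes through a \ac{club} skeleton inside $M$, a better-reply path to a pure equilibrium, strictness from the no-ties assumption, and minimality. The only difference is your step (i): you spell out, via a minimal-support face and the restriction of $M$ to it, why $M$ meets $\A$. The paper instead takes this directly from \cref{thm:no-interior}, packaged with \cref{prop:attract-club} as \cref{cor:stable-club}.
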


An important application of this result follows from \citet{JSST23}, who show that, in the regime of many players relative to the maximal number of actions per player, weak acyclicity becomes overwhelmingly likely: among games admitting at least one pure \acl{NE}, the probability of being weakly acyclic converges to 1 at an exponential rate in the number of players.
Consequently, for such ``typical'' large games with a pure \acl{NE}, \cref{thm:wa-attract} guarantees, under the strategy flow, that all minimal attractors coincide with strict \aclp{NE}.
This gives theoretical insight as to why \ac{FTRL} seems in practice to work well with games with many players but (relatively) few actions.

\subsection{When preferences are \emph{not} enough}
\label{sec:prefs-not-enough}

The preceding results rely crucially on the product structure of subgames.
When $\HH\subseteq\A$ is \emph{not} a subgame, preferential stability of $\HH$ may fail to control the dynamic stability of its span.
The next proposition makes this separation explicit.

\begin{proposition}
\label{prop:span-counter}
There is a $2 \times 2 \times 2$ game with a unique proper \ac{club} set $\HH$ for which $\cont(\HH)$ is not stable under \eqref{eq:FTRL}.
\end{proposition}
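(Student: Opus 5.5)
The plan is to build the game explicitly, following the left panel of \cref{fig:examples}. Label each player's actions $\{0,1\}$ and set $\HH \defeq \A\setminus\{(0,0,0)\}$, so that $(0,0,0)$ is the only profile outside $\HH$. I will choose payoffs so that every arc incident to $(0,0,0)$ points \emph{away} from it, i.e.\ $u_i(1_i,0_{-i}) > u_i(0,0,0)$ for each $i$. This makes $(0,0,0)$ a pure ``local source'', and it makes $\HH$ \ac{club} trivially, since the only profile outside $\HH$ is never the head of an arc from $\HH$. Inside $\HH$, I will orient the remaining nine edges with no ties and so that $\HH$ is strongly connected; I will also rule out strict equilibria and check that no proper subset is \ac{club}. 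Uniqueness of $\HH$ as a proper \ac{club} set then follows. Any \ac{club} set containing $(0,0,0)$ contains all three of its out-neighbours, and by strong connectivity it contains $\HH$, hence all of $\A$. Any \ac{club} set avoiding $(0,0,0)$ is a \ac{club} subset of the strongly connected $\HH$, so it is empty or all of $\HH$. Note that $\cont(\HH)$ is the union of the three faces $\{x_{1}=e_1\}\cup\{x_2=e_1\}\cup\{x_3=e_1\}$, where $e_1$ denotes the mixed strategy that plays action $1$ with probability one.

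For instability, the idea is that one face of the span, say the top face $\{x_3=e_1\}$, can be repelling in the transverse direction, even though its vertices have no pure outgoing arcs. Along that face, player $3$'s payoff difference is
\[
\Delta_3(x_1,x_2) \defeq v_{3,1}(x)-v_{3,0}(x),
\]
which is bilinear in $(x_1,x_2)$. Its values at the four vertices of the face are the four arc directions of player $3$ between the top and bottom faces. Three of those arcs lie in $\HH$ and point up, and the fourth, at $(0,0,\cdot)$, also points up from $(0,0,0)$. So $\Delta_3$ is positive at all four corners of the square. A bilinear function positive at all corners is positive on the whole square, so this alone cannot produce an interior negative region.

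The repulsion therefore has to come from a different mechanism. I will use the dynamics inside the face: player $3$'s score advantage $y_{3,1}-y_{3,0}$ grows at rate $\Delta_3$, and we need $\Delta_3$ to be tiny along a recurrent orbit. The plan is the following.
\begin{enumerate}
\item Choose payoffs so that the $2\times2$ game induced on the bottom face $\{x_3=e_0\}$ (where $e_0$ plays action $0$ with probability one) is a cyclic, matching-pennies-like game with interior equilibrium $p^\ast$.
\item Arrange that $\Delta_3$ is large and positive near the corners but strongly negative-leaning, i.e.\ very small, near $p^\ast$. Bilinearity still forces $\Delta_3\ge0$ on the square.
\item Use a cardinal asymmetry with three players: the transverse growth rate of player $1$ off its face $\{x_1=e_1\}$ is also bilinear, in $(x_2,x_3)$. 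Set this up so that orbits near the intersection of the faces are pushed toward $(0,0,0)$.
\end{enumerate}

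Concretely, my first attempt will be to pick payoffs with the bottom face containing an interior equilibrium $x^\ast=(p,q,e_0)$ of the full game. This point is not in $\cont(\HH)$, since it has full support on $(0,0,0)$ in the first two coordinates. I would then compute the linearization of \eqref{eq:RD} at $x^\ast$ and tune the payoffs so that its unstable manifold connects to points arbitrarily close to $\cont(\HH)$, namely to the top face. For \eqref{eq:FTRL} in general, I would instead use a heteroclinic-type argument. Starting near the top face but with small mass on action $0$ for player $3$, player $3$'s score gap $y_{3,1}-y_{3,0}$ grows only at rate $\Delta_3\approx\epsilon$ along a cycle on the top face. Meanwhile players $1$ and $2$ cycle and are carried toward $(0,0)$ in their own coordinates. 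Once there, the projection onto the bottom face is repelled by $(0,0,0)$ only in the top direction, which is the wrong way for the argument.

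I expect this step, exhibiting an explicit orbit that escapes a fixed neighborhood of $\cont(\HH)$ from arbitrarily close initial conditions, to be the main obstacle. I would try the concrete route of a bottom-face interior Nash equilibrium $x^\ast$ that is an attractor of the full dynamics from the top side. Take payoffs with $\Delta_3<0$ on a region of the top face. This is achievable if player $3$'s payoffs are taken to depend only on the other players rather than to be bilinear-consistent; I would recheck the corner constraint to avoid a contradiction. Then show, via a Lyapunov function such as the Fenchel coupling to $x^\ast$, that orbits starting near $\cont(\HH)$ but in $\ImQ$ converge to $x^\ast$. Verifying this convergence numerically first, as in the figure, would guide the payoff choice.
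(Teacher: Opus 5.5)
\textbf{Verdict: genuine gap.} Your combinatorial setup is sound and matches the paper's. The vertex $(0,0,0)$ is made a source, $\HH$ is its strongly connected complement, and your argument that $\HH$ is the unique proper \ac{club} set is fine. The gap is in the dynamic part, and it starts from a false premise. You assert that player $3$'s three vertical arcs at the corners $(a,b)\neq(0,0)$ ``lie in $\HH$ and point up.'' Nothing forces this. Both endpoints $(a,b,1)$ and $(a,b,0)$ lie in $\HH$, so the \ac{club} condition says nothing about the direction of that arc. The only constraint concerns arcs into $(0,0,0)$, and you have already removed those by making it a source. Dropping this constraint is exactly what makes the example work.

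In the paper's game (in your labelling, $0=$ Back/Left/Bottom), players $1$ and $2$ play a matching-pennies $4$-cycle on the top face, with equilibrium $(\tfrac12,\tfrac12)$. The vertical arc at corner $(1,0)$ points \emph{down}, from $(1,0,1)$ to $(1,0,0)$, with gain $10$ for player $3$, and the corner $(1,1)$ is a tie. Hence $\Delta_3=-2$ at $x^\ast=(\tfrac12,\tfrac12,e_1)$: at the top-face equilibrium, player $3$ strictly wants to leave $\cont(\HH)$.

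Because your reasoning excluded this, the rest of the plan pursues mechanisms that do not work:
\begin{itemize}
\item Step 1 is impossible. The bottom face contains the source $(0,0,0)$, which has two outgoing arcs inside that face, so the $2\times 2$ game there cannot be cyclic.
\item Step 2 is self-contradictory.
\item Letting player $3$'s payoffs be not ``bilinear-consistent'' is not an option in a finite game, where $v_3$ is always multilinear in the opponents' strategies.
\end{itemize}

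Even with the right game, your proposal contains no instability argument. Convergence to a bottom-face equilibrium via a Fenchel-coupling Lyapunov function is only asserted, with no game and no computation. It is also more than you need: instability only requires exiting a fixed neighborhood. What is missing is a quantitative escape estimate. Write $p=x_{1,1}$, $q=x_{2,1}$, $s=x_{3,0}$, and start from $p=q=\tfrac12$, $s=\varepsilon$, which is realizable in $\ImQ$. You must show that $s$ grows to a fixed $\eta$ while $p$ and $q$ stay near $\tfrac12$. The second part is essential, because $\cont(\HH)$ also contains the faces $\{x_1=e_1\}$ and $\{x_2=e_1\}$, and $\dist(x,\cont(\HH))=\sqrt2\min\{1-p,1-q,s\}$.

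The paper does this as follows.
\begin{itemize}
\item It uses the energy $\mathcal E=F_1(x_1^\ast,y_1)+F_2(x_2^\ast,y_2)$. This is exactly conserved on the top face, since the cross terms cancel, and satisfies $\dot{\mathcal E}\le C s\sqrt{\mathcal E}$ off it.
\item While $\mathcal E$ is small, $\dot s\ge d\,\sigma(s)$.
\item Together these give $\int_0^t s\le \Gamma(\eta)/d$, where $\Gamma(\eta)=\int_0^\eta u\bigl(\theta_3''(u)+\theta_3''(1-u)\bigr)\,du\to 0$ as $\eta\downarrow 0$.
\item A bootstrap then shows that $s$ reaches $\eta$ before $\mathcal E$ reaches its threshold, for every admissible regularizer.
\end{itemize}
Some argument of this kind is the missing core of the proof.
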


The game in question is represented in \cref{fig:misalign}, where $\HH$ is the set of red vertices, and its span $\cS = \cont(\HH)$ is the union of the top, right and back faces.
One can observe that, near the center of the top face, the third player has an incentive to deviate toward the bottom face, which is \emph{not} included in $\cS$;
this occurs ``before'' the other two players can move toward another face of $\cS$, thus rendering it unstable.
In fact, in this example, the minimal attractor containing $\HH$ is not even a span of pure profiles (\cf \cref{rmrk:not-span}).
The center of the top face is also a (mixed) local source in the sense of \citet{BP25}, so this instability result is consistent with the corresponding higher-dimensional example of \cite{BP25}.
In our example though, the escape happens from full-support initializations and under every instance of \eqref{eq:FTRL}, whereas the analysis of \cite{BP25} only concerns \eqref{eq:RD} (see \cref{rmrk:biggar} for more details).

\begin{remark}
\label{rem:BP25}
\citet{BP25} conjectured that a sink equilibrium which is unstable under the replicator dynamics must contain a local source.
The counterexample of \cref{prop:span-counter} exhibits a dynamically unstable sink equilibrium without a \emph{pure} local source, showing that the conjecture of \citet{BP25} cannot be strengthened to positing the existence of pure local sources.
\endenv
\end{remark}

\section{Recovering dynamic stability}
\label{sec:rad}

\Cref{prop:span-counter} shows that preferential information alone may be too coarse to determine dynamic stability: recovering it requires \emph{a fortiori} additional information, such as payoff values.
For this reason, we introduce a natural and simple payoff-dependent condition which restores stability.

\begin{definition}
Given $\alpha, \beta \in\A$, we define the \emph{payoff flux from $\beta$ to $\alpha$} as
\begin{equation}\label{eq:flux}
\Phi(\beta,\alpha)\coloneqq \sum_{i\in \N}\left(u_i(\alpha_i,\beta_{-i})-u_i(\beta)\right).
\end{equation}
Given $\HH\subseteq\A$, we say that $\HH$ is \acdef{rad} if
\[
\Phi(\beta,\alpha) \leq 0
    \quad\text{for all $\beta\in\HH$ and all $\alpha\notin\HH$},
\]
and we say that it is \acdef{srad} if the above inequalities are strict.
\end{definition}
\acused{radness}
\acused{sradness}

We similarly extend the notion of the payoff flux to mixed strategies by setting, for $x, x' \in \X$,
\[
\Phi(x,x')\coloneqq \sum_{i\in \N}\left(u_i(x'_i,x_{-i})-u_i(x)\right).
\]
The summand $u_i(\alpha_i,\beta_{-i})-u_i(\beta)$ is the instantaneous gain to player $i$ from deviating unilaterally to $\alpha_i$ at $\beta$, so $\Phi(\beta,\alpha)$ aggregates these gains along the target profile $\alpha$.
The notion of \ac{sradness} therefore means that every outside profile $\alpha\notin\HH$ is unappealing in aggregate against every $\beta\in\HH$.
This property depends on payoff magnitudes, which correspond naturally to the weights of the preference graph: if $\alpha\to\beta$ are $i$-comparable, the weight of that edge is $u_i(\beta)-u_i(\alpha)$.
Moreover, from a graph-theoretic perspective, $\Phi(\beta,\alpha)$ is obtained by summing the signed weights of edges moving one step (\ie a unilateral deviation) from $\beta$ toward $\alpha$.
Importantly, this notion of \ac{radness} provides a cardinal strengthening of ordinal closure.

\begin{proposition}\label{prop:resilient-club}
Let $\HH$ be a subset of $\A$.
If $\HH$ is \ac{rad} \textpar{resp.~\ac{srad}}, then $\HH$ is \ac{sclub} \textpar{resp.~\ac{club}}.
\end{proposition}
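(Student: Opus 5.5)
The plan is to prove the contrapositive directly from the definitions, by specializing the flux inequality to a comparable pair of profiles. The key observation is that when $\beta$ and $\alpha$ are $i$-comparable, i.e.\ $\alpha = (\alpha_i, \beta_{-i})$ with $\alpha_i \neq \beta_i$, the payoff flux collapses to a single term. Indeed, for every player $j \neq i$ we have $\alpha_j = \beta_j$, hence $u_j(\alpha_j,\beta_{-j}) = u_j(\beta)$ and the $j$-th summand in \eqref{eq:flux} vanishes. For player $i$ we have $(\alpha_i,\beta_{-i}) = \alpha$. Hence
\[
\Phi(\beta,\alpha) = u_i(\alpha) - u_i(\beta),
\]
which is exactly the signed weight of the $i$-comparable edge between $\beta$ and $\alpha$.

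With this identity in hand, I would treat the two claims separately.

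\textbf{The \ac{srad} case.} Suppose $\HH$ is \ac{srad}, and let $\beta \in \HH$ with an arc $\beta \to \alpha$, so that $\alpha$ is $i$-comparable to $\beta$ and $u_i(\alpha) \geq u_i(\beta)$. If $\alpha \notin \HH$, then strict resilience gives $\Phi(\beta,\alpha) < 0$, i.e.\ $u_i(\alpha) < u_i(\beta)$. This contradicts the existence of the arc. Hence $\alpha \in \HH$, and $\HH$ is \ac{club}.

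\textbf{The \ac{rad} case.} Suppose $\HH$ is \ac{rad}, and let $\beta \in \HH$ with $\beta \to \alpha$ and $\alpha \notin \HH$. Then $\Phi(\beta,\alpha) \leq 0$ yields $u_i(\alpha) \leq u_i(\beta)$. Combined with $u_i(\alpha) \geq u_i(\beta)$, this forces equality, so $\alpha \to \beta$ as well and the arc is a tie $\beta \leftrightarrow \alpha$. This is precisely the defining property of an \ac{sclub} set: every outgoing arc leaving $\HH$ is a tie.

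There is no genuine obstacle here. The only point requiring care is the bookkeeping in the collapse of the flux: one must check that all summands with $j \neq i$ vanish because $\alpha_j = \beta_j$, and that the $i$-th summand is evaluated at $(\alpha_i,\beta_{-i}) = \alpha$.
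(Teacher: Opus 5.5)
Your proof is correct and is essentially the paper's argument: the paper likewise proves only the identity $\Phi(\beta,\alpha)=u_i(\alpha)-u_i(\beta)$ for $i$-comparable $\beta\in\HH$, $\alpha\notin\HH$, and says both claims follow directly. You have simply written out those two deductions explicitly, including the tie $\beta\leftrightarrow\alpha$ in the non-strict case.
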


The diagram below summarizes the chain of implications chain between \ac{club}\,/\,\ac{sclub} and \ac{rad}\,/\,\ac{srad} sets:
\begin{equation*}
\begin{array}{ccc}
\text{\ac{rad}} & \Longrightarrow & \text{\ac{sclub}} \\[\smallskipamount]
\Big\Uparrow      &                  & \Big\Uparrow   \\[\smallskipamount]
\text{\ac{srad}} & \Longrightarrow & \text{\ac{club}}
\end{array}
\end{equation*}
Specifically, in games without ties, \ac{club} and \ac{sclub} sets coincide, whereas \ac{rad} and \ac{srad} sets need not.
Moreover, by \cref{rmrk:singletons}, \ac{rad}\,/\,\ac{sclub} singletons and pure \aclp{NE} \emph{all coincide}, and likewise for \ac{srad}\,/\,\ac{club} singletons and strict \aclp{NE}.
Thus, when there are no ties, all of them are equivalent for singletons.
Hence, \ac{radness} provides a setwise \emph{cardinal} generalization of the notion of pure \aclp{NE}, just as \acl{clubness} provides its setwise \emph{ordinal} counterpart.

\begin{figure}[t]
\renewcommand{\thisfigscale}{3}
\renewcommand{\thisfigheight}{28ex}
\centering
%
%
\begin{tikzpicture}[
	scale=\thisfigscale,
	line join=round,
	line cap=round,
	edge/.style={semithick,draw=black},
	sinkspan/.style={draw=MyDarkRed,very thick},
	midorient/.style={-{Stealth},thick,draw=black},
	sinkorient/.style={-{Stealth},thick,draw=MyDarkRed}
]

\coordinate (BLB) at (0,0);
\coordinate (BRB) at (1,0);
\coordinate (BRT) at (1,1);
\coordinate (BLT) at (0,1);

\coordinate (FLB) at (0.40,0.30);
\coordinate (FRB) at (1.40,0.30);
\coordinate (FRT) at (1.40,1.30);
\coordinate (FLT) at (0.40,1.30);

\fill[MyLightRed,opacity=0.25] (BLT)--(BRT)--(FRT)--(FLT)--cycle;
\fill[MyLightRed,opacity=0.25] (BRB)--(FRB)--(FRT)--(BRT)--cycle;

\draw[edge] (FLB)--(FRB)--(FRT)--(FLT)--cycle;
\draw[edge] (BLB)--(BRB)--(BRT)--(BLT)--cycle;
\draw[edge] (FLB)--(BLB);
\draw[edge] (FRB)--(BRB);
\draw[edge] (FRT)--(BRT);
\draw[edge] (FLT)--(BLT);

\draw[sinkspan] (BLT)--(BRT)--(FRT)--(FLT)--cycle;
\draw[sinkspan] (BRB)--(FRB)--(FRT)--(BRT)--cycle;

\draw[sinkorient] ($(BLT)!0.48!(BRT)$) -- ($(BLT)!0.52!(BRT)$);
\draw[sinkorient] ($(BRT)!0.48!(BRB)$) -- ($(BRT)!0.52!(BRB)$);
\draw[sinkorient] ($(BRB)!0.48!(FRB)$) -- ($(BRB)!0.52!(FRB)$);
\draw[sinkorient] ($(FRB)!0.48!(FRT)$) -- ($(FRB)!0.52!(FRT)$);
\draw[sinkorient] ($(FRT)!0.48!(FLT)$) -- ($(FRT)!0.52!(FLT)$);
\draw[sinkorient] ($(FLT)!0.48!(BLT)$) -- ($(FLT)!0.52!(BLT)$);
\draw[sinkorient] ($(FRT)!0.48!(BRT)$) -- ($(FRT)!0.52!(BRT)$);

\draw[midorient] ($(BLB)!0.48!(BLT)$) -- ($(BLB)!0.52!(BLT)$);
\draw[midorient] ($(BLB)!0.48!(BRB)$) -- ($(BLB)!0.52!(BRB)$);
\draw[midorient] ($(BLB)!0.48!(FLB)$) -- ($(BLB)!0.52!(FLB)$);
\draw[midorient] ($(FLB)!0.48!(FLT)$) -- ($(FLB)!0.52!(FLT)$);
\draw[midorient] ($(FLB)!0.48!(FRB)$) -- ($(FLB)!0.52!(FRB)$);

\foreach \V in {FRB,FRT,FLT,BRB,BRT,BLT}{
\fill[MyDarkRed] (\V) circle (0.9pt);
}
\foreach \V in {FLB,BLB}{
\fill[black] (\V) circle (0.9pt);
}

\end{tikzpicture}
\quad
\includegraphics[height=\thisfigheight]{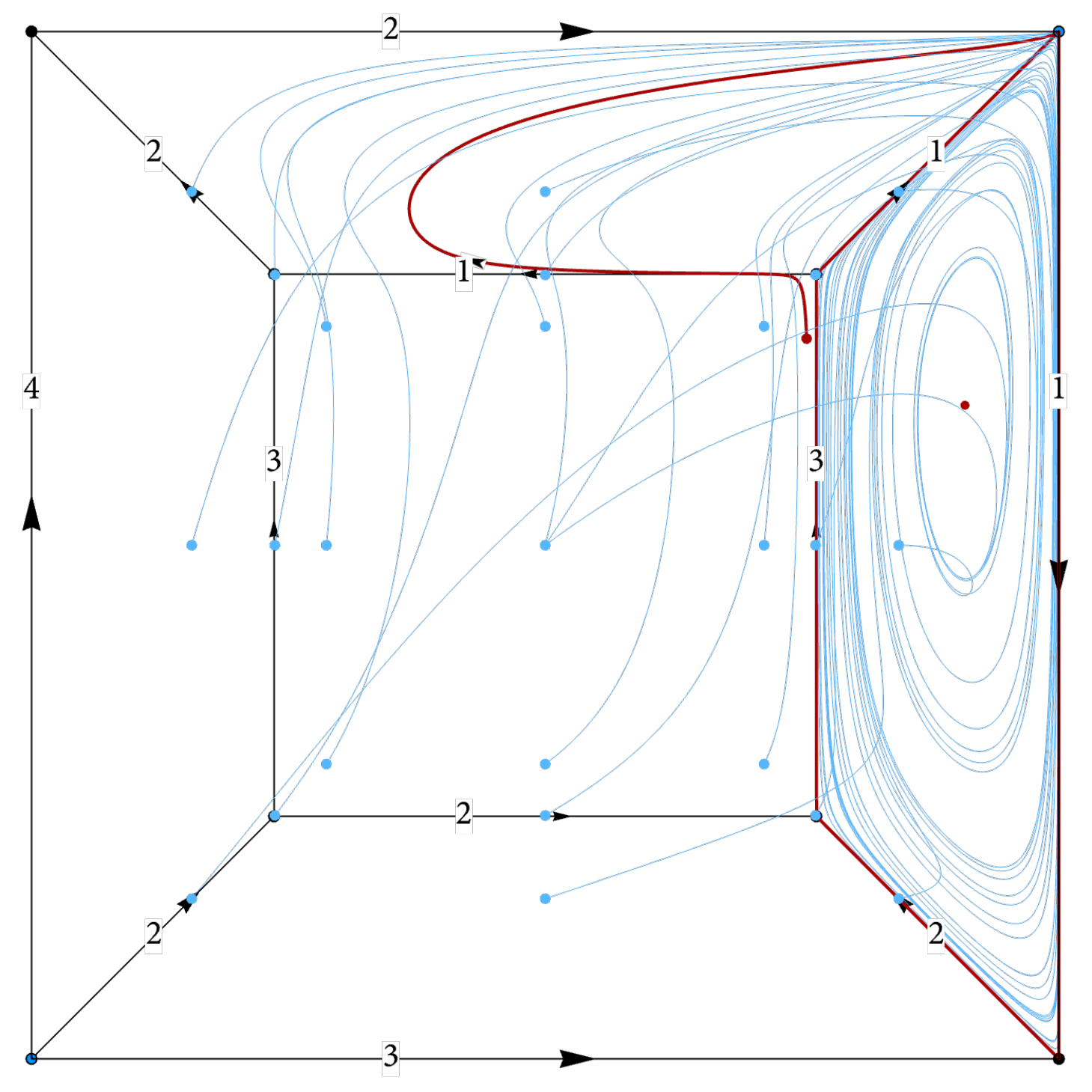}
\caption{A game with an \ac{srad} set (for the game's payoffs, \cf \cref{app:preferences}).
On the left, we plot the game's preference graph, highlighting the \ac{srad} set (and its span) in red;
on the right, we plot the evolution of the replicator dynamics in said game.
We note that the (span of) the \ac{srad} set is asymptotically stable, as per \cref{thm:strict-resilience}.
At the same time, we also note that all replicator orbits eventually converge to the rightmost face, a face which is neither \ac{rad} (even though it is contained in the original \ac{srad} set), nor asymptotically stable (an orbit may start arbitrarily close to said face and venture arbitrarily far before eventually returning to it, \cf the red orbit in the plot to the right).}
\label{fig:resilience}
\end{figure}

As for complexity, it is also clear that \ac{radness} can be checked  in $O\!\left(|\N|\,|\HH|\,|\A\setminus\HH|\right)$ steps given $\HH$.
Moreover, one can \emph{find} a \ac{rad} set in $O(|\N|\,|\A|^2)$ time as follows: define the directed graph $G_\Phi$ on vertex set $\A$ with an arc $\beta\to\alpha$ if and only if $\Phi(\beta,\alpha)>0$ (resp.\ $\Phi(\beta,\alpha)\ge 0$ for strict \ac{rad}).
Then $\HH$ is \ac{rad} if and only if it has no outgoing arcs in $G_\Phi$, i.e., if and only if there is no edge $\beta\to\alpha$ with $\beta\in\HH$ and $\alpha\notin\HH$.
Constructing $G_\Phi$ requires computing all payoff fluxes between pure profiles, which takes $O(|\N|\,|\A|^2)$ time, and finding a subset with no outgoing edge is $O(|\A|^2)$ (via standard graph decomposition algorithms, \eg \cite{Tar72}), hence the total complexity is $O(|\N|\,|\A|^2)$.
Thus, given access to the payoff table, \ac{radness} can be tested and
found with a polynomial number of payoff queries in
the size of the explicit normal-form representation.

This stands in sharp contrast with the convex relaxation of the notion of a pure \aclp{NE}, namely \emph{mixed} \aclp{NE}.
Despite being the most widely studied solution concept in finite games, non-pure, mixed \aclp{NE} are computationally hard:
in the same normal-form payoff-table model, computing one is well known to be \texttt{PPAD}-hard \cite{DGP09}.
They are moreover dynamically unappealing:
they are \emph{never} asymptotically stable under regularized learning \cite{FVGL+20}.
In comparison, the setwise relaxation of \ac{sradness} recovers asymptotic stability of the entire span:

\begin{theorem}
\label{thm:strict-resilience}
Suppose that $\HH \subseteq \A$ is \ac{srad}.
Then $\cont(\HH)$ is an attractor of the strategy dynamics \eqref{eq:SD}.
\end{theorem}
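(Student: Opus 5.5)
The plan is to build a strict Lyapunov function for $\cont(\HH)$ that measures the probability mass placed outside $\HH$, and to use \ac{sradness} to show that this mass strictly decreases near the span. The starting observation is that the payoff flux is \emph{linear in the induced product distribution}. Since $u_i(\alpha_i,x_{-i})=\sum_{\gamma}x_\gamma u_i(\alpha_i,\gamma_{-i})$ and $u_i(x)=\sum_\gamma x_\gamma u_i(\gamma)$, we have
\[
\Phi(x,\alpha)=\sum_{\gamma\in\A}x_\gamma\,\Phi(\gamma,\alpha)
\qquad\text{for all } x\in\X,\ \alpha\in\A.
\]
Set $c\defeq-\max_{\beta\in\HH,\alpha\notin\HH}\Phi(\beta,\alpha)>0$, which is positive by \ac{sradness}, and let $C\defeq\max|\Phi|$. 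Writing $m(x)\defeq\sum_{\alpha\notin\HH}x_\alpha$ for the mass outside $\HH$, linearity gives
\[
\Phi(x,\alpha)\le -c+(c+C)\,m(x)\qquad\text{for all }\alpha\notin\HH.
\]
Hence $\Phi(x,\alpha)\le -c/2$ on the neighborhood $\nhd\defeq\{m<\delta\}$ of $\cont(\HH)$, for $\delta$ small. Note that $\cont(\HH)=\{m=0\}$ and that $\{m<\delta\}$ is a neighborhood basis of it.

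The second step computes $\frac{d}{dt}x_\alpha$ for $\alpha\notin\HH$ along \eqref{eq:SD}. By the product rule,
\[
\dot x_\alpha=x_\alpha\sum_i \frac{s_i(x_{i\alpha})}{x_{i\alpha}}\bigl(v_{i\alpha}(x)-\langle\pi_i(x_i),v_i(x)\rangle\bigr).
\]
In the replicator case $s_i(z)=z$ and $\pi_i=x_i$, so this reduces to $\dot x_\alpha=x_\alpha\Phi(x,\alpha)$. Consequently $\dot m\le-(c/2)m$ on $\nhd$. This yields forward invariance of the sublevel sets $\{m<\delta'\}$ for all $\delta'\le\delta$ (stability) and exponential decay $m(x(t))\to0$ (attraction), for initializations on every face, which is what is needed for the flow. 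Invariance of $\cont(\HH)$ itself follows from face-invariance: each face in the span is invariant. Thus $\cont(\HH)$ is a compact invariant asymptotically stable set, \ie an attractor.

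The main obstacle is extending this to a general decomposable kernel $\hker_i$, for which two things break. The weights $s_i(x_{i\alpha})/x_{i\alpha}$ vary with $i$, so the bracket is no longer exactly the flux. In addition, $\langle\pi_i,v_i\rangle$ is not $u_i(x)$. My first attempt will be to replace $m$ by an energy built from the dual variables, in the spirit of the Fenchel gap $F_\B$ used for \cref{thm:subgame-stab}. On each face one has $\hker_i'(x_{i\alpha})=y_{i\alpha}-\mu_i$, where $\mu_i$ is a multiplier, so I would consider $\sum_{\alpha\notin\HH}x_\alpha$ reweighted by functions $g_i(x_{i\alpha_i})$ chosen so that $\frac{s_i(z)}{z}g_i$ becomes uniform. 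I would then control the discrepancy $\langle\pi_i-x_i,v_i\rangle$ by exploiting that, near $\cont(\HH)$, each player's support mass concentrates where $\pi_i\approx x_i$ to first order.

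If this fails in full generality, the fallback has two parts. First, prove the decay estimate separately on each face of $\X$: on a face not meeting $\HH$, $m\equiv1$ and the face is far from $\cont(\HH)$, so it is irrelevant. Second, use a comparison argument with the bounds $\hstr_i\le\hker_i''$ and the regularity of $s_i$ to show that the cross-player distortion is $O(m)$, and is therefore absorbed by the margin $c/2$ after shrinking $\delta$.
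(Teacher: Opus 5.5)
Your replicator argument is correct. Since $\dot x_\alpha=x_\alpha\Phi(x,\alpha)$ holds on all of $\X$, it even avoids the face-by-face gluing. The gap is everything else: the theorem covers every kernel satisfying the standing assumptions, and there the outside mass $m$ is not a Lyapunov function at all. So neither your reweighting nor your fallback can close the argument.

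Look at $\dot x_\alpha/x_\alpha=\sum_i \frac{s_i(x_{i\alpha})}{x_{i\alpha}}\bigl(v_{i\alpha}(x)-\langle\pi_i(x_i),v_i(x)\rangle\bigr)$. The weight of a player whose coordinate $x_{i\alpha_i}$ is vanishing may itself vanish; for the square-root kernel, $s_i(z)/z=\sqrt z$. Players who are fully mixed inside a face of $\cont(\HH)$ keep weights and brackets of order one. \ac{sradness} only controls the unweighted sum $\Phi(x,\alpha)$, and its negativity may be carried entirely by the vanishing term.

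Concretely, take a $2\times2\times2$ game with $\HH=\A\setminus\{\alpha\}$ and the following payoffs. Player 2's other action beats $\alpha_2$ by $M>2$ everywhere. Players 1 and 3 prefer $\alpha_1,\alpha_3$ by one unit whenever player 2 avoids $\alpha_2$. The payoffs on the face where player 2 plays $\alpha_2$ are chosen so that $\Phi(\beta,\alpha)<0$ at the three profiles $\beta\in\HH$ there (three linear inequalities, easily met). Then $\HH$ is \ac{srad}. Yet at $x_{1\alpha_1}=x_{3\alpha_3}=\tfrac12$, $x_{2\alpha_2}=\epsilon$, with the square-root kernel one gets $\dot x_\alpha/x_\alpha\approx 2^{-1/2}-M\sqrt\epsilon>0$. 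So $m$ increases arbitrarily close to $\cont(\HH)$, even though the span is still an attractor.

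The distortion is therefore of order one, not $O(m)$, and no shrinking of $\delta$ absorbs it. Your claim that $\pi_i\approx x_i$ near the span also fails: the span contains whole faces on which $x_i$ is interior, and there $\pi_i\neq x_i$ unless $s_i$ is linear.

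The missing idea is to replace $x_\alpha$ by its dual counterpart $e^{-F_h(\alpha,y)}$. Here $F_h(\alpha,y)=h(\alpha)+h^*(y)-\langle y,\alpha\rangle$ is the Fenchel coupling to the pure profile $\alpha$. Since $\nabla h^*=Q$ and $\dot y=v(x)$, one gets exactly $\frac{d}{dt}F_h(\alpha,y)=\sum_i\langle v_i(x),x_i-\alpha_i\rangle=-\Phi(x,\alpha)$. Hence $\frac{d}{dt}e^{-F_h(\alpha,y)}=e^{-F_h(\alpha,y)}\Phi(x,\alpha)$ for every kernel. This is the same multiplicative law you found for $x_\alpha$, and indeed $e^{-F_h(\alpha,y)}=x_\alpha$ for the entropy.

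In primal terms this quantity is $\exp\bigl(-\sum_i D_i(\alpha_i,x_i)\bigr)$. It depends on all of $x_i$ through $\sum_\beta\bigl(x_{i\beta}\theta_i'(x_{i\beta})-\theta_i(x_{i\beta})\bigr)$, whose time derivative $u_i(x)-\langle\pi_i(x_i),v_i(x)\rangle$ is exactly the correction that turns $\langle\pi_i,v_i\rangle$ into $u_i(x)$. A reweighting by functions of the single coordinate $x_{i\alpha_i}$ always leaves a factor $v_{i\alpha}-\langle\pi_i,v_i\rangle$, so it cannot produce this correction.

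With $E(y)=\sum_{\alpha\notin\HH}e^{-F_h(\alpha,y)}$, your bound $\Phi(x,\alpha)\le -c+(c+C)m(x)$ gives $\dot E\le-(c/2)E$ near the span. You would then still need two things:
\begin{enumerate}
\item $E(y^k)\to0$ if and only if $Q(y^k)\to\cont(\HH)$. This uses steepness to show that $F_h(\alpha,y)\to\infty$ exactly when $x_\alpha\to0$.
\item Since $E$ is defined on scores, it only governs \eqref{eq:FTRL} orbits started in $\X^\circ$. You must apply it to the face-restricted dynamics on each face meeting the span, and glue the facewise basins into a neighborhood of $\cont(\HH)$ in $\X$ for the strategy flow.
\end{enumerate}
This is the paper's route.
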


The proof (which we present in \cref{app:rad}) is based on the construction of an energy function adapted to general spans, defined by
\begin{equation}
\label{eq:res-energy}
\bar F_\HH(y)
	\defeq \insum_{\alpha\notin\HH} e^{-F_h(\alpha,y)}
\end{equation}
with the convention $e^{-\infty}=0$, where $F_h(x,y):=h(x)+h^*(y)-\langle y,x \rangle$ is the \emph{Fenchel coupling}, a primal-dual divergence between strategies and scores.
In the steep case, along any \eqref{eq:FTRL} orbit $x(t)=Q(y(t))$, we have $F_h(\alpha,y(t)) \to +\infty$ if and only if $x_\alpha(t) \to 0$, hence the energy vanishes exactly on $\cont(\HH)$.
Additionally, the time-derivative of this energy along the same orbit is given by 
\[\frac{d}{dt} \bar F_\HH(y(t))=\sum_{\alpha\notin\HH} e^{-F_h(\alpha,\,y(t))}\,\Phi(x(t),\alpha),\] 
so the dynamics are dissipative near the \ac{srad} set. We conclude with a refinement of this result for the replicator dynamics:

\begin{theorem}\label{thm:resilience-RD}
Let $\HH$ be a \ac{rad} \ac{club} subset of $\A$.
Then $\cont(\HH)$ is an attractor of the replicator dynamics \eqref{eq:RD}.
\end{theorem}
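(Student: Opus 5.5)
Under the replicator dynamics the energy $\bar F_\HH$ takes an explicit form. For the entropic regularizer, $e^{-F_h(\alpha,y)}$ is exactly $x_\alpha=\prod_i x_{i\alpha_i}$, so $\bar F_\HH(x)=\sum_{\alpha\notin\HH}x_\alpha=1-\sum_{\beta\in\HH}x_\beta$. This is the total probability mass outside $\HH$ under the product distribution, and it vanishes exactly on $\cont(\HH)$. Along \eqref{eq:RD} we have $\dot x_\alpha = x_\alpha\sum_i(v_{i\alpha}(x)-u_i(x)) = x_\alpha\Phi(x,\alpha)$. Hence
\[
\tfrac{d}{dt}\bar F_\HH = \sum_{\alpha\notin\HH}x_\alpha\Phi(x,\alpha).
\]
The plan is to show that this quantity is nonpositive near $\cont(\HH)$ and is in fact bounded by $-c\,\bar F_\HH$ there. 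With plain \ac{radness} we only have $\Phi(\beta,\alpha)\le 0$, and the \ac{club} hypothesis has to supply the missing strictness.

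First we expand $\Phi(x,\alpha)$ at a point $x$ near $\cont(\HH)$ by writing $x=\sum_\gamma x_\gamma\gamma$ in product form. The term $\Phi(x,\alpha)$ is multilinear in $x$: it equals $\sum_i\bigl(u_i(\alpha_i,x_{-i})-u_i(x)\bigr)$, and each summand is an average over $\gamma$ drawn from $x$ of $u_i(\alpha_i,\gamma_{-i})-u_i(\gamma)$. Since the $-i$ marginal and the $i$-component of the full distribution are both taken from $x$, we get $\Phi(x,\alpha)=\sum_\gamma x_\gamma\Phi(\gamma,\alpha)$. Therefore
\[
\dot{\bar F}_\HH=\sum_{\alpha\notin\HH}\sum_{\beta\in\HH}x_\alpha x_\beta\Phi(\beta,\alpha)+\sum_{\alpha,\gamma\notin\HH}x_\alpha x_\gamma\Phi(\gamma,\alpha).
\]
The first sum is $\le0$ by \ac{radness}. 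The second is $O(\bar F_\HH^2)$, which is negligible near the span. The difficulty is that the first sum may vanish when $\Phi(\beta,\alpha)=0$, so we need a negative term of order $\bar F_\HH$.

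This is where \ac{clubness} enters, and I expect it to be the main obstacle. Fix $\alpha\notin\HH$ and let $\beta\in\HH$ be close to $\alpha$ in the following sense. The relevant mass $x_\alpha$ near $\cont(\HH)$ is dominated by profiles $\alpha$ that differ from some $\beta\in\HH$ in few coordinates, and at leading order by those that are $i$-comparable to some $\beta\in\HH$. For such a pair, $\Phi(\beta,\alpha)=u_i(\alpha)-u_i(\beta)$, because the other players' deviations from $\beta$ toward $\alpha$ are trivial. \ac{clubness} forbids $\beta\to\alpha$, so $u_i(\alpha)<u_i(\beta)$ and $\Phi(\beta,\alpha)<0$ strictly. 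The plan is a local analysis on a neighborhood of each face of $\cont(\HH)$. On it we group the outside profiles by their Hamming distance to the current support, and we show that the terms with strictly negative flux dominate the mass of outside profiles farther away. Here we use that for replicator a profile at Hamming distance $k$ from the support has mass bounded by a product of $k$ small factors.

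I am not certain this gives $\dot{\bar F}_\HH\le -c\bar F_\HH$ uniformly. If it does not, the fallback is a LaSalle/Lyapunov argument. By \ac{radness}, $\bar F_\HH$ is nonincreasing in a neighborhood of $\cont(\HH)$ up to higher-order terms, which gives Lyapunov stability. For attraction we use LaSalle: any $\omega$-limit point $z$ near the span with $\bar F_\HH(z)>0$ must lie in the zero set of $\dot{\bar F}_\HH$, and we rule this out by showing that some outside profile adjacent to $\HH$ has positive mass there. Finally we check invariance of $\cont(\HH)$. It is a union of faces, which are invariant under \eqref{eq:RD}, so $\cont(\HH)$ is an invariant, asymptotically stable set and hence an attractor.
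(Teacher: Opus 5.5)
There is a genuine gap in the decisive step. Your setup matches the paper's: the energy $\sum_{\alpha\notin\HH}x_\alpha$, the identity $\dot x_\alpha=x_\alpha\Phi(x,\alpha)$, the multilinear expansion, and the split into an $\HH$-part and an outside--outside part. Your observation that $e^{-F_h(\alpha,y)}=x_\alpha$ for the entropic kernel is also correct, so the two energies of \cref{sec:rad} coincide here.

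\textbf{Why the plan and the fallback fail.} Your target is a negative term of order $\bar F_\HH$, so that $\dot{\bar F}_\HH\le -c\,\bar F_\HH$. This is false in general. Take Shapley's game with $\HH$ the hexagon, and set $x_1=(1-\epsilon,0,\epsilon)$, $x_2=(0,1-\epsilon,\epsilon)$ near $(\mathrm A,\mathrm B)$. All outside mass sits on $(\mathrm C,\mathrm C)$, so $\bar F_\HH=\epsilon^2$. But $\Phi((\mathrm A,\mathrm B),(\mathrm C,\mathrm C))=0$, and the only negative fluxes come from $(\mathrm A,\mathrm C)$ and $(\mathrm C,\mathrm B)$, each of mass about $\epsilon$. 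This gives $\dot{\bar F}_\HH=-3\epsilon^3(1-\epsilon)$, so $\dot{\bar F}_\HH/\bar F_\HH\to0$. The same example shows that grouping by Hamming distance is the wrong bookkeeping: the dominant outside profile is at distance $2$, while the distance-$1$ outside profiles carry no mass at all.

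The fallback does not close the gap either. ``Nonincreasing up to higher-order terms'' does not give Lyapunov stability, since $\dot V\le CV^2$ only controls $V$ for times of order $1/V_0$. LaSalle needs $\dot{\bar F}_\HH\le 0$ near the span. To exclude limit points off the span, it also needs exactly the strict negativity you could not establish. Positive mass on an adjacent outside profile does not force $\dot{\bar F}_\HH<0$, because the outside--outside sum can be positive.

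\textbf{The missing estimate.} What you need is a negative term that beats the quadratic error $D\,\bar F_\HH^2$, not one of linear order. The paper gets it as follows.
\begin{enumerate}
\item Fix $x^\star\in\cont(\HH)$ and let $\B=\supp(x^\star)\subseteq\HH$. Work near this single point, so in-support weights stay above some $\delta>0$; a neighborhood of a whole face would not give this. Let $\varepsilon(x)$ be the largest off-$\B$ mass of any player.
\item Order outside profiles by inclusion of their deviating coalitions from $\B$, with matching actions on the coalition, and let $M_\B$ be the minimal ones. Every non-minimal outside $\nu$ satisfies $x_\nu\le\delta^{-n}\varepsilon(x)\,x_\mu$ for some $\mu\in M_\B$. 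Hence $\bar W_\HH\le\frac32\bar W_\B$, where $\bar W_\B=\sum_{\alpha\in M_\B}x_\alpha$.
\item For $\alpha\in M_\B$ and a deviating player $i$, minimality gives $\beta(\alpha)=(\beta_i,\alpha_{-i})\in\HH$ with $\beta_i\in\B_i$. Closedness under better replies then gives $\Phi(\beta(\alpha),\alpha)\le-\eta<0$.
\item The key estimate is $x_{\beta(\alpha)}=x_\alpha x_{i\beta_i}/x_{i\alpha_i}\ge(\delta/\varepsilon(x))\,x_\alpha$. So the $\HH$-part is at most $-\eta\delta\,\bar W_\B^2/(|M_\B|\,\varepsilon(x))$. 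This dominates $D\,\bar W_\HH^2\le\frac94 D\,\bar W_\B^2$ once $\varepsilon(x)$ is small.
\end{enumerate}
This yields only $\dot{\bar W}_\HH<0$ on a punctured neighborhood of the span, with no rate, and that suffices. On the compact shells $\{E^-\le\bar W_\HH\le E^+\}$ the derivative is then bounded away from zero, which is exactly the local-energy criterion of \cref{thm:energy-ct}.
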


This result implies in particular, when combined with \cref{prop:resilient-club}, that in any game with no ties, the span of every \ac{rad} set is asymptotically stable under the replicator dynamics.

The proof (in \cref{app:rad}) relies on a strikingly simple energy function: it is given by the total outside mass under the strategy profile $x$
\[
\bar W_\HH(x):=\sum_{\alpha\notin\HH}x_\alpha .
\]
It is nonnegative and vanishes exactly on $\cont(\HH)$.
Along the replicator
dynamics, the product rule gives
\[
\dot x_\alpha
=
x_\alpha\sum_{i\in\N}\bigl(v_{i\alpha}(x)-u_i(x)\bigr)
=
x_\alpha \Phi(x,\alpha),
\]
so, by the multilinearity of $\Phi(\cdot,\alpha)$, we get:
\begin{align*}
\frac{d}{dt} \bar W_\HH(x(t))
	&= \sum_{\alpha\notin\HH}\sum_{\beta\in\HH} x_\alpha x_\beta \Phi(\beta,\alpha)
	\notag\\
	&+ \sum_{\alpha\notin\HH}\sum_{\beta\notin\HH}
x_\alpha x_\beta \Phi(\beta,\alpha)
	\eqstop
\end{align*}
The first sum is nonpositive by \ac{radness}, while the second is
$O(\bar W_\HH(x)^2)$.
To get strict negativity near $\cont(\HH)$, fix
$x^\ast\in\cont(\HH)$ and let $\B=\supp(x^\ast)\subseteq\HH$.
For every
minimal profile $\alpha\notin\HH$, in the deviation order, that can appear when moving out of $\B$, \acs{clubness} gives a neighboring profile $\beta(\alpha)\in\HH$ with
\(
\Phi(\beta(\alpha),\alpha)<0,
\)
uniformly over such $\alpha$.
Hence the first sum contains a strictly negative contribution of the form
\(
\sum_{\alpha}x_\alpha x_{\beta(\alpha)}
\Phi(\beta(\alpha),\alpha),
\)
which dominates the second sum close to $\cont(\HH)$.
This ultimately gives
\(
\frac{d}{dt}\bar W_\HH(x(t))<0
\)
whenever $x(t)$ is sufficiently close to, but not in, $\cont(\HH)$, hence $\bar W_\HH$ is a local energy for the span.

Some intriguing examples of \ac{club} sets in the literature are
\begin{enumerate*}
[\upshape(\itshape a\upshape)]
\item
the $6$-cycle in Jordan's Matching Pennies \citep{Jor93} (\cref{fig:align});
and
\item
the hexagon in Shapley's game \citep{Sha64} (\cref{fig:shapley-main}).
\end{enumerate*}
These sets are both \ac{club} and \ac{rad}, but they are not \ac{srad} (\cf \cref{sub:examples}).
Importantly, existing proofs of asymptotic stability for these sets under the replicator dynamics \cite{GH95,KLPT11} rely on essentially ad hoc arguments tailored to those games.
It thus remains open to identify a general game-theoretic principle that could explain, beyond these special cases, the emergence of apparently coordinated outcomes under myopic, uncoupled adaptive behavior.
As far as we are aware, \ac{radness} provides the first general explanation for this form of stability under regularized learning.

\section{Concluding remarks}
\label{sec:discussion}

\begin{figure*}[t]
\centering
\includegraphics[width=.24\textwidth]{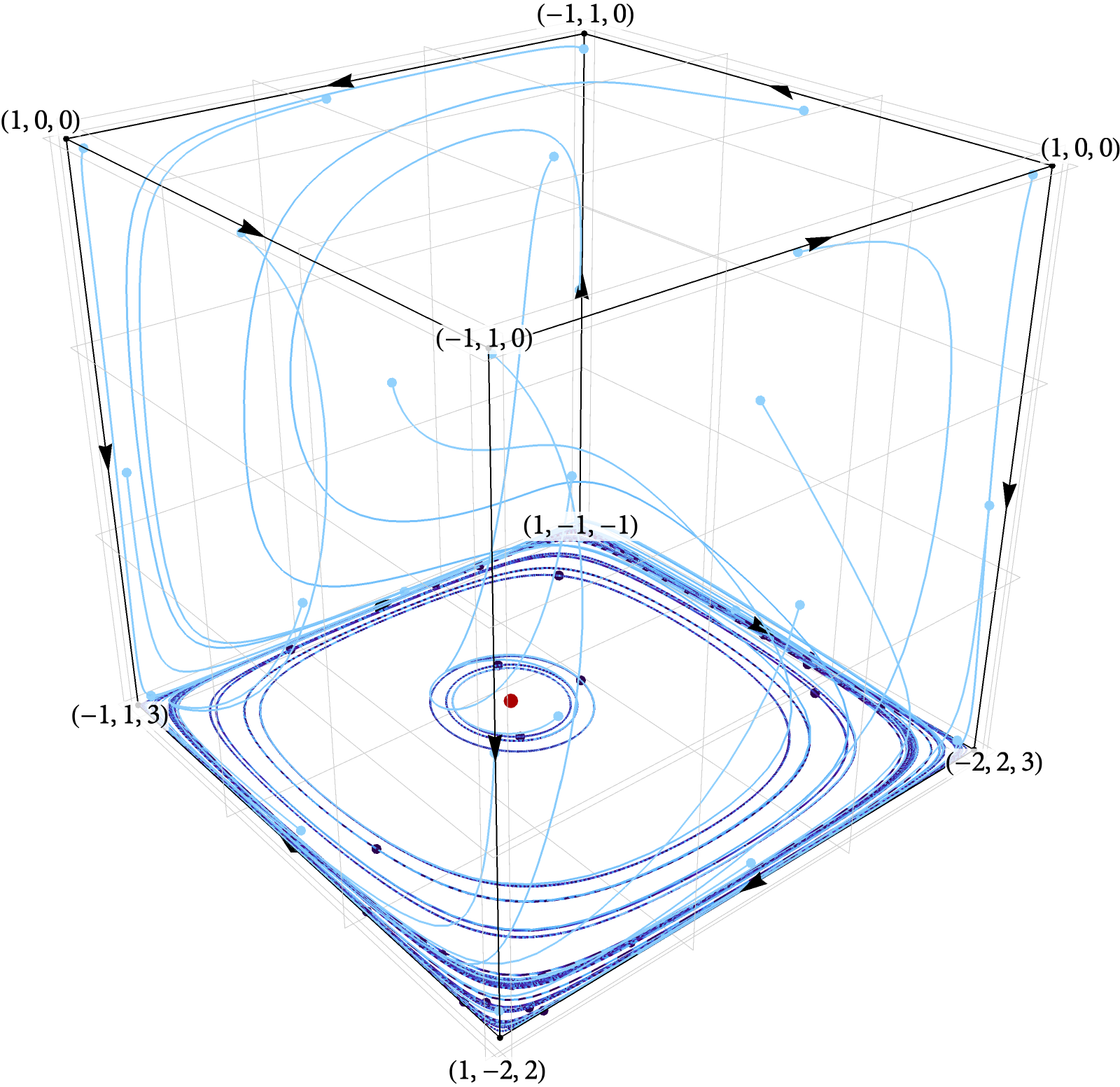}%
\hfill
\includegraphics[width=.24\textwidth]{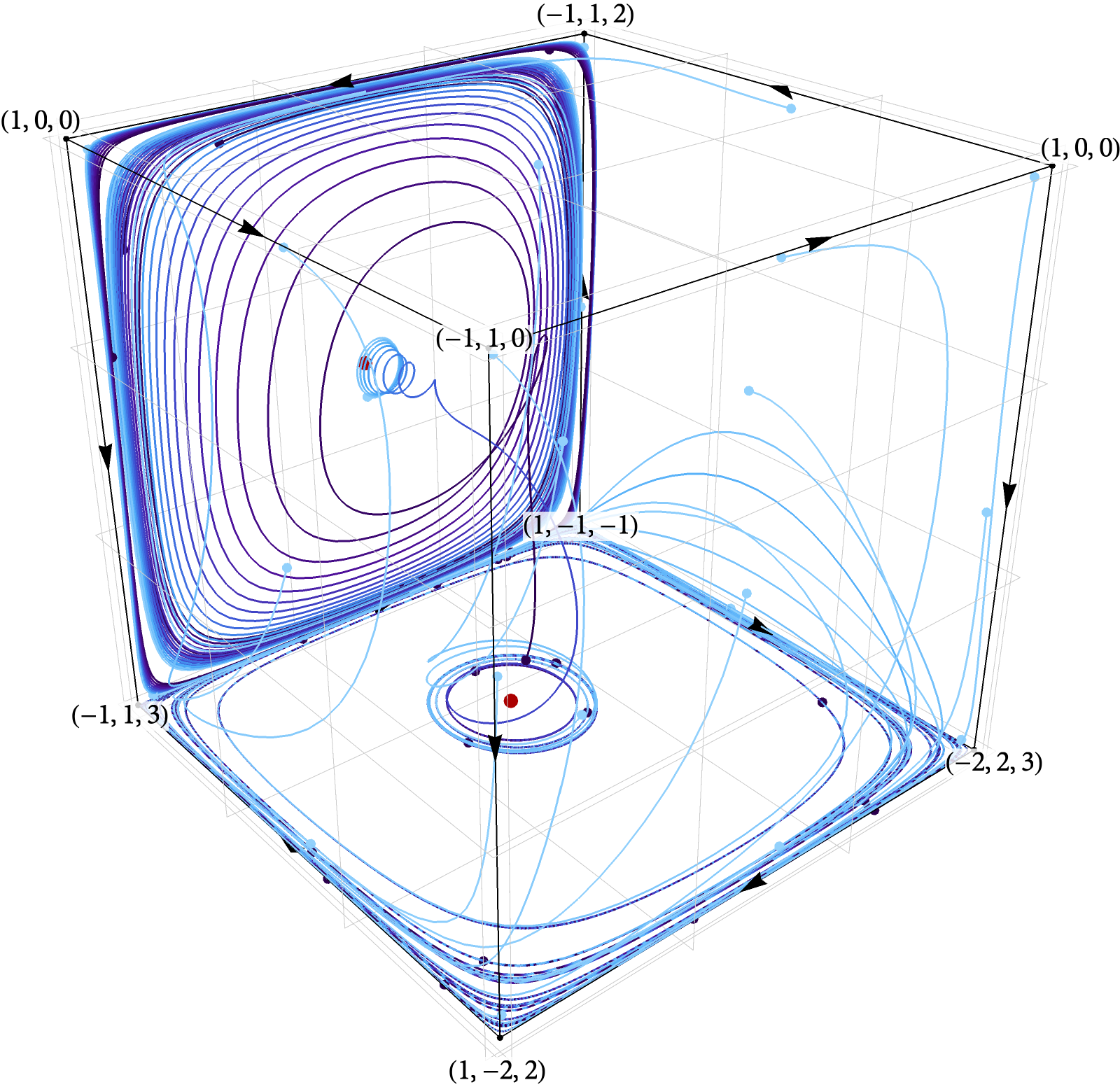}%
\hfill
\includegraphics[width=.24\textwidth]{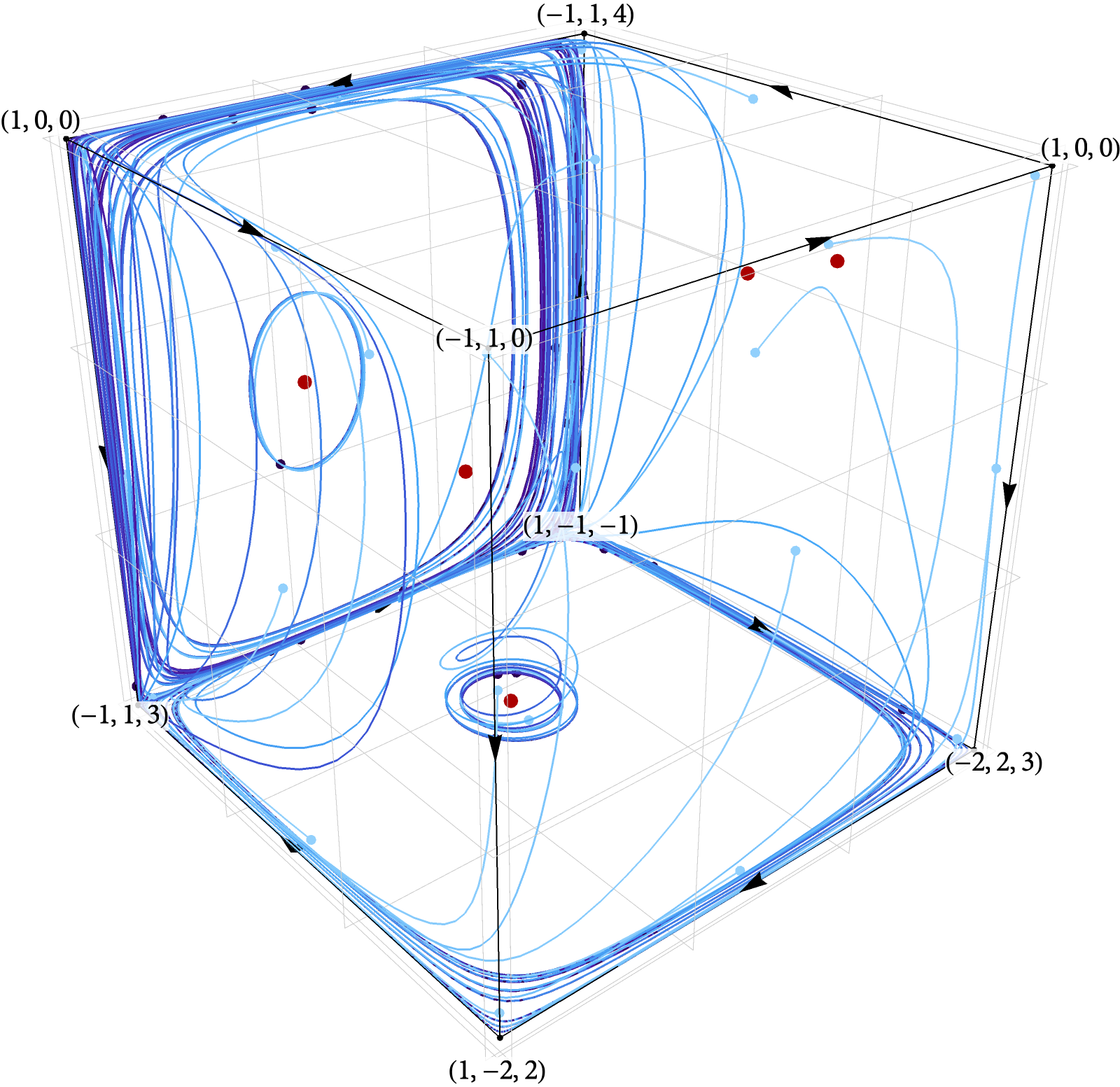}%
\hfill
\includegraphics[width=.24\textwidth]{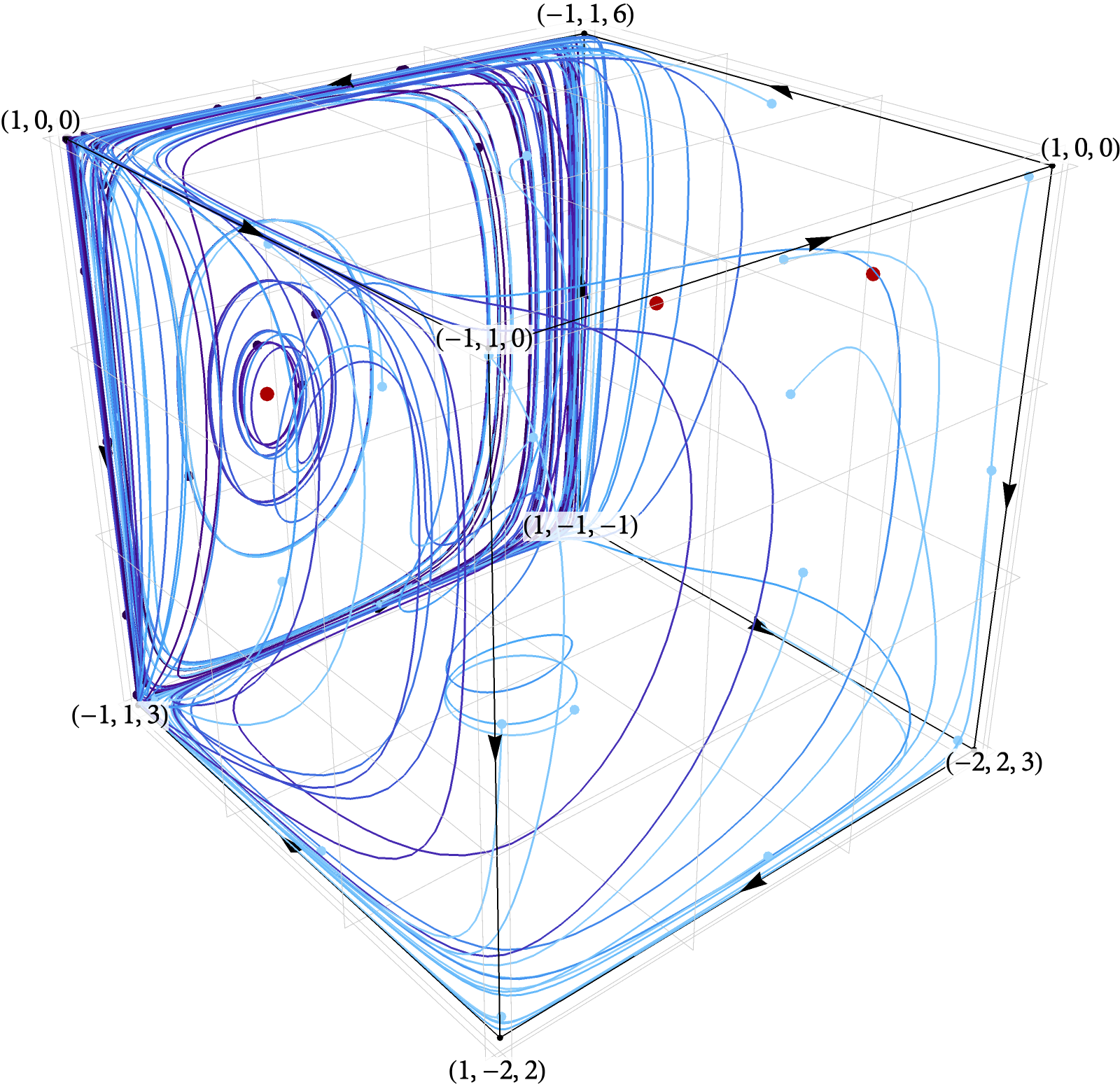}%
\caption{Four games with the same preference graph, differing at only a single payoff entry.
As we move from left to right, the dynamics become progressively more complicated and orbits converge to different faces, despite having the same preference graph.}
\label{fig:four-games}
\end{figure*}

Putting everything together, our analysis shows that the preference structure of a game provides a simple combinatorial blueprint which \emph{constrains} the stable outcomes of regularized learning, but is not enough to \emph{determine} it.
Moreover, although \ac{radness} provides a natural sufficient condition for asymptotic stability, it is \emph{not} necessary in general (\cf~the last example in \cref{app:preferences}).
It is therefore an open question whether asymptotically stable sets admit a straightforward characterization in terms of payoff conditions that can be checked at the level of pure strategies.

Finally, while our results concern attractors, a full account of the players' limiting behavior remains elusive:
for example, \cref{fig:four-games} shows four games with the same preference graph and no proper attractors, but markedly different limiting behaviors.
A plausible conjecture is that any asymptotically stable limit set of the replicator dynamics\textemdash or \ac{FTRL}\textemdash must be contained in the boundary of the game's strategy space \cite{Mer23};
however, this likely requires new ideas, which we intend to pursue in future work.

\section*{Acknowledgments}
\begingroup
\small
%
%
This research was supported in part by the French National Research Agency (ANR) in the framework of the PEPR IA FOUNDRY project (ANR-23-PEIA-0003).
PM is also a member of the Archimedes Research Unit/Athena RC, and was partially supported by project MIS 5154714 of the National Recovery and Resilience Plan Greece 2.0 funded by the European Union under the NextGenerationEU Program.
We would also like to thank the IRP FAIRGAME of the CNRS for supporting this project.
\endgroup

\appendix
\crefalias{section}{appendix}
\crefalias{subsection}{appendix}

\numberwithin{equation}{section}	
\numberwithin{lemma}{section}	
\numberwithin{proposition}{section}	
\numberwithin{theorem}{section}	
\numberwithin{corollary}{section}	
\numberwithin{definition}{section}	
\numberwithin{assumption}{section}	
\numberwithin{remark}{section}	

\section{Further related work}
\label{app:related}

The study of attractors of the replicator dynamics originates in evolutionary game theory, where the replicator equation was introduced as a model of evolutionary selection \citep{TJ78,HS98,San10}. 
Already in this literature, the long-run behavior of learning was understood to be set-valued, and a seminal contribution in this direction is due to \citet{RW95}, who studied the asymptotic stability of faces under evolutionary dynamics and showed that subgames closed under better replies induce attracting faces, which is \Cref{thm:subgame-stab} for the special case of the replicator.

These questions later became central for learning in games. 
The replicator dynamics are the continuous-time limit of exponential weights \citep{HSV09,Sor09}, while exponential weights is one of the few quintessential no-regret algorithm \citep{CBL06}. 
More generally, regularized dynamics, such as \ac{FTRL}, provide a common, unified framework for no-regret learning \citep{SS11,Haz16,MS16}.
Thus, the question of identifying attractors of the replicator naturally extends to the broader question of identifying the stable long-run outcomes of regularized learning dynamics.

For special classes of games, the picture is relatively well understood. 
In zero-sum and adversarial settings, regularized learning typically cycles rather than converges \citep{MPP18}. 
For zero-sum games, \cite{BS24} showed that the span of the sink equilibrium is the unique global attractor of the replicator dynamics. 
Moreover, \cite{LMP24,LMPP+24} showed that harmonic games, a generalization of zero-sum games with interior equilibria, are Poincar\'e recurrent in the interior under \ac{FTRL}, and therefore admit no proper attractor.
By contrast, in potential games, the potential acts as a Lyapunov function, yielding convergence to \aclp{NE} under broad classes of game dynamics \citep{MerSan18}. 
Outside such structured classes, the long-run behavior of learning in finite games remains poorly understood.

A recent line of work studies this problem through the preference graph of the game. 
\citet{PP19} conjectured that, for the replicator dynamics, minimal attractors should exist and should correspond to sink equilibria of the preference graph. 
\citet{BS23} proved that minimal attractors do exist and always contain sink equilibria. 
They also proved, for the replicator dynamics, a weaker form of \Cref{lem:ict-span}: any chain transitive set which contains a strongly connected set of pure profiles also contains its span. 
In the same paper, they also showed that \ac{club} subgames are attracting for the replicator dynamics, essentially recovering the same result of \citet{RW95}. 
This had already been extended prior to discrete-time \ac{FTRL} by \cite{BM23}.
However, as noted by \cite{CLM25}, their proof contains a gap: the energy function used there does not generally induce neighborhoods of \ac{club} faces, and hence does not by itself establish asymptotic stability. 
The energy of \cite{CLM25}, based on the Bregman divergence, fixes this for steep regularizers (where they work with a stochastic version of continuous-time \ac{FTRL}), while ours (the Fenchel gap) allows to prove the result for both steep and non-steep regularizers in continuous-time.

\section{Preference graphs}
\label{app:preferences}

We recall that the \emph{preference graph} has vertex set $\mathcal{A}$ and an arc $\alpha\to\alpha'$ whenever the two profiles are $i$-comparable for some player $i\in N$ (i.e.~$\alpha_{-i}=\alpha'_{-i}$) with $u_i(\alpha')\ge u_i(\alpha)$.

\subsection{Sink equilibria}
Note that, given any directed graph, we can decompose its set of vertices into a disjoint union of \emph{strongly connected components} (SCCs), which are strongly connected sets that admit no strongly connected superset. Given now such a decomposition, there will always exists among them some components which have no outgoing edge, known as \emph{sink strongly connected components} (sink SCCs).

When applied to the preference graph, these sink SCCs have been called the \emph{sink equilibria} of the game \cite{BS25}. They are therefore nonempty subsets of pure profiles which are both closed under better replies and strongly connected in the preference graph. Equivalently, they are also nonempty minimal \ac{club} sets, that is, \ac{club} sets that contain no proper \ac{club} set. Thus, once a better-reply path enters a sink equilibrium, it can keep moving within it, but it cannot leave it. In this sense, sink equilibrium represent the terminal ``basins'' of the ordinal deviation structure.

This interpretation is also reflected by a probabilistic viewpoint \cite{PP19, HMPP24, OPPT+19} : consider a random walk on $\A$ whose transitions $P$ follow the preference edges, i.e. $P(\alpha,\alpha') >0$ if and only if $\alpha\to\alpha'$. An interpretation of such dynamics is that, at each turn, given we are on a pure profile $\alpha$, we pick a random player who has one or many profitable deviations from $\alpha$, who in turn picks at random such a deviation.
Then, the classical ergodic theorem on Markov chains shows that the recurrent classes of this chain coincide with the sink SCCs of the preference graph, i.e. the sink equilibria, and every invariant probability measure $\pi$ is therefore supported on the union of sink equilibria.

\subsection{Examples}\label{sub:examples}
We now illustrate this with some example. In each case, the preference graph is drawn on the left, and the payoff table is
given on the right. The sink equilibria vertices are colored red, and their spans are red as well.

\begin{figure}[H]
\centering
\begin{minipage}{0.48\linewidth}
\centering

\begin{tikzpicture}[>=Stealth, node distance=18mm,
  node/.style={circle,draw,minimum size=5.5mm,inner sep=0pt,font=\small},
  sink/.style={node,draw=red!70!black,fill=red!20},
  edge/.style={->,draw=black,line width=0.9pt}
]
\node[node] (CC) at (0,1.6) {$(\mathrm C,\mathrm C)$};
\node[node] (CD) [right=of CC] {$(\mathrm C,\mathrm D)$};
\node[sink] (DD) [below=of CD] {$(\mathrm D,\mathrm D)$};
\node[node] (DC) [left=of DD] {$(\mathrm D,\mathrm C)$};
\draw[edge] (CC) -- (DC);
\draw[edge] (CC) -- (CD);
\draw[edge] (DC) -- (DD);
\draw[edge] (CD) -- (DD);
\end{tikzpicture}

\label{fig:pd-graph}
\end{minipage}\hfill
\begin{minipage}{0.48\linewidth}
\centering
\renewcommand{\arraystretch}{1.15}
\setlength{\tabcolsep}{8pt}
\begin{tabular}{c|cc}
 & $\mathrm C$ & $\mathrm D$ \\ \hline
$\mathrm C$ & $(3,3)$ & $(0,5)$ \\
$\mathrm D$ & $(5,0)$ & $(1,1)$
\end{tabular}
\label{tab:pd}
\end{minipage}
\caption{\emph{Prisoner's Dilemma}}
\end{figure}
This is the well known \emph{Prisoner's Dilemma}. The preference graph is acyclic: all paths flow to $(\mathrm D,\mathrm D)$, which is a sink vertex, and therefore the unique strict \acl{NE}.

\begin{figure}[H]
\centering
\begin{minipage}{0.48\linewidth}
\centering

\begin{tikzpicture}[>=Stealth, node distance=18mm,
  node/.style={circle,draw,minimum size=5.5mm,inner sep=0pt,font=\small},
  sink/.style={node,draw=red!70!black,fill=red!20},
  sinkedge/.style={->,draw=red!70!black,line width=0.9pt}
]
\node[sink] (HH) at (0,1.6) {$(\mathrm H, \mathrm H)$};
\node[sink] (HT) [right=of HH] {$(\mathrm H,\mathrm T)$};
\node[sink] (TT) [below=of HT] {$(\mathrm T,\mathrm T)$};
\node[sink] (TH) [left=of TT] {$(\mathrm T,\mathrm H)$};
\draw[sinkedge] (HH) -- (HT);
\draw[sinkedge] (HT) -- (TT);
\draw[sinkedge] (TT) -- (TH);
\draw[sinkedge] (TH) -- (HH);
\end{tikzpicture}

\label{fig:mp-graph}
\end{minipage}\hfill
\begin{minipage}{0.48\linewidth}
\centering
\renewcommand{\arraystretch}{1.15}
\setlength{\tabcolsep}{8pt}
\begin{tabular}{c|cc}
 & $\mathrm H$ & $\mathrm T$ \\ \hline
$\mathrm H$ & $(1,-1)$ & $(-1,1)$ \\
$\mathrm T$ & $(-1,1)$ & $(1,-1)$
\end{tabular}
\label{tab:mp}
\end{minipage}
\caption{Matching Pennies}
\end{figure}
This is the \emph{Matching Pennies} game. It is a zero-sum game, and its preference graph forms a directed 4--cycle, hence the graph is strongly connected and the unique \ac{club} set is the whole graph.

\begin{figure}[H]
\centering
\begin{minipage}{0.48\linewidth}
\centering
\begin{tikzpicture}[>=Stealth, node distance=18mm,
  node/.style={circle,draw,minimum size=5.5mm,inner sep=0pt,font=\small},
  sink/.style={node,draw=red!70!black,fill=red!20},
  edge/.style={->,draw=black,line width=0.9pt}
]
\node[sink] (AA) at (0,1.6) {$(\mathrm A,\mathrm A)$};
\node[node] (AB) [right=of AA] {$(\mathrm A,\mathrm B)$};
\node[sink] (BB) [below=of AB] {$(\mathrm B,\mathrm B)$};
\node[node] (BA) [left=of BB] {$(\mathrm B,\mathrm A)$};

\draw[edge] (AB) -- (AA);
\draw[edge] (AB) -- (BB);
\draw[edge] (BA) -- (AA);
\draw[edge] (BA) -- (BB);
\end{tikzpicture}
\label{fig:coord-graph}
\end{minipage}\hfill
\begin{minipage}{0.48\linewidth}
\centering
\renewcommand{\arraystretch}{1.15}
\setlength{\tabcolsep}{8pt}
\begin{tabular}{c|cc}
 & $\mathrm A$ & $\mathrm B$ \\ \hline
$\mathrm A$ & $(1,1)$ & $(0,0)$ \\
$\mathrm B$ & $(0,0)$ & $(1,1)$
\end{tabular}
\label{tab:coord}
\end{minipage}
\caption{A coordination game.}
\end{figure}
The preference graph is acyclic and has two sink vertices, $(\mathrm A,\mathrm A)$ and $(\mathrm B,\mathrm B)$, so the game has two strict \aclp{NE} (and hence two singleton \ac{club} sets), corresponding to the two coordinated outcomes.

\begin{figure}[H]
\centering
\begin{minipage}{0.56\linewidth}
\centering
\resizebox{4.5cm}{!}{
\begin{tikzpicture}[
	scale=3,
	line join=round,
	line cap=round,
	edge/.style={semithick,draw=black},
	sinkspan/.style={draw=MyDarkRed,very thick},
	midorient/.style={-{Stealth},thick,draw=black},
	sinkorient/.style={-Stealth,thick,draw=MyDarkRed}
]

\coordinate (FLB) at (0,0);
\coordinate (FRB) at (1,0);
\coordinate (FRT) at (1,1);
\coordinate (FLT) at (0,1);

\coordinate (BLB) at (0.40,0.30);
\coordinate (BRB) at (1.40,0.30);
\coordinate (BRT) at (1.40,1.30);
\coordinate (BLT) at (0.40,1.30);

\fill[MyLightRed,opacity=0.25] (FLT)--(FRT)--(FRB)--(FLB)--cycle;

\draw[edge] (FLT)--(FRT)--(FRB)--(FLB)--cycle;
\draw[edge] (BLT)--(BRT)--(BRB)--(BLB)--cycle;
\draw[edge] (FLT)--(BLT);
\draw[edge] (FRT)--(BRT);
\draw[edge] (FRB)--(BRB);
\draw[edge] (FLB)--(BLB);

\draw[sinkspan] (FLT)--(FRT)--(FRB)--(FLB)--cycle;

\draw[sinkorient] ($(FLT)!0.48!(FRT)$) -- ($(FLT)!0.52!(FRT)$);
\draw[sinkorient] ($(FRB)!0.48!(FLB)$) -- ($(FRB)!0.52!(FLB)$);
\draw[midorient] ($(BRT)!0.48!(BLT)$) -- ($(BRT)!0.52!(BLT)$);
\draw[midorient] ($(BLB)!0.48!(BRB)$) -- ($(BLB)!0.52!(BRB)$);

\draw[sinkorient] ($(FLB)!0.48!(FLT)$) -- ($(FLB)!0.52!(FLT)$);
\draw[sinkorient] ($(FRT)!0.48!(FRB)$) -- ($(FRT)!0.52!(FRB)$);
\draw[midorient] ($(BLT)!0.48!(BLB)$) -- ($(BLT)!0.52!(BLB)$);
\draw[midorient] ($(BRB)!0.48!(BRT)$) -- ($(BRB)!0.52!(BRT)$);

\draw[midorient] ($(BLT)!0.48!(FLT)$) -- ($(BLT)!0.52!(FLT)$);
\draw[midorient] ($(BRT)!0.48!(FRT)$) -- ($(BRT)!0.52!(FRT)$);
\draw[midorient] ($(BLB)!0.48!(FLB)$) -- ($(BLB)!0.52!(FLB)$);
\draw[midorient] ($(BRB)!0.48!(FRB)$) -- ($(BRB)!0.52!(FRB)$);

\foreach \V in {FLB,FRB,FRT,FLT}{
	\fill[MyDarkRed] (\V) circle (0.9pt);
}
\foreach \V in {BLB,BRB,BRT,BLT}{
	\fill[black] (\V) circle (0.9pt);
}
\end{tikzpicture}
}
\label{fig:face-subgame}
\end{minipage}\hfill
\begin{minipage}{0.40\linewidth}
\centering
\renewcommand{\arraystretch}{1.15}
\setlength{\tabcolsep}{6pt}
\begin{tabular}{c}
\textbf{Front} \\[-2pt]
\begin{tabular}{c|cc}
   & $\mathrm L$ & $\mathrm R$ \\ \hline
$\mathrm T$ & $(1,0,1)$  & $(0,1,1)$ \\
$\mathrm B$ & $(0,1,1)$  & $(1,0,1)$
\end{tabular}
\\[8pt]
\textbf{Back} \\[-2pt]
\begin{tabular}{c|cc}
   & $\mathrm L$ & $\mathrm R$ \\ \hline
$\mathrm T$ & $(-1,1,0)$ & $(1,0,0)$ \\
$\mathrm B$ & $(1,0,0)$  & $(-1,1,0)$
\end{tabular}
\end{tabular}
\label{tab:face-subgame}
\end{minipage}
\caption{A cube with a \ac{club} square.}
\end{figure}
A $2\times2\times2$ game, for which we represent pure profiles as the vertices of a cube. Player 1 choses to play top or bottom, player 2 left or right and player 3 front or back. Edges carry arrows for profitable unilateral deviations. The $2 \times 2$ subgame where the third player plays front only is closed under better replies.

\begin{figure}[H]
\centering
\begin{minipage}{0.56\linewidth}
\centering
\resizebox{4.5cm}{!}{
\begin{tikzpicture}[
	scale=3,
	line join=round,
	line cap=round,
	edge/.style={semithick,draw=black},
	sinkspan/.style={draw=MyDarkRed,very thick},
	midorient/.style={-{Stealth},thick,draw=black},
	sinkorient/.style={-Stealth,thick,draw=MyDarkRed}
]

\coordinate (FLB) at (0,0);
\coordinate (FRB) at (1,0);
\coordinate (FRT) at (1,1);
\coordinate (FLT) at (0,1);

\coordinate (BLB) at (0.40,0.30);
\coordinate (BRB) at (1.40,0.30);
\coordinate (BRT) at (1.40,1.30);
\coordinate (BLT) at (0.40,1.30);

\fill[MyLightRed,opacity=0.25] (FLT)--(FRT)--(BRT)--(BLT)--cycle;
\fill[MyLightRed,opacity=0.25] (FRT)--(FRB)--(BRB)--(BRT)--cycle;

\draw[edge] (FLT)--(FRT)--(FRB)--(FLB)--cycle;
\draw[edge] (BLT)--(BRT)--(BRB)--(BLB)--cycle;
\draw[edge] (FLT)--(BLT);
\draw[edge] (FRT)--(BRT);
\draw[edge] (FRB)--(BRB);
\draw[edge] (FLB)--(BLB);

\draw[sinkspan] (FLT)--(FRT)--(BRT)--(BLT)--cycle;
\draw[sinkspan] (FRT)--(FRB)--(BRB)--(BRT)--cycle;

\draw[sinkorient] ($(FLT)!0.48!(FRT)$) -- ($(FLT)!0.52!(FRT)$);
\draw[sinkorient] ($(FRT)!0.48!(FRB)$) -- ($(FRT)!0.52!(FRB)$);
\draw[sinkorient] ($(FRB)!0.48!(BRB)$) -- ($(FRB)!0.52!(BRB)$);
\draw[sinkorient] ($(BRB)!0.48!(BRT)$) -- ($(BRB)!0.52!(BRT)$);
\draw[sinkorient] ($(BRT)!0.48!(BLT)$) -- ($(BRT)!0.52!(BLT)$);
\draw[sinkorient] ($(BLT)!0.48!(FLT)$) -- ($(BLT)!0.52!(FLT)$);
\draw[sinkorient] ($(BRT)!0.48!(FRT)$) -- ($(BRT)!0.52!(FRT)$);

\draw[midorient] ($(FLB)!0.48!(FLT)$) -- ($(FLB)!0.52!(FLT)$);
\draw[midorient] ($(FLB)!0.48!(FRB)$) -- ($(FLB)!0.52!(FRB)$);
\draw[midorient] ($(FLB)!0.48!(BLB)$) -- ($(FLB)!0.52!(BLB)$);
\draw[midorient] ($(BLB)!0.48!(BLT)$) -- ($(BLB)!0.52!(BLT)$);
\draw[midorient] ($(BLB)!0.48!(BRB)$) -- ($(BLB)!0.52!(BRB)$);

\foreach \V in {FRB,FRT,FLT,BRB,BRT,BLT}{
	\fill[MyDarkRed] (\V) circle (0.9pt);
}
\foreach \V in {FLB,BLB}{
	\fill[black] (\V) circle (0.9pt);
}

\end{tikzpicture}
}
\label{fig:union-faces}
\end{minipage}\hfill
\begin{minipage}{0.40\linewidth}
\centering
\renewcommand{\arraystretch}{1.15}
\setlength{\tabcolsep}{6pt}
\begin{tabular}{c}
\textbf{Front} \\[-2pt]
\begin{tabular}{c|cc}
   & $\mathrm L$ & $\mathrm R$ \\ \hline
$\mathrm T$ & $(2,-1,1)$  & $(0,1,1)$ \\
$\mathrm B$ & $(-2,-2,-2)$ & $(1,1,-1)$
\end{tabular}
\\[8pt]
\textbf{Back} \\[-2pt]
\begin{tabular}{c|cc}
   & $\mathrm L$ & $\mathrm R$ \\ \hline
$\mathrm T$ & $(1,0,-1)$  & $(1,-1,0)$ \\
$\mathrm B$ & $(-2,-2,0)$ & $(-2,0,1)$
\end{tabular}
\end{tabular}
\label{tab:union-faces}
\end{minipage}
\caption{A game with a \ac{srad} set.}
\end{figure}
A $2 \times 2 \times 2$ game with a \ac{srad} sink equilibrium.
Indeed let $\HH$ be that sink, a direct computation yields, $\Phi(\beta,\alpha)\le -1<0$ for all $\alpha\notin\HH$ and all $\beta\in\HH$.

    \begin{figure}[H]
\centering
\begin{minipage}{0.56\linewidth}
\centering
\resizebox{4.5cm}{!}{
\begin{tikzpicture}[
scale=3,
line join=round,
line cap=round,
edge/.style={semithick,draw=black},
rededge/.style={very thick,draw=MyDarkRed},
midorient/.style={-{Stealth},thick,draw=black},
sinkorient/.style={-Stealth,thick,draw=MyDarkRed}
]
\coordinate (FLB) at (0,0);
\coordinate (FRB) at (1,0);
\coordinate (FRT) at (1,1);
\coordinate (FLT) at (0,1);
\coordinate (BLB) at (0.40,0.30);
\coordinate (BRB) at (1.40,0.30);
\coordinate (BRT) at (1.40,1.30);
\coordinate (BLT) at (0.40,1.30);

\draw[edge]    (BLB)--(BRB)--(BRT);
\draw[rededge] (BRT)--(BLT)--(BLB);

\draw[rededge] (FLB)--(BLB);
\draw[edge]    (FRB)--(BRB);
\draw[rededge] (FRT)--(BRT);
\draw[edge]    (FLT)--(BLT);

\draw[rededge] (FLB)--(FRB)--(FRT);
\draw[edge]    (FRT)--(FLT)--(FLB);

\draw[midorient]
  ($(FLT)!0.48!(FRT)$) -- ($(FLT)!0.52!(FRT)$);
\draw[sinkorient]
  ($(BLT)!0.48!(BRT)$) -- ($(BLT)!0.52!(BRT)$);

\draw[midorient]
  ($(BRB)!0.48!(BLB)$) -- ($(BRB)!0.52!(BLB)$);
\draw[sinkorient]
  ($(FRB)!0.48!(FLB)$) -- ($(FRB)!0.52!(FLB)$);

\draw[midorient]
  ($(FLT)!0.48!(FLB)$) -- ($(FLT)!0.52!(FLB)$);
\draw[sinkorient]
  ($(FRT)!0.48!(FRB)$) -- ($(FRT)!0.52!(FRB)$);

\draw[sinkorient]
  ($(BLB)!0.48!(BLT)$) -- ($(BLB)!0.52!(BLT)$);
\draw[midorient]
  ($(BRB)!0.48!(BRT)$) -- ($(BRB)!0.52!(BRT)$);

\draw[midorient]
  ($(FLT)!0.48!(BLT)$) -- ($(FLT)!0.52!(BLT)$);
\draw[sinkorient]
  ($(BRT)!0.48!(FRT)$) -- ($(BRT)!0.52!(FRT)$);

\draw[sinkorient]
  ($(FLB)!0.48!(BLB)$) -- ($(FLB)!0.52!(BLB)$);
\draw[midorient]
  ($(BRB)!0.48!(FRB)$) -- ($(BRB)!0.52!(FRB)$);

\foreach \V in {FLB,FRB,FRT,BLB,BRT,BLT}{
  \fill[MyDarkRed] (\V) circle (0.9pt);
}
\foreach \V in {FLT,BRB}{
  \fill[black] (\V) circle (0.9pt);
}

\end{tikzpicture}
}
\label{fig:jordan-cube}
\end{minipage}\hfill
\begin{minipage}{0.4\linewidth}
\centering
\renewcommand{\arraystretch}{1.15}
\setlength{\tabcolsep}{6pt}
\begin{tabular}{c}
\textbf{Front} \\[-2pt]
\begin{tabular}{c|cc}
   & \(\mathrm L\) & \(\mathrm R\) \\ \hline
\(\mathrm T\) & \((-1,-1,-1)\) & \((-1,1,1)\) \\
\(\mathrm B\) & \((1,1,-1)\)   & \((1,-1,1)\)
\end{tabular}
\\[8pt]
\textbf{Back} \\[-2pt]
\begin{tabular}{c|cc}
   & \(\mathrm L\) & \(\mathrm R\) \\ \hline
\(\mathrm T\) & \((1,-1,1)\)  & \((1,1,-1)\) \\
\(\mathrm B\) & \((-1,1,1)\)  & \((-1,-1,-1)\)
\end{tabular}
\end{tabular}
\label{tab:jordan}
\end{minipage}
\caption{Jordan's Matching Pennies}
\end{figure}
This game, known as \emph{Jordan's Matching Pennies}, was introduced by \citet{Jor93} and studied in \cite{GH95} as a compact, explicit instance in which adaptive play can fail to converge to \acl{NE} and instead exhibits persistent cycling concentrated on the red $6$-cycle (the span of the sink equilibrium). It also provides a minimal example where long-run behavior is naturally described by sink equilibria rather than \aclp{NE} \cite{KLPT11}.

Now, let $\HH$ denote the sink equilibrium of Jordan's Matching Pennies (the six red vertices in the previous figure).
    Here $\A\setminus\HH=\{(\mathrm{T,L,F}),(\mathrm{B,R,B})\}$, and a direct computation gives $\Phi(\beta,\alpha) \in \{0, -2\}$ for all such $\alpha \notin \HH$ and $\beta \in \HH$. Hence $\HH$ is \ac{rad} but \emph{not} \ac{srad}.

\begin{figure}[H]
\centering
\begin{minipage}{0.50\linewidth}
\centering
\begin{tikzpicture}[>=Stealth, line cap=round,
  node/.style={circle,draw,minimum size=6.2mm,inner sep=0pt,font=\small},
  sink/.style={node,draw=red!70!black,fill=red!20},
  edge/.style={->,draw=black,line width=0.9pt},
  sinkedge/.style={->,draw=red!70!black,line width=0.9pt}
]
\node[sink] (n12) at (0,1.85)      {$(\mathrm A,\mathrm B)$};
\node[sink] (n13) at (1.65,0.93)   {$(\mathrm A,\mathrm C)$};
\node[sink] (n23) at (1.65,-0.93)  {$(\mathrm B,\mathrm C)$};
\node[sink] (n21) at (0,-1.85)     {$(\mathrm B, \mathrm A)$};
\node[sink] (n31) at (-1.65,-0.93) {$(\mathrm C,\mathrm A)$};
\node[sink] (n32) at (-1.65,0.93)  {$(\mathrm C,\mathrm B)$};

\draw[sinkedge] (n12) -- (n13);
\draw[sinkedge] (n13) -- (n23);
\draw[sinkedge] (n23) -- (n21);
\draw[sinkedge] (n21) -- (n31);
\draw[sinkedge] (n31) -- (n32);
\draw[sinkedge] (n32) -- (n12);

\node[node] (d33) at (0,3.25)      {$(\mathrm C,\mathrm C)$};
\node[node] (d11) at (-3.25,-2.25) {$(\mathrm A,\mathrm A)$};
\node[node] (d22) at (3.25,-2.25)  {$(\mathrm B,\mathrm B)$};

\draw[edge,bend left=8]  (d11) to (n31);
\draw[edge,bend left=2]  (d11) to (n21);
\draw[edge,bend right=38] (d11) to (n12);
\draw[edge,bend right=20] (d11) to (n13);

\draw[edge,bend right=2] (d22) to (n21);
\draw[edge,bend right=8] (d22) to (n23);
\draw[edge,bend left=38] (d22) to (n12);
\draw[edge,bend left=20] (d22) to (n32);

\draw[edge] (d33) .. controls (-1.20,2.90) and (-2.55,2.05) .. (n32);
\draw[edge] (d33) .. controls ( 1.20,2.90) and ( 2.55,2.05) .. (n13);

\draw[edge] (d33) .. controls (-3.20,2.85) and (-3.20,-0.10) .. (n31);
\draw[edge] (d33) .. controls ( 3.20,2.85) and ( 3.20,-0.10) .. (n23);

\end{tikzpicture}
\label{fig:shapley-graph}
\end{minipage}\hfill
\begin{minipage}{0.46\linewidth}
\centering
\renewcommand{\arraystretch}{1.15}
\setlength{\tabcolsep}{10pt}
\begin{tabular}{c|ccc}
 & $\mathrm A$ & $\mathrm B$ & $\mathrm C$ \\ \hline
$\mathrm A$ & $(0,0)$ & $(2,1)$ & $(1,2)$ \\
$\mathrm B$ & $(1,2)$ & $(0,0)$ & $(2,1)$ \\
$\mathrm C$ & $(2,1)$ & $(1,2)$ & $(0,0)$
\end{tabular}
\label{tab:shapley}
\end{minipage}
\caption{Shapley's game.}
\end{figure}
Shapley's classical $3\times 3$ example \cite{Sha64} is historically important as (one of) the first explicit demonstrations that learning can fail to converge to \acl{NE} in two-player non-zero-sum games, where it was shown specifically for the fictitious play scheme \cite{KriSjo98}.

Let $\HH:=\{(i,j): i\neq j\}$ be the sink equilibrium of Shapley's game (the red hexagon in the previous figure).  Its span is exactly this directed cycle (union of one-dimensional faces). For $\alpha=(k,k)\notin\HH$ and $\beta=(i,j)\in\HH$,
\[
\Phi(\beta,\alpha)
=\big(u_1(k,j)-u_1(i,j)\big)\;+\;\big(u_2(i,k)-u_2(i,j)\big).
\]
If $k=i$ then $u_1(k,j)=u_1(i,j)$ and $u_2(i,k)=u_2(i,i)=0$, so $\Phi(\beta,\alpha)=-u_2(i,j)\in\{-1,-2\}<0$.
If $k=j$ similarly $\Phi(\beta,\alpha)=-u_1(i,j)\in\{-1,-2\}<0$.
If $k\notin\{i,j\}$, then $u_1(i,j)+u_2(i,j)=3$ and also $u_1(k,j)+u_2(i,k)=3$, hence $\Phi(\beta,\alpha)=0$.
Therefore $\Phi(\beta,\alpha)\le 0$ for all $\alpha\notin\HH$, $\beta\in\HH$, so $\HH$ is \ac{rad}, but it is not \ac{srad} since equality occurs (e.g.\ $\alpha=(\mathrm A,\mathrm A)$ and $\beta\in\{(\mathrm B,\mathrm C),(\mathrm C,\mathrm B)\}$).

\begin{figure}[H]
\centering
\begin{minipage}{0.50\linewidth}
\centering
\begin{tikzpicture}[
  scale=0.8, transform shape, >=Stealth,
  node/.style={circle,draw,minimum size=12mm,inner sep=0pt,font=\large},
  clubnode/.style={node,draw=red!70!black,fill=red!20},
  edge/.style={->,draw=black,line width=0.9pt},
  clubedge/.style={edge,draw=red!70!black}
]
\node[clubnode] (aA) at (0,4.8) {$(\mathrm A,\mathrm A)$};
\node[clubnode] (aB) at (4.2,4.8) {$(\mathrm A,\mathrm B)$};

\node[clubnode] (bA) at (0,2.2) {$(\mathrm B,\mathrm A)$};
\node[clubnode] (bB) at (4.2,2.2) {$(\mathrm B,\mathrm B)$};

\node[node] (cA) at (0,-0.6) {$(\mathrm C,\mathrm A)$};
\node[node] (cB) at (4.2,-0.6) {$(\mathrm C,\mathrm B)$};

\draw[clubedge] (aA) -- (aB);
\draw[clubedge] (aB) -- (bB);
\draw[clubedge] (bB) -- (bA);
\draw[clubedge] (bA) -- (aA);

\draw[edge] (cA) -- (bA);
\draw[edge] (cA) -- (cB);
\draw[edge] (cB) -- (bB);

\draw[edge,out=180,in=180,looseness=1] (cA) to (aA);
\draw[edge,out=0,in=0,looseness=1]     (cB) to (aB);

\end{tikzpicture}
\label{fig:nonresilient}
\end{minipage}\hfill
\begin{minipage}{0.48\linewidth}
\centering
\renewcommand{\arraystretch}{1.15}
\setlength{\tabcolsep}{10pt}
\begin{tabular}{c|cc}
 & $\mathrm A$ & $\mathrm B$ \\ \hline
$\mathrm A$ & $(0,0)$   & $(0,5)$ \\
$\mathrm B$ & $(-1,3)$  & $(1,0)$ \\
$\mathrm C$ & $(-3,0)$  & $(-2,1)$
\end{tabular}
\label{tab:nonresilient}
\end{minipage}
\caption{A \ac{club} subgame which is not \ac{rad}.}
\end{figure}

The \ac{club} subgame is $\B=\{\mathrm A,\mathrm B\}\times\{\mathrm A,\mathrm B\}$. Nonetheless, $\B$ is not \ac{rad}: with $\beta=(\mathrm A,\mathrm A)\in\B$ and $\alpha=(\mathrm C,\mathrm B)\notin\B$,
\begin{equation}
\Phi(\beta,\alpha)=(u_1(\mathrm C,\mathrm A)-u_1(\mathrm A,\mathrm A))+(u_2(\mathrm A,\mathrm B)-u_2(\mathrm A,\mathrm A))=(-3-0)+(5-0)=2>0.
\end{equation}
Since \ac{club} subgames are asymptotically stable, this shows \ac{radness} is not necessary for asymptotic stability.

\section{Attractors and chain transitivity}
\label{app:attractors}

We collect here standard notions from dynamical systems used throughout this paper, see e.g. \cite{San10, AN07}.

Let $X$ be a compact subset of $\R^d$. We shall consider a \emph{flow}, which is a continuous map $\phi:\mathbb{R}\times X\to X$ such that
\[
\phi_0=\mathrm{Id}_X,\qquad \phi_{t+s}=\phi_t\circ\phi_s\quad\forall\,s,t\in\mathbb{R}.
\]
We write $\phi_t(x_0)$ for the state at time $t$ starting from $x_0\in X$.
All flows in this paper arise from (globally) well-posed ODEs of the form $\dot{x} = f(x)$, where $f : X \to \R^d$ is Lipschitz continuous.

\subsection{Attractors}\label{subsec:attractors}
A set $A\subseteq X$ is \emph{forward-invariant} if $\phi_t(A)\subseteq A$ for all $t\ge 0$, and \emph{invariant} if $\phi_t(A) = A$ for all $t\in \mathbb{R}$. A point $x^\ast\in X$ is \emph{stationary} if $\phi_t(x^\ast)=x^\ast$ for all $t \ge 0$. Let $S\subseteq X$ be nonempty and closed and let $\mathcal{N}(S)$ denote the family of neighborhoods of $S$, then:
\begin{itemize}
\item \emph{Stability.} $S$ is \emph{stable} if for every $U\in\mathcal{N}(S)$ there exists $V\in\mathcal{N}(S)$ with
\[
x\in V \ \Longrightarrow\ \phi_t(x)\in U\quad\forall\,t\ge 0.
\]

\item \emph{Attraction.} $S$ is \emph{attracting} if there exists $U\in\mathcal{N}(S)$ such that
\[
x\in U \ \Longrightarrow\ \dist(\phi_t(x),S)\xrightarrow[t\to\infty]{}0.
\]
Any such $U$ is a \emph{basin of attraction} of $S$.

\item \emph{Asymptotic stability.} $S$ is \emph{asymptotically stable} if it is both stable and attracting.
\end{itemize}

A nonempty, compact, invariant set $A\subseteq X$ is an \emph{attractor} if there exists $U\in\mathcal{N}(A)$ with
\[
\lim_{t\to\infty}\ \sup_{x\in U} \dist(\phi_t(x),A)=0.
\]
Equivalently, $A$ is an attractor if and only if there exists an open, forward-invariant \emph{trapping region} $B$ and $t_0>0$ such that
$\phi_t(\operatorname{cl} B)\subset B$ for all $t\ge t_0$ and
\(
A=\bigcap_{t\ge 0}\phi_t(B).
\) In particular, every attractor is asymptotically stable, and any compact, invariant and asymptotically stable set is an attractor \cite{San10}. If $A$ is an attractor with trapping region $B$, the \emph{dual repellor} is
\[
A^\ast:=\bigcap_{t < 0}\phi_t\bigl(X\setminus B\bigr),
\]
which is an attractor for the time-reversed flow $\psi_t := \phi_{-t}$ and satisfies $A\cap A^\ast=\varnothing$.
Finally, we cite the following useful lemmas:
\begin{lemma}\label{lem:restrict-attract}
Let $A\subseteq X$ be an attractor, let $K\subseteq X$ be nonempty, compact and invariant and assume $A\cap K\neq\varnothing$. Then $A\cap K$ is an attractor for the restricted flow
$\phi|_K:\mathbb R\times K\to K$.
\end{lemma}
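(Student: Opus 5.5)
We prove \cref{lem:restrict-attract} directly from the trapping-region characterization of attractors recalled above. Since $A$ is an attractor, fix an open, forward-invariant trapping region $B\subseteq X$ and $t_0>0$ with $\phi_t(\cl B)\subset B$ for all $t\ge t_0$ and $A=\bigcap_{t\ge0}\phi_t(B)$. Set $B_K\defeq B\cap K$, which is relatively open in $K$, and consider the restricted flow $\phi|_K$. This is well defined because $K$ is invariant.

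\textbf{Step 1: $B_K$ is a trapping region for $\phi|_K$.} Forward invariance holds because both $B$ and $K$ are forward-invariant. For $t\ge t_0$, the relative closure of $B_K$ in $K$ is contained in $\cl B\cap K$, since $K$ is compact and hence closed. Therefore
\[
\phi_t(\cl_K B_K)\subseteq \phi_t(\cl B)\cap\phi_t(K)\subseteq B\cap K=B_K.
\]

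\textbf{Step 2: $\bigcap_{t\ge0}\phi_t(B_K)=A\cap K$.} The inclusion $\subseteq$ is immediate from $\phi_t(B\cap K)\subseteq\phi_t(B)\cap K$. For the reverse inclusion, take $x\in A\cap K$ and $t\ge0$. Since $\phi_t$ is a bijection of $X$ with $\phi_{-t}(x)\in K$, it suffices to show $\phi_{-t}(x)\in B$. This follows because $A$ is invariant, so $\phi_{-t}(x)\in A$, and $A\subseteq B$, since $A\subseteq\phi_0(B)=B$. Hence $\phi_{-t}(x)\in B\cap K$ and $x\in\phi_t(B_K)$.

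\textbf{Step 3: conclusion.} The set $A\cap K$ is nonempty by hypothesis and compact as the intersection of two compact sets. It is invariant because both $A$ and $K$ are invariant and each $\phi_t$ is a bijection, which gives $\phi_t(A\cap K)=\phi_t(A)\cap\phi_t(K)=A\cap K$. Applying the trapping-region characterization of attractors within the compact space $K$ with the flow $\phi|_K$, Steps 1 and 2 show that $A\cap K$ is an attractor for $\phi|_K$.

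Alternatively, one can argue directly from the uniform definition. If $U$ is a neighborhood of $A$ with $\sup_{x\in U}\dist(\phi_t(x),A)\to0$, then $U\cap K$ is a relative neighborhood of $A\cap K$. The only nontrivial point is to upgrade convergence to $A$ into convergence to $A\cap K$. This follows from compactness: for points of $K$, any limit of $\phi_t(x_n)\in K$ lies in $A\cap K$, so a contradiction argument with sequences gives $\sup_{x\in U\cap K}\dist(\phi_t(x),A\cap K)\to0$.

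The main obstacle is the bookkeeping of relative closures and of invariance under bijectivity in Step 2. The rest is routine.
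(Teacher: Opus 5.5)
Your proof is correct. The paper does not prove this lemma; it quotes it as a standard fact from the dynamical-systems literature, so there is no in-paper argument to compare against, and your write-up supplies one.

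In the trapping-region route you handle the two delicate points properly:
\begin{enumerate}
\item the relative closure $\cl_K B_K$ lies inside $\cl B\cap K$ because $K$ is closed;
\item the reverse inclusion in Step 2 uses $A\subseteq B$ together with two-sided invariance of both $A$ and $K$.
\end{enumerate}

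Of your two arguments, the second is the more self-contained. It applies sequential compactness in $K$ directly to the uniform-attraction definition, and uses only:
\begin{enumerate}
\item that definition itself;
\item closedness of $A$ and $K$;
\item the nonemptiness, compactness and invariance of $A\cap K$ that you verify in Step 3.
\end{enumerate}
The first argument instead relies on the ``only if'' half of the trapping-region characterization, namely that every attractor admits a trapping region. That is a nontrivial Conley-type result, and the paper states it without proof.
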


\subsection{Chain transitivity}
For $\varepsilon,T>0$, an \emph{$(\varepsilon,T)$--chain} from $x$ to $y$ is a finite sequence
\[
x=x_0,\ x_1,\ \dots,\ x_k=y,\qquad t_0,\dots,t_{k-1}\ge T,
\]
such that $\dist(\phi_{t_j}(x_j),x_{j+1})<\varepsilon$ for all $j\in \{0, \dots, k-1\}$.
There is a \emph{pseudo-orbit} from $x$ to $y$ (denoted $x\rightsquigarrow y$) if for every $\varepsilon,T>0$ there exists an $(\varepsilon,T)$--chain from $x$ to $y$. Note that pseudo-orbits cannot leave, but may enter, attractors\textemdash and pseudo-orbits may leave, but cannot enter a repellor.

With this we can define the notion of \emph{chain transitivity}. A nonempty compact set $K\subseteq X$ is \emph{internally chain transitive} for the flow $\phi$ if it is invariant and, for every $x,y\in K$, $x \rightsquigarrow y$ for the flow \emph{restricted to $K$}. We now mention the following alternative definition, due to \cite{Ben99}:
\begin{proposition}\label{prop:ict-attractor}
Let $K\subseteq X$ be nonempty and compact, then $K$ is internally chain transitive if and only if $K$ is invariant and the flow restricted to $K$ admits no proper attractor.
\end{proposition}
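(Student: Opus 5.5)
We prove the two implications separately, following the standard Conley-theoretic argument. The common device is that the set of points reachable from a given point by $(\varepsilon,T)$-chains behaves like a trapping region. Throughout we work with the restricted flow $\phi|_K$ on the compact invariant set $K$, so all neighborhoods are relative to $K$.

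\emph{($\Rightarrow$)} Let $K$ be internally chain transitive, so it is invariant by definition. Suppose for contradiction that $\phi|_K$ has a proper attractor $A\subsetneq K$, with open forward-invariant trapping region $B\subseteq K$ and $t_0>0$ such that $\phi_t(\cl B)\subset B$ for $t\ge t_0$.

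First, $B\neq K$. Otherwise invariance of $K$ would give $A=\bigcap_{t\ge0}\phi_t(K)=K$. So we may pick $y\in K\setminus B$.

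Next, $\cl B$ is forward invariant by continuity. Hence for $t\ge t_0$,
\[
\phi_t(\cl B)=\phi_{t_0}\bigl(\phi_{t-t_0}(\cl B)\bigr)\subseteq C\defeq\phi_{t_0}(\cl B).
\]
The set $C$ is compact and contained in the open set $B$. Set $\varepsilon\defeq\dist(C,K\setminus B)>0$ and $T\defeq t_0$.

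Now take $x\in A$ and any $(\varepsilon,T)$-chain $x=x_0,\dots,x_k$. By induction every $x_j$ lies in $B$: if $x_j\in B$, then $\phi_{t_j}(x_j)\in C$, and $x_{j+1}$ is within $\varepsilon$ of $C$, so $x_{j+1}\in B$. Therefore no such chain reaches $y\notin B$, contradicting $x\rightsquigarrow y$.

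\emph{($\Leftarrow$)} Let $K$ be invariant with no proper attractor for $\phi|_K$. Fix $x,y\in K$ and $\varepsilon,T>0$. Let $\Omega$ be the set of endpoints $z$ of $(\varepsilon,T)$-chains in $K$ starting at $x$, and define
\[
U\defeq\union_{s\ge0}\phi_s(\Omega).
\]
We aim to show that $U$ is a trapping region, in three steps.

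\begin{enumerate}
\item $\Omega$, and hence $U$, is open in $K$. The defining inequality $\dist(\phi_{t_j}(x_j),x_{j+1})<\varepsilon$ is strict, so moving the last point slightly keeps the chain valid. Also $U$ is forward invariant by construction, and it is nonempty since $\phi_T(x)\in\Omega$.
\item For $z\in\cl U$ and $t\in[T,2T]$, we have $\phi_t(z)\in\Omega$. Choose $z'\in U$ close to $z$; by uniform continuity of $\phi$ on $[0,3T]\times K$, $\phi_t(z)$ is within $\varepsilon$ of $\phi_t(z')$. Append the point $\phi_t(z)$ to a chain ending at (a preimage of) $z'$. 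The delicate case is when $z'=\phi_s(w)$ with $w\in\Omega$: then replace the last jump time $t$ by $s+t\ge T$.
\item By the previous step and forward invariance, $\phi_t(\cl U)\subseteq U$ for all $t\ge T$. So $U$ is a trapping region, and $A'\defeq\bigcap_{t\ge0}\phi_t(U)$ is a nonempty attractor of $\phi|_K$.
\end{enumerate}

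By hypothesis $A'=K$, so $y\in K=A'\subseteq U$. Hence $y=\phi_s(w)$ for some $w\in\Omega$ and $s\ge0$.

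It remains to convert this into an actual chain ending at $y$. Using invariance of $K$, write $y=\phi_{T}(\phi_{-T}(y))$ and apply the same argument to the point $\phi_{-T}(y)\in U$. We obtain a chain reaching $\phi_{-T}(y)$ (up to a flow segment), and then flow for time at least $T$ to land exactly at $y$, merging the last segment into one jump of length at least $T$. Since $\varepsilon,T$ were arbitrary, $x\rightsquigarrow y$.

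We expect the main obstacle to be the bookkeeping in the ($\Leftarrow$) direction: ensuring that the jump times stay at least $T$ when concatenating flow segments (in steps 2 and 3 and in the final conversion), and extracting uniform continuity constants on compact time intervals.
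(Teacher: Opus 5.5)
Your proof is correct. The paper does not prove this proposition; it imports it from \cite{Ben99} as a classical fact of Conley's theory, so there is no in-paper argument to match. You instead give a self-contained proof along the standard lines:
for ($\Rightarrow$), $(\varepsilon,t_0)$-chains cannot exit a trapping region $B\neq K$ once $\varepsilon<\dist(\phi_{t_0}(\cl B),K\setminus B)$;
for ($\Leftarrow$), the forward saturation $U$ of the set $\Omega$ of $(\varepsilon,T)$-chain endpoints from $x$ is itself a trapping region, so the absence of proper attractors forces $U=K$.
Your route needs only compactness, invariance of $K$, and the trapping-region characterization of attractors that the appendix already records. The citation is shorter but sends the reader to the general attractor--repeller machinery.

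When you write it up, three small points deserve a line:
\begin{enumerate}
\item State that chains have at least one jump ($k\ge1$). Otherwise $x\in\Omega$ trivially, and $\Omega$ need not be open.
\item Openness of $U=\bigcup_{s\ge0}\phi_s(\Omega)$ uses that each $\phi_s|_K$ is a homeomorphism of $K$, which follows from invariance of $K$.
\item Your final conversion step is immediate: once $U=K$, your second step with $t=T$ already gives $K=\phi_T(K)=\phi_T(\cl U)\subseteq\Omega$.
\end{enumerate}
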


\section{Regularizers and choice maps}
\label{app:mirror}

Throughout, fix a player $i\in\N$ and write $\X_i=\Delta(\A_i)$, $\YY_i=\R^{\A_i}$,
$\langle\cdot,\cdot\rangle$ for the standard pairing on $\YY_i\times\YY_i$,
and $\one\in\YY_i$ for the all-ones vector.
We view $h_i:\X_i\to\R\cup\{+\infty\}$ as an extended-real convex function on $\YY_i$
by setting $h_i(x_i)=+\infty$ for $x_i\notin\X_i$.
We assume throughout that $h_i$ is \emph{decomposable}, \emph{continuous} and \emph{smooth on the interior} :
\[
h_i(x_i)=\sum_{\alpha_i\in\A_i}\theta_i(x_{i\alpha}),
\quad \theta_i: [0,1] \to \R, \quad \theta_i \in C^0([0,1]),\quad \theta_i\in C^2((0,1]).
\]
We will moreover assume that there exists a constant $K_i>0$ such that
\begin{equation}\label{eq:curvature}
\theta_i''(p)\ge K_i \qquad \forall\,p\in(0,1].
\end{equation}
As a consequence, $h_i$ is \emph{$K_i$-strongly convex} on $\X_i$ (with respect to $\|\cdot\|_2$), i.e.\ for all
$x_i,x_i'\in\X_i$ and all $t\in(0,1]$,
\[
h_i\bigl((1-t)x_i+t x_i'\bigr)
\;\le\;
(1-t)h_i(x_i)+t h_i(x_i')-\frac{K_i}{2}\,t(1-t)\,\|x_i-x_i'\|_2^2.
\]
Finally, recall from the main text that $h_i$ is \emph{steep} if $\theta_i'(p)\to-\infty$ as $p\downarrow 0$.

\subsection{Choice maps}

The \emph{convex conjugate} of $h_i$ is
\[
h_i^*(y_i)\;:=\;\sup_{x_i\in\X_i}\{\langle y_i,x_i\rangle-h_i(x_i)\},\qquad y_i\in\YY_i,
\]
and the regularized \emph{choice map} (also known as the \emph{mirror map}) is
\[
Q_i(y_i)\;:=\;\arg\max_{x_i\in\X_i}\{\langle y_i,x_i\rangle-h_i(x_i)\}.
\]
By strong convexity, the maximizer is unique, so $Q_i:\YY_i\to\X_i$ is single-valued. We shall also consider the \emph{subdifferential} of $h_i$ at $x_i\in\X_i$:
\[
\partial h_i(x_i)
\;:=\;
\Bigl\{g_i\in\YY_i:\
h_i(w_i)\ge h_i(x_i)+\langle g_i,w_i-x_i\rangle\ \ \forall w_i\in\X_i\Bigr\},
\]
and we write $\dom\partial h_i:=\{x_i\in\X_i:\partial h_i(x_i)\neq\varnothing\}$ and $\ImQ_i = Q_i(\YY_i)$ for the image of $Q_i$, also known as the \emph{prox-domain}.

Under our assumptions, the choice map enjoys the following properties:

\begin{proposition}\label{prop:choice-map}
For each $i\in\N$:
\begin{enumerate}
\item\label{it:Qgrad} $h_i^*$ is $C^1$ on $\YY_i$ and $Q_i=\nabla h_i^*$, in particular $Q_i$ is Lipschitz continuous.
\item\label{it:Q-image} $\ImQ_i=\dom\partial h_i$.
\item\label{it:shift} For all $c\in\R$, $Q_i(y_i+c\one)=Q_i(y_i)$.
\item\label{it:vanish} If $y_i^k\in\YY_i$ satisfies $y_{i\alpha}^k-y_{i\beta}^k\to-\infty$ for some $\alpha\neq\beta$,
then $Q_{i\alpha}(y_i^k)\to 0$.
\item \label{it:KKT} For $x=Q(y)$ there exist $\lambda_i\in\mathbb{R}$ and $\nu_i\in\mathbb{R}_+^{\mathcal{A}_i}$ such that, for all $\alpha_i \in \A_i$,
\[
y_{i\alpha}=\theta_i'(x_{i\alpha})+\lambda_i - \nu_{i\alpha},
\]
and if $x_{i\alpha} \neq 0$, then $\nu_{i\alpha}=0$.
\end{enumerate}
\end{proposition}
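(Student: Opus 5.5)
This is standard convex analysis, and I would prove the five items in the order (1), (5), (2), (3), (4), since each later item leans on the earlier ones.

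\emph{Item (1).} The function $h_i$ is proper, lower semicontinuous and $K_i$-strongly convex with compact domain $\X_i$. Hence, for every $y_i$, the maximization defining $h_i^*(y_i)$ is finite and attained at a unique point. By Danskin's theorem, or equivalently by the standard duality between $K$-strong convexity of $h_i$ and $(1/K)$-smoothness of $h_i^*$, the conjugate $h_i^*$ is differentiable. Its gradient is the unique maximizer, $\nabla h_i^*(y_i)=Q_i(y_i)$, and it is $1/K_i$-Lipschitz in $\|\cdot\|_2$. This gives (1).

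\emph{Item (5).} I would write the maximization as a concave program over the simplex, with the equality constraint $\sum_\alpha x_{i\alpha}=1$ and the constraints $x_{i\alpha}\ge0$. The constraints are affine, so Slater's condition is automatic and KKT multipliers exist, provided the objective has the needed (sub)derivatives at the optimum. The delicate case is a coordinate with $x_{i\alpha}=0$ when $\theta_i'(0^+)=-\infty$, since then $\theta_i'(0)$ is undefined.
\begin{itemize}
\item In the steep case I would first show that such zero coordinates never occur. If $x_{i\alpha}=0$ and $x_{i\beta}>0$, moving mass $\epsilon$ from $\beta$ to $\alpha$ changes the objective by $-\theta_i(\epsilon)+\theta_i(0)+O(\epsilon)$. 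Since $\theta_i'(0^+)=-\infty$, this change has order $+\infty\cdot\epsilon$, so it is positive for small $\epsilon$, contradicting optimality.
\item In the non-steep case, $\theta_i'(0^+)$ is finite because $\theta_i'$ is increasing, and I set $\theta_i'(0):=\theta_i'(0^+)$.
\end{itemize}
With this convention the KKT conditions read $y_{i\alpha}-\theta_i'(x_{i\alpha})-\lambda_i+\nu_{i\alpha}=0$, together with the complementary slackness $\nu_{i\alpha}x_{i\alpha}=0$. This is exactly the claimed identity.

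\emph{Item (2).} This follows from Fenchel--Young: $x_i=Q_i(y_i)$ if and only if $y_i\in\partial h_i(x_i)$. Hence $\ImQ_i=\dom\partial h_i$.

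\emph{Item (3).} On $\X_i$ we have $\langle \one,x_i\rangle=1$. Adding $c\one$ to $y_i$ therefore shifts the objective by the constant $c$, and the argmax is unchanged.

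\emph{Item (4).} I would argue by contradiction. Suppose $Q_{i\alpha}(y_i^k)\ge\epsilon>0$ along a subsequence. By (5), $\nu_{i\alpha}=0$, so $y_{i\alpha}=\theta_i'(x_{i\alpha})+\lambda_i$. For $\beta$ we have $y_{i\beta}=\theta_i'(x_{i\beta})+\lambda_i-\nu_{i\beta}$, and $\theta_i'(x_{i\beta})\le\theta_i'(1)$, so $y_{i\beta}\le\theta_i'(1)+\lambda_i$. Therefore
\[
y_{i\alpha}-y_{i\beta}\ \ge\ \theta_i'(\epsilon)-\theta_i'(1),
\]
which is bounded below, contradicting $y_{i\alpha}^k-y_{i\beta}^k\to-\infty$.

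The main obstacle is the boundary behaviour in (5), namely making KKT rigorous where $\theta_i'$ blows up. The one-sided variation argument above handles it.
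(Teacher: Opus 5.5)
Your proposal is correct. The paper gives no argument of its own here and simply defers to \cite{MS16}. Your route is the standard convex-analytic argument that reference relies on: strong convexity and conjugate duality for (1), unpacking the subgradient inequality for (2), KKT on the simplex with the convention $\theta_i'(0):=\theta_i'(0^+)$ in the non-steep case for (5), and deriving (4) from (5). Your one-sided variation argument in the steep case also proves the interior-image statement $\ImQ_i=\X_i^\circ$ of Lemma~\ref{lem:steep-image}, which the paper likewise only cites.
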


\begin{proof}
    See \cite{MS16}.
\end{proof}

\begin{lemma}\label{lem:steep-image}
If $h_i$ is steep, then $\ImQ_i=\X_i^\circ$.
\end{lemma}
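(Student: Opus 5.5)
We prove the two inclusions separately. The inclusion $\ImQ_i\subseteq\X_i^\circ$ is the substantive one; the reverse inclusion is a direct construction.

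For $\ImQ_i\subseteq\X_i^\circ$, fix $y_i\in\YY_i$, set $x_i=Q_i(y_i)$, and suppose for contradiction that $x_{i\alpha}=0$ for some $\alpha_i\in\A_i$. Since $x_i\in\X_i$, some $\beta_i$ has $x_{i\beta}>0$. We compare $x_i$ with the perturbation $x_i^\varepsilon$, which moves mass $\varepsilon$ from $\beta_i$ to $\alpha_i$. Because $h_i$ is decomposable, the change in the objective $\langle y_i,x_i\rangle-h_i(x_i)$ is
\[
\varepsilon\,(y_{i\alpha}-y_{i\beta})-\bigl[\theta_i(\varepsilon)-\theta_i(0)\bigr]-\bigl[\theta_i(x_{i\beta}-\varepsilon)-\theta_i(x_{i\beta})\bigr].
\]
Dividing by $\varepsilon$ and letting $\varepsilon\downarrow0$:
\begin{itemize}
\item the last bracket over $\varepsilon$ tends to $-\theta_i'(x_{i\beta})$, which is finite because $\theta_i\in C^2((0,1]]$ and $x_{i\beta}>0$;
\item the middle term $\bigl[\theta_i(\varepsilon)-\theta_i(0)\bigr]/\varepsilon$ equals $\theta_i'(\xi_\varepsilon)$ for some $\xi_\varepsilon\in(0,\varepsilon)$, by the mean value theorem, which applies since $\theta_i$ is continuous on $[0,\varepsilon]$ and differentiable inside.
\end{itemize}
By steepness $\theta_i'(\xi_\varepsilon)\to-\infty$, so the middle term enters the normalized increment as $-\theta_i'(\xi_\varepsilon)\to+\infty$, while every other term stays bounded. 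The increment is therefore strictly positive for small $\varepsilon$, contradicting the maximality of $x_i$. Hence $x_i\in\X_i^\circ$.

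An alternative route uses the KKT form of \cref{prop:choice-map}\eqref{it:KKT}, but that form requires $\theta_i'(0)$ to be finite, so it degenerates exactly in the steep case. The perturbation argument avoids this and is the cleaner choice.

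For $\X_i^\circ\subseteq\ImQ_i$, fix $x_i\in\X_i^\circ$ and set $y_{i\alpha}:=\theta_i'(x_{i\alpha})$ for every $\alpha_i$, which is finite since all coordinates are positive. The function $w_i\mapsto\langle y_i,w_i\rangle-h_i(w_i)$ is strictly concave on $\X_i$. At $x_i$ its gradient in the coordinates $w_{i\alpha}$ is $y_{i\alpha}-\theta_i'(x_{i\alpha})=0$, so in particular it is orthogonal to the tangent space $\{z:\sum_\alpha z_\alpha=0\}$. An interior point satisfying this first-order condition maximizes a concave function over the simplex, so $x_i=Q_i(y_i)$. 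One could also invoke \cref{prop:choice-map}\eqref{it:Q-image}, since $\partial h_i(x_i)\ni\nabla h_i(x_i)$ at interior points, but the direct argument is self-contained.

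The only delicate step is the limit of $[\theta_i(\varepsilon)-\theta_i(0)]/\varepsilon$, because $\theta_i$ is not assumed differentiable at $0$. The mean value theorem on $[0,\varepsilon]$ handles this using only continuity at $0$ and differentiability on $(0,\varepsilon)$.
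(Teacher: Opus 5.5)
Your proof is correct, and it reaches the result by a more hands-on route than the paper. The paper's proof is a one-line reduction. By \cref{prop:choice-map}(\labelcref{it:Q-image}), $\ImQ_i=\dom\partial h_i$, that is, $x_i=Q_i(y_i)$ if and only if $y_i\in\partial h_i(x_i)$. The fact that $\dom\partial h_i=\X_i^\circ$ for a steep decomposable regularizer is then deferred to \cite{MS16}.

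Your two steps prove exactly the two halves of that last fact without passing through subdifferentials:
\begin{enumerate}
\item The perturbation argument shows that no score vector $y_i$ can make a boundary point a maximizer. Equivalently, $\partial h_i(x_i)=\varnothing$ whenever some $x_{i\alpha}=0$.
\item The first-order argument shows that $\nabla h_i(x_i)\in\partial h_i(x_i)$ at interior points, so $x_i=Q_i(\nabla h_i(x_i))$.
\end{enumerate}

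Your version buys self-containment and an explicit treatment of the one delicate point: $\theta_i$ need not be differentiable at $0$. You handle this with the mean value theorem, using only continuity of $\theta_i$ at $0$, which the paper does assume. The paper's version buys brevity and places the lemma within standard convex duality.

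Your side remark on the KKT form is also well taken. At a boundary point with $\theta_i'(0^+)=-\infty$, that identity can rule out zero coordinates only if one already accepts that it holds there. That is essentially what the lemma asserts, so the perturbation argument is the safer choice.
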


\begin{proof}
Direct consequence of~\cref{prop:choice-map}(\labelcref{it:Q-image}), see \cite{MS16}.
\end{proof}

\begin{proposition}\label{prop:Q-injective}
If $h_i$ is steep, then, for all $y_i,y_i'\in\YY_i$,
\[
Q_i(y_i)=Q_i(y_i')
\quad\Longleftrightarrow\quad
\exists\,c\in\R:\ y_i'=y_i+c\one.
\]
Equivalently, the restriction of $Q_i$ to the quotient space $\YY_i/\mathrm{span}\{\one\}$ is injective.
\end{proposition}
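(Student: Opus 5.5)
The direction ($\Leftarrow$) is immediate from \cref{prop:choice-map}(\labelcref{it:shift}): if $y_i'=y_i+c\one$, then $Q_i(y_i')=Q_i(y_i)$. So the work is in ($\Rightarrow$), and the plan is to use the KKT characterization of \cref{prop:choice-map}(\labelcref{it:KKT}) together with steepness. Concretely, I would show that under steepness every score vector is recovered from its image $x_i=Q_i(y_i)$ up to a single additive constant.

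Fix $y_i$ and set $x_i=Q_i(y_i)$. By \cref{lem:steep-image}, $x_i\in\X_i^\circ$, so $x_{i\alpha}>0$ for every $\alpha_i\in\A_i$. The complementary slackness clause of \cref{prop:choice-map}(\labelcref{it:KKT}) then forces $\nu_{i\alpha}=0$ for all $\alpha_i$. The KKT relation therefore reduces to
\[
y_{i\alpha}=\theta_i'(x_{i\alpha})+\lambda_i
\qquad\text{for all }\alpha_i\in\A_i ,
\]
for some scalar $\lambda_i\in\R$ depending on $y_i$. Note that $\theta_i'(x_{i\alpha})$ is finite here because $x_{i\alpha}\in(0,1]$ and $\theta_i\in C^2((0,1])$.

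Now take $y_i'$ with $Q_i(y_i')=Q_i(y_i)=x_i$. Applying the same reasoning to $y_i'$ gives $y'_{i\alpha}=\theta_i'(x_{i\alpha})+\lambda_i'$ for all $\alpha_i$, with the same point $x_i$. Subtracting the two identities, the $\theta_i'$ terms cancel coordinatewise, so
\[
y'_{i\alpha}-y_{i\alpha}=\lambda_i'-\lambda_i
\qquad\text{for all }\alpha_i\in\A_i ,
\]
that is, $y_i'=y_i+c\one$ with $c=\lambda_i'-\lambda_i$. The quotient formulation follows at once: $Q_i$ is constant on cosets of $\mathrm{span}\{\one\}$ and separates distinct cosets.

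The only delicate point is the vanishing of the multipliers $\nu_{i\alpha}$, which requires full support of $x_i$. This is exactly where steepness enters, through \cref{lem:steep-image}. Without steepness the statement fails: a boundary point $x_i$ with $x_{i\alpha}=0$ leaves $\nu_{i\alpha}\ge0$ free, so $Q_i^{-1}(x_i)$ contains a half-line of scores that are not related by $\one$-shifts. I expect no further obstacle beyond citing \cref{prop:choice-map} correctly.
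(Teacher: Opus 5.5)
Your proposal is correct and is essentially the paper's own argument. The paper also uses \cref{lem:steep-image} to place $x_i$ in $\X_i^\circ$, so that the KKT relation of \cref{prop:choice-map} holds with vanishing multipliers $\nu_{i\alpha}$. It then subtracts the two identities $y_{i\alpha}-\theta_i'(x_{i\alpha})=\lambda$ and $y'_{i\alpha}-\theta_i'(x_{i\alpha})=\lambda'$, and gets the converse direction from the shift invariance in \cref{prop:choice-map}.
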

\begin{proof}
Let $x_i:=Q_i(y_i)=Q_i(y_i')\in\X_i^\circ$.
Since $x_i$ is interior, by \cref{prop:choice-map}(\labelcref{it:KKT})
\[
y_{i\alpha}-\theta_i'(x_{i\alpha})=\lambda,\qquad
y_{i\alpha}'-\theta_i'(x_{i\alpha})=\lambda',
\qquad \forall\,\alpha_i\in\A_i.
\]
Subtracting gives $y_{i\alpha}'-y_{i\alpha}=\lambda'-\lambda$ for all $\alpha_i$, hence
$y_i'-y_i=c\one$ with $c:=\lambda'-\lambda$. Conversely, if $y_i'=y_i+c\one$, then $Q_i(y_i')=Q_i(y_i)$
by~\cref{prop:choice-map}(\labelcref{it:shift}).
\end{proof}

\subsection{Bregman divergence and Fenchel coupling}

Fix $p_i\in\X_i$ and define the \emph{face-neighborhood} of $p_i$ by
\[
\Delta_{p_i}\;:=\;\{x_i\in\X_i:\supp(x_i)\supseteq\supp(p_i)\}.
\]
For $x_i\in\Delta_{p_i}$, the segment $x_i+t(p_i-x_i)$ stays in $\X_i$ for all sufficiently small $t\downarrow 0$.
We therefore define the (one-sided) directional derivative
\[
h_i'(x_i;z_i)\;:=\;\lim_{t\downarrow 0}\frac{h_i(x_i+t z_i)-h_i(x_i)}{t},
\qquad z_i\in \YY_i \text{ with } x_i+t z_i\in\X_i \text{ for small } t>0,
\]
and the (extended) \emph{Bregman divergence} by
\begin{equation}\label{eq:Breg-ext}
D_i(p_i,x_i)\;:=\;h_i(p_i)-h_i(x_i)-h_i'(x_i;p_i-x_i),
\qquad x_i\in\Delta_{p_i}.
\end{equation}
If $x_i\in\X_i^\circ$ then $h_i$ is differentiable at $x_i$ and $h_i'(x_i;p_i-x_i)=\langle\nabla h_i(x_i),p_i-x_i\rangle$,
so \eqref{eq:Breg-ext} reduces to the usual Bregman divergence.

The associated \emph{Fenchel coupling} is
\[
F_i(p_i,y_i)\;:=\;h_i(p_i)+h_i^*(y_i)-\langle y_i,p_i\rangle,\qquad (p_i,y_i)\in\X_i\times\YY_i,
\]
and for profiles $p\in\X$, $y\in\YY$, we set $F_h(p,y):=\sum_{i\in\N}F_i(p_i,y_i)$.

\begin{proposition}\label{prop:fenchel}
Assume $h_i$ is $K_i$-strongly convex on $\X_i$ w.r.t.\ $\|\cdot\|$.
Then, for all $p_i\in\X_i$ and $y_i\in\YY_i$:
\begin{enumerate}
\item\label{it:fenchel-lb} $F_i(p_i,y_i)\ge \dfrac{K_i}{2}\|Q_i(y_i)-p_i\|^2$.
\item\label{it:fenchel-zero} $F_i(p_i,y_i^k)\to 0$ if and only if $Q_i(y_i^k)\to p_i$.
\item\label{it:fenchel-breg} If $x_i=Q_i(y_i)\in\Delta_{p_i}$, then $F_i(p_i,y_i)=D_i(p_i,x_i)$.
\end{enumerate}
\end{proposition}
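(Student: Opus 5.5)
The three items all follow from the variational characterization of $x_i=Q_i(y_i)$ as the maximizer defining $h_i^*(y_i)$, together with the KKT description in \cref{prop:choice-map}(\labelcref{it:KKT}). Throughout, fix $p_i\in\X_i$, $y_i\in\YY_i$ and set $x_i=Q_i(y_i)$, so that $h_i^*(y_i)=\langle y_i,x_i\rangle-h_i(x_i)$. Substituting this into the definition of the Fenchel coupling gives the working identity
\[
F_i(p_i,y_i)=h_i(p_i)-h_i(x_i)-\langle y_i,p_i-x_i\rangle .
\]

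\emph{Item (\labelcref{it:fenchel-lb}).} I would avoid any differentiability issue at the boundary and argue directly from optimality. For $t\in(0,1]$ the point $x_i+t(p_i-x_i)$ lies in $\X_i$. Optimality of $x_i$ then gives
\[
t\langle y_i,p_i-x_i\rangle\le h_i\bigl(x_i+t(p_i-x_i)\bigr)-h_i(x_i).
\]
The strong convexity inequality bounds the right-hand side by $t\bigl(h_i(p_i)-h_i(x_i)\bigr)-\tfrac{K_i}{2}t(1-t)\|p_i-x_i\|^2$. Dividing by $t$ and letting $t\downarrow0$ yields $F_i(p_i,y_i)\ge\tfrac{K_i}{2}\|x_i-p_i\|^2$ via the working identity.

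\emph{Item (\labelcref{it:fenchel-breg}).} Use the KKT relation $y_{i\alpha}=\theta_i'(x_{i\alpha})+\lambda_i-\nu_{i\alpha}$, valid when $x_{i\alpha}>0$. The multiplier $\lambda_i$ drops out of $\langle y_i,p_i-x_i\rangle$ because $\sum_\alpha(p_{i\alpha}-x_{i\alpha})=0$. Since $\nu_{i\alpha}\neq0$ only when $x_{i\alpha}=0$, we obtain
\[
\langle y_i,p_i-x_i\rangle=\sum_{x_{i\alpha}>0}\theta_i'(x_{i\alpha})(p_{i\alpha}-x_{i\alpha})-\sum_{x_{i\alpha}=0}\nu_{i\alpha}p_{i\alpha}.
\]
If $x_i\in\Delta_{p_i}$, every coordinate with $x_{i\alpha}=0$ also has $p_{i\alpha}=0$. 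Hence the $\nu$-sum vanishes, and the direction $p_i-x_i$ has zero component there. The one-sided derivative $h_i'(x_i;p_i-x_i)$ is therefore exactly $\sum_{x_{i\alpha}>0}\theta_i'(x_{i\alpha})(p_{i\alpha}-x_{i\alpha})$, and $F_i=D_i$.

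\emph{Item (\labelcref{it:fenchel-zero}).} The direction $F_i\to0\Rightarrow Q_i(y_i^k)\to p_i$ is immediate from (\labelcref{it:fenchel-lb}). The converse is the only delicate step, because $y_i^k$ may be unbounded (for instance $\theta_i'\to-\infty$ in the steep case). Suppose $x_i^k=Q_i(y_i^k)\to p_i$. For $k$ large, $x^k_{i\alpha}>0$ whenever $p_{i\alpha}>0$, so $x_i^k\in\Delta_{p_i}$ and, by (\labelcref{it:fenchel-breg}), $F_i(p_i,y_i^k)=D_i(p_i,x_i^k)$. Split $D_i$ coordinatewise.
\begin{itemize}
\item For coordinates with $p_{i\alpha}>0$: $\theta_i$ is $C^1$ near $p_{i\alpha}$, so $\theta_i(p_{i\alpha})-\theta_i(x^k_{i\alpha})-\theta_i'(x^k_{i\alpha})(p_{i\alpha}-x^k_{i\alpha})\to0$.
\item For coordinates with $p_{i\alpha}=0=x^k_{i\alpha}$: the term is zero.
\item For coordinates with $p_{i\alpha}=0<x^k_{i\alpha}\to0$: the term is $\theta_i(0)-\theta_i(x)+x\theta_i'(x)$ with $x=x^k_{i\alpha}$. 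The first two terms cancel in the limit by continuity of $\theta_i$ on $[0,1]$. It remains to show $x\theta_i'(x)\to0$. Convexity gives $\theta_i(0)\ge\theta_i(x)-x\theta_i'(x)$, so $x\theta_i'(x)\ge\theta_i(x)-\theta_i(0)\to0$. Monotonicity of $\theta_i'$ gives $x\theta_i'(x)\le x\theta_i'(1)\to0$. Hence $x\theta_i'(x)\to0$.
\end{itemize}
So $D_i(p_i,x_i^k)\to0$, completing the proof. This last boundary estimate is where I expect the main care to be needed; the rest is bookkeeping.
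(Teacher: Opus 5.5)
Your proof is correct. The paper gives no argument of its own here and simply cites \cite{MS16}; your proof is a self-contained version of the standard one: optimality of $Q_i(y_i)$ plus strong convexity for the lower bound, KKT for $F_i=D_i$ on $\Delta_{p_i}$, and $x\theta_i'(x)\to0$ to get $D_i(p_i,x_i^k)\to0$ in the converse of item (\labelcref{it:fenchel-zero}). The only blemish is that your general display for $\langle y_i,p_i-x_i\rangle$ drops the contributions $\theta_i'(0^+)p_{i\alpha}$ from coordinates with $x_{i\alpha}=0$; this is harmless, since those terms vanish under the hypothesis $x_i\in\Delta_{p_i}$.
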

\begin{proof}
See \cite{MS16}.
\end{proof}

In particular, $F_i(p_i,y_i)=0$ if and only if $p_i=Q_i(y_i)$. Thus the Fenchel coupling acts as a kind of primal-dual distance between strategies and scores.

\begin{lemma}\label{lem:fenchel-deriv}
Let $x=Q(y)$ be a solution orbit of \eqref{eq:FTRL}. Then for every fixed $p_i\in\X_i$,
\[
\frac{d}{dt}F_i(p_i,y_i) =\langle v_i(x),\,x_i-p_i\rangle.
\]
\end{lemma}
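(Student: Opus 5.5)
The plan is to differentiate $F_i(p_i,y_i(t)) = h_i(p_i) + h_i^*(y_i(t)) - \langle y_i(t),p_i\rangle$ directly in $t$, using the chain rule in the dual variable. The term $h_i(p_i)$ is constant in $t$, so only the conjugate and the linear pairing contribute. By \cref{prop:choice-map}(\labelcref{it:Qgrad}), $h_i^*$ is $C^1$ on all of $\YY_i$ with $\nabla h_i^* = Q_i$. This is the key point: no interiority of $x_i$ or steepness of $h_i$ is needed, since differentiability holds on the whole dual space even when $Q_i(y_i)$ lies on the boundary of $\X_i$.

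Along a solution of \eqref{eq:FTRL}, $y_i(t)$ is absolutely continuous; in fact it is $C^1$, because $\dot y_i = v_i(Q(y))$ is continuous, $Q$ being Lipschitz and $v$ multilinear. The chain rule therefore gives
\[
\frac{d}{dt} h_i^*(y_i(t)) = \langle \nabla h_i^*(y_i(t)), \dot y_i(t)\rangle = \langle Q_i(y_i(t)), v_i(x(t))\rangle = \langle v_i(x), x_i\rangle.
\]
The pairing term is linear in $y_i$, with derivative $\langle \dot y_i, p_i\rangle = \langle v_i(x), p_i\rangle$. Subtracting the two yields
\[
\frac{d}{dt}F_i(p_i,y_i) = \langle v_i(x), x_i - p_i\rangle,
\]
as claimed.

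The only step needing care is justifying the chain rule for $h_i^*$ along the orbit, which amounts to confirming the $C^1$ regularity of $h_i^*$. I would quote this from \cref{prop:choice-map}. If a self-contained argument is wanted, it follows from the strong convexity of $h_i$: the conjugate of a strongly convex function is differentiable with Lipschitz gradient equal to the unique maximizer $Q_i$. Everything else is routine.
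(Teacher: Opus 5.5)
Your proof is correct and is essentially the paper's argument: apply the chain rule using $\nabla h_i^* = Q_i$ from \cref{prop:choice-map}, then substitute $Q_i(y_i)=x_i$ and $\dot y_i = v_i(x)$. Your added check that the orbit $y(\cdot)$ is $C^1$ is a slightly more careful justification of the chain-rule step, which the paper leaves implicit.
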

\begin{proof}
By \cref{prop:choice-map}(\labelcref{it:Qgrad}), $\nabla h_i^*=Q_i$. Hence, by the chain rule,
\[
\frac{d}{dt}F_i(p_i,y_i)
=
\big\langle \nabla h_i^*(y_i),\dot y_i\big\rangle-\langle \dot y_i(t),p_i\rangle
=
\big\langle \dot y_i, Q_i(y_i)-p_i\big\rangle.
\]
Using $Q_i(y_i)=x_i$ and $\dot y_i=v_i(x)$ yields the claim.
\end{proof}

\subsection{Examples of regularizers}\label{subsec:examples}

We give here some examples of standard regularizers \cite{BecTeb03, MS16, BM23, SS11} which verify our assumptions. Recall $s_i(p)\coloneqq 1/\theta_i''(p)$.

\begin{example}[Entropy]
    Let $\theta_i(p)=p\log p$ for $p \in [0,1]$ (with the convention $0\log 0=0$). Then
\[
\theta_i''(p)=\frac{1}{p}\ge 1 \quad (p\in(0,1]),
\]
so \eqref{eq:curvature} holds with $K_i=1$ and $s_i(p)=p$.
The regularizer is steep and the choice map is the softmax
\[
Q_i(y_i)=\Lambda_i(y_i)
:=\left(\frac{\exp(y_{i\alpha})}{\sum_{\beta_i\in\A_i}\exp(y_{i\beta})}\right)_{\alpha_i\in\A_i},
\qquad
h_i^*(y_i)=\log\Big(\sum_{\beta_i\in\A_i}\exp(y_{i\beta})\Big).
\]
On $\Delta_{p_i}$, the Bregman divergence is the \emph{Kullback--Leibler divergence}, or \emph{relative entropy}:
\[
D_i(p_i,x_i)=\sum_{\alpha_i\in\A_i} p_{i\alpha}\log\frac{p_{i\alpha}}{x_{i\alpha}}
=D_{\mathrm{KL}}(p_i\|x_i).
\]
\end{example}

\begin{example}[Euclidean]
Let $\theta_i(p)=\tfrac12 p^2$. Then $\theta_i''(p)=1$, so \eqref{eq:curvature} holds with $K_i=1$ and $s_i(p)=1.$
The regularizer is non-steep and
\[
Q_i(y_i)=\Pi_i(y_i):=\arg\min_{x_i\in\X_i}\|x_i-y_i\|_2^2.
\]
Moreover,
\[
h_i^*(y_i)
=\sup_{x_i\in\X_i}\left\{\langle y_i,x_i\rangle-\frac12\|x_i\|_2^2\right\}
=\frac12\|y_i\|_2^2-\frac12\,\dist(y_i,\X_i)^2,
\]
and on $\X_i^\circ$ one has $D_i(p_i,x_i)=\tfrac12\|p_i-x_i\|_2^2$.
\end{example}

\begin{example}[Square-root]
    Let $\theta_i(p)=-4\sqrt{p}$ on $[0,1]$. Then
\[
\theta_i''(p)=p^{-3/2}\ge 1,
\]
so \eqref{eq:curvature} holds with $K_i=1$ and $s_i(p)=p^{3/2}.$
The regularizer is steep, and on $\Delta_{p_i}$ the Bregman divergence is
\[
D_i(p_i,x_i)=2\sum_{\alpha_i\in\A_i}\left(\frac{(\sqrt{p_{i\alpha}}-\sqrt{x_{i\alpha}})^2}{\sqrt{x_{i\alpha}}}\right).
\]
\end{example}

\begin{example}[Tsallis entropy]
    Fix $\lambda\in(0,1)\cup(1,2]$ and let
\[
\theta_i(p)=\frac{p^\lambda}{\lambda(\lambda-1)},\qquad p\in[0,1].
\]
Then, since $\lambda-2\le 0$ we have
\[
\theta_i''(p)=p^{\lambda-2}\ge 1 \quad (p\in(0,1]),
\]
so \eqref{eq:curvature} holds with $K_i=1$ and $s_i(p)=p^{2-\lambda}.$
Moreover, if $\lambda\in(0,1)$ then $\theta_i'(p)\to-\infty$ as $p\downarrow 0$, so it is steep, while if $\lambda\in(1,2]$ then $\theta_i'(0+)=0$, so it is non-steep.
On $\Delta_{p_i}$,
\[
D_i(p_i,x_i)
=
\frac{1}{\lambda(\lambda-1)}
\sum_{\alpha_i\in\A_i}
\Big(
p_{i\alpha}^{\lambda}
-\lambda x_{i\alpha}^{\lambda-1}p_{i\alpha}
+(\lambda-1)x_{i\alpha}^{\lambda}
\Big).
\]
This family contains the Euclidean regularizer as the case $\lambda=2$ and the square-root regularizer as the case $\lambda=1/2$, while the entropy regularizer is recovered in the limit $\lambda\to1$, up to affine terms on the simplex.
\end{example}

\section{Strategy dynamics}
\label{app:dynamics}

Consider continuous-time \ac{FTRL}
\[
\dot y=v(x),\qquad x=Q(y).
\]
We start first with well-posedness in \emph{score space}:
\begin{proposition}\label{prop:well-posed}
    $v\circ Q$ is Lipschitz continuous and the ODE $\dot y=v(Q(y))$ admits a unique global solution $y(\cdot):\mathbb R\to \YY$ from every initial condition $y(0)$.
\end{proposition}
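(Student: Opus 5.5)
The claim splits into Lipschitz continuity of the composite $v\circ Q$ on all of $\YY$, and global existence and uniqueness for $\dot y = v(Q(y))$; the second follows from the first by Picard--Lindelöf.

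For Lipschitz continuity, I would invoke \cref{prop:choice-map}(\labelcref{it:Qgrad}), which gives $Q_i=\nabla h_i^*$. Since $h_i$ is $K_i$-strongly convex on $\X_i$ with respect to $\|\cdot\|_2$, standard conjugate duality makes $\nabla h_i^*$ globally $(1/K_i)$-Lipschitz on $\YY_i$. To make this self-contained, I would write the optimality conditions for $x_i=Q_i(y_i)$ and $x_i'=Q_i(y_i')$ as $y_i\in\partial h_i(x_i)$ and $y_i'\in\partial h_i(x_i')$. Strong convexity gives monotonicity of the subdifferential: $\langle y_i-y_i',x_i-x_i'\rangle\ge K_i\|x_i-x_i'\|^2$. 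Cauchy--Schwarz then yields $\|x_i-x_i'\|\le K_i^{-1}\|y_i-y_i'\|$. Hence $Q=(Q_i)_i$ is Lipschitz with constant $\max_i K_i^{-1}$.

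Next, $v$ is a polynomial map, multilinear in $x$. Its restriction to the compact set $\X$ therefore has bounded derivative. Since $\X$ is convex, the mean value inequality makes $v$ Lipschitz on $\X$, with constant controlled by $\max_{i,\alpha}|u_i(\alpha)|$ and the number of players. Because $Q$ takes values in $\X$, the composition $v\circ Q$ is globally Lipschitz on $\YY$.

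Global well-posedness then follows. A globally Lipschitz vector field on the finite-dimensional space $\YY$ admits, by Picard--Lindelöf, a unique local solution through every initial point. Moreover $\|v(Q(y))\|\le \max_{x\in\X}\|v(x)\|<\infty$, so solutions have bounded speed and cannot blow up in finite time. The maximal solution is therefore defined on all of $\R$, both forward and backward.

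The only mildly delicate step is the Lipschitz property of $Q$ when $h_i$ is not steep, since $x_i$ may lie on the boundary and $h_i$ need not be differentiable there. The subgradient-monotonicity argument uses only $y_i\in\partial h_i(Q_i(y_i))$, which holds by the definition of the argmax for the extended-valued $h_i$. So boundary points pose no real obstacle.
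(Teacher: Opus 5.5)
Your proposal is correct and follows the same route as the paper's proof. The paper combines Lipschitz continuity of $Q$ (taken from \cref{prop:choice-map}(\labelcref{it:Qgrad}), which you instead derive from strong monotonicity of $\partial h_i$) with Lipschitz continuity of the multilinear field $v$ on the compact set $\X$, and then applies Cauchy--Lipschitz/Picard--Lindel\"of for global well-posedness; your bounded-speed remark is a harmless extra, since a globally Lipschitz field already rules out finite-time blow-up.
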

\begin{proof}
    $v$ is multilinear by definition and bounded by finiteness of the game, $Q$ is Lipschitz continuous by \cref{prop:choice-map}(\labelcref{it:Qgrad}), hence their composition is Lipschitz continuous. Well-posedness then follows from standard Cauchy--Lipschitz / Picard--Lindel\"of theorem.
\end{proof}

We would like now to establish the existence of a flow in \emph{strategy space}. For this purpose, fix $\B_i\subseteq\A_i$ and assume that on some interval $I$ the support of player $i$ is constant:
\[
x_i(t)\in\Delta(\B_i)^\circ \qquad \forall\,t\in I.
\]
Define
\[
s_i(p)\coloneqq \frac{1}{\theta_i''(p)},\quad
S_i(x_i)\coloneqq \sum_{\beta_i\in \B_i} s_i(x_{i\beta}), \quad \pi_{i\alpha}(x_i)
	\defeq \begin{dcases*}
		\frac{s_i(x_{i\alpha})}{S_i(x_i)}
			&if $\alpha_i \in \B_i$,
			\\
		0
			&otherwise.
		\end{dcases*}
\]
and write $\pi_i(x_i)=(\pi_{i\beta}(x_i))_{\beta_i\in\A_i}$.

\begin{proposition}\label{prop:strategy-ode}
On $I$, for every $\alpha_i\in \B_i$,
\[
\dot x_{i\alpha}
=
s_i(x_{i\alpha})\Big(v_{i\alpha}(x)-\langle \pi_i(x_i),v_i(x)\rangle\Big).
\]
\end{proposition}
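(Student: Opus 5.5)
The plan is to differentiate the relation between scores and strategies given by the KKT conditions of \cref{prop:choice-map}(\labelcref{it:KKT}). On the interval $I$, player $i$'s support is the constant set $\B_i$. So for every $\alpha_i\in\B_i$ we have $x_{i\alpha}(t)>0$, hence $\nu_{i\alpha}=0$, and
\[
y_{i\alpha}(t)=\theta_i'(x_{i\alpha}(t))+\lambda_i(t),\qquad \alpha_i\in\B_i.
\]
The first step is regularity. I would show that $t\mapsto x_{i\alpha}(t)$ and $t\mapsto\lambda_i(t)$ are differentiable on $I$, at least almost everywhere and then everywhere by continuity of the right-hand side.

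For this I would use the implicit function theorem on the system
\[
\theta_i'(x_{i\alpha})+\lambda_i=y_{i\alpha}\ (\alpha_i\in\B_i),\qquad \sum_{\alpha_i\in\B_i}x_{i\alpha}=1,
\]
in the unknowns $(x_{i\alpha})_{\alpha_i\in\B_i}$ and $\lambda_i$. Its Jacobian is the bordered matrix with $\diag(\theta_i''(x_{i\alpha}))$ bordered by ones. This matrix is invertible because $\theta_i''\ge K_i>0$ and $\theta_i$ is $C^2$ on $(0,1]$. The solution is therefore a $C^1$ function of $(y_{i\alpha})_{\alpha_i\in\B_i}$, and since $y_i(\cdot)$ is $C^1$ by \cref{prop:well-posed}, $x_i(\cdot)$ is $C^1$ on $I$. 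The only subtle point is that $Q_i$ depends on all of $y_i$, including coordinates outside $\B_i$. On $I$, however, the support stays $\B_i$, so the solution of the reduced system coincides with $Q_i(y_i)$, by uniqueness of the maximizer.

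Next I would differentiate in time. For $\alpha_i\in\B_i$,
\[
\dot y_{i\alpha}=v_{i\alpha}(x)=\theta_i''(x_{i\alpha})\,\dot x_{i\alpha}+\dot\lambda_i,
\quad\text{so}\quad
\dot x_{i\alpha}=s_i(x_{i\alpha})\bigl(v_{i\alpha}(x)-\dot\lambda_i\bigr).
\]
The multiplier $\dot\lambda_i$ is then pinned down by conservation of mass. Summing over $\B_i$ and using $\sum_{\alpha_i\in\B_i}\dot x_{i\alpha}=0$ gives
\[
\dot\lambda_i=\frac{\sum_{\beta_i\in\B_i}s_i(x_{i\beta})v_{i\beta}(x)}{S_i(x_i)}=\langle\pi_i(x_i),v_i(x)\rangle,
\]
since $\pi_{i\beta}=0$ off $\B_i$. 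Substituting back yields the claimed equation.

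I expect the main obstacle to be the regularity step. The implicit function argument has to be set up on the face $\Delta(\B_i)$ rather than on all of $\X_i$, and one must justify that the reduced solution really equals $Q_i(y_i)$ throughout $I$. The algebra after that point is routine.
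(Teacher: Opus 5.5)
Your proposal is correct and follows the paper's proof. Both differentiate the KKT relation $y_{i\alpha}=\theta_i'(x_{i\alpha})+\lambda_i$ on $\B_i$ in time and eliminate $\dot\lambda_i$ via $\sum_{\alpha_i\in\B_i}\dot x_{i\alpha}=0$. Your implicit-function-theorem step (invertibility of the bordered Jacobian, and identification of the reduced solution with $Q_i(y_i)$ by uniqueness) adds the differentiability of $x_i(\cdot)$ and $\lambda_i(\cdot)$, which the paper takes for granted by citing prior work.
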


\begin{proof}
Following \cite{MS16} and \cite{FVGL+20}, on $I$, we have $x_{i\alpha}(t)>0$ for $\alpha_i\in\B_i$, so $\theta_i$ is $C^2$ along these coordinates.
The KKT conditions for the constrained maximization defining $x_i=Q_i(y_i)$ on the face $\Delta(\B_i)$ yield
\[
y_{i\alpha}-\theta_i'(x_{i\alpha})=\lambda_i,
\qquad \alpha_i\in\B_i,
\]
for some scalar multiplier $\lambda_i=\lambda_i(t)$ enforcing $\sum_{\alpha_i\in\B_i}x_{i\alpha}=1$.
Differentiating in time and using $\dot y_{i\alpha}=v_{i\alpha}(x)$ gives
\[
v_{i\alpha}(x)-\theta_i''(x_{i\alpha})\,\dot x_{i\alpha}=\dot\lambda_i,
\]
hence
\[
\dot x_{i\alpha}=s_i(x_{i\alpha})\big(v_{i\alpha}(x)-\dot\lambda_i\big).
\]
Summing over $\alpha_i\in\B_i$ and using $\sum_{\alpha_i\in\B_i}\dot x_{i\alpha}=0$ yields
\[
0=\sum_{\alpha_i\in\B_i}s_i(x_{i\alpha})\big(v_{i\alpha}(x)-\dot\lambda_i\big)
\quad\Longrightarrow\quad
\dot\lambda_i=\sum_{\alpha_i\in\B_i}\pi_{i\alpha}(x_i)\,v_{i\alpha}(x)
=\langle \pi_i(x_i),v_i(x)\rangle,
\]
which gives the stated formula.
\end{proof}

\begin{remark}\label{rmrk:steep-ode}
    In the steep case, the choice map takes values in the relative interior, so the above facewise ODE applies globally
along each trajectory (with $\B_i=\A_i$).
\endenv
\end{remark}

\subsection{Construction of the strategy flow}

We now encode the induced dynamics directly on $\X$ by a globally defined vector field.
This requires a regularity strengthening on $s_i$, which is the following Assumption.

\begin{assumption}\label{asm:well-posed}
For each $i$, $s_i$ extends to a globally Lipschitz function on $[0,1]$ with $s_i(0)=0$
\end{assumption}

Notice that all the steep regularizers in \cref{subsec:examples} verify this assumption. Note also that this assumption implies steepness:
\begin{proposition}\label{prop:wp-steep}
Under \cref{asm:well-posed}, $h_i$ is steep.
\end{proposition}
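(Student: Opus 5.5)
We need to show that, under \cref{asm:well-posed}, $\theta_i'(p)\to-\infty$ as $p\downarrow0$. The plan is to integrate the curvature identity $\theta_i''=1/s_i$ near $0$ and use the Lipschitz bound on $s_i$ to show that $1/s_i$ is not integrable at the origin.

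First, by \cref{asm:well-posed}, $s_i$ is Lipschitz on $[0,1]$ with constant, say, $L_i>0$, and $s_i(0)=0$. Hence $0<s_i(p)=s_i(p)-s_i(0)\le L_i p$ for all $p\in(0,1]$. Positivity holds because $s_i=1/\theta_i''$ and $\theta_i''\ge K_i>0$ on $(0,1]$. Inverting, we get $\theta_i''(p)\ge 1/(L_i p)$ on $(0,1]$.

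Next, since $\theta_i\in C^2((0,1])$, the fundamental theorem of calculus on $[p,1]$ gives, for any $p\in(0,1)$,
\[
\theta_i'(p)=\theta_i'(1)-\int_p^1\theta_i''(r)\,dr\le \theta_i'(1)-\frac{1}{L_i}\int_p^1\frac{dr}{r}=\theta_i'(1)+\frac{\log p}{L_i}.
\]
Letting $p\downarrow0$, the right-hand side tends to $-\infty$. Therefore $\theta_i'(p)\to-\infty$, which is exactly steepness of $h_i$.

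There is no real obstacle here. The only points needing care are the linear upper bound $s_i(p)\le L_ip$, which combines the Lipschitz property with $s_i(0)=0$, and the observation that $\theta_i'(1)$ is finite because $\theta_i\in C^2((0,1])$, so the integral identity is legitimate on $[p,1]$.
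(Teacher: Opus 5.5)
Your proposal is correct and follows essentially the same argument as the paper: the Lipschitz bound with $s_i(0)=0$ gives $\theta_i''(p)\ge 1/(L_i p)$, and integrating $\theta_i''$ over $[p,p_0]$ (you take $p_0=1$, the paper allows any $p_0\in(0,1]$) yields a $\log$ divergence, so $\theta_i'(p)\to-\infty$ as $p\downarrow 0$.
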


\begin{proof}
Let $L>0$ be a Lipschitz constant of $s_i$ on $[0,1]$. Since $s_i(0)=0$, we have
$s_i(p)=|s_i(p)-s_i(0)|\le Lp$ for all $p\in[0,1]$.
Thus, for all $p\in(0,1]$,
\[
\theta_i''(p)=\frac{1}{s_i(p)}\ge \frac{1}{Lp}.
\]
Fix $p_0\in(0,1]$. Since $\theta_i\in C^2((0,1])$, $\theta_i'$ is absolutely continuous on $[p,p_0]$ for any $p\in(0,p_0)$, and
\[
\theta_i'(p)=\theta_i'(p_0)-\int_p^{p_0}\theta_i''(t)\,dt
\le \theta_i'(p_0)-\frac{1}{L}\int_p^{p_0}\frac{dt}{t}
= \theta_i'(p_0)-\frac{1}{L}\log\frac{p_0}{p}.
\]
Letting $p\downarrow 0$ yields $\theta_i'(p)\to -\infty$, i.e.\ $h_i$ is steep.
\end{proof}

We now state a quick consequence of this, which will be useful for the contruction of the global flow.

\begin{lemma}\label{lem:S-positive}
Under \cref{asm:well-posed}, for each $i$ there exists $\sigma_i>0$ such that $S_i(x_i)\ge \sigma_i$ for all $x_i\in\X_i$.
Consequently $\pi_i:\X_i\to\X_i$ is Lipschitz continuous.
\end{lemma}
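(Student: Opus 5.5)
The plan is to obtain the lower bound on $S_i$ from a single coordinate and then treat $\pi_i = s_i/S_i$ as a quotient of Lipschitz functions whose denominator never gets small. Here $S_i(x_i)=\sum_{\beta_i\in\A_i}s_i(x_{i\beta})$, since in the steep case the support is all of $\A_i$ (\cref{rmrk:steep-ode}); under \cref{asm:well-posed} this sum extends continuously to the whole simplex.

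\emph{Step 1: positivity of $s_i$ away from zero.} By definition $s_i(p)=1/\theta_i''(p)>0$ for $p\in(0,1]$, because $\theta_i''$ is finite and positive there. The assumed Lipschitz extension is continuous on $[0,1]$. Hence on the compact interval $[1/|\A_i|,1]$ the function $s_i$ attains a strictly positive minimum, which we call $\sigma_i$.

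\emph{Step 2: the lower bound on $S_i$.} For any $x_i\in\X_i$, the coordinates sum to $1$, so at least one coordinate satisfies $x_{i\beta}\ge 1/|\A_i|$. Every term $s_i(x_{i\gamma})$ is nonnegative: it is positive on $(0,1]$ and equals $0$ at $0$. Dropping all terms except that one gives $S_i(x_i)\ge s_i(x_{i\beta})\ge\sigma_i$.

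\emph{Step 3: Lipschitz continuity of $\pi_i$.} Each map $x_i\mapsto s_i(x_{i\alpha})$ is Lipschitz on $\X_i$, say with constant $L$. Each is also bounded by $M:=\max_{[0,1]}s_i$. Consequently $S_i$ is Lipschitz with constant $|\A_i|L$ and satisfies $S_i\le |\A_i|M$. For $x_i,x_i'\in\X_i$ we use the standard quotient estimate
\[
\Bigl|\frac{a}{b}-\frac{a'}{b'}\Bigr|\le\frac{|a-a'|}{b}+\frac{a'\,|b-b'|}{b\,b'},
\]
with $b,b'\ge\sigma_i$. This gives a Lipschitz constant for $\pi_{i\alpha}$ of order $L/\sigma_i + M|\A_i|L/\sigma_i^2$.

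Since $\pi_{i\alpha}\ge0$ and $\sum_\alpha\pi_{i\alpha}=1$, the map $\pi_i$ indeed takes values in $\X_i$. No step is a real obstacle. The only subtlety is Step 1: the assumption only says $s_i(0)=0$, so positivity of $s_i$ must be taken from the strictly interior values of $\theta_i''$, and the compactness argument needs the interval to stay away from $0$. This is why we pick the large coordinate in Step 2 rather than trying to bound $s_i$ near $0$.
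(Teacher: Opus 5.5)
Your proof is correct and follows the paper's argument essentially verbatim: choose a coordinate with $x_{i\beta}\ge 1/|\A_i|$, let $\sigma_i$ be the minimum of $s_i$ on $[1/|\A_i|,1]$, and get Lipschitz continuity of $\pi_i$ as a quotient whose denominator is bounded below by $\sigma_i$. Your explicit quotient estimate, which uses the bound $M$ on the numerator, just writes out the step the paper leaves implicit.
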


\begin{proof}
Let $m_i=|\A_i|$. For any $x_i\in\X_i$ there exists $\alpha$ with $x_{i\alpha}\ge 1/m_i$.
Since $s_i$ is continuous on $[0,1]$ and strictly positive on $(0,1]$, the minimum
\[
\sigma_i\coloneqq \min_{p\in[1/m_i,1]} s_i(p)
\]
exists and satisfies $\sigma_i>0$. Hence $S_i(x_i)\ge s_i(x_{i\alpha})\ge \sigma_i$.
Lipschitz continuity of $\pi_i$ follows because it is a quotient of Lipschitz continuous functions with denominator bounded away from $0$.
\end{proof}

\begin{definition}\label{def:field}
Define the \emph{strategy field} $F:\X\to \YY$ by, for each $i\in \N$ and $\alpha_i\in\A_i$,
\[
F_{i\alpha}(x)\coloneqq s_i(x_{i\alpha})\Big(v_{i\alpha}(x)-\langle \pi_i(x_i),v_i(x)\rangle\Big).
\]
\end{definition}

\begin{proposition}\label{prop:flow}
Under \cref{asm:well-posed} the ODE
\[
\dot x=F(x),\qquad x(0)\in \X,
\]
admits a unique global solution $x(\cdot):\mathbb R\to \X$.
This yields a continuous flow $(\Theta_t)_{t\in\mathbb R}$ on $\X$\textemdash the \emph{strategy flow}\textemdash which is invariant on each face of $\X$.
\end{proposition}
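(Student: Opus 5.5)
The plan is to show three things in turn: the strategy field $F$ of \cref{def:field} is globally Lipschitz on $\X$ (extended to a neighborhood in $\R^{\sum_i|\A_i|}$); $\X$ and each of its faces are invariant under the corresponding ODE; and, by Cauchy--Lipschitz together with compactness, solutions exist for all $t\in\R$ and define a continuous flow.

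\textbf{Lipschitz continuity.} Each component has the form
\[
F_{i\alpha}(x)=s_i(x_{i\alpha})\bigl(v_{i\alpha}(x)-\langle\pi_i(x_i),v_i(x)\rangle\bigr).
\]
The payoff field $v$ is multilinear, hence Lipschitz and bounded on the compact set $\X$. By \cref{asm:well-posed}, $s_i$ is Lipschitz on $[0,1]$. By \cref{lem:S-positive}, $\pi_i$ is Lipschitz. Products and sums of bounded Lipschitz functions on a compact set are Lipschitz, so $F$ is Lipschitz on $\X$.

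To apply the standard existence theorem on an open set, I extend $F$ to all of the ambient space. I do this by extending each $s_i$ to $\R$ (for instance, $s_i(p)=0$ for $p\le0$ and $s_i(p)=s_i(1)$ for $p\ge1$) and composing $x$ with the Lipschitz Euclidean projection onto $\X$ inside $v$ and $\pi$. Picard--Lindelöf then gives unique local solutions.

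\textbf{Invariance.} First, $\sum_{\alpha_i}F_{i\alpha}(x)=S_i(x_i)\langle\pi_i,v_i\rangle-S_i(x_i)\langle\pi_i,v_i\rangle=0$, using $\sum_\alpha s_i(x_{i\alpha})=S_i(x_i)$ and $\pi_{i\alpha}=s_i(x_{i\alpha})/S_i$. So each affine hull $\{\sum_\alpha x_{i\alpha}=1\}$ is preserved.

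Second, for a coordinate $x_{i\alpha}$, the derivative satisfies $|\dot x_{i\alpha}|\le L\,C\,|x_{i\alpha}|$, where $C$ bounds the bracket and $L$ is the Lipschitz constant of $s_i$ with $s_i(0)=0$. Writing $\dot x_{i\alpha}=g(t)\,x_{i\alpha}$ with $g$ bounded, Grönwall gives
\[
x_{i\alpha}(t)=x_{i\alpha}(0)\exp\Bigl(\int_0^t g\Bigr).
\]
Hence a coordinate that starts at zero stays zero for all $t$, positive coordinates stay positive, and none can cross zero in either time direction. Combined with the preserved sums, trajectories starting in $\X$ remain in $\X$. Moreover the support, and hence the relative interior of each face, is invariant in both forward and backward time.

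This sign-preservation step is where the hypothesis $s_i(0)=0$ is genuinely used, and it is the point I expect to need the most care. The factor $g(t)=F_{i\alpha}/x_{i\alpha}$ must be shown measurable and bounded along the trajectory. Alternatively, uniqueness can be invoked directly: the face $\{x_{i\alpha}=0\}$ is itself invariant for the restricted field, so by uniqueness no trajectory can reach it from outside in finite time.

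\textbf{Global flow.} Since $\X$ is compact and invariant, solutions cannot blow up, so they extend to all of $\R$. Continuous dependence on initial conditions, again a consequence of Lipschitz continuity and Grönwall, yields continuity of $(t,x)\mapsto\Theta_t(x)$. Uniqueness gives the group property $\Theta_{t+s}=\Theta_t\circ\Theta_s$.

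Finally, on the relative interior of each face, \cref{prop:strategy-ode} shows that $F$ coincides with the facewise \eqref{eq:SD} dynamics, so this flow is the strategy flow of \ac{FTRL}.
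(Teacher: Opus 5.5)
Your proposal is correct in substance and shares the paper's skeleton: Lipschitz continuity of $F$, Cauchy--Lipschitz, the identity $\sum_{\alpha_i}F_{i\alpha}=0$, and $s_i(0)=0$ to freeze vanishing coordinates. The difference is in how you get face invariance.

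The paper notes that $\dot x_{i\alpha}=0$ whenever $x_{i\alpha}=0$ and infers forward invariance of faces. It then upgrades to two-sided invariance by rerunning the argument for the reversed field $-F$ and using $\Theta_{-t}(\F)\subseteq\F\Rightarrow\F\subseteq\Theta_t(\F)$. Your bound $|\dot x_{i\alpha}|\le LC|x_{i\alpha}|$ with Gr\"onwall is symmetric in time, so it gives both directions at once. It also makes rigorous two points the paper leaves implicit: that zero coordinates stay zero (the pointwise observation needs uniqueness or Gr\"onwall behind it), and that positive coordinates cannot cross zero, so orbits really stay in $\X$. The paper's time-reversal step, in turn, is a purely set-theoretic upgrade that needs no estimate once forward invariance is known.

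One step does not go through as written. With your extension (raw coordinate inside $s_i$, projected point $Px$ inside $v$ and $\pi$), for $x\notin\X$ you get
\[
\sum_{\alpha_i}\tilde F_{i\alpha}(x)=\sum_{\alpha_i}\tilde s_i(x_{i\alpha})\bigl(v_{i\alpha}(Px)-\langle\pi_i(P_ix_i),v_i(Px)\rangle\bigr).
\]
This has no reason to vanish unless $(\tilde s_i(x_{i\alpha}))_{\alpha_i}$ is proportional to $\pi_i(P_ix_i)$, as it is on $\X$. So the affine hulls are not invariant for the extended field. Your conclusion ``sums preserved plus signs preserved, hence the orbit stays in $\X$'' is therefore circular: you only know the sums are conserved while the orbit is already in $\X$.

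The repair is short.
\begin{enumerate}
\item Your sign argument holds globally: with your extension, $|\tilde s_i(p)|\le L|p|$ for all $p\in\R$, and the bracket is bounded everywhere. So the nonnegative orthant is invariant.
\item $G_i\coloneqq\sum_{\alpha_i}\tilde F_{i\alpha}$ is Lipschitz and vanishes on $\X$, so $|G_i(x)|\le L_G\dist(x,\X)$.
\item For $x\ge0$ you have $\dist(x,\X)\le e(x)\coloneqq\sum_j\bigl|\sum_{\alpha_j}x_{j\alpha}-1\bigr|$. Normalizing a block of mass $\sigma_j>0$ moves it by at most $|\sigma_j-1|$, and a zero block is within distance $1$ of the simplex.
\item Gr\"onwall applied to $e$ along an orbit starting in $\X$ then gives $e\equiv0$ in both time directions.
\end{enumerate}
Alternatively, Nagumo's tangency criterion does it in one line: $\pm F(x)$ lies in the tangent cone of $\X$, and of every face containing $x$, at each $x\in\X$. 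The paper's proof does not address this point either, since it never says how Cauchy--Lipschitz is applied on the closed set $\X$.
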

\begin{proof}
\Cref{asm:well-posed} implies each $s_i$ is globally Lipschitz on $[0,1]$ and, by~\cref{lem:S-positive}, the map $x_i\mapsto \pi_i(x_i)$ is also globally Lipschitz on $\X_i$.
Moreover, by multilinearity, $v$ is globally Lipschitz on $\X$.
Therefore each component $F_{i\alpha}(x)$ is a composition of globally Lipschitz maps, hence $F$ is globally Lipschitz on $\X$.
Existence and uniqueness of global solutions for all $t\in\mathbb R$ follow from the Cauchy--Lipschitz / Picard--Lindel\"of theorem. Moreover, for each $i$, using $s_i(x_{i\alpha})=S_i(x_i)\pi_{i\alpha}(x_i)$ we obtain
\[
\sum_{\alpha_i}\dot x_{i\alpha}= \sum_{\alpha_i}F_{i\alpha}(x)
=
\sum_{\alpha_i} s_i(x_{i\alpha})v_{i\alpha}(x)
-
\Big(\sum_{\alpha_i}s_i(x_{i\alpha})\Big)\langle \pi_i(x_i),v_i(x)\rangle
=0.
\]
This shows $\sum_{\alpha_i}\dot x_{i\alpha}=0$, hence $\sum_{\alpha_i}x_{i\alpha}(t)=1$ for all $t$.
Moreover, if $x_{i\alpha_i}(t)=0$, then $s_i(0)=0$ implies $\dot x_{i\alpha_i}(t)=0$, so zero coordinates remain zero.
Consequently, for every initial condition $x_0\in\X$ and every $t\ge 0$,
\[
\supp(\Theta_t(x_0))\subseteq \supp(x_0),
\]
i.e.\ each face of $\X$ is forward-invariant. To upgrade this to invariance (two-sided), fix a face $\F=\prod_{i\in\N}\Delta(\B_i)$ and note that
\[
x\in\F \quad\Longleftrightarrow\quad x_{i \alpha}=0\ \ \text{for all }i\in\N,\ \alpha_i\in \A_i\setminus \B_i.
\]
By the previous paragraph, if $x\in\F$ then $\Theta_t(x)\in\F$ for all $t\ge 0$. Consider now the time-reversed ODE
\[
\dot x=-F(x).
\]
Its vector field is again globally Lipschitz. Moreover, the same argument as above applies verbatim: for each $i$,
\[
\sum_{\alpha_i}(-F_{i\alpha}(x))=0,
\]
and if $x_{i\alpha}=0$, then $s_i(0)=0$ implies
\[
-F_{i\alpha}(x)=0.
\]
Hence every face of $\X$ is forward-invariant under the time-reversed ODE as well. By uniqueness of solutions and the flow property, the flow of $\dot x=-F(x)$ is $(\Theta_{-t})_{t\in\R}$. Therefore, for every $t\ge 0$,
\[
\Theta_{-t}(\F)\subseteq \F.
\]
Applying $\Theta_t$ to both sides and using $\Theta_t\circ\Theta_{-t}=\Theta_0=\mathrm{Id}$ yields
\[
\F\subseteq \Theta_t(\F).
\]
Since we already have $\Theta_t(\F)\subseteq \F$ for $t\ge 0$, it follows that
\[
\Theta_t(\F)=\F
\]
for all $t\ge 0$. Replacing $t$ by $-t$ gives $\Theta_t(\F)=\F$ for all $t\in\R$.
\end{proof}

\begin{remark}\label{rmrk:reversal}
Since $F$ is globally Lipschitz, the time-reversed flow is obtained by solving $\dot x=-F(x)$, i.e.\ $\Theta_{-t}$
is the flow map of $-F$.
Moreover, if one considers the negated game with payoffs $-u_i$ (so the payoff field becomes $-v$),
then the induced strategy field is exactly $-F$ (replace $v$ by $-v$ in~\cref{def:field}).
Hence $(\Theta_{-t})_{t\in\mathbb R}$ coincides with the strategy flow of the negated game.
\endenv
\end{remark}

Naturally, when $h$ is steep, we shall always work under \cref{asm:well-posed} whenever we mention the strategy flow induced by \eqref{eq:SD}.

\begin{remark}\label{rmrk:flow-link}
    Consider a subgame $\B=\prod_{i\in\N}\B_i$ and its face $\F=\cont(\B)$, let $Q_\B:\YY\to\F^\circ$ be the face-restricted choice map  $Q_\B(y):=\arg \max_{w \in \F} \{ \langle y, w\rangle -h(w) \}$, and consider the facewise \ac{FTRL} dynamics
\begin{equation}
\label{eq:FTRL-B}
\dot y = v(x),
\qquad x = Q_\B(y).
\tag{FTRL-$\protect\FTRLtag$}
\end{equation}
Now, given $x_0 \in \F$, the strategy flow trajectory $(\Theta_t)$ induced by \eqref{eq:SD} with $\Theta_0 = x_0$ coincides, by construction, with the \eqref{eq:FTRL-B} orbit $x(\cdot)$ with $Q_\B(y(0)) = x_0$ (which indeed exists, by \cref{lem:steep-image}, and is unique by \cref{prop:Q-injective}). Hence the strategy flow can be seen as a globally Lipschitz continuous patchwork of these facewise \ac{FTRL} dynamics.
\endenv
\end{remark}

\section{Omitted proofs from Section \ref{sec:dynstable}}
\label{app:dynstable}

We start by stating two properties of \ac{FTRL} which will be useful for this section. We start with a powerful rationality property, proved in \cite{MS16}:
\begin{theorem}\label{thm:extinction}
Under \eqref{eq:FTRL}, any dominated mixed strategy becomes extinct along every orbit \(x(t)=Q(y(t))\).
\end{theorem}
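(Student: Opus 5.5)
The plan is to use the standard Fenchel-coupling argument from \cite{MS16}. Fix a player $i$ and a mixed strategy $p_i\in\X_i$ that is strictly dominated by some $p_i'\in\X_i$. By compactness of $\X_{-i}$ and continuity of $u_i$, the gap is uniform: there is $\delta>0$ with
\[
\langle v_i(x),\,p_i'-p_i\rangle = u_i(p_i',x_{-i})-u_i(p_i,x_{-i})\ge\delta
\]
for all $x\in\X$. Here $v_i(x)$ depends only on $x_{-i}$, and the pairing is linear in the mixed strategy.

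Integrating $\dot y_i=v_i(x)$ along an orbit $x(t)=Q(y(t))$ gives
\[
\langle y_i(t),\,p_i'-p_i\rangle \ge \langle y_i(0),\,p_i'-p_i\rangle+\delta t\to+\infty .
\]
To turn this score drift into extinction, I would use the pure support of $p_i$, since extinction of the mixed strategy $p_i$ is meant in the usual sense that $x_i(t)$ moves away from $p_i$ in the direction it is dominated. Concretely, I would show $\min_{\alpha_i\in\supp(p_i)}x_{i\alpha}(t)\to 0$, or equivalently that the Fenchel coupling $F_i(p_i,y_i(t))\to+\infty$ in the steep case.

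The key identity comes from \cref{lem:fenchel-deriv}, which gives
\[
\tfrac{d}{dt}\bigl[F_i(p_i,y_i)-F_i(p_i',y_i)\bigr]=\langle v_i(x),p_i'-p_i\rangle\ge\delta .
\]
Hence $F_i(p_i,y_i(t))\ge F_i(p_i',y_i(t))+F_i(p_i,y_i(0))-F_i(p_i',y_i(0))+\delta t\ge c+\delta t$, using $F_i\ge0$ from \cref{prop:fenchel}. So $F_i(p_i,y_i(t))\to\infty$.

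The main obstacle is converting $F_i(p_i,y_i)\to\infty$ into vanishing of some coordinate in $\supp(p_i)$. I would argue through \cref{prop:fenchel}(\labelcref{it:fenchel-breg}): if $x_i(t)$ kept all coordinates of $\supp(p_i)$ above some $\varepsilon>0$ along a subsequence, then $x_i\in\Delta_{p_i}$ and $F_i=D_i(p_i,x_i)$. With $x_{i\alpha}\ge\varepsilon$ on $\supp(p_i)$, the decomposable Bregman divergence is bounded. It is a sum over $\alpha$ of $\theta_i(p_{i\alpha})-\theta_i(x_{i\alpha})-\theta_i'(x_{i\alpha})(p_{i\alpha}-x_{i\alpha})$, where for $\alpha\notin\supp(p_i)$ the term is $\theta_i(0)-\theta_i(x)+x\theta_i'(x)$, which is bounded because $\theta_i$ is convex and continuous on $[0,1]$. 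Care is needed with the one-sided directional derivative when $x_{i\alpha}=0$ off the support, but then that term equals zero. This bound contradicts $F_i\to\infty$.

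The conclusion is $\liminf$-type extinction. To upgrade it to a genuine limit, I would apply the same bound to the score differences $y_{i\alpha}-y_{i\beta}$ for pure strategies together with \cref{prop:choice-map}(\labelcref{it:vanish}). When $p_i$ is pure, this gives $x_{i\alpha}(t)\to0$ directly, since $y_{i\alpha}-\langle y_i,p_i'\rangle\to-\infty$ and $\langle y_i,p_i'\rangle\le\max_\beta y_{i\beta}$.
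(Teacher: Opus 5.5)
Your argument is correct and is essentially the standard Fenchel-coupling proof of \cite{MS16}, which the paper simply cites: the drift $\tfrac{d}{dt}[F_i(p_i,y_i)-F_i(p_i',y_i)]\ge\delta$ forces $F_i(p_i,y_i(t))\to\infty$, and the identity $F_i=D_i$ on $\Delta_{p_i}$ from \cref{prop:fenchel} rules out the support coordinates of $p_i$ staying bounded away from zero. One small correction: your subsequence argument already gives the full limit $\min_{\alpha_i\in\supp(p_i)}x_{i\alpha}(t)\to0$, not just a $\liminf$ statement, so your final ``upgrade'' paragraph is unnecessary.
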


Where we say that $x_i\in\X_i$ is \emph{dominated} by $x'_i\in\X_i$ if $x'_i$ yields a strictly higher payoff against every opponents' profile:
$u_i(x'_i,w_{-i})>u_i(x_i,w_{-i})$ for all $w_{-i}\in\X_{-i}$.

We will also use the following fact, proved by \citet{FVGL+20}.
\begin{theorem}\label{thm:no-interior} Let \(A\subseteq \mathcal{X}^\circ\). Then \(A\) is not asymptotically stable under \eqref{eq:FTRL}. Consequently, if $h$ is steep, no subset of the relative interior of any face is asymptotically stable under the strategy flow induced by \eqref{eq:SD}. In particular, every asymptotically stable set intersects the vertex set \(\mathcal{A}\). \end{theorem}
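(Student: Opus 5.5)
The plan is a volume-preservation (Liouville) argument in score space, followed by a reduction to faces for the two consequences.

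\emph{Step 1: the score flow preserves volume.} The field $y\mapsto v(Q(y))$ has zero divergence, because $v_{i\alpha}(x)=u_i(\alpha_i,x_{-i})$ does not depend on $x_i$, hence not on $y_i$. So $\partial v_{i\alpha}(Q(y))/\partial y_{i\alpha}=0$ for every $i,\alpha_i$. By \cref{prop:choice-map}(\labelcref{it:shift}) the dynamics are invariant under the shifts $y_i\mapsto y_i+c\one$, so they descend to the quotient $\YY/\prod_i\mathrm{span}\{\one\}$. There the induced field is still divergence-free, and its flow $(\Psi_t)$ preserves Lebesgue measure.

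\emph{Step 2: interior points correspond diffeomorphically to quotient scores.} On $Q^{-1}(\X^\circ)$ the KKT conditions of \cref{prop:choice-map}(\labelcref{it:KKT}) read $y_{i\alpha}=\theta_i'(x_{i\alpha})+\lambda_i$ with $\theta_i''\ge K_i>0$. Hence $Q$ induces a $C^1$ diffeomorphism from the quotient preimage of $\X^\circ$ onto $\X^\circ$, which is injective modulo shifts exactly as in \cref{prop:Q-injective}. Consequently any open set $U$ with compact closure in $\X^\circ$ has a quotient preimage $\tilde U$ of finite, positive volume. Moreover $\mathrm{vol}(\tilde U')\to 0$ when $U'$ shrinks to a compact subset $A$ with empty interior. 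Such an $A$ has empty interior since otherwise it could not be asymptotically stable: an invariant open set cannot attract strictly.

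\emph{Step 3: contradiction for $A\subseteq\X^\circ$.} Suppose $A\subseteq\X^\circ$ is asymptotically stable; we may take $A$ compact.
\begin{enumerate}
\item Pick a neighborhood $K$ of $A$ with compact closure in $\X^\circ$.
\item By stability, pick $V\subseteq K$ open such that orbits from $V$ stay in $K$ and converge to $A$.
\item Stability together with compactness of $\operatorname{cl}V'$, for a slightly smaller $V'$, gives uniform attraction. So for any neighborhood $W$ of $A$ there is $T$ with $\Psi_T(\tilde V')\subseteq\tilde W$.
\item Choose $W$ with $\mathrm{vol}(\tilde W)<\mathrm{vol}(\tilde V')$. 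This contradicts volume preservation.
\end{enumerate}
If $A$ has nonempty interior, the same inclusion with $W$ slightly larger than $A$ still contradicts volume preservation, since $\Psi_T(\tilde V')$ must have the full volume of $\tilde V'$. The \emph{main obstacle} is this uniformity-of-attraction step and making the volume comparison rigorous near $A$. I would handle it by the standard stability argument: first enter a small stable neighborhood at some time, then use continuity of the flow on the compact set $\operatorname{cl}V'$.

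\emph{Step 4: consequences for the strategy flow.}
\begin{itemize}
\item \emph{Subsets of a face interior.} Let $\F=\cont(\B)$ be a face. By \cref{prop:flow} it is invariant, and by \cref{rmrk:flow-link} the restricted flow is the facewise \eqref{eq:FTRL-B} of the subgame $\B$. If $S\subseteq\F^\circ$ were asymptotically stable for $(\Theta_t)$, then, intersecting neighborhoods with $\F$, it would be asymptotically stable for the facewise dynamics. This contradicts Step 3 applied to the game restricted to $\B$.
\item \emph{Intersection with the vertex set.} Let $S$ be asymptotically stable and let $\F$ be a face of minimal dimension meeting $S$. Then $S\cap\F\subseteq\F^\circ$, since any point of $S\cap\partial\F$ would lie in a lower-dimensional face meeting $S$. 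Restricting stability and attraction to the invariant set $\F$, as in \cref{lem:restrict-attract}, makes $S\cap\F$ asymptotically stable for the restricted flow. By the previous point this forces $\dim\F=0$, i.e.\ $S$ contains a vertex of $\A$.
\end{itemize}
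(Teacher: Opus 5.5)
Your proposal is correct and follows essentially the same route as the source of this result. The paper gives no proof of its own; it cites \citet{FVGL+20}, whose argument is exactly your incompressibility of the reduced score dynamics, combined with $Q$ being a diffeomorphism from shift classes of interior scores onto $\X^\circ$, and your derivation of the face and vertex consequences via face invariance and \cref{rmrk:flow-link} matches how the paper uses it.

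One correction is needed in Steps 2--3. The claim that $\mathrm{vol}(\tilde U')\to 0$ as $U'$ shrinks to a compact $A$ with empty interior is false (think of fat-Cantor-type sets), and the ``invariant open set'' remark is unjustified; drop both. Instead, for shrinking $W$ one has $\mathrm{vol}(\tilde W)\downarrow\mathrm{vol}(\tilde A)<\mathrm{vol}(\tilde V')$. The strict inequality holds because $V'\setminus A$ is a nonempty open set, by connectedness of $\X$ when $\dim\X\ge 1$. This is just your nonempty-interior argument, and it covers every case.
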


\subsection{Stability and attraction}

Recall first that we identify each pure action $\alpha_i\in\A_i$ with the corresponding vertex of the simplex $\X_i$
(and likewise each pure profile $\alpha\in\A$ with the corresponding vertex of $\X$).

The next argument hinges on two structural properties of the strategy flow $(\Theta_t)$: \emph{face-invariance} and \emph{elimination of dominated strategies}. In fact, the same reasoning applies verbatim to any flow on $\X$ that satisfies these two properties. Indeed,  they imply together that, on each one-dimensional face (edge) of the strategy space, the induced motion is aligned with player deviations. More precisely, trajectories are confined to the edge by face-invariance while elimination of dominated strategies forces them to drift toward the endpoint corresponding to the profitable unilateral deviation. In this sense, the flow on edges follows the orientation prescribed by the preference graph.

\begin{proof}[\textbf{Proof of \cref{prop:attract-club}.}]
    Suppose, toward a contradiction, that $\alpha\in S$ and $\alpha\to\alpha'$ in the preference graph, but $\alpha'\notin S$. There exists then a unique player $i$ such that $\alpha$ and $\alpha'$ are $i$--comparable and $u_i(\alpha')\geq u_i(\alpha)$. Let $U$ be an attracting neighborhood of $S$. Because $\alpha\in S\subset U$ and $\alpha'\notin S$, shrinking $U$ if necessary we may assume $\alpha'\notin U$. By connectedness of the segment $[\alpha,\alpha']$, there exists $\beta\in(\alpha,\alpha')\cap U$. By face invariance under steep regularizers (\cref{prop:flow}), $\Theta_t(\beta)\in[\alpha,\alpha']$ for all $t\ge0$.

    If $u_i(\alpha')>u_i(\alpha)$, on the segment $[\alpha,\alpha']$, player $i$'s action $\alpha'_i$ strictly dominates $\alpha_i$, so by elimination of dominated strategies applied to this one--player game (see \cref{thm:extinction}) we have $\Theta_t(\beta)\longrightarrow \alpha'$ as $t\to\infty$.
Since $\beta\in U$ and $U$ is an attracting neighborhood of $S$, every limit point of $(\Theta_t)$ starting from $\beta$ lies in $S$, hence $\alpha'\in S$, a contradiction.

If $u_i(\alpha')=u_i(\alpha)$, then every point of $[\alpha,\alpha']$ is stationary by face-invariance and  the fact that a game with constant payoffs admits constant dynamics under \eqref{eq:FTRL}. Define now $\alpha(\lambda):=(1-\lambda)\alpha+\lambda\alpha'$, $\lambda\in[0,1]$, and set
\[
\Lambda:=\{\lambda\in[0,1]:\ \alpha(\lambda)\in S\}.
\]
Then $0\in\Lambda$ and $\Lambda$ is closed (since $S$ is closed). Let $\lambda^\ast=\sup\Lambda$. If $\lambda^\ast<1$, choose $\lambda\in(\lambda^\ast,1)$ sufficiently close to $\lambda^\ast$ so that $\alpha(\lambda)\in U$. But $\alpha(\lambda)$ is stationary, and by attraction, this forces $\alpha(\lambda)\in S$, contradicting the choice of $\lambda^\ast$. Hence $\lambda^\ast=1$ and $\alpha'=\alpha(1)\in S$, again a contradiction.
\end{proof}

With this, we obtain the following useful corollary:

\begin{corollary}\label{cor:stable-club}
    Every asymptotically stable set for the strategy flow induced by \eqref{eq:SD} contains a set of pure profiles which is closed under better replies.
\end{corollary}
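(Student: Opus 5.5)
The plan is to combine \cref{prop:attract-club} with \cref{thm:no-interior}. Let $S\subseteq\X$ be asymptotically stable for the strategy flow $(\Theta_t)$ induced by \eqref{eq:SD}. In particular $S$ is attracting. By \cref{prop:attract-club}, its skeleton $\skel(S)=S\cap\A$ is therefore \ac{club}. So the required set of pure profiles is $\HH\defeq\skel(S)$, provided we show it is nonempty; the empty set is vacuously \ac{club} but should not count as a genuine answer.

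Nonemptiness comes from \cref{thm:no-interior}: under steep regularization every asymptotically stable set for the strategy flow intersects the vertex set $\A$. Hence $\skel(S)\neq\varnothing$, and $\skel(S)\subseteq S$ is a nonempty \ac{club} set of pure profiles contained in $S$. As a byproduct, $\skel(S)$ contains a sink equilibrium, since any nonempty \ac{club} set contains a minimal one (a sink strongly connected component reachable inside it).

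The only delicate point is making sure the cited nonemptiness statement really applies to $S$ as given. \Cref{thm:no-interior} is phrased for sets in relative interiors of faces, and we need it for an arbitrary closed $S$. If we cannot quote the statement ``every asymptotically stable set intersects $\A$'' directly, the fallback is an induction on face dimension.
\begin{itemize}
\item Pick $x\in S$ whose support face $\F$ has minimal dimension among points of $S$.
\item Restrict the flow to the invariant face $\F$; by \cref{lem:restrict-attract} applied to an attractor containing $S\cap\F$, this gives an asymptotically stable set for the restricted flow.
\item By minimality, that set lies in $\F^\circ$ unless $\F$ is a vertex, and \cref{thm:no-interior} forbids asymptotically stable sets contained in $\F^\circ$.
\item Hence $\F$ is a vertex, i.e.\ $x\in\A$.
\end{itemize}
I expect this restriction step to be the main (minor) obstacle; the rest is immediate.
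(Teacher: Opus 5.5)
Your argument is correct and is exactly the paper's proof: \cref{thm:no-interior} gives $S\cap\A\neq\varnothing$, and \cref{prop:attract-club} (asymptotic stability implies attraction) makes $S\cap\A$ closed under better replies. The fallback induction is unnecessary, since \cref{thm:no-interior} as stated already asserts that every asymptotically stable set meets $\A$; if you did want the fallback, restricting $S$ itself to the closed invariant face $\F$ works directly and avoids your somewhat muddled detour through an attractor ``containing $S\cap\F$''.
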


\begin{proof}
    By \cref{thm:no-interior}, every asymptotically stable set $S$ must intersect $\A$. \Cref{prop:attract-club} then forces $S \cap \A$ to be closed under better replies.
\end{proof}

To prove the next results, we will need the following lemma:

\begin{lemma}\label{lem:gap-freezing}
Let $i\in\N$ and $\alpha_i \in\A_i$. For $G\ge 0$ define the gap region
\[
\Gamma_i(\alpha_i,G):=\{y_i\in\YY_i:\ y_{i\alpha}\ge y_{i\beta}+G\ \ \forall \beta\neq \alpha\},
\]
then, for every $r>0$ ,there exists a threshold $G_i(\alpha_i,r)>0$ such that
for every $y_i\in\Gamma_i(\alpha_i,G_i(\alpha_i,r))$ one has $\|Q_i(y_i)-\alpha_i\|_1\le r$.
\end{lemma}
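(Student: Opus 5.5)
The plan is to use the KKT characterization of \cref{prop:choice-map}(\labelcref{it:KKT}) to turn a score gap of size $G$ into an explicit upper bound on the mass that $Q_i(y_i)$ places on each action $\beta_i\neq\alpha_i$. We then pick $G$ large enough that these masses sum to at most $r/2$.

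Since $\|x_i-\alpha_i\|_1 = 2(1-x_{i\alpha}) = 2\sum_{\beta\neq\alpha}x_{i\beta}$, it suffices to show that $\sum_{\beta\neq\alpha}x_{i\beta}\le r/2$ whenever $y_i\in\Gamma_i(\alpha_i,G)$ with $G$ large. Write $x_i=Q_i(y_i)$ and $m=|\A_i|$. Some coordinate satisfies $x_{i\gamma}\ge 1/m$. If the largest coordinate is $x_{i\alpha}$ itself, then $x_{i\alpha}\ge 1/m$. In either case we will compare each $x_{i\beta}$, for $\beta\ne\alpha$, with $x_{i\alpha}$.

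The comparison goes as follows. Fix $\beta\neq\alpha$ with $x_{i\beta}>0$, so that $\nu_{i\beta}=0$. The KKT conditions give $y_{i\beta}=\theta_i'(x_{i\beta})+\lambda_i$ and $y_{i\alpha}=\theta_i'(x_{i\alpha})+\lambda_i-\nu_{i\alpha}\le \theta_i'(x_{i\alpha})+\lambda_i$, where we take $\theta_i'(0)$ as the right derivative, possibly $-\infty$. In fact $x_{i\alpha}>0$, since otherwise $y_{i\alpha}\le\theta_i'(0^+)+\lambda_i\le y_{i\beta}$, contradicting the gap. Subtracting the two relations yields
\[
\theta_i'(x_{i\alpha})-\theta_i'(x_{i\beta})\;\ge\; y_{i\alpha}-y_{i\beta}\;\ge\; G .
\]
Because $\theta_i'$ is increasing and continuous on $(0,1]$, we have $\theta_i'(x_{i\alpha})\le\theta_i'(1)$. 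Hence $\theta_i'(x_{i\beta})\le\theta_i'(1)-G$. By strong convexity, $\theta_i'(p)\ge\theta_i'(q)-(q-p)\sup\theta_i''$ is not the bound we need. Instead we use monotonicity: set
\[
\delta(G)\defeq\sup\setdef{p\in(0,1]}{\theta_i'(p)\le\theta_i'(1)-G},
\]
with $\delta(G)=0$ if the set is empty. Then $x_{i\beta}\le\delta(G)$.

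The remaining step is to show that $\delta(G)\to0$ as $G\to\infty$. Since $\theta_i'$ is strictly increasing, $\theta_i'(p)\le\theta_i'(1)-G$ forces $p\le (\theta_i')^{-1}(\theta_i'(1)-G)$ whenever that value lies in the range of $\theta_i'$, and forces no admissible $p>0$ otherwise. In the steep case the range of $\theta_i'$ on $(0,1]$ extends down to $-\infty$, so the inverse tends to $0$. In the non-steep case, $\theta_i'(0^+)$ is finite, so once $G>\theta_i'(1)-\theta_i'(0^+)$ no positive $p$ qualifies, and then $x_{i\beta}=0$. A cleaner quantitative version uses $\theta_i''\ge K_i$: we have $\theta_i'(1)-\theta_i'(p)\ge K_i(1-p)$, but this bounds $p$ from the wrong side, so the monotone-inverse argument above is what we rely on. Choosing $G_i(\alpha_i,r)$ with $(m-1)\,\delta(G_i)\le r/2$ then concludes the proof.

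The main subtlety is the treatment of boundary coordinates, namely the sign of $\nu_{i\alpha}$ and the possibility that $x_{i\alpha}$ or $x_{i\beta}$ is zero. These cases are handled by the inequality direction in KKT together with the positivity argument for $x_{i\alpha}$.
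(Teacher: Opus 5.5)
Your proof is correct, but it takes a different route from the paper's.

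\textbf{The paper's route.} The paper argues by contradiction. It takes $y_i^m\in\Gamma_i(\alpha_i,m)$ with $\|Q_i(y_i^m)-\alpha_i\|_1>r$. Using shift invariance, it normalizes to $z_{i\alpha}^m=0$ and $z_{i\beta}^m\le -m$. It then invokes the vanishing property (item 4 of \cref{prop:choice-map}, cited from the literature) to conclude $Q_i(z_i^m)\to\alpha_i$.

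\textbf{Your route.} You work directly from the KKT conditions (item 5 of the same proposition), where the multiplier $\lambda_i$ absorbs the shift automatically. The gap forces $x_{i\alpha}>0$, and it gives $\theta_i'(x_{i\beta})\le\theta_i'(1)-G$ for every $\beta_i\neq\alpha_i$ in the support. Hence $x_{i\beta}\le\delta(G)$, and $\delta(G)\to 0$ by strict monotonicity and continuity of $\theta_i'$.

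\textbf{Comparison.} The paper's argument is shorter, but it is non-constructive and leans on a cited property. Yours is self-contained given KKT and yields an explicit threshold: any $G$ with $(|\A_i|-1)\,\delta(G)\le r/2$ works. As a bonus, it shows that for non-steep kernels $Q_i(y_i)=\alpha_i$ exactly once $G>\theta_i'(1)-\theta_i'(0^+)$. In effect, you are reproving a quantitative version of the vanishing property in the configuration where one action dominates all others by a uniform margin.

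\textbf{Write-up.} Drop the unused remark about some coordinate exceeding $1/m$, and drop the two abandoned strong-convexity sentences. The limit $\delta(G)\to0$ then follows in one line, with no need to split into steep and non-steep cases. For any $p_0\in(0,1]$ and $G>\theta_i'(1)-\theta_i'(p_0)$, every admissible $p$ satisfies $\theta_i'(p)<\theta_i'(p_0)$, hence $p<p_0$, so $\delta(G)\le p_0$.
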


\begin{proof}
Fix $r>0$ and assume, toward a contradiction, that no such $G_i(\alpha_i,r)$ exists.
Then for each integer $m\ge 1$ we can find $y_i^{m}\in\Gamma_i(\alpha_i,m)$ such that
$\|Q_i(y_i^{m})-\alpha_i\|_1>r$.
By \cref{prop:choice-map}(\labelcref{it:shift}),
replacing $y_i^{m}$ by $z_i^{m}:=y_i^{m}-y_{i\alpha}^{m}\mathbf 1$ does not change $Q_i$ and yields
$z_{i\alpha}^{m}=0$ and $z_{i\beta}^{m}\le -m$ for all $\beta_i\neq \alpha_i$.
Hence $z_{i\beta}^{m}-z_{i\alpha}^{m}\to -\infty$ for every $\beta_i\neq \alpha_i$.
By \cref{prop:choice-map}(\labelcref{it:vanish}),
this implies $Q_{i\beta}(z_i^{m})\to 0$ for every $\beta_i\neq \alpha_i$, and therefore
$Q_i(z_i^{m})\to \alpha_i$ in $\|\cdot\|_1$, contradicting $\|Q_i(y_i^{m})-\alpha_i\|_1>r$.
\end{proof}

In the next proof, due to the lack of face-invariance, we instead initialize the score dynamics so that not only play starts
arbitrarily close to $\alpha$, but also all opponents are forced to remain close to
$\alpha_{-i}$ over a prescribed time window by imposing large initial score gaps.
During that window, the deviator's score difference between $\alpha_i$ and the strict
better reply $\alpha_i'$ drifts at a uniform negative rate, and we pick the window
length so that the choice map flips (up to a prescribed tolerance) toward $\alpha_i'$,
forcing the orbit to exit a stable neighborhood.

\begin{proof}[\textbf{Proof of \cref{thm:stable-sclub}.}]
Let $S\subseteq\X$ be nonempty, closed, and stable under \eqref{eq:FTRL}, i.e.
\[
x(0)\in V\cap\ImQ \ \Longrightarrow\ x(t)\in U\quad\forall t\ge 0
\]
for every neighborhood $U$ of $S$ and some neighborhood $V$ of $S$. Assume toward a contradiction that $\alpha\in S\cap\A$ and $\alpha\to\alpha'$ is a strict better
reply but $\alpha'\notin S$. Let $i$ be the deviating player, so $\alpha'=(\alpha_i',\alpha_{-i})$, and set
\[
\delta:=u_i(\alpha_i',\alpha_{-i})-u_i(\alpha_i,\alpha_{-i})>0.
\]
Since $S$ is closed and $\alpha'\notin S$, let $r:=\dist(\alpha',S)>0$ and define the open neighborhood
\[
U:=\{x\in\X:\dist(x,S)<r/2\},
\]
so $\alpha'\notin U$. By stability, there exists an open neighborhood $V\supseteq S$ such that every
solution orbit $x=Q(y)$ with $x(0)\in V\cap\ImQ$ satisfies $x(t)\in U$ for all $t\ge 0$. Because $\alpha\in S\subseteq V$ and $V$ is open, pick $r_0\in(0,r/8)$ such that
$B_1(\alpha,r_0)\subseteq V$. Define the continuous function
\[
g(x_{-i}):=u_i(\alpha_i',x_{-i})-u_i(\alpha_i,x_{-i}),\qquad x_{-i}\in\X_{-i}.
\]
Since $g(\alpha_{-i})=\delta$, there exists $\kappa_0>0$ such that
\begin{equation}\label{eq:margin-ct}
\|x_{-i}-\alpha_{-i}\|_1\le \kappa_0\quad\Longrightarrow\quad g(x_{-i})\ge \delta/2.
\end{equation}
Set $\kappa:=\min\{\kappa_0,r/8,r_0/4\}$. If $|\N|\ge 2$, choose $\rho:=\min\{\kappa/2,r_0/2\}$ and
numbers $(\rho_j)_{j\neq i}$ with $\rho_j>0$ and $\sum_{j\neq i}\rho_j=\rho$. If $|\N|=1$ set $\rho:=0$
and ignore all $j\neq i$ below. Next, we invoke \cref{lem:gap-freezing}: for each $j\neq i$ choose $G_j>0$ such that
\begin{equation}\label{eq:Gj-ct}
y_j\in\Gamma_j(\alpha_j,G_j)\quad\Longrightarrow\quad \|Q_j(y_j)-\alpha_j\|_1\le \rho_j.
\end{equation}
Likewise, apply the same property twice for player $i$ to obtain $B>0$ and $G_i'>0$ such that
\begin{align}
y_i\in\Gamma_i(\alpha_i,B)\ &\Longrightarrow\ \|Q_i(y_i)-\alpha_i\|_1\le r_0/2, \label{eq:B-ct}\\
y_i\in\Gamma_i(\alpha_i',G_i')\ &\Longrightarrow\ \|Q_i(y_i)-\alpha_i'\|_1\le r/8. \label{eq:Gi-ct}
\end{align}
Fix now the horizon
\[
T_*:=\frac{2(B+G_i')}{\delta},
\qquad\text{and set }T:=T_*.
\]
For each player $j\neq i$ define the payoff spread
\[
M_j:=\max_{x\in\X}\ \max_{\beta_j,\gamma_j\in\A_j}\ |v_{j\beta}(x)-v_{j\gamma}(x)|<\infty.
\]
Initialize the opponents' scores by
\[
y_{j\alpha}(0)=G_j+M_jT,\qquad y_{j\beta}(0)=0,\quad\beta_j\neq \alpha_j.
\]
For each $\beta_j\neq \alpha_j$ consider the gap $z_{j\beta}(t):=y_{j\alpha}(t)-y_{j\beta}(t)$.
Then
\[
\dot z_{j\beta}(t)=v_{j\alpha}(x(t))-v_{j\beta}(x(t))\ge -M_j,
\]
so for $t\in[0,T]$,
\[
z_{j\beta}(t)\ge z_{j\beta}(0)-M_j t\ge (G_j+M_jT)-M_jT=G_j.
\]
Thus $y_j(t)\in\Gamma_j(\alpha_j,G_j)$ for all $t\in[0,T]$, and \eqref{eq:Gj-ct} yields
\begin{equation}\label{eq:opponents-close}
\|x_j(t)-\alpha_j\|_1=\|Q_j(y_j(t))-\alpha_j\|_1\le \rho_j,\qquad \forall t\in[0,T],\ \forall j\neq i.
\end{equation}
In particular,
\begin{equation}\label{eq:opp-margin}
\|x_{-i}(t)-\alpha_{-i}\|_1\le \sum_{j\neq i}\rho_j=\rho\le \kappa\le \kappa_0,
\qquad \forall t\in[0,T].
\end{equation}
By \eqref{eq:margin-ct} and \eqref{eq:opp-margin}, for $t\in[0,T]$ we have
\[
u_i(\alpha_i',x_{-i}(t))-u_i(\alpha_i,x_{-i}(t)) = g(x_{-i}(t)) \ge \delta/2.
\]
Now initialize player $i$'s scores by
\[
y_{i\alpha}(0)=B,\qquad y_{i\alpha'}(0)=0,\qquad y_{i\gamma}(0)=-R\quad(\gamma_i\notin\{\alpha_i,\alpha_i'\}),
\]
where $R$ will be chosen momentarily.
Then $y_i(0)\in\Gamma_i(\alpha_i,B)$, so by \eqref{eq:B-ct} we have
$\|x_i(0)-\alpha_i\|_1\le r_0/2$.
Combining with \eqref{eq:opponents-close} at $t=0$ gives
\[
\|x(0)-\alpha\|_1
\le \|x_i(0)-\alpha_i\|_1+\sum_{j\neq i}\|x_j(0)-\alpha_j\|_1
\le r_0/2+\rho
\le r_0,
\]
hence $x(0)\in B_1(\alpha,r_0)\subseteq V$. Since also $x(0)=Q(y(0))\in\ImQ$, stability implies $x(t)\in U$
for all $t\ge 0$.

Define the deviator's score difference $z(t):=y_{i\alpha}(t)-y_{i\alpha'}(t)$. For $t\in[0,T]$,
\[
\dot z(t)=v_{i\alpha}(x(t))-v_{i\alpha'}(x(t))
=u_i(\alpha_i,x_{-i}(t))-u_i(\alpha_i',x_{-i}(t))
\le -\delta/2,
\]
so
\[
z(T)\le z(0)-(\delta/2)T = B-(\delta/2)\cdot \frac{2(B+G_i')}{\delta}=-G_i',
\]
i.e. $y_{i\alpha'}(T)\ge y_{i\alpha}(T)+G_i'$.

It remains to ensure that $\alpha_i'$ dominates every other action at time $T$ by margin $G_i'$.
Let
\[
M_i:=\max_{x\in\X}\ \max_{\beta_i,\gamma_i\in\A_i}\ |v_{i\beta}(x)-v_{i\gamma}(x)|,
\]
and for each $\gamma_i\notin\{\alpha_i,\alpha_i'\}$ define $w_\gamma(t):=y_{i\alpha'}(t)-y_{i\gamma}(t)$.
Then $\dot w_\gamma(t)=v_{i\alpha'}(x(t))-v_{i\gamma}(x(t))\ge -M_i$, so for $t\in[0,T]$,
\[
w_\gamma(t)\ge w_\gamma(0)-M_i t=R-M_i t.
\]
Choose $R\ge G_i'+M_iT$. Then $w_\gamma(T)\ge G_i'$ for all $\gamma_i\notin\{\alpha_i,\alpha_i'\}$, and we already have
$y_{i\alpha'}(T)\ge y_{i\alpha}(T)+G_i'$. Hence $y_i(T)\in\Gamma_i(\alpha_i',G_i')$, and \eqref{eq:Gi-ct} yields
\begin{equation}\label{eq:i-close}
\|x_i(T)-\alpha_i'\|_1\le r/8.
\end{equation}
Also, by \eqref{eq:opponents-close} at $t=T$,
\[
\|x_{-i}(T)-\alpha_{-i}\|_1\le \sum_{j\neq i}\rho_j=\rho\le r/8.
\]
Therefore,
\[
\|x(T)-\alpha'\|_1 \le \|x_i(T)-\alpha_i'\|_1+\|x_{-i}(T)-\alpha_{-i}\|_1 \le r/8+r/8=r/4.
\]
Since $x\mapsto \dist(x,S)$ is $1$-Lipschitz in $\|\cdot\|_1$, we have
\[
\dist(x(T),S)\ \ge\ \dist(\alpha',S)-\|x(T)-\alpha'\|_1\ \ge\ r-r/4\ =\ 3r/4\ >\ r/2,
\]
so $x(T)\notin U$. This contradicts stability.
\end{proof}

\subsection{Connectedness}

The next lemma works by showing that the span of a strongly connected set cannot contain a proper attractor. Indeed, we proceed by contradiction, showing first that such an attractor would have to contain all vertices of the set, and then, by induction, show how that would force the existence of a fully interior repellor, which is impossible.

The next theorem shows that strong connectivity of pure profiles propagates to dynamical connectedness of their mixed span.

\begin{lemma}\label{lem:ict-span}
If $\HH \subseteq \A$ is strongly connected, then $\cont(\HH)$ is internally chain transitive for the strategy flow induced by \eqref{eq:SD}.
\end{lemma}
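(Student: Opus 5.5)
By \cref{prop:ict-attractor}, it suffices to show two things. First, $K:=\cont(\HH)$ is invariant. Second, the restricted flow $\Theta|_K$ admits no proper attractor. Invariance is immediate: $K$ is the union of the faces $\F=\cont(\B)$ with $\B\subseteq\HH$ a subgame, and each face is invariant by \cref{prop:flow}. So the plan is to argue by contradiction. Let $A\subsetneq K$ be a nonempty attractor of $\Theta|_K$, with a trapping region $B$ relative to $K$ and dual repellor $A^\ast\neq\varnothing$.

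\emph{Step 1: $A$ contains every vertex of $\HH$.} By \cref{lem:restrict-attract}, for every face $\F\subseteq K$ meeting $A$, the set $A\cap\F$ is an attractor of the flow on $\F$. That flow is itself a strategy flow, namely that of the subgame, by \cref{rmrk:flow-link}. By \cref{thm:no-interior} applied to the flow on $K$ (or on a face), $A$ must meet some vertex $\alpha\in\HH$. Now take an arc $\alpha\to\alpha'$ with $\alpha'\in\HH$. The edge $[\alpha,\alpha']$ is a one-dimensional face contained in $K$, and $A\cap[\alpha,\alpha']$ is an attractor of the flow on this edge containing $\alpha$. The argument of \cref{prop:attract-club} applies verbatim inside the edge: in the strict case, dominance (\cref{thm:extinction}) drives every interior point to $\alpha'$; in the tie case, the whole edge is stationary. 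In both cases $\alpha'\in A$. Strong connectivity of $\HH$ then propagates this to all of $\HH$, so $\HH\subseteq A$.

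\emph{Step 2: induction on faces.} I would show by induction on $\dim\F$ that every face $\F\subseteq K$ satisfies $\F\subseteq A$. The base case, vertices, is Step 1. For the inductive step, suppose the relative boundary $\partial\F$ (a union of lower-dimensional faces, all inside $K$) lies in $A$, but $\F\not\subseteq A$. Restrict to the invariant compact set $\F$: $A\cap\F$ is an attractor of $\Theta|_\F$ containing $\partial\F$ but not all of $\F$. Its dual repellor $R$ within $\F$ is nonempty, compact, and invariant, and it is disjoint from $A\cap\F\supseteq\partial\F$. Hence $R\subseteq\F^\circ$. But $R$ is an attractor for the time-reversed flow, which by \cref{rmrk:reversal} is the strategy flow of the negated game restricted to $\F$. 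An attractor lying in the relative interior of a face contradicts \cref{thm:no-interior}. Therefore $\F\subseteq A$, and after the induction $K\subseteq A$, contradicting properness.

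The main obstacle I expect is bookkeeping in Step 2. One issue is that $\cont(\HH)$ is a union of faces rather than a single face, so the dual repellor must be taken within each face separately via \cref{lem:restrict-attract}, and I need to check that the restriction of $A$ to $\F$ is a genuine attractor there (nonempty, since it contains the vertices of $\F$). A second issue is that applying \cref{thm:no-interior} to the reversed flow requires the reversed flow to be a strategy flow of a game on that face; \cref{rmrk:reversal} supplies exactly this. A smaller point is that Step 1's use of the first claim of \cref{thm:no-interior} should be applied to the flow on $K$. If that is awkward, I can instead pick any face meeting $A$ of minimal dimension and apply \cref{thm:no-interior} there to find a vertex of $A$.
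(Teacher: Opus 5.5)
Your proposal is correct and follows essentially the same route as the paper's proof. Both reduce to showing there is no proper attractor via \cref{prop:ict-attractor}, find a vertex of $A$ in $\HH$ using \cref{lem:restrict-attract} and \cref{thm:no-interior}, propagate along better-reply paths in $\HH$ by applying \cref{prop:attract-club} on edges, and then induct on face dimension using the dual repellor, \cref{rmrk:reversal}, and \cref{thm:no-interior}.
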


\begin{proof}
Let $\HH\subseteq\A$ be strongly connected.
By \cref{prop:ict-attractor}, it suffices to show that $\cont(\HH)$ is invariant and that the flow restricted to $\cont(\HH)$ admits no proper attractor.

By \cref{prop:flow}, every face of $\X$ is invariant under the strategy flow. Since $\cont(\HH)$ is the union of all faces whose vertices lie in $\HH$, it follows that $\cont(\HH)$ is invariant. Let $A\subseteq \cont(\HH)$ be an attractor for the restricted flow on $\cont(\HH)$.
We show that $A=\cont(\HH)$.

Pick a face $\F\subseteq \cont(\HH)$ with $A\cap \F\neq\varnothing$.
Because $\F$ is invariant, \cref{lem:restrict-attract} implies that $A\cap \F$ is an attractor for the flow restricted to $\F$.
Hence $A\cap \F$ is asymptotically stable for the dynamics on $\F$, so by \cref{thm:no-interior} it intersects the vertex set of $\F$.
Choose $\alpha\in (A\cap \F)\cap\A$. Since $\alpha\in \cont(\HH)$ and $\alpha$ is a vertex, we have $\alpha\in\HH$.

We claim that $\HH\subseteq A$.
If not, pick $\alpha'\in \HH\setminus A$. By strong connectivity of $\HH$, there exists a directed path
\[
\alpha=\alpha^0\to\alpha^1\to\cdots\to\alpha^k=\alpha'
\quad\text{with all }\alpha^j\in\HH.
\]
Let $j$ be the smallest index such that $\alpha^j\in A$ and $\alpha^{j+1}\notin A$, and set $E:=[\alpha^j,\alpha^{j+1}]$.
By \cref{prop:flow}, $E$ is invariant, and since $\alpha^j\in A\cap E$, \cref{lem:restrict-attract} shows that $A\cap E$ is an attractor for the restricted flow on $E$.
Applying \cref{prop:attract-club} to this (one-dimensional) restricted flow yields that $(A\cap E)\cap\A$ is closed under better replies on $E$ and since $\alpha^j\to\alpha^{j+1}$, we get $\alpha^{j+1}\in A\cap E$, a contradiction.
Thus $\HH\subseteq A$.

Next, we prove that every face $\F\subseteq \cont(\HH)$ satisfies $\F\subseteq A$, by induction on $d:=\dim \F$.
For $d=0$, $\F$ is a vertex in $\A$, hence lies in $\HH\subseteq A$.
Assume $d\ge 1$ and the claim holds for all proper subfaces $\F'\subsetneq \F$.
Note first that the boundary of $\F$ verifies $\partial \F=\bigcup_{\F'\subsetneq \F}\F'\subseteq A$.
Let $A_\F:=A\cap \F$, by \cref{lem:restrict-attract}, $A_\F$ is an attractor for the restricted flow on $\F$.
If $A_\F\neq \F$, consider the corresponding dual repellor $R_\F$ which is nonempty, compact, invariant in $\F$, and disjoint from $A_\F$.
Since $\partial \F\subseteq A_\F$, we have $R_\F\cap\partial \F=\varnothing$, hence $R_\F\subseteq \F^\circ$.
For the time-reversed flow on $\F$, the set $R_\F$ is an attractor, and by \cref{rmrk:reversal}, this time-reversed flow is itself a strategy flow (of the negated game) on $\F$.
This contradicts \cref{thm:no-interior}, therefore $A_\F=\F$, i.e.\ $\F\subseteq A$.

By induction, every face $\F\subseteq \cont(\HH)$ is contained in $A$, so $\cont(\HH)\subseteq A$, since always $A\subseteq \cont(\HH)$, we conclude $A=\cont(\HH)$.
Thus the restricted flow on $\cont(\HH)$ admits no proper attractor, and \cref{prop:ict-attractor} yields that $\cont(\HH)$ is internally chain transitive.
\end{proof}

\begin{proof}[\textbf{Proof of \cref{thm:sc-span}}]
Let $\HH \subseteq \A$ be again strongly connected. Since $\cont(\HH)$ is invariant, \cref{lem:restrict-attract} applied to the attractor $A$ and $\cont(\HH)$ shows that $A\cap \cont(\HH)$ is an attractor for the flow restricted to $\cont(\HH)$ (nonempty because $\HH\subseteq A\cap\cont(\HH)$).
By \cref{lem:ict-span}, $\cont(\HH)$ is internally chain transitive, hence the restricted flow on $\cont(\HH)$ admits no proper attractor.
Therefore $A\cap\cont(\HH)=\cont(\HH)$, i.e.\ $\cont(\HH)\subseteq A$.

In the case the whole preference graph is strongly connected, applying this to $\HH = \A$ yields that $\cont(\HH)= \X$ admits no proper attractor.
\end{proof}

\section{Omitted proofs from Section \ref{sec:prefstable}}
\label{app:prefstable}

\subsection{Energy functions and asymptotic stability}
Before we start, we shall first define the notion of \emph{energy functions}, introduced in \cite{HMC21,MZ19,MHC24}, which allow us to convert local dissipation in \emph{score space} into asymptotic stability in \emph{strategy space}. They are defined as follows:

\begin{definition}\label{def:local-energy}
Let $S\subseteq\X$ be a nonempty closed set. Under \eqref{eq:FTRL}, a map $E:\YY\to[0,\infty)$ is a
\emph{local energy function} for $S$ if:
\begin{enumerate}
\item\label{it:LE-smooth} $E$ is Lipschitz and $C^1$ on $\YY$.
\item\label{it:LE-zero} For every sequence $(y^k)_{k\in\mathbb N}\subseteq\YY$,
\[
Q(y^k)\to S \quad\Longleftrightarrow\quad E(y^k)\to 0.
\]
\item\label{it:LE-dissip} There exists $\bar E>0$ such that for all $0<E^-<E^+\le \bar E$,
\[
\sup\bigl\{\dot E(y):\ E^-<E(y)<E^+\bigr\}<0,
\]
where $\dot E(y):= \langle \nabla E(y),\,v(Q(y))\rangle$ is the derivative of $E$ along \eqref{eq:FTRL}.
\end{enumerate}
\end{definition}

The link with the long-run behavior of the continuous-time dynamics, which we prove, is as follows:

\begin{theorem}\label{thm:energy-ct}
Let $S\subseteq \X$ be nonempty and closed, if $S$ admits a local energy function, then $S$ is asymptotically stable under \eqref{eq:FTRL}.
\end{theorem}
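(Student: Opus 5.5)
The plan is a Lyapunov argument in score space, using only the properties of \cref{def:local-energy} and the bounded speed of \eqref{eq:FTRL}. Fix $\bar E$ as in \cref{def:local-energy}(\labelcref{it:LE-dissip}). The key observation is that each sublevel set $\{y: E(y)<\varepsilon\}$ with $\varepsilon\le\bar E$ is forward-invariant. Suppose an orbit starts with $E(y(0))<\varepsilon$ but later reaches $E=\varepsilon$. Let $t_1$ be the first such time, and let $t_0<t_1$ be the last time before $t_1$ at which $E=\varepsilon/2$ (or $t_0=0$ if $E$ never drops to $\varepsilon/2$). On $(t_0,t_1)$ the value $E(y(t))$ lies in $(\varepsilon/2,\varepsilon)$, so by (\labelcref{it:LE-dissip}) we have $\dot E<0$ there. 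Hence $E$ is decreasing on that interval, which contradicts $E(y(t_1))=\varepsilon>E(y(t_0))$. The same argument at an arbitrary level shows that $t\mapsto E(y(t))$ can never increase while it is below $\bar E$.

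\emph{Stability.} Given a neighborhood $U$ of $S$, first find $\varepsilon\in(0,\bar E]$ such that $E(y)<\varepsilon$ implies $Q(y)\in U$. If no such $\varepsilon$ existed, there would be $y^k$ with $E(y^k)\to0$ and $Q(y^k)\notin U$. By (\labelcref{it:LE-zero}) this gives $Q(y^k)\to S$, which is impossible: $\X$ is compact and $U$ is a neighborhood of the closed set $S$, so $\dist(Q(y^k),S)$ stays bounded away from $0$. Next, find a neighborhood $V$ of $S$ such that every $y$ with $Q(y)\in V$ satisfies $E(y)<\varepsilon$. Take $V=\{x:\dist(x,S)<\eta\}$ and argue by contradiction again, using the converse direction of (\labelcref{it:LE-zero}): a sequence $y^k$ with $\dist(Q(y^k),S)\to0$ but $E(y^k)\ge\varepsilon$ is impossible. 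With these choices, an orbit with $x(0)\in V\cap\ImQ$ has $E(y(0))<\varepsilon$, hence $E(y(t))<\varepsilon$ for all $t\ge0$ by forward invariance, and therefore $x(t)\in U$ for all $t$.

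\emph{Attraction.} Use the same $V$ with $\varepsilon=\bar E$. Along such an orbit, $E(y(t))$ is nonincreasing and bounded below by $0$, so it converges to some limit $L\ge0$. Suppose $L>0$. Then for all $t$ the value $E(y(t))$ lies in $(L/2,\bar E)$. Strictly speaking we need $E(y(0))<\bar E$; this holds after shrinking $\varepsilon$ slightly, and we can apply (\labelcref{it:LE-dissip}) with $E^+$ slightly larger than $E(y(0))$. Property (\labelcref{it:LE-dissip}) then gives $\dot E\le -c<0$ uniformly in $t$, so $E(y(t))\to-\infty$, a contradiction. Hence $L=0$, and (\labelcref{it:LE-zero}) yields $\dist(x(t),S)\to0$.

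The main obstacle is translating between the dual quantity $E(y)$ and primal neighborhoods of $S$. The compactness arguments above rely on property (\labelcref{it:LE-zero}) holding for \emph{arbitrary} sequences in $\YY$, including unbounded ones. These are the whole content of steps one and two, so that is where care is required. Lipschitz continuity of $E$ and global well-posedness (\cref{prop:well-posed}) are used only to guarantee that $E(y(t))$ is $C^1$ in $t$ and defined for all times.
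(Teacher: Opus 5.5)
Your proof is correct and follows essentially the same route as the paper's. Both arguments convert between sublevel sets of $E$ and neighborhoods of $S$ via the sequential characterization in \cref{def:local-energy}, show that sublevel sets below $\bar E$ are forward-invariant using the uniform dissipation on bands $(E^-,E^+)$, and then obtain stability directly and attraction from a uniform negative drift on a band bounded away from $0$. The only differences are cosmetic: you use a last-crossing argument where the paper uses a first-exit time for invariance, and a monotone-limit contradiction where the paper uses hitting times $T_a$. Your explicit compactness justification for the first translation step, that $\X\setminus U$ stays at positive distance from $S$, is a detail the paper leaves implicit.
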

\begin{proof}
    Let $y(\cdot)$ be a solution of $\dot y=v(Q(y))$ and set $x(t)=Q(y(t))$ and $\mathcal E(t):=E(y(t))$.
By \cref{def:local-energy}(\labelcref{it:LE-smooth}) and the chain rule, $\mathcal E$ is $C^1$ and
$\dot{\mathcal E}(t)=\dot E(y(t))$.

First, for every neighborhood $U$ of $S$ in $\X$ there exists $\delta_U>0$ such that
\begin{equation}\label{eq:smallE-near}
E(y)<\delta_U \ \Longrightarrow\ Q(y)\in U.
\end{equation}
Otherwise, there exist $U$ and $y_k$ with $E(y_k)\to 0$ but $Q(y_k)\notin U$, contradicting
$Q(y_k)\to S$ from \cref{def:local-energy}(\labelcref{it:LE-zero}).
Second, for every $\delta>0$ there exists a neighborhood $V_\delta$ of $S$ such that
\begin{equation}\label{eq:near-smallE}
Q(y)\in V_\delta \ \Longrightarrow\ E(y)<\delta.
\end{equation}
Otherwise, there exist $\delta>0$ and $y_m$ with $\dist(Q(y_m),S)<1/m$ but $E(y_m)\ge\delta$, contradicting
\cref{def:local-energy}(\labelcref{it:LE-zero}).

By \cref{def:local-energy}(\labelcref{it:LE-dissip}), fix $\bar E>0$ such that for every $0<E^-<E^+\le\bar E$,
\[
\kappa(E^-,E^+)\coloneqq-\sup\{\dot E(y):\ E^-<E(y)<E^+\}\;>\;0.
\tag{$\ast$}
\]
Hence, whenever $\mathcal E(t)\in(E^-,E^+)$, we have $\dot{\mathcal E}(t)\le -\kappa(E^-,E^+)$.
In particular, if $0<a<b\le\bar E$ and $\mathcal E(t_0)\le a$, then $\mathcal E(t)\le a$ for all $t\ge t_0$:
otherwise, letting $\tau=\inf\{t\ge t_0:\mathcal E(t)>a\}$ gives $\mathcal E(\tau)=a$ by continuity, and then continuity yields
$\varepsilon>0$ with $\mathcal E(t)\in(a,b)$ for $t\in(\tau,\tau+\varepsilon)$, so $\dot{\mathcal E}(t)\le -\kappa(a,b)<0$ on
$(\tau,\tau+\varepsilon)$ by ($\ast$), contradicting $\mathcal E(t)>a$ for $t>\tau$ close enough.

For stability, let $U$ be any neighborhood of $S$. Choose $\delta_U\in(0,\bar E]$ so that \eqref{eq:smallE-near} holds,
pick $a\in(0,\delta_U)$, and apply \eqref{eq:near-smallE} with $\delta=a$ to obtain a neighborhood $V$ of $S$ such that
$x(0)=Q(y(0))\in V$ implies $\mathcal E(0)<a$. The previous paragraph (with $(a,b)=(a,\delta_U)$) yields
$\mathcal E(t)\le a<\delta_U$ for all $t\ge 0$, hence $x(t)\in U$ for all $t\ge 0$ by \eqref{eq:smallE-near}, hence $S$ is Lyapunov stable.

For attraction, fix $b\in(0,\bar E]$ and assume $\mathcal E(0)<b$. Let $a\in(0,b)$ be arbitrary.
If $\mathcal E(0)\le a$, set $T_a=0$. Otherwise, if $\mathcal E(0)>a$ and $\mathcal E(t)>a$ for all $t\ge 0$, then
$\mathcal E(t)\in(a,b)$ for all $t\ge 0$, so $\dot{\mathcal E}(t)\le -\kappa(a,b)$ for all $t\ge 0$ and thus
\[
\mathcal E(t)\le \mathcal E(0)-\kappa(a,b)\,t,
\]
which is impossible for $t>\mathcal E(0)/\kappa(a,b)$ because $\mathcal E(t)\ge 0$. Hence there exists $T_a$ with $\mathcal E(T_a)\le a$,
and the previous paragraph implies $\mathcal E(t)\le a$ for all $t\ge T_a$. Since this holds for every $a\in(0,b)$, we obtain
$\mathcal E(t)\to 0$ as $t\to\infty$.
If $\dist(x(t),S)\not\to 0$, there exist $\varepsilon>0$ and $t_k\to\infty$ with $\dist(x(t_k),S)\ge\varepsilon$ for all $k$.
But $\mathcal E(t_k)=E(y(t_k))\to 0$, so \cref{def:local-energy}(\labelcref{it:LE-zero}) implies $x(t_k)=Q(y(t_k))\to S$, a contradiction, therefore $\dist(x(t),S)\to 0$.
Finally, applying \eqref{eq:near-smallE} with $\delta=b$ yields a neighborhood $U_0$ of $S$ such that
$x(0)\in U_0\cap\ImQ$ implies $\mathcal E(0)<b$, the above then shows $\dist(x(t),S)\to 0$ for all such initial conditions,
so $S$ is attracting.
\end{proof}

\subsection{When preferences are enough}
We start by constructing an energy function for spans of \ac{club} subgames.

\begin{lemma}\label{lem:face-energy}
Let $\B\subseteq\A$ be a subgame.
If $\B$ is closed under better replies, then the Fenchel gap
\[
F_\B(y):=h^*(y)-h_\B^*(y),
\qquad
h_\B^*(y):=\max_{w\in\cont(\B)}\{\langle y,w\rangle-h(w)\},
\]
is a local energy function for $\cont(\B)$.
\end{lemma}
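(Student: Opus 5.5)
We verify the three requirements of \cref{def:local-energy} for $F_\B$ one after the other.

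\emph{Smoothness and nonnegativity.} Since $h_\B$ restricted to $\cont(\B)$ is strongly convex, the arguments of \cref{prop:choice-map} apply verbatim to $h_\B^*$. Both $h^*$ and $h_\B^*$ are therefore $C^1$, with gradients $Q$ and $Q_\B$ (the face-restricted choice map of \cref{rmrk:flow-link}). These gradients take values in compact sets, so both conjugates are Lipschitz, and hence so is $F_\B$. Nonnegativity is immediate, because the maximum over $\X$ dominates the maximum over $\cont(\B)\subseteq\X$. Moreover, $F_\B(y)=0$ exactly when the global maximizer $Q(y)$ lies in $\cont(\B)$.

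\emph{Item (\labelcref{it:LE-zero}).} I will use the decomposition of the gap as a Fenchel coupling,
\[
F_\B(y)=F_h(Q_\B(y),y)\ \ge\ \tfrac{K}{2}\|Q(y)-Q_\B(y)\|^2 ,
\]
which follows from $h_\B^*(y)=\langle y,Q_\B(y)\rangle-h(Q_\B(y))$ together with \cref{prop:fenchel}. This gives the direction $F_\B(y^k)\to0\Rightarrow Q(y^k)\to\cont(\B)$.

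For the converse, suppose $\dist(Q(y),\cont(\B))\to0$. Decomposability reduces the problem to each player separately: $F_\B=\sum_i\bigl(h_i^*(y_i)-h_{i,\B_i}^*(y_i)\bigr)$. For a single player, let $m=\sum_{\alpha_i\notin\B_i}x_{i\alpha}$ be the outside mass of $x_i=Q_i(y_i)$. Take the competitor $w_i\in\Delta(\B_i)$ obtained by moving this outside mass onto $\B_i$. Concavity of the objective $\langle y_i,\cdot\rangle-h_i$ then bounds the gap by the first-order term $\langle y_i-\nabla h_i,\,x_i-w_i\rangle$. By the KKT conditions in \cref{prop:choice-map}(\labelcref{it:KKT}), this equals $\sum_{\alpha\notin\B_i}x_{i\alpha}\bigl(\lambda_i-\nu_{i\alpha}-(\text{weighted }\theta_i'\text{ terms})\bigr)$.

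The delicate point is that $y$ is unbounded. I would instead use the direct estimate $h_i^*(y_i)-h^*_{i,\B_i}(y_i)\le \langle y_i,x_i-w_i\rangle - h_i(x_i)+h_i(w_i)$, then shift $y_i$ so that $\max_{\B_i}y_{i\beta}=0$. Outside actions carrying positive mass must have scores not much above those inside (again by KKT, with $\theta_i'$ bounded on the relevant range). Continuity of $h_i$ then controls the remaining terms uniformly in $m$, which yields $F_\B\to0$ as $m\to0$.

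\emph{Dissipation.} Differentiate using $\nabla h^*=Q$ and $\nabla h_\B^*=Q_\B$:
\[
\dot F_\B=\langle v(x),\,x-x_\B\rangle,\qquad x=Q(y),\ x_\B=Q_\B(y).
\]
Here clubness enters. Because $\B$ is a club subgame, for every $\alpha\in\B$ and every player $i$ with $\alpha_i'\notin\B_i$, one has $u_i(\alpha_i',\alpha_{-i})<u_i(\alpha)$. Note that clubness excludes even weak deviations, so the inequality is strict. Since $\cont(\B)$ is compact, continuity gives a margin $c>0$ such that, whenever $x_{-i}$ is close to $\cont(\B_{-i})$, every outside action is worse by at least $c$ than every action in $\B_i$.

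The main obstacle is that $x_\B$ is not simply the renormalization of $x$ onto $\B$. I would write $\langle v(x),x-x_\B\rangle=\sum_i\langle v_i(x),x_i-x_{\B,i}\rangle$ and split each term into an outside-mass part and an inside part. Steepness is not needed for this. The outside-mass part contributes at most $-c\,m_i$. For the inside part, I would compare the KKT conditions for $Q_i$ and for $Q_{\B,i}$ on $\B_i$: both take the form $\theta_i'(\cdot)=y_{i\alpha}-\lambda$, with different multipliers. This forces $x_{\B,i}$ to dominate $x_i$ coordinatewise on $\B_i$, with total excess exactly $m_i$, so that $x_i-x_{\B,i}=(\text{outside mass})-(\text{nonnegative redistribution on }\B_i)$. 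Every outside action is worse than every inside action by at least $c$, so $\langle v_i,x_i-x_{\B,i}\rangle\le -c\,m_i$.

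Summing over players gives $\dot F_\B\le -c\sum_i m_i<0$ on any shell $E^-<F_\B<E^+\le\bar E$. Such a shell keeps $x$ close to $\cont(\B)$ by item (\labelcref{it:LE-zero}), and it keeps $m$ bounded away from $0$ by the upper bound of the previous step. This completes the verification.
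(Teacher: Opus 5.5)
Your proposal is correct and follows the paper's proof in all essentials: the paper's argument consists of the gradient identity $\nabla F_\B(y)=Q(y)-Q_\B(y)$, the Fenchel-coupling lower bound $F_\B(y)\ge\frac{K}{2}\dist(Q(y),\cont(\B))^2$, the strict margin from closedness under better replies, the KKT multiplier comparison giving $x_{i\beta}\le (Q_\B(y))_{i\beta}$ on $\B_i$, and the resulting drift $\dot F_\B\le -c\sum_i m_i$, exactly as in your proposal. The one real divergence is the converse half of item (\labelcref{it:LE-zero}): the paper writes $F_\B(y)=\min_{w\in\cont(\B)}F_h(w,y)$ and concludes via \cref{prop:fenchel}(\labelcref{it:fenchel-zero}) plus compactness, whereas your explicit competitor gives a quantitative bound of $F_\B$ in terms of the outside mass. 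That competitor is cleanest if the outside mass is moved onto a $\B_i$-action of maximal score, so that for small $m_i$ the linear term is nonpositive and only uniform continuity of $h_i$ remains. Your bound also directly yields the lower bound on $\sum_i m_i$ along energy shells, which the paper instead obtains from $\dist(x,\cont(\B))\le\sqrt2\,M(y)$.
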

\begin{proof}
Write $\bar\B_i:=\A_i\setminus\B_i$ and $\F:=\cont(\B) =\prod_{i\in\N}\Delta(\B_i)$.
For each $i\in\N$, define the restricted choice map\footnote{By abuse, we denote here $\Delta(\B_i)$ as the subset of mixed strategies of $w_i\in\X_i$ with $\supp(w_i) \subseteq\B_i$.}
\[
Q_{\B_i}(y_i):=\arg\max_{x_i\in\Delta(\B_i)}\{\langle y_i,x_i\rangle-h_i(x_i)\},
\qquad
Q_\B(y):=(Q_{\B_i}(y_i))_{i\in\N}\in\F,
\]
and set, throughout the proof,
\[
x:=Q(y),\qquad x^\B:=Q_\B(y),\qquad \Delta_i:=x_i-x_i^\B,\qquad m_i:=\sum_{\alpha\in\bar\B_i}x_{i\alpha}.
\]
We divide the proof into 6 steps.

\emph{Step 1:}
By \cref{prop:choice-map}(\labelcref{it:Qgrad}), $h^*$ is $C^1$ and $\nabla h^*(y)=Q(y)$.
Applying the same result playerwise to the reduced simplex $\Delta(\B_i)$ shows that the restricted conjugate
\[
h_{\B_i}^*(y_i):=\max_{w_i\in\Delta(\B_i)}\{\langle y_i,w_i\rangle-h_i(w_i)\}
\]
is $C^1$ with $\nabla h_{\B_i}^*(y_i)=Q_{\B_i}(y_i)$.
Since $h=\sum_i h_i$ and $\F$ is a product face, we have $h_\B^*(y)=\sum_i h_{\B_i}^*(y_i)$ and thus
\[
\nabla h_\B^*(y)=Q_\B(y).
\]
Consequently,
\begin{equation}\label{eq:grad-gap-FB}
\nabla F_\B(y)=\nabla h^*(y)-\nabla h_\B^*(y)=Q(y)-Q_\B(y)=x-x^\B.
\end{equation}
Because $x,x^\B\in\X$ for all $y$, the gradient \eqref{eq:grad-gap-FB} is bounded on $\YY$, hence $F_\B$ is globally Lipschitz and $C^1$, verifying \cref{def:local-energy}(\labelcref{it:LE-smooth}).

\emph{Step 2:}
By definition,
\[
h^*(y)=\max_{x\in\X}\{\langle y,x\rangle-h(x)\},
\qquad
h_\B^*(y)=\max_{w\in\F}\{\langle y,w\rangle-h(w)\},
\]
so
\begin{equation}\label{eq:face-coupling}
F_\B(y)
=h^*(y)-h_\B^*(y)
=\min_{w\in\F}\bigl(h(w)+h^*(y)-\langle y,w\rangle\bigr)
=\min_{w\in\F}F_h(w,y),
\end{equation}
where $F_h(w,y)=\sum_{i\in\N}F_i(w_i,y_i)$ is the (playerwise) Fenchel coupling already defined in \cref{app:mirror}.
By Fenchel--Young, $F_i(\cdot,y_i)\ge 0$ for each $i$, hence $F_\B(y)\ge 0$ for all $y$.
Moreover, $F_\B(y)=0$ if and only if there exists $w\in\F$ with $F_h(w,y)=0$, then each $F_i(w_i,y_i)=0$, so $w_i=Q_i(y_i)$ for all $i$ by \cref{prop:fenchel}(\labelcref{it:fenchel-zero}), i.e.\ $Q(y)=w\in\F$.
Conversely, if $Q(y)\in\F$, choosing $w=Q(y)$ in \eqref{eq:face-coupling} gives $F_\B(y)\le F_h(Q(y),y)=0$, hence $F_\B(y)=0$.
Thus,
\begin{equation}\label{eq:FB-zero-set}
F_\B(y)=0\quad\Longleftrightarrow\quad Q(y)\in\F.
\end{equation}

\emph{Step 3:}
Let $K_{\min}:=\min_{i\in\N}K_i$.
Fix $y\in\YY$ and $x=Q(y)$.
Using \eqref{eq:face-coupling} and the quadratic lower bound of the Fenchel coupling from \cref{prop:fenchel}(\labelcref{it:fenchel-lb}),
\[
F_\B(y)
=\min_{w\in\F}\sum_{i\in\N}F_i(w_i,y_i)
\ge \min_{w\in\F}\sum_{i\in\N}\frac{K_i}{2}\,\|x_i-w_i\|_2^2
\ge \frac{K_{\min}}{2}\,\dist(x,\F)^2.
\]
Therefore, if $F_\B(y^k)\to 0$ along some sequence $(y^k)$, then $\dist(Q(y^k),\F)\to 0$, i.e.\ $Q(y^k)\to\F$ because $\F$ is closed.
Conversely, assume $\dist(Q(y^k),\F)\to 0$ and set $x^k:=Q(y^k)$.
Choose $w^k\in\F$ such that $\|x^k-w^k\|_2\le \dist(x^k,\F)+1/k$, hence $\|x^k-w^k\|_2\to 0$.
By compactness of $\F$, pass to a subsequence (not relabeled) with $w^k\to w\in\F$, then also $x^k\to w$.
For each $i$, $Q_i(y_i^k)=x_i^k\to w_i$, so \cref{prop:fenchel}(\labelcref{it:fenchel-zero}) yields $F_i(w_i,y_i^k)\to 0$.
Summing over $i$, we get $F_h(w,y^k)\to 0$ and then, by \eqref{eq:face-coupling},
\[
0\le F_\B(y^k)=\min_{u\in\F}F_h(u,y^k)\le F_h(w,y^k)\to 0.
\]
Hence $F_\B(y^k)\to 0$.
This proves \cref{def:local-energy}(\labelcref{it:LE-zero}).

\emph{Step 4:}
Fix $i\in\N$, $\beta_i\in\B_i$ and $\alpha_i\in\bar\B_i$.
For every pure opponents' profile $\gamma_{-i}\in\prod_{j\neq i}\B_j$, closure of $\B$ under better replies means that $(\alpha_i,\gamma_{-i})$ cannot be a weakly improving deviation from $(\beta_i,\gamma_{-i})$, hence
\[
u_i(\beta_i,\gamma_{-i})>u_i(\alpha_i,\gamma_{-i})
\qquad
\forall\,\gamma_{-i}\in\prod_{j\neq i}\B_j.
\]
By multilinearity, the same strict inequality holds for all mixed opponents' profiles $w_{-i}\in\prod_{j\neq i}\Delta(\B_j)$, i.e., for all $w \in \F$:
\[
v_{i\beta}(w)>v_{i\alpha}(w).
\]
Consider now the gap $v_{i\beta}(w)-v_{i\alpha}(w)$, this is continuous on the compact set $\F$ and strictly positive everywhere, so\footnote{With the convention that the minimum over an empty index set is $+\infty$.}
\[
\delta'_i:=\min_{\beta_i\in\B_i,\ \alpha_i\in\bar\B_i}\ \min_{w\in\F}\; v_{i\beta}(w)-v_{i\alpha}(w) >0
\]
Set $\delta_i:=\delta'_i/2$ for those $i$ with $\bar\B_i\neq\varnothing$.
By uniform continuity of $v$ on compact $\X$, there exists $\varepsilon>0$ such that whenever $\dist(x,\F)<\varepsilon$, we have,
for all $i\in\N$ with $\bar\B_i\neq\varnothing$, $\beta_i\in\B_i$, $\alpha_i\in\bar\B_i$,
\begin{equation}\label{eq:gap-near-F}
v_{i\beta}(x)-v_{i\alpha}(x)\ge \delta_i.
\end{equation}

\emph{Step 5:}
By \cref{prop:choice-map}(\labelcref{it:KKT}), there exist $\lambda_i\in\R$ and multipliers $\nu_i\in\R_+^{\A_i}$ such that for all $\gamma_i\in \A_i$,
\[
y_{i\gamma}=\theta_i'(x_{i\gamma})+\lambda_i-\nu_{i\gamma},
\qquad
x_{i\gamma}>0\ \Rightarrow\ \nu_{i\gamma}=0.
\]
Likewise, applying the same KKT statement to the restricted maximization over $\Delta(\B_i)$, there exist $\lambda_i^\B\in\R$ and $\nu_i^\B\in\R_+^{\B_i}$ such that for all $\beta_i\in \B_i$,
\[
y_{i\beta}=\theta_i'(x^\B_{i\beta})+\lambda_i^\B-\nu^\B_{i\beta},
\qquad
x^\B_{i\beta}>0\ \Rightarrow\ \nu^\B_{i\beta}=0,
\qquad
x^\B_{i\alpha}=0\ \ (\alpha_i\in \A_i\setminus \B_i).
\]
Define $\theta_i'(0+):=\lim_{p\downarrow 0}\theta_i'(p)\in[-\infty,\infty)$ and the truncated inverse
\[
\varphi_i(z):=
\begin{cases}
0, & z\le \theta_i'(0+),\\
(\theta_i')^{-1}(z), & z>\theta_i'(0+),
\end{cases}
\]
which is nondecreasing.
We claim that
\begin{equation}\label{eq:choice-repr}
x_{i\gamma}=\varphi_i(y_{i\gamma}-\lambda_i)\quad(\gamma_i\in \A_i),
\qquad
x^\B_{i\beta}=\varphi_i(y_{i\beta}-\lambda_i^\B)\quad(\beta_i\in \B_i).
\end{equation}
Indeed, if $x_{i\gamma}>0$, then $\nu_{i\gamma}=0$ and $y_{i\gamma}-\lambda_i=\theta_i'(x_{i\gamma})$, so $x_{i\gamma}=(\theta_i')^{-1}(y_{i\gamma}-\lambda_i)=\varphi_i(y_{i\gamma}-\lambda_i)$.
If $x_{i\gamma}=0$, then $y_{i\gamma}-\lambda_i=\theta_i'(0+)-\nu_{i\gamma}\le \theta_i'(0+)$ (interpreting $\theta_i'(x_{i\gamma})$ as $\theta_i'(0+)$ at $x_{i\gamma=0}$), hence $\varphi_i(y_{i\gamma}-\lambda_i)=0=x_{i\gamma}$.
The same reasoning applies to $x^\B$ on $\B_i$.
Summing \eqref{eq:choice-repr} over $\A_i$ and $\B_i$ respectively gives
\[
\sum_{\gamma\in \A_i}\varphi_i(y_{i\gamma}-\lambda_i)=1,
\qquad
\sum_{\beta\in \B_i}\varphi_i(y_{i\beta}-\lambda_i^\B)=1.
\]
Since $\varphi_i\ge 0$,
\[
\sum_{\beta\in \B_i}\varphi_i(y_{i\beta}-\lambda_i)\le \sum_{\gamma\in \A_i}\varphi_i(y_{i\gamma}-\lambda_i)=1.
\]
Because $\varphi_i$ is nondecreasing, the function $\tau\mapsto \sum_{\beta\in \B_i}\varphi_i(y_{i\beta}-\tau)$ is nonincreasing, thus the equality
$\sum_{\beta\in \B_i}\varphi_i(y_{i\beta}-\lambda_i^\B)=1$ and the inequality at $\tau=\lambda_i$ imply $\lambda_i^\B\le \lambda_i.$\footnote{Unless both sums happen to equal 1 at $\tau = \lambda_i$, in which case the maximizer of the unrestricted problem is in $\F$ anyway and one might take $\lambda_i=\lambda_i^\B$.}
Therefore, for every $\beta\in B$,
\[
y_{i\beta}-\lambda_i \le y_{i\beta}-\lambda_i^\B
\quad\Rightarrow\quad
x_{i\beta}=\varphi_i(y_{i\beta}-\lambda_i)\le \varphi_i(y_{i\beta}-\lambda_i^\B)=x^\B_{i\beta}.
\]
In summary,
\begin{equation}\label{eq:mass-transfer}
x_{i\beta}\le x^\B_{i\beta}\ \ (\beta\in \B_i),
\qquad
x^\B_{i\alpha}=0\ \ (\alpha\in \bar \B_i).
\end{equation}
Consequently, we have the sign and mass identities
\begin{equation}\label{eq:mass-signs}
\Delta_{i\alpha}\ge 0\ (\alpha\in \bar \B_i),\qquad
\Delta_{i\beta}\le 0\ (\beta\in \B_i),\qquad
m_i=\sum_{\alpha\in \bar \B_i}\Delta_{i\alpha}=-\sum_{\beta_i\in \B_i}\Delta_{i\beta}.
\end{equation}

\emph{Step 6:}
Along continuous-time \ac{FTRL}, $\dot y=v(Q(y))=v(x)$, so by \eqref{eq:grad-gap-FB},
\begin{equation}\label{eq:Fdot-basic}
\dot F_\B(y)=\langle \nabla F_\B(y),\dot y\rangle
=\langle x-x^\B, v(x)\rangle
=\sum_{i\in\N}\langle \Delta_i, v_i(x)\rangle.
\end{equation}
Assume now that $\dist(x,\F)<\varepsilon$, with $\varepsilon$ from \eqref{eq:gap-near-F}.
Fix $i$ with $\bar\B_i\neq\varnothing$, then \eqref{eq:gap-near-F} yields $v_{i\alpha}(x)\le  \min_{\beta_i\in\B_i}v_{i\beta}(x)-\delta_i$ for all $\alpha_i\in\bar\B_i$ and $v_{i\beta}(x)\ge \min_{\beta_i\in\B_i}v_{i\beta}(x)$ for all $\beta_i\in\B_i$.
Using \eqref{eq:mass-signs},
\begin{align}
\langle \Delta_i, v_i(x)\rangle
&=\sum_{\alpha_i\in\bar\B_i}v_{i\alpha}(x)\Delta_{i\alpha}+\sum_{\beta_i\in\B_i}v_{i\beta}(x)\Delta_{i\beta}
\notag\\
&\le (\min_{\beta_i\in\B_i}v_{i\beta}(x)-\delta_i)\sum_{\alpha_i\in\bar\B_i}\Delta_{i\alpha}+\min_{\beta_i\in\B_i}v_{i\beta}(x)\sum_{\beta_i\in\B_i}\Delta_{i\beta}
\notag\\
&=(\min_{\beta_i\in\B_i}v_{i\beta}(x)-\delta_i)m_i-\min_{\beta_i\in\B_i}v_{i\beta}(x)\, m_i
=-\delta_i m_i
=-\delta_i\sum_{\alpha_i\in\bar\B_i}x_{i\alpha}.
\end{align}
Summing over $i$ in \eqref{eq:Fdot-basic} gives, whenever $\dist(x,\F)<\varepsilon$,
\begin{equation}\label{eq:Fdot-drift}
\dot F_\B(y)
\le -\sum_{i\in\N:\,\bar\B_i\neq\varnothing}\delta_i\sum_{\alpha\in\bar\B_i}x_{i\alpha}.
\end{equation}

Now set $K_{\min}:=\min_i K_i$ and define
\[
\bar E:=\frac{K_{\min}}{2}\,\varepsilon^2.
\]
By Step 3, $F_\B(y)\le \bar E$ implies $\dist(Q(y),\F)\le\varepsilon$, so the drift bound \eqref{eq:Fdot-drift} holds on the entire sublevel set $\{F_\B\le \bar E\}$.
Fix $0<E^-<E^+\le \bar E$ and consider the domain $\mathcal{D}:=\{y\in\YY:\ E^-<F_\B(y)<E^+\}$.
Define the continuous outside-mass functional
\[
M(y):=\sum_{i\in\N}\sum_{\alpha_i\in\bar\B_i}Q_{i\alpha}(y_i)=\sum_{i\in\N} m_i.
\]
Let $\delta_{\min}:=\min_{i:\,\bar\B_i\neq\varnothing}\delta_i>0$ (if $\bar\B_i=\varnothing$ for all $i$, then $\F=\X$ and $F_\B= 0$ everywhere, so the dissipation requirement is vacuous).
Then \eqref{eq:Fdot-drift} gives, for all $y\in\mathcal{D}$,
\[
\dot F_\B(y)\le -\delta_{\min} M(y).
\]
It remains to show that $M$ is bounded away from $0$ on $\mathcal{D}$.
Suppose not, then there exists a sequence $y^k\in\mathcal{D}$ with $M(y^k)\to 0$.
Let $x^k:=Q(y^k)$ and define $w^k\in\F$ by renormalizing $x_i^k$ on $\B_i$:
if $m_i^k:=\sum_{\alpha_i\in\bar\B_i}x_{i\alpha}^k$ and $1-m_i^k>0$, set
\[
w_{i\beta}^k:=\frac{x_{i\beta}^k}{1-m_i^k}\ \ (\beta_i\in\B_i),
\qquad
w_{i\alpha}^k:=0\ \ (\alpha_i\in\bar\B_i).
\]
Then $w^k\in\F$ and, for each $i$,
\[
\|x_i^k-w_i^k\|_2^2
=\sum_{\beta_i\in\B_i}\left(x_{i\beta}^k-\frac{x_{i\beta}^k}{1-m_i^k}\right)^2+\sum_{\alpha_i\in\bar\B_i}(x_{i\alpha}^k)^2
=\left(\frac{m_i^k}{1-m_i^k}\right)^2\sum_{\beta_i\in\B_i}(x_{i\beta}^k)^2+\sum_{\alpha_i\in\bar\B_i}(x_{i\alpha}^k)^2.
\]
Using $\sum_{\beta_i\in\B_i}(x_{i\beta}^k)^2\le (\sum_{\beta_i\in\B_i}x_{i\beta}^k)^2=(1-m_i^k)^2$ and $\sum_{\alpha_i\in\bar\B_i}(x_{i\alpha}^k)^2\le (\sum_{\alpha_i\in\bar\B_i}x_{i\alpha}^k)^2=(m_i^k)^2$, we obtain
\[
\|x_i^k-w_i^k\|_2^2\le (m_i^k)^2+(m_i^k)^2=2(m_i^k)^2,
\qquad\text{so}\qquad
\|x_i^k-w_i^k\|_2\le \sqrt{2}\,m_i^k.
\]
Therefore,
\[
\dist(x^k,\F)\le \|x^k-w^k\|_2
\le \sqrt{\sum_{i\in\N}\|x_i^k-w_i^k\|_2^2}
\le \sqrt{2}\sum_{i\in\N}m_i^k
=\sqrt{2}\,M(y^k)\to 0.
\]
By Step 3, this implies $F_\B(y^k)\to 0$, contradicting $F_\B(y^k)>E^-$ for all $k$.
Hence $\inf_{y\in\mathcal{D}}M(y)=:\kappa(E^-,E^+)>0$, and thus
\[
\sup_{y\in\mathcal{D}}\dot F_\B(y)\le -\delta_{\min}\kappa(E^-,E^+)<0,
\]
which is exactly \cref{def:local-energy}(\labelcref{it:LE-dissip}).

All three items of \cref{def:local-energy} are verified, so $F_\B$ is a local energy function for $\F=\cont(\B)$.
\end{proof}

We may now harvest what we sowed:
\begin{proof}[\textbf{Proof of \cref{thm:subgame-stab}}]
Combine \cref{thm:energy-ct} with \cref{lem:face-energy}.
\end{proof}

\begin{proof}[\textbf{Proof of \cref{cor:subgame-equiv}}]
Combine \cref{thm:stable-sclub} with \cref{thm:subgame-stab}.
\end{proof}

\begin{proof}[\textbf{Proof of \cref{cor:subgame-attract}}]
We shall construct a basin of attraction on all of $\X$ by gluing together all the facewise basins of \eqref{eq:FTRL} restricted to each face.

Let $\F=\cont(\B)=\prod_{i\in\N}\Delta(\B_i)$. First, assume that $\B$ is closed under better replies. We show that $\F$ is an attractor for the strategy flow induced by \eqref{eq:SD}, i.e., $\F$ is invariant and asymptotically stable. Let $\F'=\cont(\B')=\prod_{i\in\N}\Delta(\B_i')$ be any face of $\X$ such that $\F\cap \F'\neq\varnothing$. Then $\B_i\cap\B_i'\neq\varnothing$ for all $i$, and
\begin{equation}\label{eq:face-intersect}
\F\cap \F' = \prod_{i\in\N}\Delta(\B_i\cap \B_i')\;=\;\cont(\B|_{\B'}),\qquad
\B|_{\B'}\coloneqq \prod_{i\in\N}(\B_i\cap \B_i').
\end{equation}
Indeed, $x\in\F\cap\F'$ if and only if for every $i$, $x_i\in\Delta(\B_i)\cap\Delta(\B_i')=\Delta(\B_i\cap\B_i')$. Moreover, $\B|_{\B'}$ is closed under better replies in the preference graph restricted to the vertex set $\B'$: if $\alpha\in\B|_{\B'}$ and $\alpha\to\beta$ with $\beta\in\B'$, then $\alpha\in\B$ and the arc $\alpha\to\beta$ is also an arc of the full preference graph, hence $\beta\in\B$ by closure of $\B$, so $\beta\in\B\cap\B'=\B|_{\B'}$.

Fix such a face $\F'=\cont(\B')$ and consider the facewise \ac{FTRL} dynamics {\renewcommand{\FTRLtag}{\B'}\eqref{eq:FTRL-B}} on $\F'$. Applying \cref{thm:subgame-stab} to the restricted game on action sets $\B'$ yields that
\[
\F\cap\F'=\cont(\B|_{\B'})
\]
is asymptotically stable under {\renewcommand{\FTRLtag}{\B'}\eqref{eq:FTRL-B}}. Under \cref{asm:well-posed}, \cref{lem:steep-image} gives $\ImQ_{\B'}=\F'{}^\circ$, and by \cref{rmrk:flow-link} the restriction of the strategy flow to $\F'$ coincides with the induced strategy orbit of {\renewcommand{\FTRLtag}{\B'}\eqref{eq:FTRL-B}} for initial conditions in $\F'{}^\circ$. Hence, for every $\varepsilon>0$ and every such $\F'$, there exists a neighborhood $U_{\F'}^\varepsilon\subseteq \F'$ of $\F\cap\F'$ in the relative topology of $\F'$ such that every strategy flow trajectory with initial condition $x_0\in U_{\F'}^\varepsilon\cap \F'{}^\circ$ satisfies
\begin{equation}\label{eq:face-stability}
\dist(\Theta_t(x_0),\,\F\cap\F')<\varepsilon\ \ \forall t\ge 0,
\qquad
\lim_{t\to+\infty}\dist(\Theta_t(x_0),\,\F\cap\F')=0.
\end{equation}
Let now $\mathfrak{F}$ be the (finite) set of faces $\F'$ of $\X$ such that $\F\cap\F'\neq\varnothing$. Fix $\varepsilon>0$. For each $\F'\in\mathfrak{F}$, let $U_{\F'}^\varepsilon$ be as in \eqref{eq:face-stability}. Fix $x\in\F$ and define $\mathfrak{F}(x)\coloneqq\{\F'\in\mathfrak{F}:\ x\in\F'\}$. Since $\X$ has finitely many faces, set
\[
r_x\coloneqq \frac12\min\{\dist(x,\F''):\ \F'' \text{ a face of }\X,\ x\notin \F''\}>0,
\]
so that $B_{r_x}(x)\cap\X$ meets only faces $\F'$ with $x\in\F'$ (equivalently, only faces in $\mathfrak{F}(x)$). Next, for each $\F'\in\mathfrak{F}(x)$, since $U_{\F'}^\varepsilon$ is a neighborhood of $\F\cap\F'$ in $\F'$ and $x\in\F\cap\F'$, there exists $\rho_{\F'}^x>0$ such that
\[
B_{\rho_{\F'}^x}(x)\cap \F' \subseteq U_{\F'}^\varepsilon.
\]
Define
\[
\rho_x\coloneqq \min\Bigl(r_x,\min_{\F'\in\mathfrak{F}(x)}\rho_{\F'}^x\Bigr)>0,
\qquad
W_x\coloneqq B_{\rho_x}(x)\cap\X,
\]
so $W_x$ is open in $\X$. Let $z\in W_x$ and let $\F_z$ denote the (unique) minimal face of $\X$ containing $z$. By construction of $r_x$, every face of $\X$ intersecting $W_x$ contains $x$, so in particular $x\in\F_z$, hence $\F_z\in\mathfrak{F}(x)\subseteq\mathfrak{F}$. Moreover,
\[
z\in B_{\rho_x}(x)\cap \F_z\subseteq U_{\F_z}^\varepsilon
\]
by construction of $\rho_x$. By definition of minimal face, $z\in \F_z^\circ$ and by \cref{prop:flow}, faces are invariant for the strategy flow, so $\Theta_t(z)\in\F_z$ for all $t\ge 0$. Applying \eqref{eq:face-stability} with $\F'=\F_z$ yields
\[
\dist(\Theta_t(z),\F\cap\F_z)<\varepsilon\ \ \forall t\ge 0,
\qquad
\dist(\Theta_t(z),\F\cap\F_z)\to 0.
\]
Since $\F\cap\F_z\subseteq \F$, we have $\dist(\Theta_t(z),\F)\le \dist(\Theta_t(z),\F\cap\F_z)$, hence
\[
\dist(\Theta_t(z),\F)<\varepsilon\ \ \forall t\ge 0,
\qquad
\dist(\Theta_t(z),\F)\to 0.
\]
Because $\F$ is compact, the open cover $\{W_x\}_{x\in\F}$ admits a finite subcover $\F\subseteq\bigcup_{k=1}^m W_{x^k}$. Set
\[
W^\varepsilon\coloneqq \bigcup_{k=1}^m W_{x^k},
\]
an open neighborhood of $\F$ in $\X$. Then for all $z\in W^\varepsilon$,
\[
\dist(\Theta_t(z),\F)<\varepsilon\ \ \forall t\ge 0,
\qquad
\dist(\Theta_t(z),\F)\to 0.
\]
Since $\varepsilon>0$ was arbitrary, this proves that $\F$ is stable and attracting for the strategy flow. Moreover, $\F$ is invariant by \cref{prop:flow}. Hence $\F$ is an invariant asymptotically stable set, i.e., an attractor.

Conversely, if $\F$ is an attractor for the strategy flow, then in particular it is attracting. By \cref{prop:attract-club}, $\F\cap\A$ is closed under better replies. Since $\F=\cont(\B)$, we have $\F\cap\A=\B$, so $\B$ is closed under better replies.
\end{proof}

\begin{proof}[\textbf{Proof of \cref{thm:wa-attract}}]
First, we show strict \aclp{NE} are attractors. Let $\alpha^\ast\in\A$ be such an equilibrium and identify it with the corresponding vertex of $\X$.
Then for every player $i$ and every $\alpha_i\neq \alpha_i^\ast$ we have
$u_i(\alpha_i,\alpha_{-i}^\ast)<u_i(\alpha^\ast)$, hence there is no arc $\alpha^\ast\to\beta$ in the preference graph with $\beta\neq \alpha^\ast$.
Therefore $\{\alpha^\ast\}$ is a \ac{club} subgame. By \cref{cor:subgame-attract}, $\cont(\{\alpha^\ast\})=\{\alpha^\ast\}$
is an attractor for the strategy flow induced by \eqref{eq:SD}. Since it is a singleton, it contains no proper (nonempty)
subset, hence it is minimal among attractors.

Conversely, let $M\subseteq\X$ be a minimal attractor for the strategy flow. In particular $M$ is asymptotically stable.
By \cref{cor:stable-club}, there exists a nonempty set of pure profiles $\HH\subseteq M\cap\A$ which is closed under better replies.
Fix $\alpha^0\in\HH$. Since the game is weakly acyclic, there exists a finite better-reply path
\[
\alpha^0\to \alpha^1\to \cdots \to \alpha^k
\]
ending at a pure \acl{NE} $\alpha^k$. Because $\HH$ is closed under better replies and $\alpha^0\in\HH$,
an immediate induction gives $\alpha^j\in\HH$ for all $j$, hence $\alpha^k\in\HH\subseteq M$. Moreover, since the game has no ties, $\alpha^k$ must be a strict \acl{NE}. Now, by the first implication, strictness implies that $\{\alpha^k\}$ is an attractor. Since $\{\alpha^k\}\subseteq M$ and $M$ is minimal among
attractors, we conclude $M=\{\alpha^k\}$, i.e.\ $M$ is a strict \acl{NE}.
\end{proof}

\begin{proof}[\textbf{Proof of \cref{rmrk:potential}.}]
Note that, by definition, a finite game is an \emph{ordinal potential game} if there exists $\Phi:\A\to\R$ such that for every player $i$ and every $i$-comparable
$\alpha,\alpha'\in\A$,
\[
u_i(\alpha')>u_i(\alpha)\quad\Longleftrightarrow\quad \Phi(\alpha')>\Phi(\alpha).
\]

Now, if $\Phi$ is an ordinal potential, every arc $\alpha\to\alpha'$ yields $u_i(\alpha')>u_i(\alpha)$ (no ties), hence $\Phi(\alpha')>\Phi(\alpha)$.
Thus $\Phi$ strictly increases along directed edges, so no directed cycle exists. Conversely, if the preference graph is acyclic, fix a topological ordering $\alpha^1,\dots,\alpha^{|\A|}$ and set $\Phi(\alpha^k)=k$.
Then $\alpha\to\alpha'$ implies $\Phi(\alpha')>\Phi(\alpha)$. For $i$-comparable $\alpha,\alpha'$, the unique arc between them points toward the larger payoff.
\end{proof}

\subsection{When preference are \emph{not} enough}

We start with the counterexample.
\begin{proof}[\textbf{Proof of \cref{prop:span-counter}}]
Players have two actions each:
\[
\A_1=\{\mathrm F,\mathrm B\}\quad(\text{Front/Back}),\qquad
\A_2=\{\mathrm L,\mathrm R\}\quad(\text{Left/Right}),\qquad
\A_3=\{\mathrm T,\mathrm B\}\quad(\text{Top/Bottom}).
\]
We abbreviate a pure profile $\alpha=(\alpha_1,\alpha_2,\alpha_3)\in\A$ by the three-letter word $\alpha_1\alpha_2\alpha_3$, e.g. $\mathrm{FLT}=(\mathrm F,\mathrm L,\mathrm T).$ Payoffs $u(\alpha)=(u_1,u_2,u_3)$ are:

\begin{table}[htbp]
\centering
\begin{minipage}{0.45\linewidth}
\centering
\textbf{Front:}
\begin{tabular}{c|cc}
   & \(\mathrm L\) & \(\mathrm R\) \\ \hline
\(\mathrm T\) & \((1,0,0)\) & \((0,1,0)\) \\
\(\mathrm B\) & \((2,0,10)\) & \((0,2,0)\)
\end{tabular}
\end{minipage}\hspace{1.5em}
\begin{minipage}{0.45\linewidth}
\centering
\textbf{Back:}
\begin{tabular}{c|cc}
   & \(\mathrm L\) & \(\mathrm R\) \\ \hline
\(\mathrm T\) & \((0,1,1)\) & \((1,0,1)\) \\
\(\mathrm B\) & \((0,0,0)\) & \((1,1,0)\)
\end{tabular}
\end{minipage}
\label{tab:counter}
\end{table}
Let $\mathcal H:=\A\setminus\{\mathrm{BLB}\}$ and $\mathcal S:=\cont(\HH)\subseteq\X$. In the preference graph of this game, $\mathrm{BLB}$ has no incoming arc and $\mathcal H$ is the unique \ac{club} set (see \cref{fig:counter-cube}). In the cube $\X=\Delta(\A_1)\times\Delta(\A_2)\times\Delta(\A_3)$, write the coordinates
\[
p:=x_{1\mathrm F}\in[0,1],\qquad q:=x_{2\mathrm R}\in[0,1],\qquad r:=x_{3\mathrm T}\in[0,1].
\]
Then $\mathrm{BLB}$ corresponds to $(p,q,r)=(0,0,0)$ and one checks
\begin{equation}\label{eq:counter-span}
\mathcal S=\{p=1\}\ \cup\ \{q=1\}\ \cup\ \{r=1\},
\end{equation}
i.e.\ the union of the three square faces not containing $\mathrm{BLB}$. Let also $\F_{\mathrm T}:=\{r=1\}$, i.e.\ player $3$ plays $\mathrm T$ purely. Restricting to $\F_{\mathrm T}$ gives the $2$-player game between players $(1,2)$:
\[
\begin{array}{c|cc}
        & \mathrm L & \mathrm R\\ \hline
\mathrm B& (0,1) & (1,0)\\
\mathrm F& (1,0) & (0,1)
\end{array}
\]
hence the unique mixed equilibrium is
\[
p^\ast=\tfrac12,\qquad q^\ast=\tfrac12.
\]
Define
\[
x^\ast:=(p^\ast,q^\ast,1)=\Bigl(\tfrac12,\tfrac12,1\Bigr)\in \F_{\mathrm T}.
\]
At $x^\ast$, player $3$ has a strict profitable deviation to $\mathrm B$:
\[
v_{3\mathrm T}(x^\ast)=1-p^\ast=\tfrac12,
\qquad
v_{3\mathrm B}(x^\ast)=10\,p^\ast(1-q^\ast)=10\cdot\tfrac14=\tfrac52,
\]
so $v_{3\mathrm B}(x^\ast)-v_{3\mathrm T}(x^\ast)=2>0$.

\begin{figure}[tbp]
\centering
\begin{tikzpicture}[
  scale=3,
  line join=round,
  line cap=round,
  edge/.style={line width=0.7pt,draw=black!50},
  sinkspan/.style={draw=red!70!black, line width=1.6pt, opacity=0.80},
  midorient/.style={-{Stealth[length=2.3mm]}, line width=0.9pt, draw=black},
  sinkorient/.style={-{Stealth[length=2.3mm]}, line width=0.9pt, draw=red!70!black},
  sinkorientbi/.style={{Stealth[length=2.3mm]}-{Stealth[length=2.3mm]}, line width=0.9pt, draw=red!70!black}
]

\coordinate (BLB) at (0,0);
\coordinate (BRB) at (1,0);
\coordinate (BRT) at (1,1);
\coordinate (BLT) at (0,1);

\coordinate (FLB) at (0.40,0.30);
\coordinate (FRB) at (1.40,0.30);
\coordinate (FRT) at (1.40,1.30);
\coordinate (FLT) at (0.40,1.30);

\fill[red!25,opacity=0.25] (FLB)--(FRB)--(FRT)--(FLT)--cycle;
\fill[red!25,opacity=0.20] (BRB)--(FRB)--(FRT)--(BRT)--cycle;
\fill[red!25,opacity=0.15] (BLT)--(BRT)--(FRT)--(FLT)--cycle;

\draw[edge] (FLB)--(FRB)--(FRT)--(FLT)--cycle;
\draw[edge] (BLB)--(BRB)--(BRT)--(BLT)--cycle;
\draw[edge] (FLB)--(BLB);
\draw[edge] (FRB)--(BRB);
\draw[edge] (FRT)--(BRT);
\draw[edge] (FLT)--(BLT);

\draw[sinkspan] (FLB)--(FRB)--(FRT)--(FLT)--cycle;
\draw[sinkspan] (FRB)--(BRB)--(BRT)--(FRT)--cycle;
\draw[sinkspan] (FLT)--(FRT)--(BRT)--(BLT)--cycle;

\draw[midorient]  ($(BLB)!0.48!(FLB)$) -- ($(BLB)!0.52!(FLB)$);
\draw[sinkorient] ($(BLT)!0.48!(FLT)$) -- ($(BLT)!0.52!(FLT)$);
\draw[sinkorient] ($(FRB)!0.48!(BRB)$) -- ($(FRB)!0.52!(BRB)$);
\draw[sinkorient] ($(FRT)!0.48!(BRT)$) -- ($(FRT)!0.52!(BRT)$);

\draw[midorient]  ($(BLB)!0.48!(BRB)$) -- ($(BLB)!0.52!(BRB)$);
\draw[sinkorient] ($(FLB)!0.48!(FRB)$) -- ($(FLB)!0.52!(FRB)$);
\draw[sinkorient] ($(BRT)!0.48!(BLT)$) -- ($(BRT)!0.52!(BLT)$);
\draw[sinkorient] ($(FLT)!0.48!(FRT)$) -- ($(FLT)!0.52!(FRT)$);

\draw[midorient]  ($(BLB)!0.48!(BLT)$) -- ($(BLB)!0.52!(BLT)$);
\draw[sinkorient] ($(BRB)!0.48!(BRT)$) -- ($(BRB)!0.52!(BRT)$);
\draw[sinkorient] ($(FLT)!0.48!(FLB)$) -- ($(FLT)!0.52!(FLB)$);

\foreach \V in {FLB,FRB,FRT,FLT,BRB,BRT,BLT}{
  \fill[red!70!black] (\V) circle (0.9pt);
}
\fill[blue!70!black] (BLB) circle (1.1pt);

\coordinate (xstar) at ($(BLT)!0.5!(FRT)$);
\fill[magenta!80!black] (xstar) circle (1pt);

\end{tikzpicture}
\caption{Cube $\X$ with coordinates $(p,q,r)=(x_{1\mathrm F},x_{2\mathrm R},x_{3\mathrm T})$.
The red vertices represent the unique \ac{club} set, and the red shaded faces are its span. The blue vertex is a \emph{source vertex}, meaning it has no incoming edge. In particular it is a repellor of the dynamics. The purple point $x^\ast=(\tfrac12,\tfrac12,1)$ is a \acl{NE} of the game restricted to the face $\F_{\mathrm T}$. Nonetheless, Player 3 has a deviation to the bottom in the neighborhood of that point, hence the dynamics are locally repelled from $\F_{\mathrm T}$ near $x^*$. This is what shall drive instability.}
\label{fig:counter-cube}
\end{figure}

We now consider the dynamics \eqref{eq:FTRL}. Write
\[
a:=p-\tfrac12,\qquad b:=q-\tfrac12,\qquad s:=1-r,
\]
and let
\[
\widehat x:=(x_1,x_2,\mathrm T)=(p,q,1)\in\F_{\mathrm T}
\]
be the projection of $x=(p,q,r)$ onto the top face obtained by replacing player $3$'s mixed action by the pure action $\mathrm T$. The point $x^*=(\tfrac12,\tfrac12,1)$ is \emph{variationally stable }for the game restricted to $\F_{\mathrm T}$. Indeed, on the top face,
\[
v_{1\mathrm F}(\widehat x)-v_{1\mathrm B}(\widehat x)=-2b,
\qquad
v_{2\mathrm R}(\widehat x)-v_{2\mathrm L}(\widehat x)=2a,
\]
so
\begin{align}\label{eq:counter-vs}
&\left\langle v_1(\widehat x),x_1-x_1^*\right\rangle
+\left\langle v_2(\widehat x),x_2-x_2^*\right\rangle
\notag\\
&\qquad
=a\bigl(v_{1\mathrm F}(\widehat x)-v_{1\mathrm B}(\widehat x)\bigr)
+b\bigl(v_{2\mathrm R}(\widehat x)-v_{2\mathrm L}(\widehat x)\bigr)
=-2ab+2ab=0.
\end{align}
This cancellation is what will keep the first two players close to $(1/2,1/2)$ while player $3$ moves away from the top face. To quantify this, define the restricted Fenchel energy
\[
\mathcal E(y)
:=F_{1}(x^*_1,y_1) + F_{2}(x^*_2,y_2).
\]
Since $\nabla h_i^*(y_i)=Q_i(y_i)=x_i$, differentiation along \eqref{eq:FTRL} gives
\begin{equation}\label{eq:counter-Edot-gen}
\dot{\mathcal E}
=\langle v_1(x),x_1-x_1^*\rangle + \langle v_2(x),x_2-x_2^*\rangle.
\end{equation}
A direct calculation from the payoff table yields
\[
v_{1\mathrm F}(x)-v_{1\mathrm B}(x)
=\frac{s}{2}-(2+s)b,
\]
\[
v_{2\mathrm R}(x)-v_{2\mathrm L}(x)
=\frac{3s}{2}+(2-s)a,
\]
and
\begin{equation}\label{eq:counter-gap}
v_{3\mathrm B}(x)-v_{3\mathrm T}(x)
=2+6a-5b-10ab.
\end{equation}
Substituting the first two identities in \eqref{eq:counter-Edot-gen}, we obtain
\begin{equation}\label{eq:counter-Edot}
\dot{\mathcal E}
=a\left(\frac{s}{2}-(2+s)b\right)
+b\left(\frac{3s}{2}+(2-s)a\right)
=s\left(\frac a2+\frac{3b}{2}-2ab\right).
\end{equation}
This \enquote{tangential} energy is thus exactly conserved when $s=0$, in agreement with \eqref{eq:counter-vs}; away from the top face, its drift is proportional to the  $s$. Let
\[
K:=\min\{K_1,K_2\}>0,
\]
where $K_i$ is the strong convexity constant of $h_i$. \Cref{prop:fenchel}(\labelcref{it:fenchel-lb}) gives
\[
F_{i}(x_i^*,y_i)
\ge \frac{K_i}{2}\|x_i-x_i^*\|_2^2.
\]
Since $\|x_1-x_1^*\|_2^2=2a^2$ and $\|x_2-x_2^*\|_2^2=2b^2$, it follows that
\begin{equation}\label{eq:counter-coercive}
\mathcal E(y)\ge K(a^2+b^2).
\end{equation}
Moreover, $|a|,|b|\le1/2$, so
\[
\left|\frac a2+\frac{3b}{2}-2ab\right|
\le \frac32(|a|+|b|)
\le \frac{3}{\sqrt2}\sqrt{a^2+b^2}.
\]
Combining this estimate with \eqref{eq:counter-Edot} and \eqref{eq:counter-coercive}, we get
\[
\dot{\mathcal E}
\le C s\sqrt{\mathcal E},
\qquad
C:=\frac{3}{\sqrt{2K}}.
\]
Applying this inequality to $\sqrt{\mathcal E+\delta}$ and then letting $\delta\downarrow0$ gives, for every interval on which the trajectory is defined,
\begin{equation}\label{eq:counter-Ebound}
\sqrt{\mathcal E(y(t))}
\le \sqrt{\mathcal E(y(0))}
+\frac C2\int_0^t s(u)\,du.
\end{equation}
\begin{figure*}[t]
\centering
\includegraphics[height=25ex]{Figures/Counter1-flow.pdf}
\quad
\includegraphics[height=25ex]{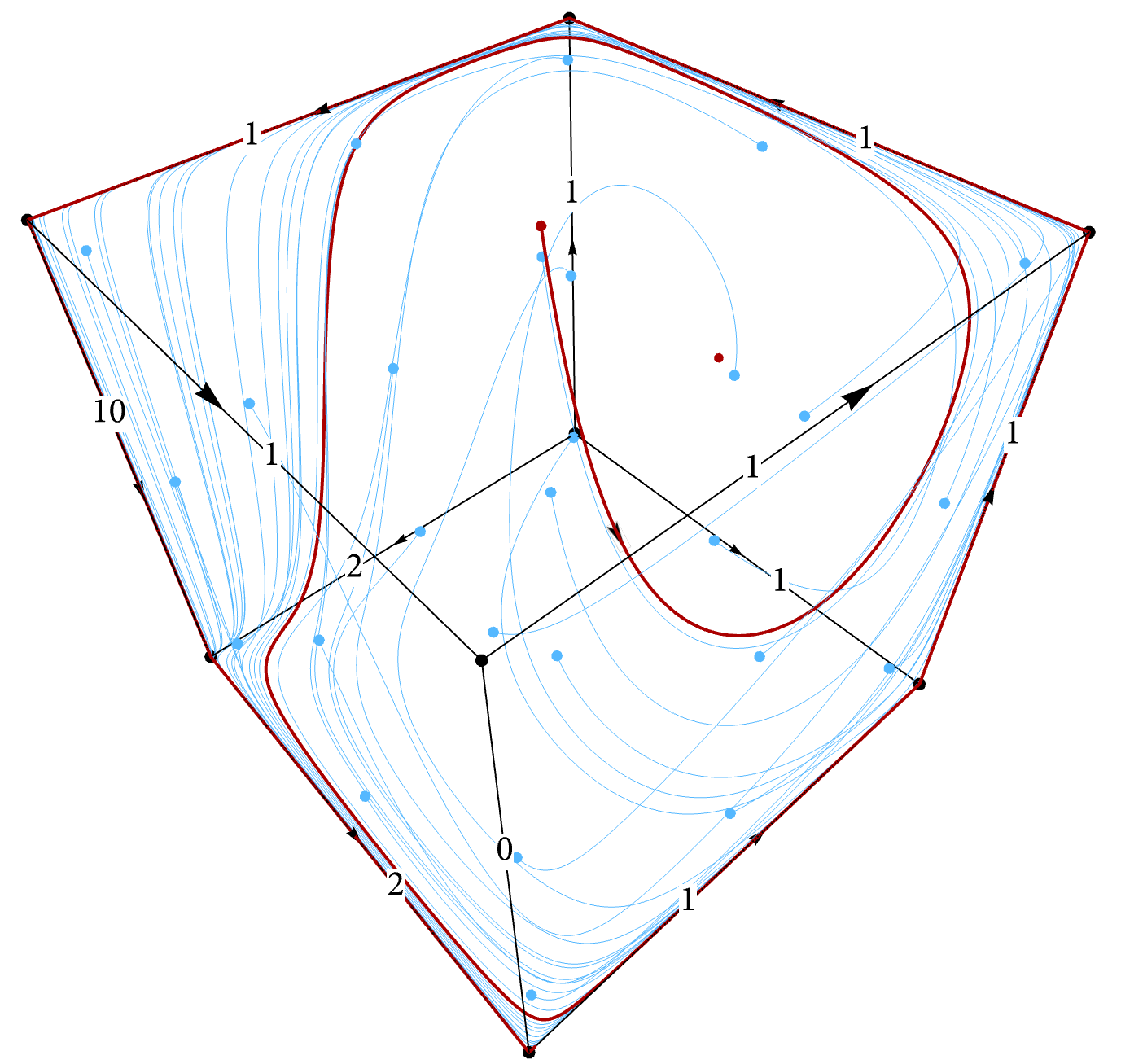}
\quad
\includegraphics[height=25ex]{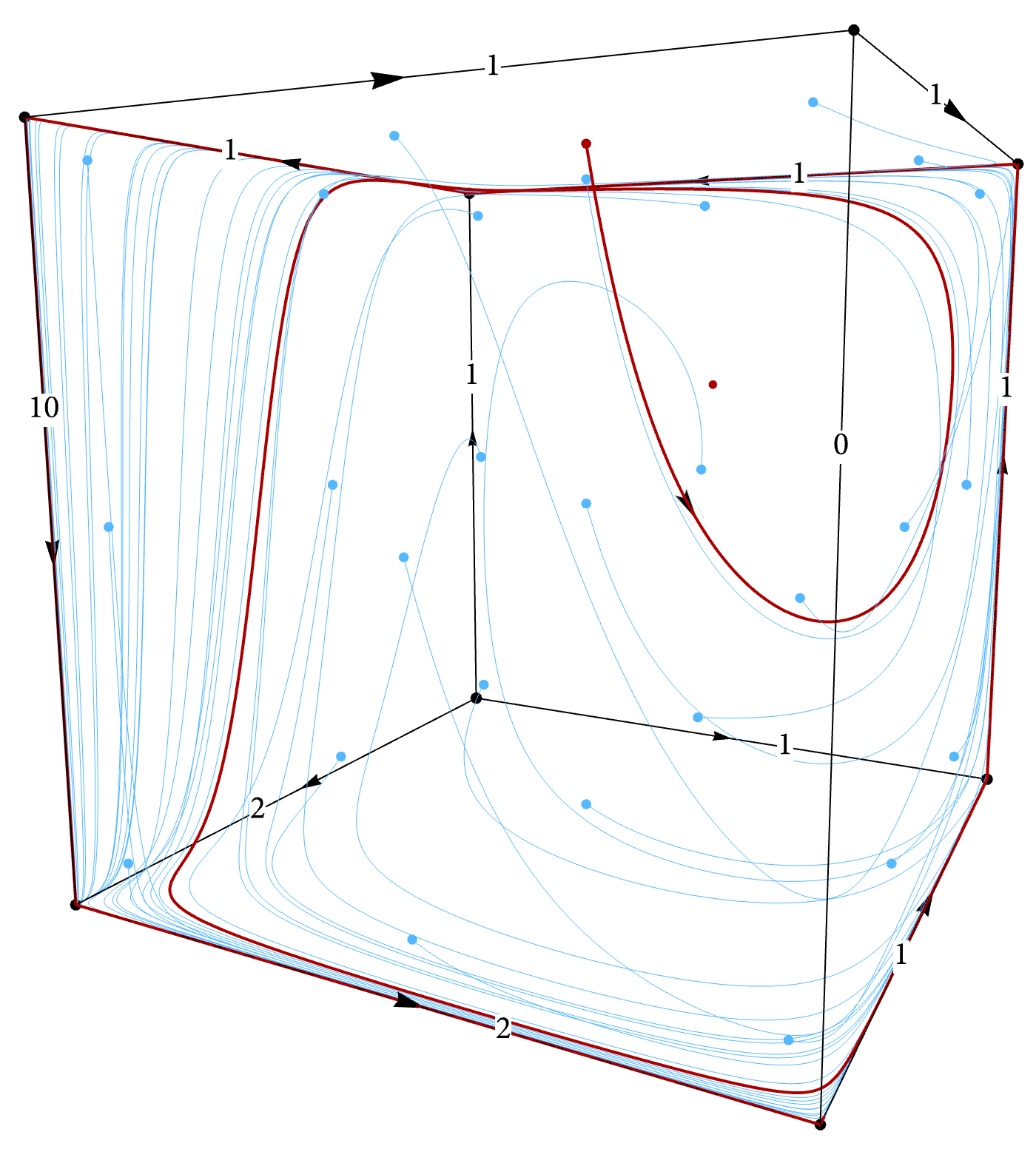}
\caption{Three perspectives of \eqref{eq:FTRL} under entropic regularization for this counterexample game.}
\label{fig:counter-flow}
\end{figure*}

We next estimate the growth of $s=x_{3\mathrm B}$. Fix
\[
\rho:=\frac1{20},
\qquad
e_0:=K\rho^2,
\qquad
d:=2-11\rho-10\rho^2>0.
\]
By \eqref{eq:counter-coercive},
\begin{equation}\label{eq:counter-smallE}
\mathcal E(y)<e_0
\quad\Longrightarrow\quad
|a|<\rho\ \text{ and }\ |b|<\rho.
\end{equation}
Hence, whenever $\mathcal E(y)<e_0$, \eqref{eq:counter-gap} gives
\[
v_{3\mathrm B}(x)-v_{3\mathrm T}(x)
\ge 2-6|a|-5|b|-10|ab|
\ge d.
\]
For $u\in(0,1)$, set
\[
\sigma(u):=\frac{1}{\theta_3''(u)+\theta_3''(1-u)}>0.
\]
As long as $0<s<1$, the support of player $3$ is fixed, and the strategy equation \eqref{eq:SD} reduces, in $\Delta(\{\mathrm T,\mathrm B\})$, to
\[
\dot s
=\sigma(s)\bigl(v_{3\mathrm B}(x)-v_{3\mathrm T}(x)\bigr).
\]
Indeed, writing $m(u)=1/\theta_3''(u)$, the coefficient of the payoff difference in \eqref{eq:SD} is
\[
\frac{m(s)m(1-s)}{m(s)+m(1-s)}
=\frac{1}{\theta_3''(s)+\theta_3''(1-s)}.
\]
Therefore, on the region $\{\mathcal E<e_0\}$,
\begin{equation}\label{eq:counter-sdot-lb}
\dot s\ge d\,\sigma(s)>0.
\end{equation}
In particular, $s$ is strictly increasing there. Moreover the quantity controlling the accumulated drift is
\[
\Gamma(\eta)
:=\int_0^\eta\frac{u}{\sigma(u)}\,du
=\int_0^\eta
u\bigl(\theta_3''(u)+\theta_3''(1-u)\bigr)\,du.
\]
We claim that
\begin{equation}\label{eq:counter-Gamma0}
\Gamma(\eta)\longrightarrow0
\qquad\text{as }\eta\downarrow0.
\end{equation}
To see this, convexity gives, for $u\in(0,1/2)$,
\[
\theta_3(u)-\theta_3(0)
\le u\theta_3'(u)
\le \theta_3(2u)-\theta_3(u).
\]
Since $\theta_3$ is continuous at $0$, both outer terms converge to $0$, so $u\theta_3'(u)\to0$. Integration by parts then yields
\[
\int_0^\eta u\theta_3''(u)\,du
=\eta\theta_3'(\eta)-\theta_3(\eta)+\theta_3(0)
\longrightarrow0.
\]
Also, $\theta_3''$ is bounded near $1$, so
\[
\int_0^\eta u\theta_3''(1-u)\,du=O(\eta^2).
\]
This proves \eqref{eq:counter-Gamma0}. Now, choose
\begin{equation}\label{eq:counter-eta}
0<\eta<\frac12-\rho
\end{equation}
small enough that
\begin{equation}\label{eq:counter-Gamma-ub}
\frac{C}{2d}\Gamma(\eta)<\frac{\sqrt{e_0}}{2},
\end{equation}
and define the neighborhood
\[
U:=\left\{x\in\X:
\dist(x,\mathcal S)<\frac{\sqrt2}{2}\eta
\right\}.
\]
Let $V$ be an arbitrary neighborhood of $\mathcal S$. For $\varepsilon>0$, consider
\[
x^\varepsilon(0)
:=\left(\frac12,\frac12,1-\varepsilon\right).
\]
As $\varepsilon\downarrow0$, this point converges to $x^*\in\mathcal S$, so
$x^\varepsilon(0)\in V$ for all sufficiently small $\varepsilon$. It is also an admissible initialization for \eqref{eq:FTRL}. Indeed, for a binary mixed action $(w,1-w)$ with $w\in(0,1)$, the score pair
\[
\bigl(\theta_i'(w),\theta_i'(1-w)\bigr)
\]
has $(w,1-w)$ as its unique regularized best response: the derivative of the one-dimensional maximization problem vanishes at $w$, and strict convexity gives uniqueness. Thus we may realize $x^\varepsilon(0)$ by taking
\[
y_{1\mathrm F}(0)=y_{1\mathrm B}(0)=\theta_1'(1/2),
\qquad
y_{2\mathrm L}(0)=y_{2\mathrm R}(0)=\theta_2'(1/2),
\]
and
\[
y_{3\mathrm T}(0)=\theta_3'(1-\varepsilon),
\qquad
y_{3\mathrm B}(0)=\theta_3'(\varepsilon)
\]
for player $3$. With this choice,
\begin{equation}\label{eq:counter-E0}
\mathcal E(y(0))=0.
\end{equation}
Fix
\[
0<\varepsilon<\frac\eta2
\]
small enough that $x^\varepsilon(0)\in V$, and let $x^\varepsilon(t)$ be the corresponding trajectory. Define
\[
\tau:=\inf\left\{t\ge0:
\mathcal E(y(t))=e_0\ \text{or}\ s(t)=\eta
\right\}.
\]
For $0\le t<\tau$, we have $\mathcal E(y(t))<e_0$ and $s(t)<\eta$. Hence \eqref{eq:counter-sdot-lb} applies, and, because $s$ is increasing,
\begin{align}
\int_0^t s(u)\,du
&\le \frac1d\int_0^t\frac{s(u)}{\sigma(s(u))}\dot s(u)\,du\notag\\
&=\frac1d\int_\varepsilon^{s(t)}\frac{v}{\sigma(v)}\,dv
\le\frac1d\Gamma(\eta).
\label{eq:counter-s-int}
\end{align}
Combining \eqref{eq:counter-Ebound},
\eqref{eq:counter-E0},
\eqref{eq:counter-s-int}, and
\eqref{eq:counter-Gamma-ub}, we obtain
\begin{equation}\label{eq:counter-bootstrap}
\sqrt{\mathcal E(y(t))}
<\frac{\sqrt{e_0}}2
\qquad\forall t<\tau.
\end{equation}
By continuity, the same strict bound holds at $t=\tau$ whenever $\tau<\infty$, so the event $\mathcal E=e_0$ cannot occur at time $\tau$. Finally, $\tau$ is finite. Indeed, $\sigma$ is continuous and strictly positive on $[\varepsilon,\eta]$, so
\[
\underline\sigma_{\varepsilon,\eta}
:=\min_{u\in[\varepsilon,\eta]}\sigma(u)>0.
\]
If $\tau=\infty$, then $\mathcal E(y(t))<e_0$ and $s(t)<\eta$ for every $t\ge0$, and \eqref{eq:counter-sdot-lb} would give
\[
\dot s(t)\ge d\,\underline\sigma_{\varepsilon,\eta}>0
\qquad\forall t\ge0,
\]
forcing $s$ to reach $\eta$ in finite time, a contradiction. Hence $\tau<\infty$; since the energy event is impossible, necessarily
\[
s(t_\eta)=\eta
\]
for $t_\eta:=\tau$, and \eqref{eq:counter-bootstrap} holds up to $t_\eta$. At this time, \eqref{eq:counter-smallE} gives
\[
p(t_\eta)<\frac12+\rho,
\qquad
q(t_\eta)<\frac12+\rho.
\]
By \eqref{eq:counter-eta},
\[
1-p(t_\eta)>\eta,
\qquad
1-q(t_\eta)>\eta.
\]
Finally, by \eqref{eq:counter-span}, the Euclidean distance to $\mathcal S$ is
\[
\dist(x,\mathcal S)
=\sqrt2\min\{1-p,1-q,s\}.
\]
Therefore
\[
\dist(x(t_\eta),\mathcal S)=\sqrt2\,\eta
>\frac{\sqrt2}{2}\eta,
\]
so $x(t_\eta)\notin U$. On the other hand,
\[
\dist(x^\varepsilon(0),\mathcal S)
=\sqrt2\,\varepsilon
<\frac{\sqrt2}{2}\eta,
\]
so $x^\varepsilon(0)\in U\cap V\cap\ImQ$. We have found a fixed neighborhood $U$ of $\mathcal S$ such that every neighborhood $V$ of $\mathcal S$ contains an admissible initial condition whose trajectory under \eqref{eq:FTRL} leaves $U$. Hence $\mathcal S$ is not stable under \eqref{eq:FTRL}.

\end{proof}
\begin{remark}
\label{rmrk:not-span}
Note that in the game considered in the above proof, the vertex BLB is a repellor. Indeed, it has no incoming edge (see the blue vertex in \cref{fig:counter-cube}), so it is a strict \acl{NE} of the negated game, hence an attractor for that game, and therefore (see \cref{rmrk:reversal}) a repellor for our game. Consider then the dual attractor of BLB (see \cref{subsec:attractors} for definition of dual repellor/attractor). This attractor must contain the span of the unique \ac{club} set (combining \cref{thm:sc-span} and \cref{cor:stable-club}), but it cannot coincide with that span since the latter is not stable by the previous proposition. Moreover, any larger span containing it would necessarily contain BLB, hence the dual attractor cannot be such a span either. Consequently, this attractor is not a span of pure profiles.
\endenv
\end{remark}

\begin{remark}
\label{rmrk:biggar}
\citet{BP25} introduce the notion of a \emph{local source}, defined as follows. Let $x\in\X$, let $\mathcal H\subseteq\A$ be a sink equilibrium (recall: a minimal nonempty \ac{club} set), and set $\mathcal S=\cont(\mathcal H)$. Let $\B$ be a subgame and let $\F=\cont(\B)$ be its span. Then $x$ is a local source of $\mathcal H$ in $\F$ if: (1) $x\in\mathcal S\cap\F$, (2) $\F\nsubseteq \mathcal S$, and (3) $x$ is a quasi-strict \acl{NE} of the negated game $-u$ restricted to $\F$.

In our game, the point
\[
x^*=\Bigl(\tfrac12,\tfrac12,1\Bigr)
\]
is indeed a local source, with $\F=\X$. Conditions (1) and (2) hold because $x^*\in\mathcal S$ while $\X\nsubseteq\mathcal S$. For condition (3), players $1$ and $2$ are indifferent between their two actions at $x^*$ and both actions belong to their supports. Player $3$ plays $\mathrm T$ purely, and the strict payoff inequality
\[
v_{3\mathrm B}(x^*)-v_{3\mathrm T}(x^*)=2>0
\]
in the original game implies that $\mathrm T$ is the unique best response of player $3$ in the negated game. Hence $x^*$ is a quasi-strict \acl{NE} of $-u$ on $\X$.

Accordingly, this example is consistent with the local source instability result of \cite{BP25}. It is also special relative to their results in two respects. First, the instability proof above uses full-support initializations $x^\varepsilon(0)\in\X^\circ$ converging to $x^*$, so the escape occurs from the interior rather than from an initialization a proper subface, which would be incompatible with the \eqref{eq:FTRL}-adapted notion of stability where initializations are defined as $x(0) = Q(y(0))$, and thus must be contained in $\ImQ$. Second, the argument applies to every \eqref{eq:FTRL} dynamics satisfying our regularizer assumptions, whereas \cite{BP25} work with \eqref{eq:RD} only.
\endenv
\end{remark}

\section{Omitted proofs from Section \ref{sec:rad}}
\label{app:rad}

We begin by proving that \ac{radness} implies \acl{clubness}:

\begin{proof}[\textbf{Proof of \cref{prop:resilient-club}.}]
We prove the elementary identity from which both claims follow directly. Let
$\beta\in\HH$ and let $\alpha\in\A\setminus\HH$ be $i$-comparable with $\beta$,
so $\alpha_{-i}=\beta_{-i}$ and $\alpha_i\neq\beta_i$. Then, for every
$j\neq i$, we have $\alpha_j=\beta_j$, and hence
\[
u_j(\alpha_j,\beta_{-j})=u_j(\beta_j,\beta_{-j})=u_j(\beta).
\]
Therefore
\begin{align}
\Phi(\beta,\alpha)
&=\sum_{j\in\N}\bigl(u_j(\alpha_j,\beta_{-j})-u_j(\beta)\bigr)
\notag\\
&=u_i(\alpha_i,\beta_{-i})-u_i(\beta)
\notag\\
&=u_i(\alpha)-u_i(\beta),
\end{align}
where the last equality uses $\alpha_{-i}=\beta_{-i}$.
\end{proof}
We have also an upgrade to mixed profiles:

\begin{lemma}\label{lem:resilience-mixed}
    Let $\HH \subseteq \A$. Then $\HH$ is \ac{rad} (resp.~\ac{srad}) if and only if for every $\alpha \notin \HH$, and for every face $\F \subseteq \cont(\HH)$, we have $ \sup_{x \in \F} \Phi(x,\alpha) \leq 0$ (resp. $\sup_{x \in \F} \Phi(x,\alpha) < 0)$.
\end{lemma}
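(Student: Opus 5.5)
The natural route is to exploit the multilinearity of the mixed payoff flux in its first argument. For fixed $\alpha\in\A$, the map $x\mapsto\Phi(x,\alpha)=\sum_{i\in\N}\bigl(u_i(\alpha_i,x_{-i})-u_i(x)\bigr)$ is a polynomial in the coordinates $x_{i\beta}$. I will show that it is exactly the expectation of $\Phi(\beta,\alpha)$ under the product distribution induced by $x$, i.e.
\[
\Phi(x,\alpha)=\sum_{\beta\in\A}x_\beta\,\Phi(\beta,\alpha).
\]
This follows because $u_i(\alpha_i,x_{-i})=\sum_{\beta}x_\beta\,u_i(\alpha_i,\beta_{-i})$, since the $x_{i\beta_i}$ factors sum to one, and $u_i(x)=\sum_\beta x_\beta u_i(\beta)$. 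Summing over $i$ gives the identity.

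For the forward direction, fix $\alpha\notin\HH$ and a face $\F\subseteq\cont(\HH)$. Every $x\in\F$ has $x_\beta=0$ for $\beta\notin\HH$, since $\supp(x)\subseteq\skel(\F)\subseteq\HH$. Hence $\Phi(x,\alpha)$ is a convex combination of the values $\Phi(\beta,\alpha)$ with $\beta\in\skel(\F)\subseteq\HH$. Under \ac{radness} each of these is $\le0$, so $\Phi(x,\alpha)\le0$ and the supremum is $\le0$. Under \ac{sradness}, set $c:=\max_{\beta\in\skel(\F)}\Phi(\beta,\alpha)<0$. The maximum exists because $\skel(\F)$ is finite and nonempty, and the convex combination then gives $\Phi(x,\alpha)\le c$ uniformly on $\F$, so $\sup_{x\in\F}\Phi(x,\alpha)\le c<0$. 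The only point needing care is that a pointwise strict bound alone would not give a strict supremum; the uniform bound $c$ handles this, and compactness of $\F$ together with continuity would serve equally well.

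For the converse, take $\beta\in\HH$ and $\alpha\notin\HH$. The singleton face $\F=\{\beta\}=\cont(\{\beta\})$ lies in $\cont(\HH)$, so the hypothesis gives $\Phi(\beta,\alpha)=\sup_{x\in\F}\Phi(x,\alpha)\le0$ (resp.\ $<0$), which is exactly \ac{radness} (resp.\ \ac{sradness}).

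I expect no real obstacle. The main step is verifying the expectation identity for $\Phi(\cdot,\alpha)$, which needs only a short marginalization computation.
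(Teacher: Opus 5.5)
Your proof is correct and follows essentially the same route as the paper: multilinearity of $\Phi(\cdot,\alpha)$ reduces the supremum over a face $\cont(\B)\subseteq\cont(\HH)$ to a maximum over its pure profiles $\B\subseteq\HH$, and singleton faces give the converse. The only difference is cosmetic. You get the vertex reduction from the expansion $\Phi(x,\alpha)=\sum_{\beta}x_\beta\,\Phi(\beta,\alpha)$, which the paper itself uses later in \cref{lem:mass-energy}, whereas the paper gets it by maximizing blockwise over one player's simplex at a time.
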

\begin{proof}
The forward direction is trivial (just take $\F$ to be pure profile of $\HH$). For the other direction, fix $\alpha\notin\HH$ and a face $\F\subseteq \cont(\HH)$ and $\F=\cont(\B)$ for some subgame $\B$,
so $\B\subseteq \HH$. We have
\[
\Phi(x,\alpha)=\sum_{i\in\N}\bigl(v_{i\alpha}(x)-u_i(x)\bigr),
\]
so $\Phi(\cdot,\alpha)$ is multilinear. Next, we claim that for any multilinear $f$ on $\F$,
\begin{equation}\label{eq:vertex-max}
\sup_{x\in\F} f(x)=\max_{\beta\in\B} f(\beta).
\end{equation}
This is classic, to see why it's true, simply fix $x_{-1}$ and consider $x_1\mapsto f(x_1,x_{-1})$, which is affine on
$\Delta(\B_1)$, thus it attains its maximum at a vertex $\beta_1$ with $\beta_1\in\B_1$, giving
$f(x)\le f(\beta_1,x_{-1})$. Iterating this argument over $i=2,\dots,n$ yields some
$\beta=(\beta_1,\dots,\beta_n)\in\B$ with $f(x)\le f(\beta)$ for all $x\in\F$. Applying \eqref{eq:vertex-max} to $f=\Phi(\cdot,\alpha)$ and using $\B\subseteq\HH$, we get by hypothesis
\[
\sup_{x\in\F} \Phi(x,\alpha)=\max_{\beta\in\B} \Phi(\beta,\alpha)\le 0
\quad\text{(resp.\ $<0$).}
\]
Since $\alpha\notin\HH$ and $\F\subseteq\cont(\HH)$ were arbitrary, $\HH$ is \ac{rad} (resp.~\ac{srad}).
\end{proof}

To proceed, we shall prove the following lemma.

\begin{lemma}
    \label{lem:res-energy}
Let $\HH \subseteq \A$ be \ac{srad} and assume $h$ is steep. Define the function
\begin{equation}\label{eq:res-energy-app}
\bar F_\HH(y)\coloneqq \sum_{\alpha\notin\HH} e^{-F_h(\alpha,y)}\in[0,\infty),
\; y\in \YY,
\end{equation}
with the convention $e^{-\infty}=0$. Then $\bar F_\HH$ is a local energy function for $\cont(\HH)$.
\end{lemma}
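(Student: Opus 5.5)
We verify the three items of \cref{def:local-energy} in turn, exploiting the product structure of the Fenchel coupling at pure profiles. Since $h$ is steep, $x=Q(y)\in\X^\circ$, so by \cref{prop:fenchel}(\labelcref{it:fenchel-breg}) and decomposability we have, for each pure $\alpha$,
\[
F_h(\alpha,y)=\sum_{i\in\N}D_i(\alpha_i,x_i)=\sum_{i\in\N}\Bigl(\theta_i(1)+\sum_{\gamma_i\neq\alpha_i}\theta_i(0)-\theta_i(x_{i\alpha})-\sum_{\gamma_i\neq\alpha_i}\theta_i(x_{i\gamma})-\theta_i'(x_{i\alpha})(1-x_{i\alpha})+\sum_{\gamma_i\ne\alpha_i}\theta_i'(x_{i\gamma})x_{i\gamma}\Bigr),
\]
which is finite on $\X^\circ$. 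Hence $\bar F_\HH(y)=\sum_{\alpha\notin\HH}\prod_i e^{-F_i(\alpha_i,y_i)}$, with each factor $g_{i\alpha}(y_i):=e^{-F_i(\alpha_i,y_i)}\in(0,1]$.

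\emph{Smoothness.} $F_i(\alpha_i,y_i)=h_i(\alpha_i)+h_i^*(y_i)-y_{i\alpha}$ is $C^1$ with gradient $Q_i(y_i)-e_{\alpha_i}$, which is bounded. Therefore each $g_{i\alpha}$ is $C^1$ with gradient $g_{i\alpha}(y_i)\,(e_{\alpha_i}-x_i)$, bounded by $2$. Products and finite sums of bounded $C^1$ functions with bounded gradients are Lipschitz, which gives item (\labelcref{it:LE-smooth}).

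\emph{Dissipation.} By the chain rule, $\frac{d}{dt}F_h(\alpha,y)=\sum_i\langle v_i(x),x_i-\alpha_i\rangle=-\Phi(x,\alpha)$, as in \cref{lem:fenchel-deriv}. Consequently
\[
\dot{\bar F}_\HH(y)=\sum_{\alpha\notin\HH}e^{-F_h(\alpha,y)}\,\Phi(x,\alpha).
\]
By \cref{lem:resilience-mixed} and continuity of $\Phi$ on the compact set $\cont(\HH)$, there are $c>0$ and $\varepsilon>0$ such that $\Phi(x,\alpha)\le -c$ for all $\alpha\notin\HH$ whenever $\dist(x,\cont(\HH))<\varepsilon$. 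It follows that $\dot{\bar F}_\HH\le -c\,\bar F_\HH$ on that neighborhood, which is strictly negative on $\{E^-<\bar F_\HH<E^+\}$. To place the level set $\{\bar F_\HH\le\bar E\}$ inside the $\varepsilon$-neighborhood, we invoke item (\labelcref{it:LE-zero}) via a compactness argument: if $\bar F_\HH(y^k)\to0$, then $Q(y^k)\to\cont(\HH)$. This yields item (\labelcref{it:LE-dissip}).

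\emph{Zero set (the main obstacle).} We must show that $\bar F_\HH(y^k)\to0$ if and only if $\dist(Q(y^k),\cont(\HH))\to0$, which amounts to relating $e^{-F_h(\alpha,y)}$ to $x_\alpha$. The plan is to prove two-sided control factorwise. First, $g_{i\alpha}(y_i)\to0$ if and only if $x_{i\alpha}\to0$. The forward direction follows from the lower bound $F_i\ge\frac{K_i}{2}\|x_i-\alpha_i\|^2$, which only gives that $x_i$ is far from the vertex, so a finer argument is needed. We will use the formula above: $F_i(\alpha_i,y_i)\ge -\theta_i'(x_{i\alpha})(1-x_{i\alpha})-C_i$, using convexity bounds on the other terms via $u\theta'(u)\to0$ as in the proof of \cref{prop:span-counter}. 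Steepness then shows that $x_{i\alpha}\to0$ forces $F_i\to\infty$. Conversely, if $x_{i\alpha}\ge\eta>0$, then $F_i(\alpha_i,y_i)$ is bounded above by a constant $C(\eta)$, since all $\theta_i$ terms are continuous on $[0,1]$ and $u\theta_i'(u)$ is bounded; so $g_{i\alpha}$ stays bounded below. Hence, along sequences, $e^{-F_h(\alpha,y^k)}\to0$ if and only if $\min_i x^k_{i\alpha}\to0$, which, in the product cube (up to subsequences), is equivalent to $x^k_\alpha\to0$. Finally, $\dist(x,\cont(\HH))\to0$ is equivalent to $\sum_{\alpha\notin\HH}x_\alpha\to0$. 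Indeed, the limit points of $x$ lie in $\cont(\HH)$ exactly when $x_\alpha=0$ for all $\alpha\notin\HH$, by the definition of span and continuity of $x\mapsto x_\alpha$, with compactness of $\X$ handling subsequences. This establishes item (\labelcref{it:LE-zero}). We expect the uniform two-sided comparison between $F_i(\alpha_i,y_i)$ and $x_{i\alpha}$ for general steep kernels to be the delicate step.
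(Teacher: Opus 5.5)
Your proposal is correct. Its overall structure matches the paper's proof: regularity from $\nabla F_h(\alpha,y)=Q(y)-\alpha$, and dissipation from $\frac{d}{dt}F_h(\alpha,y)=-\Phi(x,\alpha)$ together with \cref{lem:resilience-mixed}, with a sublevel set placed inside the good neighborhood via item (2).

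The genuine difference is in the zero-set step.

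\emph{Your route.} You expand $F_i(\alpha_i,y_i)=D_i(\alpha_i,x_i)$ explicitly in terms of $\theta_i$ and bound it term by term. This forces you to control the cross terms $\sum_{\gamma\neq\alpha}x_{i\gamma}\theta_i'(x_{i\gamma})$, which you do through the auxiliary fact that $u\theta_i'(u)\to0$ as $u\downarrow0$.

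\emph{The paper's route.} The paper never expands $F_i$.
\begin{itemize}
\item For the lower bound, it evaluates the supremum defining $h_j^*(y_j)$ at a pure action $\beta_j$ with $x_{j\beta}\ge 1/|\A_j|$.
\item For the upper bound, it uses $h_i^*(y_i)\le\max_\beta y_{i\beta}-\min h_i$.
\item Both are combined with the KKT identity $y_{i\beta}-y_{i\alpha}=\theta_i'(x_{i\beta})-\theta_i'(x_{i\alpha})$ and monotonicity of $\theta_i'$.
\end{itemize}
This gives the same two-sided control: $F_j\ge C_j-\theta_j'(\min_i x_{i\alpha})$, and $F_i\le C_i'(\varepsilon)$ whenever $x_{i\alpha}\ge\varepsilon$. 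No estimate on $u\theta'(u)$ is needed, so the paper's argument is lighter.

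\emph{What your route buys.} It shows that $\bar F_\HH(y)=\sum_{\alpha\notin\HH}\prod_i e^{-D_i(\alpha_i,x_i)}$ depends on $x=Q(y)$ alone. In the entropic case $e^{-D_i(\alpha_i,x_i)}=x_{i\alpha}$, so this energy coincides exactly with the outside-mass function $\bar W_\HH$ used for \cref{thm:resilience-RD}.

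One small fix: drop the sentence invoking $F_i\ge\frac{K_i}{2}\|x_i-\alpha_i\|^2$. That bound is at most $K_i$ on the simplex and can never force $F_i\to\infty$; your formula-based lower bound is what actually carries the argument.
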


\begin{proof}
Set $\cS\coloneqq \cont(\HH)$ and define $E:\YY\to[0,\infty)$ by
\[
E(y)\coloneqq \bar F_\HH(y)=\sum_{\alpha\notin\HH} e^{-F_h(\alpha,y)}.
\]
If $\A\setminus\HH=\varnothing$, then $\cS=\X$ and $E= 0$, so all three items of \cref{def:local-energy} hold trivially. Assume henceforth that $\A\setminus\HH\neq\varnothing$.

We start with regularity.
By \cref{prop:choice-map}(\labelcref{it:Qgrad}), $h^*$ is $C^1$ on $\YY$ and $\nabla h^*(y)=Q(y)$, hence, for every fixed $\alpha\in\A$,
\[
F_h(\alpha,y)=h(\alpha)+h^*(y)-\langle y,\alpha\rangle
\quad\text{is $C^1$ with}\quad
\nabla F_h(\alpha,y)=Q(y)-\alpha.
\]
Moreover, $F_h(\alpha,y)\ge 0$ by Fenchel--Young, so $\exp(-F_h(\alpha,y))\in(0,1]$.
Differentiating gives
\[
\nabla\!\left(e^{-F_h(\alpha,\cdot)}\right)(y)
=-e^{-F_h(\alpha,y)}\bigl(Q(y)-\alpha\bigr).
\]
Since $Q(y)\in\X$ for all $y$ and $\alpha\in\X$ is fixed, the vectors $Q(y)-\alpha$ are uniformly bounded, therefore the gradients above are uniformly bounded as well (because $e^{-F_h(\alpha,y)}\le 1$). It follows that each map $y\mapsto e^{-F_h(\alpha,y)}$ is globally Lipschitz and $C^1$ on $\YY$. Summing over the finite set $\A\setminus\HH$ shows that $E$ is globally Lipschitz and $C^1$, proving \cref{def:local-energy}(\labelcref{it:LE-smooth}).

Now positive semi-definiteness.
Let $(y^k)$ be any sequence in $\YY$ and set $x^k\coloneqq Q(y^k)$. We first show that $x^k\to\cS$ implies $E(y^k)\to 0$.
Fix $\alpha\notin\HH$. Since $x\mapsto x_\alpha=\prod_i x_{i\alpha}$ is continuous and $x_\alpha=0$ for all $x\in\cS$ by definition of $\cont(\HH)$, we have $x^k_\alpha\to 0$. Let
\[
\delta^k_\alpha\coloneqq \min_{i\in\N} x^k_{i\alpha_i}.
\]
Because $0\le x^k_{i\alpha_i}\le 1$ and $x^k_\alpha=\prod_i x^k_{i\alpha_i}\to 0$, we must have $\delta^k_\alpha\to 0$.
We claim that $F_h(\alpha,y^k)\to+\infty$, which implies $e^{-F_h(\alpha,y^k)}\to 0$. Indeed, \cref{asm:well-posed} implies that $h$ is steep, so $Q(y^k)\in\X^\circ$ for all $k$, in particular all coordinates of $x^k$ are strictly positive. Thus, for each $k$ and each player $i$, the KKT conditions in \cref{prop:choice-map}(\labelcref{it:KKT}) hold with $\nu_i\equiv 0$, yielding a scalar $\lambda_i^k\in\R$ such that
\[
y^k_{i\beta}=\theta_i'(x^k_{i\beta})+\lambda_i^k,\qquad \forall\,\beta\in\A_i.
\]
Fix $k$ and choose an index $j$ attaining the minimum in $\delta^k_\alpha$, i.e.\ $x^k_{j\alpha}=\delta^k_\alpha$.
Also choose $\beta_j\in\A_{j}$ with $x^k_{j\beta}\ge 1/|\A_{j}|$.
Using the definition of $h_{j}^*$ and testing it at the pure action $\beta_j$ gives
\[
h_{j}^*(y^k_{j})\ge \langle y^k_{j},\beta_j\rangle-h_{j}(\beta_j)
= y^k_{j}-h_{j}(\beta_j),
\]
so
\[
F_{j}(\alpha_{j},y^k_{j})
=h_{j}({\alpha_{j}})+h_{j}^*(y^k_{j})-y^k_{j\alpha}
\ge h_{j}({\alpha}_j)-h_{j}({\beta_j})+\bigl(y^k_{j\beta}-y^k_{j\alpha}\bigr).
\]
By the KKT identity, $y^k_{j\beta}-y^k_{j\alpha}=\theta_{j}'(x^k_{j\beta})-\theta_{j}'(x^k_{j\alpha})$.
Since $\theta_{j}'$ is increasing and $x^k_{j\beta}\ge 1/|\A_{j}|$, we have
$\theta_{j}'(x^k_{j\beta})\ge \theta_{j}'(1/|\A_{j}|)$, hence
\[
F_{j}(\alpha_{j},y^k_{j})
\ge \Bigl(\theta_{j}'(1/|\A_{j}|)-\bigl(\max_{\gamma\in\A_{j}}h_{j}(\gamma_j)-\min_{\gamma\in\A_{j}}h_{j}(\gamma_j)\bigr)\Bigr)
-\theta_{j}'(\delta^k_\alpha).
\]
Define for each $i$ the constant
\[
C_i\coloneqq \theta_i'(1/|\A_i|)-\bigl(\max_{\gamma_i\in\A_i}h_i(\gamma_i)-\min_{\gamma_i\in\A_i}h_i(\gamma_i)\bigr)\in\R.
\]
Then the preceding inequality yields
\[
F_h(\alpha,y^k)=\sum_{j\in\N}F_j(\alpha_j,y^k_j)\ \ge\ F_{j}(\alpha_{j},y^k_{j})
\ \ge\ \min_{i\in\N}\bigl(C_i-\theta_i'(\delta^k_\alpha)\bigr).
\]
Since we are in the steep regime, $\theta_i'(p)\to-\infty$ as $p\downarrow 0$ for each $i$. Because $\delta^k_\alpha\to 0$ and $\N$ is finite, the right-hand side diverges to $+\infty$. Hence $F_h(\alpha,y^k)\to+\infty$ and therefore $e^{-F_h(\alpha,y^k)}\to 0$. As $\A\setminus\HH$ is finite, summing over $\alpha\notin\HH$ yields $E(y^k)\to 0$.

Conversely, assume $E(y^k)\to 0$. We show that $x^k\to\cS$.
Fix $\alpha\notin\HH$ and suppose, for contradiction, that $x^k_\alpha\nrightarrow 0$. Then there exist $\varepsilon>0$ and a subsequence (not relabeled) such that $x^k_\alpha\ge\varepsilon$ for all $k$, so in particular $x^k_{i\alpha_i}\ge\varepsilon$ for all players $i$ (since each factor is $\le 1$).
Fix such an $i$. By the same KKT identity as above (valid because $x^k\in\X^\circ$), we have
\[
\max_{\beta_i\in\A_i} y^k_{i\beta}-y^k_{i\alpha_i}
=\max_{\beta_i\in\A_i}\theta_i'(x^k_{i\beta})-\theta_i'(x^k_{i\alpha_i})
\le \theta_i'(1)-\theta_i'(\varepsilon)\eqqcolon M_i(\varepsilon)<\infty.
\]
Hence $y^k_{i\alpha}\ge \max_\beta y^k_{i\beta}-M_i(\varepsilon)$.
On the other hand, by definition of $h_i^*$,
\[
h_i^*(y^k_i)=\sup_{w_i\in\X_i}\bigl(\langle y^k_i,w_i\rangle-h_i(w_i)\bigr)
\le \max_{\beta_i\in\A_i} y^k_{i\beta}-\min_{w_i\in\X_i}h_i(w_i).
\]
Combining these two bounds gives
\[
F_i(\alpha_i,y^k_i)
=h_i(\alpha_i)+h_i^*(y^k_i)-y^k_{i\alpha}
\le h_i(\alpha_i)-\min_{\X_i}h_i + M_i(\varepsilon)
\eqqcolon C'_i(\varepsilon)<\infty.
\]
Summing over $i$ yields a finite constant $ C'(\varepsilon)=\sum_i  C'_i(\varepsilon)$ such that
$F_h(\alpha,y^k)\le  C'(\varepsilon)$ along the subsequence, hence
$e^{-F_h(\alpha,y^k)}\ge e^{- C'(\varepsilon)}>0$ along the same subsequence.
This contradicts $E(y^k)\to 0$, since $E(y^k)\ge e^{-F_h(\alpha,y^k)}$.
Therefore $x^k_\alpha\to 0$ for every $\alpha\notin\HH$. Now define the continuous function $g:\X\to[0,\infty)$ by $g(x)\coloneqq \sum_{\alpha\notin\HH} x_\alpha$.
We have $g(x)=0$ if and only if $x\in\cS$, and the above shows $g(x^k)\to 0$.
If $x^k\nrightarrow \cS$, there exist $\varepsilon_0>0$ and a subsequence $x^{k_j}$ with $\dist(x^{k_j},\cS)\ge\varepsilon_0$.
By compactness of $\X$, pass to a further subsequence with $x^{k_j}\to \bar x\in\X$.
Continuity of $g$ gives $g(\bar x)=0$, hence $\bar x\in\cS$, contradicting $\dist(x^{k_j},\cS)\ge\varepsilon_0$ for $j$ large.
Thus $x^k\to\cS$.
This proves \cref{def:local-energy}(\labelcref{it:LE-zero}).

Finally, dissipativity.
Let $y$ be any solution orbit of \eqref{eq:FTRL} and set $x\coloneqq Q(y)$.
For each $\alpha\notin\HH$, summing \cref{lem:fenchel-deriv} over players gives
\[
\frac{d}{dt}F_h(\alpha,y)
=\sum_{i\in\N}\langle v_i(x),\,x_i-\alpha_i\rangle
=\sum_{i\in\N}\bigl(u_i(x)-v_{i\alpha}(x)\bigr)
=-\,\Phi(x,\alpha),
\]
where we used $u_i(x)=\langle v_i(x),x_i\rangle$ and the definition of $\Phi(\cdot,\alpha)$.
Therefore,
\[
\frac{d}{dt}e^{-F_h(\alpha,y)}
=-e^{-F_h(\alpha,y)}\frac{d}{dt}F_h(\alpha,y)
=e^{-F_h(\alpha,y)}\,\Phi(x,\alpha).
\]
Summing over $\alpha\notin\HH$ yields the derivative of $E$ along \eqref{eq:FTRL}:
\begin{equation}\label{eq:dotE-flux}
\dot E(y)=\sum_{\alpha\notin\HH} e^{-F_h(\alpha,y)}\,\Phi(Q(y),\alpha).
\end{equation}

We now exploit strict \ac{radness}.
Fix $\alpha\notin\HH$. For any $x\in\cS$, let $\F$ be the (minimal) face of $\X$ containing $x$, then $\F\subseteq\cS$.
By \cref{lem:resilience-mixed}, $\sup_{z\in\F} \Phi(z,\alpha)<0$, hence
$\Phi(x,\alpha)\le \sup_{z\in\F} \Phi(z,\alpha)<0$.
Thus $\Phi(\cdot,\alpha)<0$ on $\cS$. Since $\Phi(\cdot,\alpha)$ is continuous and $\cS$ is compact, the maximum
$m_\alpha\coloneqq \max_{x\in\cS} \Phi(x,\alpha)$ exists and satisfies $m_\alpha<0$.
Because $\A\setminus\HH$ is finite, the constant
\[
c\coloneqq -\max_{\alpha\notin\HH} m_\alpha \;>\;0
\]
is well-defined and satisfies $\Phi(x,\alpha)\le -c$ for all $x\in\cS$ and all $\alpha\notin\HH$.
By continuity, there exists a neighborhood $U$ of $\cS$ in $\X$ such that
\[
\Phi(x,\alpha)\le -\frac{c}{2}
\qquad\forall\,x\in U,\ \forall\,\alpha\notin\HH.
\]
By \cref{def:local-energy}(\labelcref{it:LE-zero}) already established, there exists $\bar E>0$ such that
$E(y)<\bar E$ implies $Q(y)\in U$ (otherwise, one could find $y^k$ with $E(y^k)\to 0$ but $Q(y^k)\notin U$, contradicting $Q(y^k)\to\cS$).
Hence, for any $y$ with $E(y)\le \bar E$, we have $x=Q(y)\in U$ and thus, using \eqref{eq:dotE-flux},
\[
\dot E(y)
=\sum_{\alpha\notin\HH} e^{-F_h(\alpha,y)}\,\Phi(x,\alpha)
\le -\frac{c}{2}\sum_{\alpha\notin\HH} e^{-F_h(\alpha,y)}
=-\frac{c}{2}\,E(y).
\]
Consequently, if $0<E^-<E^+\le \bar E$ and $E^-<E(y)<E^+$, then $\dot E(y)\le -(c/2)\,E^-<0$.
Taking the supremum over $\{y:\ E^-<E(y)<E^+\}$ gives \cref{def:local-energy}(\labelcref{it:LE-dissip}).

All three conditions of \cref{def:local-energy} are satisfied, so $\bar F_\HH$ is a local energy function for $\cont(\HH)$.
\end{proof}

We may now use this to deduce the main theorem.
\begin{proof}[\textbf{Proof of \cref{thm:strict-resilience}.}]
We shall proceed again by gluing all the facewise basins into a basin on all of $\X$. Let $\cS\coloneqq \cont(\HH)$. Under \cref{asm:well-posed}, the regularizer is steep, so the facewise \ac{FTRL} dynamics are well-defined. Moreover, $\cS$ is a union of faces, so compact and each face is invariant under the strategy flow, hence $\cS$ is invariant.

It remains to prove that $\cS$ is asymptotically stable for the strategy flow.
Fix any face $\F$ of $\X$ with $\cS_\F\coloneqq \cS\cap \F\neq\varnothing$, and write $\F=\cont(\B)$ for the corresponding subgame $\B=\prod_i \B_i\subseteq \A$. Define $\HH_\F\coloneqq \HH\cap \B$. We claim
\begin{equation}
\label{eq:face-span}
\cS_\F
    = \cont(\HH_\F).
\end{equation}
Indeed, if $x\in\cS\cap\F$, then $x_\alpha=0$ for every $\alpha\notin \HH$ (since $x\in\cont(\HH)$) and also $x_\alpha=0$ for every $\alpha\notin\B$ (since $x\in\cont(\B)$), therefore $x_\alpha=0$ for every $\alpha\notin(\HH\cap\B)=\HH_\F$, i.e.\ $x\in\cont(\HH_\F)$.
Conversely, $\cont(\HH_\F)\subseteq \cont(\HH)\cap \cont(\B)=\cS\cap\F$, proving \eqref{eq:face-span}.

Next, we show that $\HH_\F$ is \ac{srad} for the restricted game on $\B$.
Fix $\alpha\in \B\setminus \HH_\F$. Then $\alpha\notin\HH$.
Let $\G$ be any face with $\G\subseteq \cont(\HH_\F)$. By \eqref{eq:face-span}, $\G\subseteq \cont(\HH)$.
Since $\HH$ is \ac{srad} (cf.~\cref{lem:resilience-mixed}), we have
\[
\sup_{x\in\G} \Phi(x,\alpha)<0.
\]
Since $\alpha\in\B$ and $\G\subseteq\cont(\B)$, this is exactly the \ac{sradness} requirement for $\HH_\F$ within $\B$.

Now consider the face-restricted choice map $Q_\B:\YY\to\F^\circ$ and the facewise \ac{FTRL} dynamics
\[
\dot y=v(x),\qquad x=Q_\B(y),
\tag{FTRL-$\FTRLtag$}
\]
as in \cref{rmrk:flow-link}. Since $h$ is steep and $\HH_\F$ is strictly \ac{rad}, the \ac{radness} energy construction
(cf.\ \cref{lem:res-energy}) provides a local energy function for $\cont(\HH_\F)=\cS_\F$ of the form
\[
\bar F_{\HH_\F}(y)=\sum_{\alpha\in \B\setminus\HH_\F} e^{-F_{h|\F}(\alpha,y)}.
\]
Therefore, by the continuous-time energy theorem (cf.\ \cref{thm:energy-ct}), $\cS_\F$ is asymptotically stable under the
facewise \ac{FTRL} dynamics. Finally, by \cref{rmrk:flow-link}, for every initial condition $x_0\in\F$ the
strategy flow trajectory starting at $x_0$ coincides with the appropriate facewise \ac{FTRL} orbit on the minimal face of $\F$
containing $x_0$. It follows that $\cS_\F$ is asymptotically stable for the restriction of the strategy flow to $\F$.
Concretely: for each such face $\F$ there exists a neighborhood $U_\F\subseteq \F$ of $\cS_\F$ (relative topology) such that
every strategy flow trajectory starting in $U_\F$ remains in $U_\F$ and satisfies $\dist(\Theta_t,\cS_\F)\to 0$ as $t\to\infty$.

We now patch these facewise basins into a basin for $\cS$ in $\X$.
Let $\mathfrak{F}_\cS$ be the (finite) family of faces $\F$ of $\X$ with $\cS_\F\neq\varnothing$, and fix for each
$\F\in\mathfrak{F}_\cS$ a neighborhood $U_\F\subseteq \F$ with the invariance/attraction property above.
Fix $x\in\cS$ and set
\[
\mathfrak{F}(x)\coloneqq \{\F\in\mathfrak{F}_\cS:\ x\in\F\}.
\]
Because $\X$ has finitely many faces, the number
\[
r_x \coloneqq \tfrac12\min\{\dist(x,\F'): \F'\text{ a face of }\X,\ x\notin \F'\}
\]
is well-defined and strictly positive. In particular, $B_{r_x}(x)\cap \X$ meets only faces that contain $x$.
Moreover, for each $\F\in\mathfrak{F}(x)$, since $U_\F$ is a neighborhood of $x$ in $\F$ (relative topology), there exists
$\rho^x_{\F}>0$ such that $B_{\rho^x_{\F}}(x)\cap \F\subseteq U_\F$. Define
\[
\rho_x \coloneqq \min\Bigl(r_x,\ \min_{\F\in\mathfrak{F}(x)}\rho^x_{\F}\Bigr)>0.
\]
Then for every $z\in B_{\rho_x}(x)\cap\X$, every face of $\X$ containing $z$ must contain $x$ (since it meets
$B_{r_x}(x)\cap\X$), so the minimal face $\F_z$ containing $z$ contains $x$, hence $\F_z\in\mathfrak{F}(x)$, by construction,
this implies $z\in U_{\F_z}$. Consequently, the family $\{B_{\rho_x}(x)\cap\X\}_{x\in\cS}$ covers $\cS$.
By compactness of $\cS$, choose $x^1,\dots,x^m\in\cS$ with
\[
\cS\subseteq U \coloneqq \bigcup_{k=1}^m \bigl(B_{\rho_{x^k}}(x^k)\cap \X\bigr),
\]
so $U$ is an open neighborhood of $\cS$ in $\X$.

\emph{Attraction.}
Take any $\Theta_0\in U$ and let $\F$ be the minimal face of $\X$ containing $\Theta_0$.
By construction of $U$, there exists some $k$ such that $\Theta_0\in B_{\rho_{x^k}}(x^k)\cap\X$, hence $\Theta_0\in U_\F$ by the
preceding argument. Since $\F$ is invariant under the strategy flow, the trajectory $t\mapsto \Theta_t(\Theta_0)$ remains in $\F$. Since
$\Theta_0\in U_\F$, it remains in $U_\F$ and satisfies
\[
\dist\bigl(\Theta_t(\Theta_0),\cS_\F\bigr)\xrightarrow[t\to\infty]{}0.
\]
Because $\cS_\F\subseteq \cS$, we have $\dist(\Theta_t(\Theta_0),\cS)\le \dist(\Theta_t(\Theta_0),\cS_\F)\to 0$, proving that $\cS$ attracts $U$.

\emph{Stability.}
Let $W$ be an arbitrary neighborhood of $\cS$ in $\X$.
For each $\F\in\mathfrak{F}_\cS$, the set $W\cap \F$ is a neighborhood of $\cS_\F$ in the relative topology of $\F$.
Since $\cS_\F$ is stable for the restricted flow on $\F$, there exists a neighborhood $V_\F\subseteq \F$ of $\cS_\F$ such that
$x_0\in V_\F$ implies $\Theta_t(x_0)\in W\cap\F$ for all $t\ge 0$.
Repeating the compactness/ball-cover construction above with $(V_\F)_{\F\in\mathfrak{F}_\cS}$ in place of
$(U_\F)_{\F\in\mathfrak{F}_\cS}$ yields an open neighborhood $V$ of $\cS$ in $\X$ such that every $x_0\in V$ lies in
$V_{\F_{x_0}}$, where $\F_{x_0}$ is the minimal face containing $x_0$, hence $\Theta_t(x_0)\in W$ for all $t\ge 0$.
Thus $\cS$ is stable.

We have shown that $\cS$ is compact, invariant, and asymptotically stable for the strategy flow.
It follows then that
$\cont(\HH)=\cS$ is an attractor for the strategy flow induced by \eqref{eq:SD}.
\end{proof}

In the entropic setting, we shall work with a much simpler energy function:

\begin{lemma}\label{lem:mass-energy}
Let $\HH\subseteq\A$ and assume $h$ is entropic. Define
\[
\bar W_\HH(x)\coloneqq \sum_{\alpha\notin\HH} x_\alpha,\qquad x\in\X.
\]
Then, if $\HH$ is \ac{rad} and closed under better replies, $\bar W_\HH\circ Q$ is a local energy function for $\cont(\HH)$.
\end{lemma}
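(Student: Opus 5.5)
The plan is to verify the three items of \cref{def:local-energy} for $E\coloneqq\bar W_\HH\circ\logit$, where $\logit$ is the softmax map, and then rely on \cref{thm:energy-ct}. Item (\labelcref{it:LE-smooth}) is routine. $\logit$ is $C^\infty$ with uniformly bounded Jacobian, since its entries are of the form $x_{i\alpha}(\one\{\alpha=\beta\}-x_{i\beta})$. $\bar W_\HH$ is a polynomial on the compact set $\X$. Hence $E$ is $C^1$ with bounded gradient, and therefore globally Lipschitz.

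For item (\labelcref{it:LE-zero}), I would argue exactly as at the end of the zero-set part of the proof of \cref{lem:res-energy}. The function $g=\bar W_\HH$ is continuous and nonnegative on $\X$, and it vanishes precisely on $\cont(\HH)$. So $x^k\to\cont(\HH)$ forces $g(x^k)\to0$ by uniform continuity. Conversely, if $g(x^k)\to0$ but $\dist(x^k,\cont(\HH))\ge\varepsilon_0$ along a subsequence, compactness produces a limit point in $\cont(\HH)$ at distance at least $\varepsilon_0$ from it, which is absurd.

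For item (\labelcref{it:LE-dissip}), I would use the computation $\dot x_\alpha=x_\alpha\Phi(x,\alpha)$ under \eqref{eq:RD} together with multilinearity, $\Phi(x,\alpha)=\sum_\beta x_\beta\Phi(\beta,\alpha)$. This gives
\[
\dot E=\sum_{\alpha\notin\HH}\sum_{\beta\in\HH}x_\alpha x_\beta\Phi(\beta,\alpha)+\sum_{\alpha\notin\HH}\sum_{\beta\notin\HH}x_\alpha x_\beta\Phi(\beta,\alpha)\eqdef T_1+T_2 .
\]
By \ac{radness} every term of $T_1$ is $\le0$, and $|T_2|\le C\,E^2$ with $C=\max|\Phi|$. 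It then suffices to prove a \emph{local} bound of the form $T_1\le -c\,E$ on $U\setminus\cont(\HH)$, for some neighborhood $U$ of $\cont(\HH)$ and some $c>0$. Given that bound, $\dot E\le -cE+CE^2\le -\tfrac c2 E$ once $E\le c/(2C)$ and $Q(y)\in U$, and the sup over $\{E^-<E<E^+\}$ is at most $-\tfrac c2E^-<0$. The item-(\labelcref{it:LE-zero}) argument then supplies $\bar E$ with $\{E<\bar E\}\subseteq Q^{-1}(U)$.

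To get the local bound I would fix $x^\ast\in\cont(\HH)$, let $\B=\supp(x^\ast)\subseteq\HH$, and choose $\mu>0$ with $x_{i\beta}\ge\mu$ for all $\beta_i\in\B_i$ and all $x$ in a small ball around $x^\ast$. The idea is to charge each $\alpha\notin\HH$ to a ``first exit'' profile. Pick $\beta\in\B$ and change coordinates one at a time from $\beta_i$ to $\alpha_i$. Let $\alpha'$ be the first profile on this path lying outside $\HH$, and let $\beta'\in\HH$ be its predecessor. \Ac{clubness} forces $\Phi(\beta',\alpha')=u_i(\alpha')-u_i(\beta')<0$ by \cref{prop:resilient-club}; this is strict because the arc $\beta'\to\alpha'$ is absent. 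Moreover $x_\alpha\le\mu^{-n}x_{\alpha'}$, because $\alpha'$ uses $\B$-coordinates wherever it differs from $\alpha$. The trouble is that $x_{\beta'}$ need not be bounded below when $\beta'\notin\B$. I would handle this by ordering the exits by the set $D$ of coordinates outside $\B$ (the deviation order), and choosing the path so that all coordinates of $\beta'$ that lie outside $\B$ are shared with $\alpha'$. This should give $x_{\alpha'}x_{\beta'}\gtrsim\mu^{n}x_{\alpha'}\cdot\min(\text{coordinates of }\alpha'\text{ outside }\B)$. The resulting comparison would then be made against $x_{\alpha'}$ rather than $x_\alpha$, inductively on $|D|$, so that every $x_\alpha$ is dominated by a strictly negative $T_1$ term of the same order. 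Finally I would cover the compact set $\cont(\HH)$ by finitely many such balls and take the minimum of the constants.

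I expect this charging step to be the main obstacle, namely the uniform strict-negativity estimate $T_1\le-cE$. If the naive path argument fails, the first fallback is to prove a weaker dissipation $T_1\le -c\,E\cdot\eta(x)$ with $\eta>0$ off $\cont(\HH)$. That still gives a strictly negative sup on each annulus $\{E^-<E<E^+\}$ by a compactness argument like Step 6 of \cref{lem:face-energy}, and that sup condition is all that item (\labelcref{it:LE-dissip}) requires.
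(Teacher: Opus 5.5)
\textbf{There is a genuine gap.} Items (1) and (2) are fine. So are the decomposition $\dot E=T_1+T_2$ (with $T_1\le 0$ by \ac{radness} and $|T_2|\le CE^2$) and the idea of charging excluded profiles to first exits ordered by the deviating coalition; the paper does the same.

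The problem is the inequality your argument rests on. A local bound $T_1\le -cE$ is false in general, so no choice of path or induction on $|D|$ can make every $x_\alpha$ dominated by a negative $T_1$ term of the same order.

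Take Shapley's game with $\HH$ the hexagon, which is \ac{rad} and \ac{club}. Consider the interior points $x_1=(1-s-s^3,\,s^3,\,s)$ and $x_2=(s^3,\,1-s-s^3,\,s)$, with coordinates ordered $\mathrm A,\mathrm B,\mathrm C$. They tend to $(\mathrm A,\mathrm B)\in\HH$.
\begin{itemize}
\item The energy is $E\approx x_{(\mathrm C,\mathrm C)}=s^2$.
\item $\Phi((\mathrm A,\mathrm B),(\mathrm C,\mathrm C))=0$, so the only strictly negative fluxes into $(\mathrm C,\mathrm C)$ come from $(\mathrm A,\mathrm C)$ and $(\mathrm C,\mathrm B)$. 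Each of these has mass $\approx s$.
\item Adding the contributions of $(\mathrm A,\mathrm A)$ and $(\mathrm B,\mathrm B)$, whose masses are $\approx s^3$, gives $T_1\approx -6s^3$. Hence $T_1/E\to 0$.
\end{itemize}
Your own estimate shows why. When the first exit $\alpha'$ has $|D|\ge 2$, $x_{\beta'}$ is itself small, so $x_{\alpha'}x_{\beta'}\Phi(\beta',\alpha')=o(x_{\alpha'})$.

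The fallback does not close the gap as stated. A bound $T_1\le -cE\,\eta(x)$ with only $\eta>0$ off the span says nothing about $T_1+T_2$: to beat $T_2=O(E^2)$ you need $\eta(x)\gg E(x)$ near $\cont(\HH)$. The compactness step on annuli only upgrades pointwise negativity of $T_1+T_2$ to a uniform bound; it cannot supply that negativity.

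\textbf{The missing idea} is to compare at quadratic order, and to read your bound on $x_{\beta'}$ relative to $x_{\alpha'}$ rather than in absolute terms. Set the following notation.
\begin{itemize}
\item $\varepsilon(x)=\max_i\sum_{\alpha_i\notin\B_i}x_{i\alpha_i}$.
\item $M$ is the set of first exits, each $\alpha'\in M$ having a neighbour $\beta'\in\HH$ that differs from it in one coordinate $i$ with $\alpha'_i\notin\B_i$ and $\beta'_i\in\B_i$. The paper takes $M$ to be the minimal excluded profiles in the deviation order.
\item $\bar W_M=\sum_{\alpha'\in M}x_{\alpha'}$.
\end{itemize}
Near $x^\ast$ you then have two estimates.
\begin{enumerate}
\item[(a)] $E\le K\,\bar W_M$ for a constant $K$. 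This follows from your bound $x_\alpha\le\mu^{-n}x_{\alpha'}$; the paper sharpens it to $E\le\tfrac32\bar W_M$ via $x_\nu\le\mu^{-n}\varepsilon(x)\,x_{\alpha'}$, but a constant suffices.
\item[(b)] $x_{\beta'}=x_{\alpha'}\,x_{i\beta'_i}/x_{i\alpha'_i}\ge(\mu/\varepsilon(x))\,x_{\alpha'}$.
\end{enumerate}
Let $\kappa>0$ be the least payoff loss over unilateral exits from $\HH$. Then (b), together with \ac{radness} to discard the other terms of $T_1$ and Cauchy--Schwarz, gives $T_1\le -\frac{\kappa\mu}{|M|\,\varepsilon(x)}\,\bar W_M^2$. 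Meanwhile (a) gives $T_2\le CK^2\,\bar W_M^2$.

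The factor $1/\varepsilon(x)$ wins in a small enough neighbourhood of $x^\ast$, so $\dot E<0$ off $\cont(\HH)$ there. Covering the span by finitely many such neighbourhoods and using compactness of the annuli then gives item (3). In your terms, the workable fallback is $\eta(x)\asymp E(x)/\varepsilon(x)$, not a linear dissipation rate.
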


\begin{proof}
Set $\cS\coloneqq \cont(\HH)$ and define
\[
E:\YY\to[0,1],\qquad E(y)\coloneqq \bar W_\HH(Q(y)).
\]
If $\A\setminus\HH=\varnothing$, then $\cS=\X$ and $\bar W_\HH=0$, hence $E=0$, and the three items of \cref{def:local-energy} hold trivially. Assume henceforth $\A\setminus\HH\neq\varnothing$.

We start with regularity. Since $h$ is entropic, $Q$ is the softmax, hence $C^\infty$. Moreover by \cref{prop:choice-map}(\labelcref{it:Qgrad}), it is globally Lipschitz. Next, $\bar W_\HH$ is a polynomial in the coordinates $x_{i\alpha}$ because each monomial $x_\alpha=\prod_i x_{i\alpha}$ is polynomial, thus $\bar W_\HH\in C^\infty(\X)$ and is Lipschitz on compact $\X$. Therefore $E=\bar W_\HH\circ Q$ is $C^1$ and globally Lipschitz on $\YY$.

For positive semi-definiteness, $\bar W_\HH(x)$ is a sum of nonnegative terms, so
\[
\bar W_\HH(x)=0
\quad\Longleftrightarrow\quad
x_\alpha=0\ \ \forall \alpha\notin\HH
\quad\Longleftrightarrow\quad
x\in \cS,
\]
hence
\begin{equation}\label{eq:zero-set-W}
\cS=\bar W_\HH^{-1}(0).
\end{equation}
We do again the basic subsequence argument. Let $(y^k)$ be any sequence and set $x^k\coloneqq Q(y^k)\in\X$. If $x^k\to\cS$, then by continuity of $\bar W_\HH$ and \eqref{eq:zero-set-W}, $E(y^k)=\bar W_\HH(x^k)\to 0$. Conversely, if $E(y^k)=\bar W_\HH(x^k)\to 0$ but $x^k\not\to\cS$, then there exist $\varepsilon_0>0$ and a subsequence $x^{k_j}$ with $\dist(x^{k_j},\cS)\ge \varepsilon_0$ for all $j$. By compactness of $\X$, pass to a further subsequence (not relabeled) with $x^{k_j}\to \bar x\in\X$. Continuity gives $\bar W_\HH(\bar x)=0$, so $\bar x\in\cS$ by \eqref{eq:zero-set-W}. Since $x\mapsto \dist(x,\cS)$ is continuous and $\dist(\bar x,\cS)=0$, we get $\dist(x^{k_j},\cS)\to 0$, contradicting $\dist(x^{k_j},\cS)\ge \varepsilon_0$ for $j$ large. Hence $x^k\to\cS$.

We now turn to dissipativity. Let $y$ be any solution orbit of \eqref{eq:FTRL} and set $x\coloneqq Q(y)$. In the entropic case, the induced strategy flow coincides with the replicator dynamics
\begin{equation}\label{eq:RD-local}
\dot x_{i\alpha}=x_{i\alpha}\bigl(v_{i\alpha}(x)-u_i(x)\bigr),
\qquad i\in\N,\ \alpha_i\in\A_i.
\end{equation}
By steepness, $x_{i\alpha}>0$ for all $i$ and $\alpha_i\in\A_i$. The product rule and \eqref{eq:RD-local} then give
\[
\frac{d}{dt}x_\alpha
=
x_\alpha\sum_{i\in\N}\frac{\dot x_{i\alpha_i}}{x_{i\alpha_i}}
=
x_\alpha\sum_{i\in\N}\bigl(v_{i\alpha}(x)-u_i(x)\bigr).
\]
Also
\[
\Phi(x,\alpha)=\sum_{i\in\N}\bigl(v_{i\alpha}(x)-u_i(x)\bigr),
\]
so we obtain the pointwise identity
\begin{equation}\label{eq:xalpha-dot}
\dot x_\alpha=x_\alpha\,\Phi(x,\alpha).
\end{equation}
Summing \eqref{eq:xalpha-dot} over $\alpha\notin\HH$ yields, along the orbit,
\begin{equation}\label{eq:Edot}
\frac{d}{dt} \bar W_\HH(x)=\sum_{\alpha\notin\HH}x_\alpha\,\Phi(x,\alpha).
\end{equation}
Equivalently, for every $y\in\YY$,
\begin{equation}\label{eq:dotE-formula}
\dot E(y)
=\sum_{\alpha\notin\HH}x_\alpha\,\Phi(x,\alpha),
\qquad x=Q(y).
\end{equation}
Define the continuous function
\[
\phi(x)\coloneqq \sum_{\alpha\notin\HH}x_\alpha\,\Phi(x,\alpha),\qquad x\in\X.
\]
By \eqref{eq:dotE-formula}, it is enough to show that $\phi$ is strictly negative whenever the ``mass outside $\HH$'', namely $\bar W_\HH(x)$, is positive but small: we will prove that there exists $\bar E>0$ such that
\begin{equation}\label{eq:neg-sublevel}
0<\bar W_\HH(x)\le \bar E
\quad\Longrightarrow\quad
\varphi(x)<0.
\end{equation}

Fix $x^\star\in\cS$. Set $\B_i\coloneqq \supp(x^\star_i)$ and $\B\coloneqq \prod_{i\in\N}\B_i$. Since $x^\star\in\cS$, we have $\B\subseteq \HH$. Let $\F\coloneqq \cont(\B)$, then $x^\star\in \F^\circ$. For $\alpha\in\A$ define the \emph{deviating coalition} from $\B$:
\[D_\B(\alpha)\coloneqq\{i\in\N:\alpha_i\notin \B_i\},\]
and on $\A\setminus\HH$ introduce the preorder $\preceq_\B$ by: $\alpha\preceq_\B \nu$ if and only if $D_\B(\alpha)\subseteq D_\B(\nu)$ and $\alpha_i=\nu_i$ for all $i\in D_\B(\alpha)$. Let $M_\B\subseteq \A\setminus\HH$ be the set of $\preceq_\B$-minimal excluded profiles, and set
\[
\bar W_\B(x)\coloneqq \sum_{\alpha\in M_\B}x_\alpha.
\]
Notice that the profiles in $M_\B$ are those which are \enquote{closest} to $\B$, in the sense that they're one unilateral deviation away from it: this is how we can exploit \ac{clubness} to get a negative contribution.

\emph{Claim 1.} If $\alpha\in M_\B$, $i\in D_\B(\alpha)$, and $\beta_i\in \B_i$, then $(\beta_i,\alpha_{-i})\in\HH$.
Indeed, let $\gamma=(\beta_i,\alpha_{-i})$. Then $D_\B(\gamma)=D_\B(\alpha)\setminus\{i\}\subsetneq D_\B(\alpha)$ and $\gamma_j=\alpha_j$ for all $j\in D_\B(\gamma)$, hence $\gamma\preceq_\B \alpha$ with $\gamma\neq \alpha$. If $\gamma\notin\HH$, this contradicts the $\preceq_\B$-minimality of $\alpha\in M_\B$, therefore $\gamma\in\HH$.

\emph{Claim 2.} There exist a neighborhood $U$ of $x^\star$ in $\X$ and $\gamma\in(0,1)$ such that
\begin{equation}\label{eq:min-dom}
x\in U\setminus \cS \ \Longrightarrow\ \bar W_\B(x)\ge \gamma\,\bar W_\HH(x).
\end{equation}
Indeed, let $n\coloneqq |\N|$ and set
\[
\delta\coloneqq \frac12\min_{i\in\N,\ \beta_i\in \B_i}x^\star_{i\beta}>0.
\]
By continuity of $x\mapsto x_{i\beta}$, there exists a neighborhood $U_0$ of $x^\star$ such that
\begin{equation}\label{eq:delta-lb}
x_{i\beta}\ge \delta
\qquad \forall x\in U_0,\ \forall i\in\N,\ \forall \beta_i\in \B_i.
\end{equation}
Define
\[
\varepsilon(x)\coloneqq \max_{i\in\N}\sum_{\alpha_i\notin \B_i}x_{i\alpha_i}.
\]
Since $\varepsilon(x^\star)=0$, shrink $U_0$ so that $\varepsilon(x)\le \varepsilon_0\le \delta$ for all $x\in U_0$, where $\varepsilon_0>0$ will be fixed below. Fix $\nu\in(\A\setminus\HH)\setminus M_\B$. Let $\mu=\mu(\nu)$ be any $\preceq_\B$-minimal element of the nonempty set $\{\alpha\in\A\setminus\HH:\ \alpha\preceq_\B \nu\}$. Then $\mu\in M_\B$ and $\mu\preceq_\B \nu$. Moreover $D_\B(\mu)\subsetneq D_\B(\nu)$: if $D_\B(\mu)=D_\B(\nu)$, then $\mu_i=\nu_i$ for all $i\in D_\B(\nu)$ and also $\mu\preceq_\B \nu$ and $\nu\preceq_\B \mu$, so $\nu$ would be $\preceq_\B$-minimal as well, contradicting $\nu\notin M_\B$. We claim that for all $x\in U_0$,
\begin{equation}\label{eq:termwise-dom}
x_\nu \le C\,\varepsilon(x)\,x_\mu,
\qquad C\coloneqq \delta^{-n}.
\end{equation}
If $x_\mu=0$, then $x_\nu=0$ as well, hence \eqref{eq:termwise-dom} holds: indeed, if $x_\mu=0$ then some index $j$ has $x_{j\mu}=0$. By \eqref{eq:delta-lb} this forces $j\in D_\B(\mu)$, and since $\mu_j=\nu_j$ for all $j\in D_\B(\mu)$ we get $x_{j\nu}=0$ and thus $x_\nu=0$. Assume now $x_\mu>0$ and write
\[
\frac{x_\nu}{x_\mu}=\prod_{i\in\N}\frac{x_{i\nu}}{x_{i\mu}}.
\]
If $i\in D_\B(\mu)$ then $\mu_i=\nu_i$, so the factor equals $1$. If $i\in D_\B(\nu)\setminus D_\B(\mu)$ then $\mu_i\in \B_i$ and $\nu_i\notin \B_i$, hence by \eqref{eq:delta-lb} and the definition of $\varepsilon(x)$,
\[
x_{i\mu}\ge \delta,
\qquad
x_{i\nu}\le \sum_{\alpha_i\notin \B_i}x_{i\alpha}\le \varepsilon(x),
\quad\Rightarrow\quad
\frac{x_{i\nu}}{x_{i\mu}}\le \frac{\varepsilon(x)}{\delta}.
\]
Finally, if $i\notin D_\B(\nu)$ then $\nu_i\in \B_i$, and also $\mu_i\in\B_i$ (else $i\in D_\B(\mu)\subseteq D_\B(\nu)$), so $x_{i\mu}\ge \delta$ by \eqref{eq:delta-lb} and trivially $x_{i\nu}\le 1$, hence $x_{i\nu}/{x_{i\mu}}\le 1/\delta$. Since $\nu\notin\HH$ and $\B\subseteq\HH$, we have $\nu\notin\B$ and thus $D_\B(\nu)\neq\varnothing$, implying $n-|D_\B(\nu)|\le n-1$. Also $D_\B(\mu)\subsetneq D_\B(\nu)$ implies $|D_\B(\nu)\setminus D_\B(\mu)|\ge 1$. Using $\varepsilon(x)\le \delta$ on $U_0$ gives
\[
\frac{x_\nu}{x_\mu}
\le
\Big(\frac{\varepsilon(x)}{\delta}\Big)\Big(\frac{1}{\delta}\Big)^{n-1}
=
\delta^{-n}\,\varepsilon(x)
=
C\,\varepsilon(x),
\]
proving \eqref{eq:termwise-dom}. Summing \eqref{eq:termwise-dom} over $\nu\in(\A\setminus\HH)\setminus M_\B$ yields, for all $x\in U_0$,
\[
\bar W_\HH(x)
=
\bar W_\B(x)+\sum_{\nu\in(\A\setminus\HH)\setminus M_\B}x_\nu
\le
\bar W_\B(x)+C\,\varepsilon(x)\,|\A\setminus\HH|\,\bar W_\B(x).
\]
Choose $\varepsilon_0$ so that $C\,\varepsilon_0\,|\A\setminus\HH|\le \tfrac12$, and set $U\coloneqq U_0$ with this choice. Then for all $x\in U$ we have $\bar W_\HH(x)\le \tfrac32\bar W_\B(x)$, i.e.\ \eqref{eq:min-dom} holds with $\gamma\coloneqq 2/3$.

Next, since $\HH$ is \ac{club}, every unilateral deviation from $\HH$ to $\A\setminus\HH$ is strictly unprofitable. Formally, let $\mathcal E$ be the finite set of triples $(\beta,\alpha,i)$ such that $\beta\in\HH$, $\alpha\notin\HH$, and $\alpha$ and $\beta$ $i$-comparable. For $(\beta,\alpha,i)\in\mathcal E$, closure under better replies implies $u_i(\alpha)<u_i(\beta)$, hence define
\[
\eta\coloneqq \min_{(\beta,\alpha,i)\in\mathcal E}\bigl(u_i(\beta)-u_i(\alpha)\bigr)>0.
\]
If $\alpha,\beta$ differ only in $i$, then for $j\neq i$ we have $u_j(\alpha_j,\beta_{-j})=u_j(\beta)$, so
\begin{equation}\label{eq:eta-flux}
\Phi(\beta,\alpha)
=
\sum_{j\in\N}\bigl(u_j(\alpha_j,\beta_{-j})-u_j(\beta)\bigr)
=
u_i(\alpha)-u_i(\beta)
\le -\eta.
\end{equation}
We now estimate $\phi(x)$ for $x$ near $\cS$. Fix $\alpha\in\A$. Since $x\mapsto \Phi(x,\alpha)$ is multilinear (being a finite sum of multilinear payoff terms), for every $x\in\X$ we have the multilinear interpolation identity
\begin{equation}\label{eq:flux-expand}
\Phi(x,\alpha)=\sum_{\beta\in\A}x_\beta\,\Phi(\beta,\alpha),
\end{equation}
obtained by expanding successively in each block $x_i$. Using \eqref{eq:flux-expand},
\begin{equation}\label{eq:varphi-decomp}
\phi(x)
=
\sum_{\alpha\notin\HH}\sum_{\beta\in\HH}x_\alpha x_\beta\,\Phi(\beta,\alpha)
+
\sum_{\alpha\notin\HH}\sum_{\beta\notin\HH}x_\alpha x_\beta\,\Phi(\beta,\alpha).
\end{equation}
Let $D\coloneqq \max_{\alpha,\beta\in\A}|\Phi(\beta,\alpha)|<\infty$. Then the second term in \eqref{eq:varphi-decomp} is bounded above by
\[
\sum_{\alpha\notin\HH}\sum_{\beta\notin\HH}x_\alpha x_\beta\,|\Phi(\beta,\alpha)|
\le
D\Big(\sum_{\alpha\notin\HH}x_\alpha\Big)^2
=
D\,\bar W_\HH(x)^2.
\]
For the first term, \ac{radness} gives $\Phi(\beta,\alpha)\le 0$ for all $\alpha\notin\HH$ and all $\beta\in\HH$, hence the first term is $\le 0$. To obtain strict negativity when $x\notin\cS$ but is close to $x^\star$, we focus on the minimal excluded profiles. Fix $\alpha\in M_\B$ and choose any $i(\alpha)\in D_\B(\alpha)$. Pick $\beta_{i(\alpha)}\in \B_{i(\alpha)}$ maximizing $x_{i(\alpha)\beta}$ over $\B_{i(\alpha)}$, and set
\[
\beta(\alpha)\coloneqq (\beta_{i(\alpha)},\alpha_{-i(\alpha)}).
\]
By Claim 1, $\beta(\alpha)\in\HH$, and $\alpha$ and $\beta(\alpha)$ differ only in player $i(\alpha)$, so by \eqref{eq:eta-flux}, $\Phi(\beta(\alpha),\alpha)\le -\eta$. Therefore, using again $\Phi(\beta,\alpha)\le 0$ for $\beta\in\HH$,
\[
\sum_{\alpha\notin\HH}\sum_{\beta\in\HH}x_\alpha x_\beta\,\Phi(\beta,\alpha)
\le
\sum_{\alpha\in M_\B} x_\alpha x_{\beta(\alpha)}\,\Phi(\beta(\alpha),\alpha)
\le
-\eta \sum_{\alpha\in M_\B} x_\alpha x_{\beta(\alpha)}.
\]
For $x\in U$ from Claim 2 and $\alpha\in M_\B$, we have $\beta_{i(\alpha)}\in\B_{i(\alpha)}$, so \eqref{eq:delta-lb} gives $x_{i(\alpha)\beta}\ge \delta$, while $\alpha_{i(\alpha)}\notin \B_{i(\alpha)}$ implies $x_{i(\alpha)\alpha}\le \varepsilon(x)$. Hence, whenever $x_\alpha>0$,
\[
x_{\beta(\alpha)} = x_\alpha\,\frac{x_{i(\alpha)\beta}}{x_{i(\alpha)\alpha}}
\ge x_\alpha\,\frac{\delta}{\varepsilon(x)},
\]
and the same inequality holds trivially if $x_\alpha=0$. Thus
\[
x_\alpha x_{\beta(\alpha)} \ge \frac{\delta}{\varepsilon(x)}\,x_\alpha^2,
\qquad \alpha\in M_\B.
\]
Consequently,
\[
\sum_{\alpha\notin\HH}\sum_{\beta\in\HH}x_\alpha x_\beta\,\Phi(\beta,\alpha)
\le
-\eta\,\frac{\delta}{\varepsilon(x)}\sum_{\alpha\in M_\B}x_\alpha^2
\le
-\eta\,\frac{\delta}{|M_\B|\,\varepsilon(x)}\,\bar W_\B(x)^2,
\]
where we used $\sum_{\alpha\in M_\B}x_\alpha^2\ge |M_\B|^{-1}\big(\sum_{\alpha\in M_\B}x_\alpha\big)^2
=|M_\B|^{-1}\bar W_\B(x)^2$. Combining with the bound on the second term in \eqref{eq:varphi-decomp} and \eqref{eq:min-dom} yields, for all $x\in U\setminus\cS$,
\[
\phi(x)
\le
-\eta\,\frac{\delta}{|M_\B|\,\varepsilon(x)}\,\bar W_\B(x)^2
+
D\,\bar W_\HH(x)^2
\le
\bar W_\B(x)^2\Bigl(
-\eta\,\frac{\delta}{|M_\B|\,\varepsilon(x)} + D\,\gamma^{-2}
\Bigr).
\]
Note that on $U\setminus\cS$ we have $\varepsilon(x)>0$ (if $\varepsilon(x)=0$ then $x_i$ is supported on $\B_i$ for every $i$, and since $\B\subseteq\HH$ this forces $x\in\cS$). Since $\varepsilon(x^\star)=0$ and $\varepsilon$ is continuous, we may shrink $U$ (keeping $x^\star\in U$) so that
\[
\varepsilon(x)\le \frac{\eta\,\delta\,\gamma^2}{2D\,|M_\B|}
\qquad \forall x\in U.
\]
Then the bracket is strictly negative for all $x\in U\setminus\cS$, hence $\phi(x)<0$ on $U\setminus\cS$.

Since $x^\star\in\cS$ was arbitrary, these neighborhoods form an open cover of compact $\cS$. Extracting a finite subcover and letting $U$ be the union, we obtain an open neighborhood $U$ of $\cS$ such that
\begin{equation}\label{eq:neg-neigh}
\phi(x)<0\qquad \forall x\in U\setminus\cS.
\end{equation}
Because $\bar W_\HH$ is continuous, $\cS=\bar W_\HH^{-1}(0)$, and $\X\setminus U$ is compact and disjoint from $\cS$,
\[
\bar E\coloneqq \inf_{x\in \X\setminus U}\bar W_\HH(x)>0.
\]
Then $\{x\in\X:0<\bar W_\HH(x)\le \bar E\}\subseteq U\setminus\cS$, and \eqref{eq:neg-sublevel} follows from \eqref{eq:neg-neigh}. Finally, let $0<E^-<E^+\le \bar E$ and consider the compact set
\[
K\coloneqq \{x\in\X: E^-\le \bar W_\HH(x)\le E^+\}\subseteq \{x:0<\bar W_\HH(x)\le \bar E\}.
\]
By \eqref{eq:neg-sublevel}, $\phi(x)<0$ for all $x\in K$, hence $\max_{x\in K}\phi(x)<0$ by continuity. Using \eqref{eq:dotE-formula}, for any $y$ with $E^-<E(y)<E^+$ we have $x=Q(y)\in K$ and $\dot E(y)=\phi(x)$, so
\[
\sup\{\dot E(y): E^-<E(y)<E^+\}
\le
\max_{x\in K}\phi(x)
<0,
\]
which is exactly \cref{def:local-energy}(\labelcref{it:LE-dissip}). Hence $E=\bar W_\HH\circ Q$ is a local energy function for $\cS=\cont(\HH)$.
\end{proof}

From this, as usual, we get asymptotic stability.
\begin{proof}[\textbf{Proof of \cref{thm:resilience-RD}.}]
If $\A\setminus\HH=\varnothing$, then $\cont(\HH)=\X$, hence it is (trivially) an attractor for \eqref{eq:RD}.
Assume henceforth $\A\setminus\HH\neq\varnothing$, and set $\cS\coloneqq \cont(\HH)$. Since $h$ is entropic the induced strategy flow
$(\Theta_t)_{t\in\R}$ is well-defined and face-invariant, moreover it coincides with the replicator
dynamics \eqref{eq:RD}. In particular, by \cref{prop:flow} every face of $\X$ is invariant
for $(\Theta_t)$. As $\cS$ is a union of faces, it follows that $\cS$ is compact and invariant. We will prove that $\cS$
is asymptotically stable for $(\Theta_t)$, hence an attractor.

Comes again the time for some gluing: we shall patch the facewise basins into a basin on all of $\X$. Let $\F=\prod_{i\in\N}\Delta(\B_i)$ be a face of $\X$ and write $\A(\F)\coloneqq\prod_{i\in\N}\B_i$ for its vertex set.
Define
\[
\HH_\F\coloneqq \HH\cap \A(\F),\qquad \cS_\F\coloneqq \cS\cap \F.
\]
Then
\begin{equation}\label{eq:face-span-RD}
\cS_\F=\cont(\HH_\F).
\end{equation}
Indeed, if $x\in\cS_\F$, then $\supp(x)\subseteq \A(\F)$ (since $x\in\F$) and $\supp(x)\subseteq\HH$ (since $x\in\cont(\HH)$),
so $\supp(x)\subseteq \HH\cap\A(\F)=\HH_\F$, i.e.\ $x\in\cont(\HH_\F)$, and the reverse inclusion is immediate. We claim now that $\HH_\F$ is \ac{rad} and \ac{club} for the dynamics restricted to $\F$.
For \ac{radness}, let $\F'\subseteq \F$ be any face with $\F'\subseteq \cont(\HH_\F)$ and let $\alpha\in \A(\F)\setminus \HH_\F$.
Then $\alpha\notin\HH$ and $\F'\subseteq \cont(\HH)\!=\cS$, so by \ac{radness} of $\HH$ and \cref{lem:resilience-mixed},
\[
\sup_{x\in \F'}\,\Phi(x,\alpha)\le 0.
\]
the preceding inequality is exactly the \ac{rad} condition for $\HH_\F$ on $\F$. For closure under better replies, let $\beta\in\HH_\F$ and suppose $\beta\to\alpha$ is a better-reply edge with $\alpha\in\A(\F)$.
Then also $\beta\in\HH$ and $\beta\to\alpha$ is a better-reply edge in the full game, so $\alpha\in\HH$ because $\HH$ is closed under
better replies. As $\alpha\in\A(\F)$, this gives $\alpha\in\HH\cap\A(\F)=\HH_\F$.

Fix a face $\F$ with $\cS_\F\neq\varnothing$, and consider the face-restricted choice map $Q_\B:\YY\to\F^\circ$ and the corresponding
facewise \ac{FTRL} dynamics \eqref{eq:FTRL-B}. Apply \cref{lem:mass-energy} to the subgame on $\F$ (action sets $\B_i$) and the set
$\HH_\F\subseteq \A(\F)$: by the previous paragraph, $\HH_\F$ is \ac{rad} and \ac{club} on $\F$, so
\[
E_\F(y)\coloneqq \bar W_{\HH_\F}\bigl(Q_\B(y)\bigr)
\]
is a local energy function for $\cS_\F=\cont(\HH_\F)$ under \eqref{eq:FTRL-B}. Hence, by \cref{thm:energy-ct}, $\cS_\F$ is
asymptotically stable under \eqref{eq:FTRL-B}. By \cref{rmrk:flow-link}, the strategy flow trajectory starting from any
$x_0\in\F$ coincides with the orbit of \eqref{eq:FTRL-B} on the minimal face of $\F$ containing $x_0$, therefore $\cS_\F$ is
asymptotically stable for the restriction of the strategy flow to $\F$. Concretely: for each face $\F$ with $\cS_\F\neq\varnothing$,
there exists a neighborhood $U_\F\subseteq \F$ of $\cS_\F$ (in the relative topology of $\F$) such that every strategy flow trajectory
starting in $U_\F$ remains in $U_\F$ and satisfies $\dist(\Theta_t,\cS_\F)\to 0$.

Let now $\mathfrak{F}_\cS$ be the (finite) collection of faces $\F$ of $\X$ such that $\cS_\F\neq\varnothing$, and for each such $\F$ let
$U_\F\subseteq\F$ be as above. Fix any point $x\in\cS$. Let $\mathfrak{F}(x)\coloneqq\{\F\in\mathfrak{F}_\cS:\ x\in \F\}$. Because $\X$ has finitely many faces, the quantity
\[
r_x \coloneqq \tfrac12\min\{\dist(x,\F'): \F'\text{ a face of }\X,\ x\notin \F'\}
\]
is well-defined and strictly positive. In particular, $B_{r_x}(x)\cap \X$ meets only faces that contain $x$.
Moreover, for each $\F\in\mathfrak{F}(x)$, since $U_\F$ is a neighborhood of $x$ in $\F$, there exists $\rho^x_{\F}>0$ such that
$B_{\rho^x_{\F}}(x)\cap \F\subseteq U_\F$. Set
\[
\rho_x \coloneqq \min\Bigl(r_x,\ \min_{\F\in\mathfrak{F}(x)}\rho^x_{\F}\Bigr)>0.
\]
Then for every $z\in B_{\rho_x}(x)\cap\X$, the (unique) minimal face $\F_z$ containing $z$ must contain $x$, hence belongs to $\mathfrak{F}(x)$,
and therefore $z\in U_{\F_z}$. Thus the family $\{B_{\rho_x}(x)\cap\X\}_{x\in\cS}$ covers $\cS$. By compactness of $\cS$, there exist $x^1,\dots,x^m\in \cS$ such that
\[
\cS\subseteq U \coloneqq \bigcup_{k=1}^m \bigl(B_{\rho_{x^k}}(x^k)\cap \X\bigr),
\]
and $U$ is an open neighborhood of $\cS$ in $\X$.
Now take any $\Theta_0\in U$ and let $\F$ be the minimal face of $\X$ containing $\Theta_0$. By construction of $U$, $\F$ contains some $x^k\in\cS$ and
$\Theta_0\in U_\F$. Since faces are invariant (\cref{prop:flow}), $\Theta_t\in\F$ for all $t\ge 0$, and we have $\dist(\Theta_t,\cS_\F)\to 0$. As
$\cS_\F\subseteq\cS$, this implies $\dist(\Theta_t,\cS)\to 0$. This proves that $\cS$ is attracting. The stability part is obtained by the same
patching argument, using the stability component of asymptotic stability on each $\F\in\mathfrak{F}_\cS$.

Therefore $\cS=\cont(\HH)$ is compact, invariant, and asymptotically stable for the strategy flow, hence an attractor. Since this flow
coincides with \eqref{eq:RD} in the entropic case, $\cont(\HH)$ is an attractor under \eqref{eq:RD}.
\end{proof}

\bibliographystyle{icml}
\bibliography{bibtex/IEEEabrv,bibtex/Bibliography}

\end{document}